\documentclass[a4paper,11pt]{article}

\usepackage[letterpaper,margin=1in]{geometry}

\usepackage{preamble-fv-simple-tmp}
\usepackage{macros}
\usepackage{multicol}
\usepackage{todonotes}
\usepackage{textpos}

\author[1]{Benjamin Bergougnoux} 
\author[2]{Jan Dreier}
\author[1,3]{Lars Jaffke\thanks{Received funding from the Norwegian Research Council and from the European Research Council (ERC) under the European Union’s Horizon 2020 research and innovation programme (grant No. 714704).}}

\affil[1]{University of Warsaw, Poland}
\affil[ ]{\texttt{benjamin.bergougnoux@mimuw.edu.pl}} 

\affil[2]{TU Wien, Austria}
\affil[ ]{\texttt{dreier@ac.tuwien.ac.at}}

\affil[3]{University of Bergen, Norway} 
\affil[ ]{\texttt{lars.jaffke@uib.no}} 

\title{A logic-based algorithmic meta-theorem for mim-width}

\newcommand\myparagraph[1]{\paragraph*{#1.}}

\newif\iflong
\longtrue

\begin{document}

\maketitle

%
\begin{abstract}
    We introduce a logic called \emph{distance neighborhood logic with acyclicity and connectivity constraints} (\ac \en for short)
    which extends existential $\mathsf{MSO_1}$
	with predicates for querying neighborhoods of vertex sets in various powers of a graph
    and for verifying connectivity and acyclicity of vertex sets.
    Building upon [Bergougnoux and Kant{\'e}, ESA 2019; SIDMA 2021],
    we show that the model checking problem for every fixed \ac \en formula is solvable in $n^{O(w)}$ time
	when the input graph is given together with a branch decomposition of mim-width $w$.
	Nearly all problems that are known to be solvable in polynomial time given a branch decomposition of constant mim-width
	can be expressed in this framework.
    We add several natural problems to this list, including problems asking for diverse sets of solutions.

	Our model checking algorithm 
	is efficient whenever the given branch decomposition of the input 
	graph has small index in terms of the \emph{$d$-neighborhood equivalence} [Bui-Xuan, Telle, and Vatshelle, TCS 2013].
    We therefore unify and extend known algorithms for tree-width, clique-width and rank-width.
    Our algorithm has a single-exponential dependence on these three width measures and
	asymptotically matches run times of the fastest known algorithms for several problems.
	This results in algorithms with tight run times under the Exponential Time Hypothesis (\ETH)
	for tree-width, clique-width and rank-width;
	the above mentioned run time for mim-width is nearly tight under the \ETH for several problems as well.
	Our results are also tight in terms of the expressive power of the logic:
	we show that already slight extensions of our logic make the model checking problem para-\NP-hard
	when parameterized by mim-width plus formula length.
\end{abstract}

\section{Introduction}
Bounded-width decompositions have been successful tools for dealing with the intractability of graph problems over the last decades.
Several landmark results
have shown that the power of such decompositions is closely linked to 
the expressibility of graph properties in some logic.
Perhaps the most famous examples are Courcelle's Theorem~\cite{CourcelleI},
stating that all problems expressible in \MSOtwo logic are linear time solvable
on graph classes of bounded tree-width
as well as a result by 
Courcelle, Makowsky, and Rotics~\cite{CourcelleMakowskyRotics2000},
showing that all problems expressible in \MSOone logic are linear time solvable
on graph classes of bounded clique-width (given a clique-width expression with a bounded number of labels).

If one considers the weaker first-order logic (\FO), then there are tractability results for a wider range of sparse and dense graph classes.
A long line of work~\cite{DawarEtAl2007,DvorakEtAl2013,FlumGrohe2001,FrickGrohe2001,Seese1996} 
culminated in the celebrated result by Grohe, Kreutzer, and Siebertz~\cite{GroheKreutzerSiebertz2017}
that problems expressible in first-order logic can be solved in almost linear time on nowhere dense graph classes.
For sparse graphs, this result is optimal under certain complexity theoretic assumptions~\cite{DvorakEtAl2013,Kreutzer2011}.
There have been efforts to transfer these results to dense graph classes~\cite{dreier2021lacon,gajarsky2016new,gajarsky2018first}.
A recent result by Bonnet, Kim, Thomass{\'{e}}, and Watrigant~\cite{BonnetEtAl2020}
shows that first-order problems can be solved in linear time on graph classes of bounded twin-width,
assuming a good twin-width decomposition is given.

For \emph{mim-width}, a relatively young width measure due to Vatshelle~\cite{VatshelleThesis},
such logic-based algorithmic meta-theorems have remained evasive,
even though algorithmic applications of mim-width have received considerable attention 
recently~\cite{BergougnouxKante2021,BergougnouxPapadopoulosTelle2020,BonomoEtAl2021,BrettelEtAl2020,BrettellHMP22,BrettelEtAl2020b,Bui-XuanTV2013,GalbyEtAl2020,JaffkeKwonStrommeTelle2019,JaffkeKwonTelle2020b,JaffkeKwonTelle2020}.
Informally speaking,
the mim-width of a graph bounds the size of any induced matching appearing in a cut in a recursive decomposition of 
the vertex set.
Such a recursive decomposition can be captured by a branch decomposition.
The strength of the mim-width parameter lies in its high expressive power.
While a bound on the tree-width or clique-width of a graph always implies a bound on its mim-width,
there are several well-studied graph classes such as interval and permutation graphs 
that have $n$-vertex graphs of clique-width $\Omega(\sqrt{n})$~\cite{GolumbicRotics2000} 
while their mim-width is bounded by a constant~\cite{BelmonteVatshelle2013}.
In fact, the \emph{linear} mim-width of interval and permutation graphs is at most~$1$.
Many problems are polynomial-time solvable when the input graph is given together with 
one of its branch decompositions of constant mim-width; 
we list them in \Cref{par:mim:probs}.
The existence of an \XP-algorithm for computing or approximating mim-width is wide open though.

In this work, we present a logic-based algorithmic meta-theorem for mim-width:
We show that every problem expressible in a logic we call \emph{distance neighborhood logic with acyclicity and connectivity constraints} (\ac \en logic for short)
can be solved in polynomial time on graph classes of bounded mim-width,
assuming an appropriate branch decomposition of the input graph is given.
The high expressive power of this logic, together with the generality of mim-width yields a powerful framework providing and unifying tractability results for a high number of problems for various graph classes.
A remarkable feature of our meta-theorem is that it has a
quite moderate run time dependence on the length of the input formula.
In stark contrast,
the run time of model checking algorithms for \FO and \MSO
often heavily depends on the formula length and the width parameter.
Even non-elementary tower functions $2^{2^{\cdot^{_{\cdot^{_\cdot}}}}}$, 
where the height depends on the length of the formula, 
are quite common and necessary~\cite{FrickG04}.
The efficiency of our \ac \en logic model checking is
shared with a modal logic introduced by Pilipczuk~\cite{Pilipczuk11},
whose model checking problem has a single-exponential algorithm for tree-width.
 
A wide range of problems is tractable on graphs of bounded mim-width (see \Cref{par:mim:probs}),
however, graph classes of bounded mim-width exhibit
some peculiar behavior when it comes to problems that are \emph{not} tractable.
For instance, while \textsc{Independent Set} is polynomial time solvable, 
\textsc{Clique} (i.e., \textsc{Independent Set} in the complement graph) 
is \NP-complete on graph classes of bounded mim-width.
Similarly, \textsc{Dominating Set} is polynomial time solvable, 
but \textsc{Co-Dominating Set} (\textsc{Dominating Set} in the complement graph) is again \NP-complete on graphs of bounded mim-width.
These hardness results follow from the fact that complements of planar graphs have linear mim-width at most~$6$, with corresponding decompositions being computable in polynomial time~\cite{BelmonteVatshelle2013,MatulaBeck1983}.

This behavior poses an extra challenge when it comes to finding the right logic for a meta-theorem for mim-width.
At first glance, it seems unreasonable to expect a clean logical formulation that separates 
the tractable from the intractable problems mentioned above.
We need a logic that can check whether \emph{some} vertices in a given set are connected (to capture \textsc{Independent set}) but cannot
check whether \emph{all} vertices in a given set are connected (to avoid capturing \textsc{Clique}).
We further want to be able to check whether \emph{all} vertices have a \emph{neighbor} in some set (to capture \textsc{Dominating Set})
but cannot be able to check whether \emph{all} vertices have a \emph{non-neighbor} in some set (to avoid capturing \textsc{Co-Dominating Set}).
Thus, we need a limited way to quantify over vertices that is not symmetric with respect to the edge relation, 
i.e., which has a different expressiveness when it comes to detecting edges and non-edges.

We solve this problem by basing our logic on \emph{neighborhood} operations.
For example by checking whether a set $U$ is disjoint from its neighborhood $N(U) = \bigcup_{u \in U} N(u)$, 
we can infer whether $U$ forms an independent set.
While a vertex is in the neighborhood of $U$ if it is adjacent to \emph{some} vertex in $U$, in order for $U$ to be a clique, vertices from $U$ need to be adjacent to \emph{all} other vertices of $U$.
Thus, we cannot infer anything about whether $U$ forms a clique by considering its neighborhood.
Similarly, $U$ is a dominating set if every vertex is contained in $U \cup N(U)$, while co-dominating sets again cannot be expressed in terms of neighborhoods.

Before we proceed with a description of our results,
we need to discuss one more feature of the mim-width parameter,
namely its surprising robustness against the operation of taking powers of graphs.
Jaffke, Kwon, Str{\o}mme, and Telle~\cite{JaffkeKwonStrommeTelle2019} showed that, 
independently of the value of~$r$,
the mim-width of the $r$-th power of any graph $G$ is at most twice the mim-width of $G$;
which is witnessed by the same branch decomposition.
Besides (distance-$1$) neighborhoods, 
we therefore also allow neighborhoods in the $r$-th power of a graph as basic building blocks of our logic, where $r$ may be as large as $n$.

\subsection{Distance neighborhood logic}
\emph{Distance neighborhood logic} is obtained by extending existential \MSOone with the additional ability to directly reason about neighborhoods of sets.
The central building block in our neighborhood logic is the neighborhood operator $N_d^r(\cdot)$.
For a subset of vertices $U$ of some graph $G$, 
we define $N_d^r(U)$ to be all vertices in $G$ with distance at most $r$ and at least $1$ 
to at least $d$ vertices in $U$.
We use this operator to define so-called \emph{neighborhood terms}.
These are built from set-variables, set-constants (i.e., unary relations) or other neighborhood terms
by applying a neighborhood operator $N_d^r(\cdot)$ or standard set operations
such as intersection
($\cap$), union ($\cup)$, subtraction ($\setminus$) or complementation (denoted by a bar on top of the term).
The \emph{neighborhood logic}, denoted by \en, is the extension of existential \MSOone by allowing the following.
\begin{itemize}
    \item \emph{Size measurement of terms}: We can write for example $|t| \le m$ 
		to express that a neighborhood term $t$ should have size at most $m$.
    \item \emph{Comparison between terms}: We can write for example $t_1 = t_2$ or $t_1 \subseteq t_2$ 
		to express that a neighborhood term $t_1$ should be equal to or contained in another term $t_2$.
\end{itemize}

We proceed by giving some examples of \en formulas and explaining what properties they express.
For the precise definition of neighborhood terms and \en we refer to \Cref{sec:logic}.
The very simple formula
$N_1^1(X) \cap X = \emptyset$
expresses that no vertex from $X$ intersects the neighborhood of $X$, i.e., that $X$ forms an independent set.
Thus, the \en formula
\[
\exists X \, |X| \ge m \land N_1^1(X) \cap X = \emptyset
\]
expresses the existence of an independent set of size at least $m$.
Similarly, the formula
\[
\exists X\, |X| \le m \land \mathbf{P} \subseteq X \cup N_3^r(X)
\]
expresses the existence of a set of size at most $m$ that 
distance-$r$ $3$-dominates all vertices labeled $\mathbf{P}$.
That is, a vertex set $X$ such that 
each vertex labeled $\mathbf{P}$ that is not in $X$ 
has at least three vertices at distance at most $r$ in $X$.
A \emph{semitotal dominating set} is a dominating set $S$ 
such that for each vertex $v \in S$ there is another vertex in $S$ that is at distance at most two from $v$.
The following \en formula expresses that a graph has a semitotal dominating set of size at most $m$:
\begin{align}
	\label{eq:semitotal:dnlog}
\exists X\, |X| \le m \land \overline{X \cup N^1_1(X)} = \emptyset \land X \subseteq N^2_1(X)
\end{align}
The following more advanced \en formula expresses that a graph has a dominating odd cycle transversal of size at most $m$ 
(where an odd cycle transversal of a graph $G$ is a vertex set whose removal results in a bipartite graph):
\begin{align*}
	\exists X \exists Y \exists Z &~\overline{X \cup Y \cup Z} = \emptyset 
		\land X \cap Y = \emptyset \land X \cap Z = \emptyset \land Y \cap Z = \emptyset \\
		\land &~\card{X} \le m \land \overline{X \cup N^1_1(X)} = \emptyset \\
		\land &~N^1_1(Y) \cap Y = \emptyset \land N^1_1(Z) \cap Z = \emptyset
\end{align*}
Note that the value of $m$ in the size measurements contribute exactly one unit to the length of a formula.
This means the length $|\phi|$ of a formula $\phi$ is independent of the actual values used in the size measurements of $\phi$.
Since each symbol $N^r_d$ also only contributes one unit to the length of a formula $\phi$, 
we associate the following quantities with each \en formula $\phi$, which will ultimately influence the run time 
of our model checking algorithms:
\begin{itemize}
    \item We let $d(\phi)$ be the largest value $d$ such that $N^{\cdot}_{d}(\cdot)$ occurs in $\phi$ (but at least two).
    \item We let $r(\phi)$ be the largest value $r$ such that $N^{r}_{\cdot}(\cdot)$ occurs in $\phi$ (but at least one).
\end{itemize}
We would like to remark that when expressing properties with a \en formula $\phi$, the quantities 
$d(\phi)$ and $r(\phi)$ are often bounded by small constants.
For a precise definition of our logics see \Cref{sec:logic}.

\myparagraph{Relation to modal logic}
Another way of motivating that \en logic is the ``right'' logic for mim-width is as follows.
Similar to tree-width, where almost all efficient algorithms perform some kind of dynamic programming on the tree decomposition,
many efficient algorithms for mim-width (e.g.,~\cite{BergougnouxKante2021,BergougnouxPapadopoulosTelle2020,Bui-XuanTV2013,GalbyEtAl2020,JaffkeKwonStrommeTelle2019})
rely on a bounded index\footnote{The number of equivalence classes.} 
for the $d$-neighborhood equivalence~\cite{Bui-XuanTV2013}.
We would like to find a logic, such that for every fixed formula of this logic
one can use the $d$-neighborhood equivalence to
decide in $n^{\calO(w)}$ time whether an $n$-vertex graph given with a decomposition of mim-width $w$
satisfies the formula.
Since the tree-width bounds the $d$-neighborhood equivalence,
such an algorithm has a run time of $2^{\calO(\tw)}n^{\calO(1)}$, where $\tw$ denotes the tree-width of the input graph.
We therefore have to restrict our search to logics that have a model-checking algorithm with single-exponential run time dependence on tree-width.
The modal logic \ecml presented by Pilipczuk~\cite{Pilipczuk11} has this property
and therefore is a good starting point.
 
However, \ecml is general enough to express \textsc{Maximum Cut} by a constant-length formula (\Cref{sec:hardness}).
By similar arguments as above, the logic we want to find needs to have an \FPT model-checking algorithm parameterized by clique-width (or rank-width) plus formula length
and therefore is not allowed to be able to express \textsc{Maximum Cut} by a constant length formula.
To avoid capturing \textsc{Maximum Cut}, we restrict our attention to the fragment $\mathsf{ECML_1}$ of \ecml that forbids quantification over edge sets.
Lastly, $\mathsf{ECML_1}$ still allows ultimately periodic counting.
We show that model-checking for $\mathsf{ECML_1}$ remains
para-\NP-hard parameterized by (linear) mim-width plus formula length, even if we only allow counting with period two, i.e., \emph{parity} counting (\Cref{sec:hardness}).
We therefore have to disregard periodic counting as well and arrive at a final fragment $\mathsf{ECML_1'}$.
We show in \Cref{sec:modal} that \en has the same expressiveness as a modal logic that extends $\mathsf{ECML_1'}$
with the ability to query neighborhoods in arbitrary powers of the graph.
It comes as a surprise that two seemingly unrelated mechanisms --- neighborhood operators and modal logic --- have the same expressive power
to describe tractable problems on graphs with bounded mim-width.

\subsection{Results}
Our first main result 
is that all problems that can be expressed in \en logic can be solved in \XP 
time parameterized by the mim-width of a given decomposition of the input graph.
Rather than the mim-width itself, the run time is bounded in terms of the index
of the \emph{$d$-neighborhood equivalence} relation~\cite{Bui-XuanTV2013} of the cuts appearing in the decomposition.
This quantity is bounded by an \XP function of the mim-width, 
and by \FPT functions of the tree-width, clique-width, or rank-width of corresponding decompositions,
and therefore our algorithm has interesting consequences for these parameterizations as well.\footnote{%
We would like to remark that in the following theorem, to obtain the bound for tree-width, we can drop the requirement of being given a 
decomposition, since tree-width $w$ can be $2$-approximated in time $2^{\calO(w)}n$~\cite{Korhonen2021}.
Similarly, we do not require the decomposition for rank-width $w$, 
since it can be $3$-approximated in time $2^{\calO(w)}n^{4}$~\cite{Oum2008}.
}
We would like to point out that even though the number of size measurements $s$ appears in the degree of the polynomial 
for tree-width and clique-width, this does not mean that we only have \XP-algorithms for these width measures.
In many problems, including the (optimization variants of the) \MSO-based meta-theorems~\cite{CourcelleI,CourcelleMakowskyRotics2000},
we have that $s \le 1$, as this already allows us to ask for instance for a solution of size at least or at most some given threshold.
See \Cref{tab:compare:logics}.
\begin{theorem}\label{thm:main:basic}
	There is an algorithm that determines for a given \en formula $\phi$, 
	and graph $G$ on $n$ vertices together with an \fwidth decomposition of $G$ of width $w \ge 1$,
	whether $\phi$ holds on $G$.
    When $r \defeq r(\phi)$, $d \defeq d(\phi)$ and $s$ is the number of size measurements in $\phi$,
    the run time of this algorithm is
	\begin{itemize}
		\item $n^{\calO(dw|\phi|^2)}$ if \fwidth is mim-width,
		\item $2^{\calO(d(wr|\phi|)^2)} n^{\calO(s)}$ (and $2^{\calO(dw|\phi|)} n^{\calO(s)}$ if $r = 1$), 
			if \fwidth is tree-width or clique-width,
		\item $2^{\calO(dw^4(r|\phi|)^2)} n^{\calO(s)}$ (and $2^{\calO(dw^2|\phi|)} n^{\calO(s)}$ if $r = 1$), 
			if \fwidth is rank-width.
	\end{itemize}
\end{theorem}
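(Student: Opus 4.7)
The plan is to perform dynamic programming over the given branch decomposition, representing partial solutions by their \emph{$d$-neighborhood equivalence classes} in the sense of Bui-Xuan, Telle, and Vatshelle. Concretely, for a cut $(A,\overline A)$ appearing in the decomposition and a number $d' = d(\phi)$, two subsets $X, X' \subseteq A$ are equivalent if for every $Y \subseteq \overline A$ and every vertex $v \in \overline A$, the quantity $\min(d', |N(v) \cap X|)$ equals $\min(d', |N(v) \cap X'|)$, and symmetrically from the $\overline A$ side; the number of equivalence classes is the index $\mathsf{nec}_{d'}(A)$, which by known bounds from~\cite{Bui-XuanTV2013} and follow-ups is $n^{\mathcal O(d' w)}$ for mim-width and single-exponential in tree-/clique-/rank-width. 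The lifted version of this equivalence to the $r$-th power of $G$ is controlled using the Jaffke--Kwon--Str\o{}mme--Telle observation that taking $r$-th powers only doubles the mim-width along the same decomposition; I would define an analogous equivalence $\equiv_{d,r}$ capturing $d$-neighborhoods in $G^{\le r}$ and bound its index using the same framework.

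For a formula $\phi$ with outermost existential quantifications over set variables $X_1,\dots,X_k$ (with $k \le |\phi|$), the DP at a node of the decomposition with vertex set $V_t \subseteq V(G)$ maintains a table indexed by a tuple of equivalence classes, one per variable, together with one integer in $\{0,\dots,n\}$ per size-measurement subterm of $\phi$ tracking the size of that term restricted to $V_t$ so far. An entry records whether there exists a choice of the traces $X_i \cap V_t$ consistent with the stored equivalence classes and partial sizes. Leaves are initialised trivially; at an internal node, one enumerates all compatible pairs of child entries and updates the tuple of equivalence classes via the natural merge operation on the equivalence relations, which can be precomputed as in the ``representative set'' framework used in the mim-width algorithms cited in the paper. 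Size counters simply add.

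The central technical step is the \emph{term evaluation lemma}: for every neighborhood term $t$ built from variables, constants, set operations (including complementation), and operators $N^r_d(\cdot)$, the trace $t(X_1,\dots,X_k) \cap V_t$ and, more importantly, its $d(\phi)$-neighborhood-equivalence class, is determined by the equivalence classes stored for $X_1,\dots,X_k$. I would prove this by structural induction on $t$, using that a single application of $N^r_d$ can be evaluated from the neighborhood-equivalence profile (this is exactly what the equivalence was designed for, after replacing $G$ with $G^{\le r}$), while Boolean set operations, including complementation, preserve being determined by the joint equivalence profile because complementation relative to $V(G)$ is taken with respect to the fixed vertex set. Size checks at the root then reduce to reading off the stored counters, and the non-neighborhood first-order or \MSO one quantifications (existential set quantification) are absorbed by adding more coordinates to the state tuple; ordinary vertex quantification is simulated by quantifying over singleton sets.

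Given this machinery, the run time is $(\text{\#states}) \cdot n^{\mathcal O(1)}$, where the number of states is $\bigl(\mathsf{nec}_{d}^{r}\bigr)^{|\phi|} \cdot (n+1)^{s}$. Substituting the known bounds on $\mathsf{nec}_d$ for each width measure yields the three run times stated: the mim-width bound uses $\mathsf{nec}_d = n^{\mathcal O(dw)}$ after absorbing the power-of-graph blowup into the width, whereas the tree-width/clique-width and rank-width bounds use the known $2^{\mathcal O(dw)}$ and $2^{\mathcal O(dw^2)}$ estimates respectively, with the extra $r$-dependence arising from passing to $G^{\le r}$. I expect the main obstacle to be the term evaluation lemma in the presence of nested $N^r_d$ operators combined with complementation: one must verify that \emph{intermediate} terms, not only those immediately below an $N^r_d$, remain well-behaved under the chosen equivalence, which in turn forces the definition of $\equiv_{d,r}$ to be strong enough to be compositional --- effectively, the equivalence must be a congruence for all term constructors of \en, and proving this cleanly is where most of the work lies.
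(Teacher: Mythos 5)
Your high-level plan --- DP along the branch decomposition indexed by $d$-neighbor-equivalence classes, then bounding the index via the standard estimates for each width measure --- is the same framework the paper uses. But your central ``term evaluation lemma'' is false as stated, and the missing ingredient is precisely what drives most of the paper's design.

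You claim that for a node $t$ of the decomposition, the trace $t(X_1,\dots,X_k)\cap V_t$ and its equivalence class are determined by the $\equi{V_t}{d,R}$-classes of $X_1\cap V_t,\dots,X_k\cap V_t$. That cannot hold. Take $t = N_1^1(X)$ and a vertex $u\in V_t$ with a neighbor $v\in\comp{V_t}$. Whether $u\in N_1^1(X)$ depends on whether $v\in X$, i.e.\ on $X\cap\comp{V_t}$ --- which is not a function of the class of $X\cap V_t$. More fundamentally, the $\equi{V_t}{d,R}$-class of $X\cap V_t$ records, for $v\in\comp{V_t}$, the truncated counts $\min(d,|N^r(v)\cap X\cap V_t|)$; it says nothing about how many neighbors a vertex $u\in V_t$ has in $X\cap\comp{V_t}$. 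For the same reason your per-term size counters do not combine correctly at a merge node: $|N_1^1(X)\cap V_a|$ relative to the global assignment can grow after joining $V_a$ with $V_b$, because vertices of $V_a$ may acquire witnesses in $X\cap V_b$, so the children cannot possibly have recorded the final value. The paper resolves exactly this by parameterizing everything by an ``expectation'' $\btE\in(\Rep{\comp{A}}{\phi})^k$, a tuple of equivalence classes on the complement side that stands in for the future completion, and by enumerating over all such $\btE$ when reducing the partial-solution set (see the equivalence $\bowtie_{\btE}$ in Lemma~\ref{lem:equivalenceformula} and the reduce routine in Lemma~\ref{lem:reducelogic}). Conditions (B) and (C) there evaluate each term $t_i(\tB\cup\btE)$ with the outside replaced by $\btE$, which is what makes the size checks and set comparisons well-defined at a cut; the cost is the extra factor $\nec_\phi(\comp A)^k$ coming from the enumeration over $\btE$, which is absorbed into $\snec_\phi(A)^{\calO(k)}$. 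Without this mechanism (or an equivalent one, e.g.\ a deferred accounting at the root), your DP state is not expressive enough, and the induction you sketch for the term evaluation lemma cannot even get started on a single $N^r_d$. You partially anticipate the difficulty by noting the equivalence must be a ``congruence for all term constructors,'' but the fix is not a stronger equivalence on $X\cap V_t$ alone: one genuinely needs information about the complement side, and the cleanest way to carry it is to branch on the expectation.

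A second, smaller point: even after adding expectations, your state indexing by one class per variable plus per-term size counters is essentially the content of the paper's $\bowtie_{\btE}$, so once the gap is filled the two approaches become the same. Your accounting of $(n+1)^s$ for the counters gives $n^{\calO(s)}$, matching Theorem~\ref{thm:main:basic}, though the paper's more careful $\prod_i(m_i+2)$ bookkeeping is what yields the refined Theorem~\ref{thm:main:AC}.
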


The previous theorem already reproves and generalizes several results from the literature in a unified framework.
In particular, all locally checkable vertex problems~\cite{Bui-XuanTV2013,TelleProskurowski1997} 
as well as their complements and distance-versions~\cite{JaffkeKwonStrommeTelle2019}
are expressible in \en logic.

\myparagraph{Connectivity and acyclicity, and optimization}
It is known that \textsc{Graph Connectivity} cannot be expressed in existential \MSOone logic, 
even when the logic is extended with several built-in relations~\cite{Fagin1975,FaginEtAl1995,Schwentick1994}.
However, it can be expressed by the following \en sentence for $n$-vertex graphs:
\[\exists X~\card{X} = 1 \wedge \overline{X \cup N_1^{n-1}(X)} = \emptyset\]
Nevertheless, we do not expect to be able to use the $N_d^r$-operators in a similar way to express that a \emph{vertex set} induces a connected subgraph.
We circumvent this problem by enhancing our framework with ``external'' predicates
for verifying connectivity and acyclicity,
using the techniques of Bergougnoux and Kant{\'e}~\cite{BergougnouxKante2021}.
We call the resulting logic the \emph{\ac distance neighborhood} logic,
and observe that problems such as \textsc{Longest Induced Path} and \textsc{Feedback Vertex Set}
can be expressed in \ac \en logic.
This way, nearly all problems that are known to be solvable in \XP time parameterized by the mim-width
of a given decomposition of the input graph can be expressed in our framework (see \Cref{par:mim:probs}).
We note that Pilipczuk~\cite{Pilipczuk11} also added separate connectivity predicates to his efficient modal logic for tree-width.
Recently, connectivity predicates were also analyzed in the context of \FO 
logic~\cite{PilipczukEtAl2022,schirrmacher_et_al:LIPIcs.CSL.2022.34}.

Rather than just verifying the existence of a tuple of sets with a specified \en-property, 
we can find sets that additionally achieve the best value of some linear optimization function.
While the precise definition is given in \Cref{sec:prelim},
we may for example have a graph $G$ with vertex-weights $w_1,\dots,w_k$
and measure the \emph{weight} of a solution $B_1,\dots,B_k \subseteq V(G)$ by
$\obj_G(B_1,\dots,B_k) = \sum_{i=1}^k \sum_{v \in B_i} w_i(v)$.
In our algorithms, all arithmetic operations depending on input weights are aggregations and comparisons.
We perform our computations under the commonly used random access model, and thus the run time of our algorithms is independent of the involved weights.
Nevertheless, the weight dependent terms remain small enough so that the overhead incurred by changing the computational model to Turing machines would be logarithmic in the highest occurring weight.

\Cref{thm:main:basic} overestimates the contribution of size measurements to the run time,
as size measurements with constant value only contribute with a constant factor to the run time.
The following more refined statement subsumes \Cref{thm:main:basic}, 
except for the parameterization by tree-width.
We compare the run times from our theorems with other logics for tree-width and clique-width in \Cref{tab:compare:logics}.
\begin{theorem}\label{thm:main:AC}
    There is an algorithm that, for a given \ac \en formula~$\phi(X_1,\dots,X_k)$, 
    weighted graph $G$ on $n$ vertices together with an \fwidth decomposition of $G$ of width $w \ge 1$,
    computes a tuple $B^*_1,\dots,B^*_k \subseteq V(G)$ such that
    \[
    \obj_G(B^*_1,\dots,B^*_k) = \max \bigl\{ \obj_G(B_1,\dots,B_k) \bigm| B_1,\dots,B_k \subseteq V(G), G \models \phi(B_1,\dots,B_k) \bigr\}
    \]
    or concludes that no such tuple exists.
    When $\phi$ has size measurements $m_1,\dots,m_s$
    and $r \defeq r(\phi)$, $d \defeq d(\phi)$ 
    and $M \defeq \prod_{i \in [s]}(m_i+2)^{\calO(1)}$
    then the run time of this algorithm is
	\begin{itemize}
		\item $n^{\calO(dw|\phi|^2)}$
			(and $n^{\calO(dw|\phi|)}$ if $r = \calO(1)$) if \fwidth is mim-width,
		\item $2^{\calO(d(wr|\phi|)^2)} M n^3$ (and $2^{\calO(dw|\phi|)}Mn^3$ if $r = 1$)
			if \fwidth is clique-width, and
		\item $2^{\calO(dw^4(r|\phi|)^2)} M n^3$ (and $2^{\calO(dw^2|\phi|)}Mn^3$ if $r = 1$)
			if \fwidth is rank-width,
	\end{itemize}
\end{theorem}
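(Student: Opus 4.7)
The plan is to perform bottom-up dynamic programming on the given branch decomposition. At each node $t$, with associated vertex set $V_t \subseteq V(G)$, I maintain a table indexed by a signature that summarizes everything about a partial assignment $B_1 \cap V_t, \ldots, B_k \cap V_t$ that future join operations might need to know. Against each signature I store the best value of $\obj_G$ achievable by any extension realising it; this lifts the decision DP to a weighted one, accounting for the factor $b$ in the running time, while witnesses $B^*_1,\dots,B^*_k$ are recovered by back-pointers or a second pass.

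The core of the signature is a vector of equivalence classes under the $d$-neighborhood equivalence of Bui-Xuan, Telle, and Vatshelle, one coordinate per neighborhood subterm of $\phi$. Two subsets of $V_t$ fall into the same class for a term $N^r_d(\cdot)$ when they are indistinguishable from outside $V_t$ by $d$-fold domination at distance at most $r$; the index of this relation is $n^{\calO(dw)}$ for mim-width, $2^{\calO(dw)}$ for clique-width, and $2^{\calO(dw^2)}$ for rank-width. Distance-$r$ queries are handled for mim-width by the fact that the same branch decomposition witnesses mim-width at most $2w$ for $G^r$ (Jaffke, Kwon, Str{\o}mme, and Telle), whereas for clique- and rank-width the $r$-th power can inflate the width, which is where the extra $r|\phi|$ factor in the exponent comes from. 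Boolean combinations and set operations are evaluated by propagating class labels through the term tree, and each size measurement $|t_i| \le m_i$ contributes one coordinate with values in $\{0,1,\dots,m_i+1\}$, explaining the factor $M = \prod_i (m_i+2)^{\calO(1)}$.

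For the acyclicity and connectivity predicates I plug in the representation introduced by Bergougnoux and Kant\'e: a partial assignment carries an additional connectivity pattern describing the boundary partition of the relevant induced subgraph, together with enough edge-count information to detect cycles at the root. The reduce/join step using the representative-sets framework keeps the table size single-exponential in the branch-decomposition width, and this pattern coordinate multiplies, rather than inflates beyond, the neighborhood-class signature. Existential quantifiers $\exists X_j$ in $\phi$ are discharged by a projection step that keeps the best objective over all realizations of the $X_j$-signature compatible with the remaining constraints, so the whole Boolean skeleton of $\phi$ is evaluated by a single pass over the table.

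The main obstacle I foresee is controlling the exponent. Since neighborhood subterms can be nested arbitrarily, each contributing its own class coordinate, showing that the total signature count is bounded by the product of at most $|\phi|$ factors of the above form, rather than an iterated exponential, is what produces the $|\phi|$ and $|\phi|^2$ exponents stated. A secondary delicate point is that the join of two children's tables must simultaneously update the vector of neighborhood classes and the connectivity pattern; verifying that this combined join fits within the claimed table-size budget, and that the $M$ and $b$ factors remain multiplicative rather than being raised to a power of $|\phi|$ across the DP, is the most routine-but-technical part of the argument, and requires sharing a single counter per size measurement throughout the entire computation.
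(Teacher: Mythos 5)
The proposal captures the broad strategy --- bottom-up dynamic programming along the branch decomposition with a reduce routine driven by the $d$-neighborhood equivalence, combined with the rank-based approach for connectivity --- but the description of the DP state has a genuine gap. You propose a \emph{fixed} signature: one neighborhood-equivalence class per subterm, a counter per size measurement, and a connectivity pattern. The problem is that such a counter, say tracking $|\ip{t}\cap V_t|$ for $t = N^1_1(X)$, is not determined by the partial assignment alone: vertices of $V_t$ may enter $N^1_1(B\cup D)$ through the completion $D\subseteq\comp{V_t}$, and how many do so depends on the overlap between $N^1_1(D)\cap V_t$ and $N^1_1(B)\cap V_t$, which the neighborhood class of $B$ does not determine. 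The same issue hits the truth value of primitive comparisons $t=t'$ and gets worse for $d\ge 2$, where the threshold can be met by a split $d_1+d_2$ across the cut. The device that resolves this --- absent from your sketch --- is to make the state \emph{two-sided}: the reduce routine of the paper ranges over all \emph{expectations} $\btE$ (tuples of $\equi{\comp{V_t}}{\phi}$-classes describing a hypothetical completion) and, for each $\btE$, keeps the best partial solution in each class of an equivalence $\bowtie_\btE$ whose conditions are evaluated with $\btE$ substituted for the completion (Lemma~\ref{lem:equivalenceformula}). Without this $\btE$-indexing, two partial solutions that your signature identifies may extend to differently-satisfying assignments, and the correctness argument for the join/reduce step fails.

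A second gap concerns acyclicity. You suggest a connectivity pattern plus ``edge-count information to detect cycles at the root.'' The paper instead (i) first reduces to the case where every acyclic constraint is paired with a connectivity constraint --- so the relevant induced subgraphs must be trees, not merely forests --- by passing to a supergraph $H$ built with pendant vertices and a universal connector (proof of Lemma~\ref{lem:ACmetathmCore}); and (ii) defines an equivalence $\sim^\acy_{\bE}$ on the family $\forest(A,\bE)$ of viable partial sets, whose index $\sfN_\acy$ is controlled by a combinatorial analysis of forests crossing the cut (Lemmas~\ref{lem:A&C:properties} and~\ref{lem:acy:eq}). A running edge count does not suffice: crossing edges into distinct components of the completion can close cycles that a local count cannot see, and the point of these lemmas is precisely to bound how much extra information one must retain. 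Finally, the formula has to be preprocessed into a disjunction of \ac-clauses after rewriting $\neg\conn$ and $\neg\acy$ in \encore{}, and the existential block is discharged not by a projection pass but by lifting to a graph weighted on extra coordinates with zero weight (Theorem~\ref{thm:ACmetathm}); these are routine but form part of the argument that you would still need to fill in.
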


\begin{table}
	\centering
	\renewcommand{\arraystretch}{1.25}
	\begin{tabular}{r|c|c|c|c}
		& \MSO & \ecml & \ac \en & \en \\
		\hline
		tree-width 		& $f_1(w, \card{\phi})n$~\cite{CourcelleI} & $2^{\calO_\phi(w)}n^{\calO(1)}$~\cite{Pilipczuk11} & $2^{\calO(dw^2\card{\phi})}n^{\calO(1)}$ [This] & $2^{\calO(dw\card{\phi})}n^{\calO(1)}$ [This] \\
		clique-width	& $f_2(w, \card{\phi})n$~\cite{CourcelleMakowskyRotics2000} & ? & $2^{\calO(dw\card{\phi})}n^{\calO(1)}$ [This] & $2^{\calO(dw\card{\phi})}n^{\calO(1)}$ [This] \\
	\end{tabular}
	\caption{Comparison of run times of our logic with those from other works. 
		For better comparability, we assume that we consider unweighted problems, 
		that we do not have access to higher powers of the graph, 
		and that we only have (at most) one size measurement, 
		which corresponds to optimization variants of the \MSO-based meta-theorems for tree-width and clique-width.
		Here, $\phi$ denotes the given formula and $w$ the width and
		in case of (\ac) \en logic, $d = d(\phi)$. 
	    The notation $\calO_\phi$ means that the hidden constant depends on $\phi$.
		The functions $f_1$ and $f_2$ are towers of exponentials whose height depends on $\card{\phi}$.
        The run time bound for \ac \en and tree-width is derived via
        a linear upper bound on the tree-width of a graph in terms of its rank-width~\cite{Kante2007,Oum08}.
        To obtain \FPT-algorithms for \ecml parameterized by clique-width, one would have to restrict it to a variant without 
        edge sets due to known hardness results. 
		Since one could keep ultimately periodic counting this would still not be comparable to \en.
    }
	\label{tab:compare:logics}
\end{table}

\subsection{Applications and relationship to other work}\label{par:mim:probs}
Many problems have been shown to be solvable in polynomial time if the input graph is given together with one 
of its branch decompositions of constant mim-width.
First and foremost, Bui-Xuan, Telle, and Vatshelle~\cite{Bui-XuanTV2013} showed that this holds for all 
vertex problems that are \emph{locally checkable},
which includes central subset problems such as
\textsc{Independent Set}, \textsc{Dominating Set}, \textsc{Perfect Code}, \textsc{Induced Matching},
as well as partitioning problems such as 
\textsc{$H$-Homomorphism} and \textsc{$H$-Covering} for fixed $H$, 
\textsc{Perfect Matching Cut}, and \textsc{Odd Cycle Transversal}.
We refer to~\cite{TelleProskurowski1997} for an overview.
The first non-local problems that were shown to be solvable in polynomial time given a branch decomposition of constant mim-width
were problems related to finding induced paths~\cite{JaffkeKwonTelle2020b}, \textsc{Feedback Vertex Set}~\cite{JaffkeKwonTelle2020},
and distance-versions of locally checkable problems~\cite{JaffkeKwonStrommeTelle2019}.
Bergougnoux and Kant{\'e}~\cite{BergougnouxKante2021} added all connected and acyclic versions of locally checkable problems 
to that list, which also generalizes the results of~\cite{JaffkeKwonTelle2020b,JaffkeKwonTelle2020}.
Other problems solvable in polynomial time given a branch decomposition of constant mim-width are
\textsc{Semitotal Dominating Set}~\cite{GalbyEtAl2020}, 
\textsc{Subset Feedback Vertex Set} and \textsc{Node Multiway Cut}~\cite{BergougnouxPapadopoulosTelle2020}.
All of the above mentioned algorithms run in $n^{\calO(w)}$ time except the ones from~\cite{BergougnouxPapadopoulosTelle2020} which run in $n^{\calO(w^2)}$ time,
where $w$ is the mim-width of the given branch decomposition.
Therefore, they are all \XP-algorithms when parameterized by $w$
and \W[1]-hardness has been shown for several locally checkable problems
and \textsc{Feedback Vertex Set}~\cite{BakkaneJaffke2022,FominGR2020,JaffkeKwonStrommeTelle2019,JaffkeKwonTelle2020}.

Except for \textsc{Subset Feedback Vertex Set} and \textsc{Node Multiway Cut}, 
all of the problems mentioned in this paragraph are expressible in \ac\en logic,
and therefore our meta-theorem unifies nearly all of the above algorithmic results.

In a recent preprint~\cite{GonzalezMann2022}, Gonzalez and Mann considered a new framework for locally 
checkable problems which extends that of~\cite{TelleProskurowski1997},
and showed that all problems expressible in that framework are polynomial-time solvable given a branch decomposition of constant mim-width.
This extended the list of polynomial-time solvable problems on graphs of bounded mim-width;
several examples are given in~\cite{GonzalezMann2022}.
Gonzalez and Mann~\cite{GonzalezMann2022} prove that all problems expressible in their framework can be expressed in \en logic as well.
%

\myparagraph{New problems solvable on graphs of bounded mim-width}
We do not only unify nearly the entire algorithmic literature on problems solvable in polynomial time given a branch decomposition 
of bounded mim-width, we also extend it.
Several natural problems are expressible in (\ac) \en logic that have not been proved to be solvable in polynomial time on 
graphs of bounded mim-width before.
Several of them concern proper colorings with a bounded number of colors, 
where we have additional restrictions on the color classes or their interactions:
in \textsc{Acyclic $k$-Coloring}, we seek proper $k$-coloring such that each pair of color classes induces a forest,
and \textsc{Star $k$-Coloring} ask for a proper $k$-coloring  such that each pair of color classes induces a star forest.
In \textsc{$b$-Coloring with $k$ Colors}, we want a proper $k$-coloring such that each color class 
contains a vertex that has a neighbor in all the other color classes.
In \textsc{Conflict-free $k$-Coloring}, the goal is to find a (not necessarily proper) $k$-coloring of the input graph such that for every vertex there is a color that appears at most once in its neighborhood.
All of these problems are expressible in \ac \en logic, and it is worth noting that for \textsc{Star $k$-Coloring} 
(and, naturally, for \textsc{$b$-Coloring} and \textsc{Conflict-free $k$-Coloring})
\en logic already suffices.
The \textsc{$k$-$L(2,1)$-Labeling} problem, 
of importance in frequency assignment~\cite{Calamoneri2011,GriggsYeh1992,Hale1980},
asks for a vertex coloring of a graph with colors $\{1, \ldots, k\}$,
such that the colors of vertices at distance one differ by at least two,
and colors of vertices at distance two differ by at least one.
More generally, in the \textsc{$k$-$L(d_1, \ldots, d_s)$-Labeling} problem, 
we require that for all $i$, colors of vertices at distance $i$ differ by at least $d_i$.
Such problems can be expressed in \en logic as well.

\myparagraph{Application to solution diversity}
Baste et al.~\cite{BasteEtAl2020} initiated the study of the \emph{solution diversity} 
paradigm in the context of parameterized algorithms.
Here, we are given some combinatorial problem whose solutions are sets over some universe 
(a so-called \emph{subset problem}),
and instead of finding a single solution, we want to find a small set of solutions that is sufficiently diverse.
Natural ways of measuring the diversity of a set of solutions are 
to maximize the minimum pairwise Hamming distance of the indicator vectors of its members,
or to maximize the sum of the Hamming distances.
For a subset problem $\Pi$, we denote by \textsc{Min-Diverse $\Pi$} 
the problem asking for $p$ solutions to $\Pi$ whose minimum pairwise Hamming distance is at least some given value $t$,
and by \textsc{Sum-Diverse $\Pi$} 
the problem asking for $p$ solutions to $\Pi$ whose sum of pairwise Hamming distances is at least some given value $t$.
Both ways of measuring diversity can be expressed in our framework.
The diversity measure in \textsc{Min-Diverse $\Pi$} is expressed in \en logic directly, which requires the addition of 
$\calO(p^2)$ size measurements to the formula.
We get more efficient encodings for \textsc{Sum-Diverse $\Pi$}, without an increase in the number of size measurements.
This is due to the fact that the diversity measure in \textsc{Sum-Diverse $\Pi$} 
can be encoded in the weight function of the input graph.%
\iflong\footnote{We give the concrete run times resulting from the following observation in \Cref{tab:div:runtime}.}\fi
\begin{observation}\label{obs:diversity:formulas}
	Let $\Pi$ be a vertex subset problem expressible in \ac\en logic via formula $\phi(X)$ with $s$ size measurements,
    and suppose that each of the following diverse variants of $\Pi$ ask for $p$ solutions.
	\begin{enumerate}
		\item \textsc{Min-Diverse $\Pi$} is expressible in \ac\en logic via a formula $\phi'(X_1, \ldots, X_p)$ such that 
			$\card{\phi'} = \calO(p\card{\phi} + p^2)$, $d(\phi') = d(\phi)$, and $r(\phi') = r(\phi)$,
			and $\phi'$ has $s + \calO(p^2)$ size measurements. 
		\item \textsc{Sum-Diverse $\Pi$} can be expressed as an \ac\en optimization problem based on a formula $\phi'(X_1, \ldots, X_p)$ such that
			$\card{\phi'} = \calO(p\card{\phi})$, $d(\phi') = d(\phi)$, and $r(\phi') = r(\phi)$,
			and $\phi'$ has $s$ size measurements.
	\end{enumerate}
\end{observation}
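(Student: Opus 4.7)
The plan in both parts is to start from $p$ independent copies of $\phi$, namely
\[
\phi_0(X_1,\dots,X_p) \defeq \bigwedge_{i=1}^p \phi(X_i),
\]
which enforces that each $X_i$ is a valid $\Pi$-solution. This step alone yields $|\phi_0| = \calO(p|\phi|)$, keeps $d(\phi_0) = d(\phi)$ and $r(\phi_0) = r(\phi)$, and introduces no new neighborhood operators beyond those already occurring in $\phi$.

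For part~(1), I would then conjoin, for each pair $1 \le i < j \le p$, a single diversity clause of the form
\[
|(X_i \setminus X_j) \cup (X_j \setminus X_i)| \ge t.
\]
The subterm $(X_i \setminus X_j) \cup (X_j \setminus X_i)$ is the symmetric difference $X_i \triangle X_j$, whose cardinality equals the Hamming distance between the indicator vectors; it is a legitimate neighborhood term since such terms are closed under union and set-difference. Each clause is of constant length and contributes exactly one new size measurement, so the $\binom{p}{2}$ additional clauses account for the additive $\calO(p^2)$ terms in both the length and the number of size measurements of $\phi'$, while $d$ and $r$ remain untouched.

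For part~(2), the key observation is that the diversity measure $\sum_{i<j}|X_i \triangle X_j|$ is already a linear function of vertex-indicator variables and can therefore be absorbed into the weight function of \cref{thm:main:AC} rather than into the formula. Concretely, I would introduce a fresh set variable $Y_{ij}$ for each pair $1 \le i < j \le p$ and append the constant-length equalities
\[
Y_{ij} = (X_i \setminus X_j) \cup (X_j \setminus X_i),
\]
which are comparisons between neighborhood terms and hence syntactically allowed, then set the vertex weight in the profile of each $Y_{ij}$ to $1$ and in each $X_i$ to $0$. The objective produced by \cref{thm:main:AC} then evaluates to exactly $\sum_{i<j}|Y_{ij}| = \sum_{i<j}|X_i \triangle X_j|$, so the optimum coincides with the \textsc{Sum-Diverse $\Pi$} diversity objective. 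No size measurements and no neighborhood operators are added, and the length grows only by $\calO(p^2)$ symbols.

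There is no substantive mathematical obstacle; the main thing to verify is that each piece of the gadget --- symmetric differences, size comparisons, equalities of terms, and auxiliary set variables --- is syntactically legal in \ac\en logic as set up in \cref{sec:logic}. Once that is confirmed, both claims reduce to routine bookkeeping of the length and of the $d$, $r$, and size-measurement parameters inherited from $\phi$.
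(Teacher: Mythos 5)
For part~(1), your approach matches the paper's: take $p$ independent copies $\bigwedge_{i=1}^p \phi(X_i)$ and conjoin one diversity clause $\card{(X_i\setminus X_j)\cup(X_j\setminus X_i)}\ge t$ per pair.

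For part~(2), however, your argument does not establish the bound claimed in the observation. The statement requires $\card{\phi'}=\calO(p\card{\phi})$ with a formula on exactly the $p$ free variables $X_1,\dots,X_p$. Your construction introduces $\binom{p}{2}$ fresh set variables $Y_{ij}$ and appends one equality per pair, so you end up with $\calO(p\card{\phi}+p^2)$ length and $p+\binom{p}{2}$ free variables --- that is, you have essentially recreated the part~(1) bound, losing precisely what makes part~(2) stronger. The point of \textsc{Sum-Diverse} being ``more efficient'' (compare the $n^{\calO(s)}$ versus $n^{\calO(s+p^2)}$ columns in \cref{tab:div:runtime}) is that the diversity measure can be pushed entirely into the weight function, with no new variables and no formula overhead at all. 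The paper's trick is to observe that
\[
\sumdiv(B_1,\dots,B_p) \;=\; \sum_{v\in V(G)} \card{\{i \mid v\in B_i\}}\cdot \card{\{i \mid v\notin B_i\}},
\]
which is a sum of per-vertex contributions depending only on the membership profile $\{i : v\in B_i\}\subseteq[p]$. This is exactly the shape of the objective $\obj_G(\tB)=\sum_v w(v,\{i\mid v\in B_i\})$, so one simply sets $w(v,S)=\card{S}\cdot(p-\card{S})$ on the original $p$ variables and takes $\phi'=\bigwedge_{i=1}^p \phi(X_i)$ with no further modification. Your $Y_{ij}$ gadget is semantically correct but structurally different and yields weaker parameters; the observation as stated requires the weight-function encoding, not an auxiliary-variable one.
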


We would like to point out that the algorithms implied by the previous observation generalize some \FPT-results for diverse problems parameterized by tree-width~\cite{BasteEtAl2020}
to the parameterization by clique-width, albeit only for problems expressible in \ac\en logic.
Very recently, and independently of this work,
Hanaka et al.~\cite{HanakaEtAl2022} showed that for fixed $p$, \textsc{Diverse Independent Set} can be solved 
in polynomial time on interval graphs.
The algorithm resulting from \Cref{obs:diversity:formulas} contains this result as a special case, 
since interval graphs have linear mim-width $1$, 
and a decomposition witnessing this can be computed in polynomial time~\cite{BelmonteVatshelle2013}.

\myparagraph{Computing mim-width}
The algorithms given by our meta-theorems require a decomposition of bounded mim-width 
to be given along with the input graph.
However, little is known about the complexity of computing or approximating mim-width in general.
Both the question whether computing mim-width exactly is para-\NP-hard, 
and the question whether there is an \XP-time approximation algorithm with \emph{any} approximation guarantee that does not depend 
on the size of the graph, are open at the moment.
The problem of computing mim-width has been shown to be \W[1]-hard, and there is no polynomial-time constant-factor approximation 
unless $\NP = \ZPP$~\cite{SaetherVatshelle2016}.
The only known positive special cases are an algorithm to compute the linear mim-width of trees~\cite{HogemoEtAl2019},
\FPT-algorithms parameterized by tree-width plus maximum degree or by treedepth, 
and a linear kernel parameterized by the feedback edge set number~\cite{EibenEtAl2022}.
However, in many graph classes of bounded mim-width, we can compute bounded mim-width decompositions in polynomial time~\cite{BelmonteVatshelle2013,JaffkeThesis,VatshelleThesis}.
These graph classes include circular-arc, circular permutation, convex, leaf power, $H$-graphs (for fixed $H$), 
as well as Dilworth-$k$ and complements of $k$-degenerate graphs (which includes complements of planar graphs) 
for fixed $k$.
We refer to~\cite{JaffkeThesis,VatshelleThesis} for an overview.

\subsection{Tightness of the results}
As mentioned above, 
several problems have been shown to be \W[1]-hard parameterized by the mim-width $w$ of a given decomposition~\cite{BakkaneJaffke2022,FominGR2020,JaffkeKwonStrommeTelle2019}.
Many of these reductions also rule out
$f(w)n^{o(w/\log w)}$ time algorithms under the Exponential Time Hypothesis~\cite{BakkaneJaffke2022}.
\Cref{thm:main:AC} gives $n^{\calO(w)}$ time algorithms for all these problems,
which matches the lower bound up to a $(\log w)$-factor in the degree of the polynomial.
Moreover, many fundamental graph problems (expressible in \ac\en logic) 
are known to not have algorithms whose run time is subexponential in the number of vertices of the input graph,
and therefore the tree-width or clique-width,
unless the \ETH fails~\cite{LokshtanovEtAl2011}.
In a recent preprint~\cite{BergougnouxKN22}, Bergougnoux, Korhonen, and Nederlof  proved that some problems including \textsc{Independent Set} cannot be solved in $2^{o(\rw^2)}\cdot n^{O(1)}$ unless \ETH fails.
These lower bounds implies that our model-checking algorithm is \ETH-tight for these parameterizations.

In several senses, \Cref{thm:main:AC} is also tight in terms of the expressive power of the logic.
We show that if we give the logic slightly more expressive power, then the corresponding model checking problem
encounters a hardness barrier.
In \Cref{sec:hardness} we define and analyze the following three extensions of \en.
	\begin{itemize}
		\item \textnormal{\textsf{DN+$\forall$}} allows
            the use of a single innermost universal quantification.
            This logic can express \textsc{Clique} by the sentence
            $\exists X \, \card{X} \ge k \wedge \forall X' \, \bigl( X' \cap X \neq \emptyset) \to  \bigl( X \subseteq X' \cup N_1^1(X') \bigr)$.
            \item \textnormal{\textsf{DN+EdgeSets}} allows, in the spirit of \ecml~\cite{Pilipczuk11}, quantification over edge sets $Y$
            and operators $N_Y(t)$ that evaluate to all vertices reachable from $t$ via an edge in $Y$.
            This logic can express \textsc{Max Cut} by the sentence
            $\exists_{\subseteq V} X \, \exists_{\subseteq E} Y \, |Y| \ge k \land N_Y(X)=\comp{X} \land N_Y(\comp{X})=X.$
        \item \textnormal{\textsf{DN+Parity}} allows operators $N^\text{even}(t)$
            that evaluate to all vertices with an even number of neighbors in $t$.
    \end{itemize}
Since \textsc{Clique} and \textsc{Max Cut} are \paraNPhard by mim-width, 
and since \textsf{DN+Parity} can express a problem that is \NP-hard on interval graphs,
we get the following.
\begin{theorem}\label{thm:hardness:extensions}
	The model checking problems for 
    \textnormal{\textsf{DN+$\forall$}},
    \textnormal{\textsf{DN+EdgeSets}} and
    \textnormal{\textsf{DN+Parity}} 
    are para-\NP-hard parameterized by the (linear) mim-width 
	of a given decomposition plus formula length. 
\end{theorem}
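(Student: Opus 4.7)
My plan is to exhibit, for each of the three extensions of \en logic, a formula of \emph{constant} length — meaning its length is independent of any numerical parameter appearing inside size measurements — that expresses a problem already known to be \NP-hard on a graph class of bounded (linear) mim-width. Fixing such a formula and feeding an instance of the hard problem together with a constant-mim-width decomposition of that instance then yields the desired para-\NP-hardness reduction for the model checking problem parameterized by (linear) mim-width plus formula length.

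For \textsf{DN+$\forall$} and \textsf{DN+EdgeSets} the two formulas already displayed in the theorem statement do the job: the first expresses \textsc{Clique} via a single outermost existential and a single innermost universal set quantifier, and the second expresses \textsc{Max Cut} via a bipartition $(X,\overline{X})$ together with an edge-set $Y$ of size at least $k$ whose neighborhood equalities force every edge of $Y$ to cross the bipartition. Both formulas have constant length because a size measurement $|t|\ge k$ contributes only a single unit to $|\phi|$ regardless of the value of $k$. To finish these two cases it then suffices to recall that \textsc{Clique} is \NP-hard on complements of planar graphs, whose linear mim-width is bounded by the absolute constant from \cite{BelmonteVatshelle2013} (\textsc{Independent Set} on planar graphs), and to invoke an analogous \NP-hardness result for \textsc{Max Cut} on a graph class of bounded linear mim-width.

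For \textsf{DN+Parity} my plan is to build an analogous constant-length formula using the operator $N^{\text{even}}$ that encodes a problem whose hardness genuinely rests on parity constraints — for instance a Maximum-Likelihood-Decoding-style question written as $\exists X\,|\overline{N^{\text{even}}(X)}|\ge k$, demanding at least $k$ vertices with an odd number of neighbors in $X$. Its \NP-hardness on bounded linear mim-width graphs would then be established by adapting Pilipczuk's para-\NP-hardness argument for the parity-counting fragment of $\mathsf{ECML_1}$ from \cite{Pilipczuk11}, taking care that the graphs produced by the reduction lie in a suitable class of constant linear mim-width.

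I expect the main obstacle to be this last point. For \textsf{DN+$\forall$} the mim-width bound on complements of planar graphs is standard and the reduction is essentially immediate; for \textsf{DN+EdgeSets} and especially \textsf{DN+Parity} one must design reductions whose image lies in a class of bounded (linear) mim-width, which in the parity case likely requires a purpose-built gadget whose adjacency structure is simultaneously rich enough to encode the parity constraints of the hard problem and sparse/structured enough to admit a branch decomposition of constant linear mim-width.
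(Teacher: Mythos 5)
Your high-level plan -- exhibit a constant-length formula expressing an \NP-hard problem on a graph class of constant (linear) mim-width -- is exactly the paper's strategy, and your treatment of \textsf{DN+$\forall$} via \textsc{Clique} on complements of planar graphs is complete and matches the paper's \Cref{lem:oneUniversalHard}. For \textsf{DN+EdgeSets}, however, you leave the crucial ingredient unstated: you need a class of bounded linear mim-width on which \textsc{Max Cut} is \NP-complete, and the paper obtains this from the recent result that \textsc{Max Cut} is \NP-complete on interval graphs (which have linear mim-width $1$). Without naming such a class, the case is not finished.

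The substantive gap is \textsf{DN+Parity}, and your plan there would not work as stated. You propose to ``adapt Pilipczuk's para-\NP-hardness argument for the parity-counting fragment of $\mathsf{ECML_1}$'', but no such argument exists in \cite{Pilipczuk11}: that paper proves a \emph{tractability} result for \ecml parameterized by tree-width, not any hardness result for mim-width. The hardness of parity counting on bounded mim-width is in fact one of the new contributions of the present paper, and it does not fall out of a one-line formula plus a known reduction. The paper's proof introduces a fresh intermediate problem (\textsc{Parity Interval Selection}), proves it \NP-hard via a Turing reduction from \textsc{Max Cut} through \textsc{Exact Cut}, and then builds a nontrivial interval-graph gadget (target intervals, tiny intervals, and twin-tuple pairs along each constraint row) whose correctness requires a careful inductive argument (\Cref{claim1}) to show that a parity constraint on target intervals faithfully encodes the intended selection parity. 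Only then does a constant-length \textsf{DN+Parity} sentence $\phi_k$ decide the instance, with the image of the reduction living inside interval graphs (linear mim-width $1$). Your sketch neither identifies the target class nor supplies this gadgetry, so the parity case of the theorem is not established by your proposal.
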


Note that the hardness results in the previous theorem are independent of the complexity of computing (linear) mim-width.
In one case the hardness follows from the \NP-hardness of \textsc{Clique} in co-planar graphs, 
a graph class for which there  is a linear-time algorithm to compute linear branch decompositions of mim-width at most~$6$~\cite{BelmonteVatshelle2013,MatulaBeck1983}.
In the other cases it follows from the \NP-hardness of problems in interval graphs,
where we can compute linear branch decompositions of mim-width at most $1$ in linear time~\cite{BelmonteVatshelle2013,BoothLueker1976,HabibEtAl2000}.

\subsection{Techniques}
	The general strategy of our algorithm is bottom-up dynamic programming 
	along the branch decomposition that is given together with the input graph $G$ and an \ac\en formula $\phi$ 
	with variables $(X_1, \ldots, X_k)$.
	For each cut $(A, \comp{A})$ induced by the given decomposition, we compute a set $\cB_A$ of partial solutions, 
	that is, assignments of tuples of subsets $(B_1, \ldots, B_k)$ of $A$ to $(X_1, \ldots, X_k)$,
	which, once completed with vertices from $\comp{A}$, might eventually lead to a proof that $G \models \phi$.
	The runtime of the algorithm is essentially determined by the sizes of such sets $\cB_A$,
	and we work towards keeping these sets small.

	The correctness of our algorithm is based on the property that each set $\cB_A$ \emph{represents} all partial solutions in $A$, in the following sense.
	If there exists a way of completing some partial solution of $A$ with vertices from $\comp{A}$ 
	that leads to a proof that $G \models \phi$, then $\cB_A$ contains at least one such partial solution.
	We compute such sets in a bottom-up manner along the given branch decomposition of the input graph.
	At leaf nodes, it is trivial to obtain a representative set of the desired size.
	At internal nodes (with two children), we start by taking all possible combinations of pairs of one 
	partial solution per child and then try to reduce the size of the resulting set while preserving the fact that 
	it represents all partial solutions.
	This happens via a \emph{reduce routine} which is intended to solve the following task: 
	given a slightly too large set of partial solutions $\cB_A'$ in $A$, 
	it produces a set of partial solutions $\cB_A \subseteq \cB_A'$ that represents $\cB_A'$
	within the desired run time, and such that $\cB_A$ is small enough.
	Using standard methods, we show in \cref{sec:genericAlgo} that each problem that admits such a reduce routine
	can be solved efficiently.
	
	One crucial tool in this reduce routine is the equivalence relation $\equiv^A_\phi$ over $2^A$ 
	based on the $d$-neighborhood equivalence relation 
	due to Bui-Xuan, Telle, and Vatshelle~\cite{Bui-XuanTV2013},
	adjusted according to properties of $\phi$.
	Every pair of tuples over $2^A$ 
	that is equivalent under $\equiv^A_\phi$ 
	has, roughly speaking,
	the same neighborhood structure across the cut $(A, \comp{A})$ 
	when observed through the lens of $\phi$.
	Therefore two such equivalent tuples have very similar properties when it comes to being completable to a solution 
	witnessing $G \models \phi$.
	Moreover, $\equiv^A_\phi$ 
	has the additional appeal that
	bounds on the width measures we consider in this work imply bounds on the index of these equivalence relations.
	
	Before we outline how to obtain the reduce routine for the model checking problem of \ac\en logic,
	we need to discuss one more concept.
	The choice of partial solutions in $A$ depends on how they ``expect'' to be completed 
	by vertices from $\comp{A}$.
	Consider for instance the case when $\phi$ encodes the \textsc{Dominating Set} problem; 
	here, vertices from $\comp{A}$ may dominate vertices from $A$, which in turn affects the choices for partial solutions in $A$.
	We consider each \emph{expectation $\btE$ from $\comp{A}$},
	formally a $k$-tuple of equivalence classes of $\equiv^{\comp{A}}_\phi$,
	which signals that 
	the partial assignments in question expect to be completed with a $k$-tuple of subsets from $\comp{A}$
	that is coordinate-wise contained in $\btE$.

We design a reduce routine for the model checking problem of \ac\en logic in a number of steps.
First, we consider only \en logic, without acyclicity and connectivity predicates.
We define a minimal fragment of \en logic that we call \emph{core \en logic} (\encore logic),
and show that it has the same expressive power as \en.
We observe that the conversion of any \en formula to a \encore formula 
can be done without prohibitively increasing the formula length.
We first prove the meta-theorem for quantifier-free \encore formulas.
In this case, given any expectation $\btE$,
we can restrict the search for the aforementioned ``best'' partial solutions compatible with $\btE$
to partial solutions that are best in an equivalence class of another efficiently computable equivalence relation 
depending on $\btE$ that has small enough index.

The generalization to \en logic uses the conversion from \encore logic,
the reduce routine from the meta-theorem for \encore logic,
and several bounds of the number of equivalence classes of the neighborhood equivalence 
in terms of the width measure in question.

The generalization of \Cref{thm:main:basic} to \ac \en logic 
happens in several steps.
As before, we first consider a simpler fragment of \ac\en logic;
in particular we consider \emph{\ac-clauses}, which are conjunctions of a \encore formula and 
predicates $\acy(X)$ or $\conn(X)$ for some set variable $X$.
The motivation for this restriction is as follows.
In the final theorem, we can convert the given \ac\en formula into a normal form 
containing \ac\encore literals, and acyclicity or connectivity literals of the form
$\conn(X)$, $\acy(X)$, $\neg \conn(X)$, or $\neg \acy(X)$.
We observe that we can deal with the negations of connectivity and acyclitiy contraints using \en (and therefore \encore) logic directly:
	\begin{itemize}
        \item $\neg\conn(X)$ is equivalent to the existence of a partition $Z_1,Z_2$ of $X$ such that no edges go between these two parts.
        That is, $\neg\conn(X) \equiv \exists Z_1 \exists Z_2 \, Z_1 \neq \emptyset \land Z_2 \neq \emptyset \land Z_1 \cup Z_2 = X \land N(Z_1) \cap Z_2 = \emptyset$.
        \item Also, $\neg\acy(X)$ is equivalent to the existence of a subset $Z$ inducing graph with minimal degree two.
        That is, $\neg\acy(X) \equiv \exists Z \, Z \neq \emptyset \land Z \subseteq X \land Z \subseteq N_2^1(Z)$.
    \end{itemize}

It therefore suffices to prove the meta-theorem for \ac-clauses.
Again, this takes several steps.
First, we prove a lemma that deals with the case when all \ac-constraints are connectivity constraints.
This can be done using representatives of the $1$-neighbor equivalence relation together with the rank-based approach 
due to Bodlaender et al.~\cite{BodlaenderCKN15}.
The proof follows a similar strategy as the one presented in~\cite{BergougnouxKante2021};
however it becomes more complex: 
While~\cite{BergougnouxKante2021} deals with the addition of a single connectivity constraint,
we have to be able to take care of any number $k \ge 1$ of connectivity constraints.

The next step is to allow acyclicity constraints, but only in combination with connectivity constraints. 
This means that each variable $X$ for which $\acy(X)$ is in the \ac-clause, $\conn(X)$ must also be in the
\ac-clause.
We improve\footnote{The general strategy again follows that of~\cite{BergougnouxKante2021} with non-trivial and crucial improvements to obtain \FPT-algorithms parameterized by clique-width/rank-width plus the number of acyclicity constraints.
	In the discussion after Theorem 8.1 in~\cite{BergougnouxKante2021} it is claimed that their results
	can be generalized to give these \FPT-algorithms.
	However, the equivalence relation designed in~\cite{BergougnouxKante2021} has too many equivalence classes 
	and therefore their approach does not work.} 
the equivalence relation from~\cite{BergougnouxKante2021} 
based on the $2$-neighbor equivalence relation and we use it to give 
a reduce-routine with the desired running time.
However, there is a substantial difference between this step and the previous one.
For connectivity constraints, we manage to give a reduce-routine whose size and run time bounds 
depend only on the number of 1-neighbor equivalence classes.
The equivalence relations for acyclicity constraints require several combinatorial analyses involving trees 
crossing cuts that differ from width measure to width measure.

The last step is to decouple the acyclicity constraints from the connectivity constraints which is 
achieved via a reduction that preserves the values of the width measures within a sufficient margin.
After all these efforts, we are able to derive \Cref{thm:main:AC} in a similar way as we derived \Cref{thm:main:basic}
by using the corresponding lemmas.
\section{Preliminaries}
\label{sec:prelim}

We denote by $\N$ the set of non-negative integers and by $\N^+$ the set $\N\setminus \{0\}$.
For $k\in \N$, we denote by $[k]$ the set of integers $\{1, 2, \ldots, k\}$.
We let $\max(\emptyset):=-\infty$.
For a set $A$, we denote its power set by $\cP(A)$.
We denote by $\Pset(A)^{k}$ the set of all tuples $(B_1,\dots,B_k)$ with $B_1,\dots,B_k\subseteq A$.
We use the notation $\tB$ to refer to a tuple $(B_1,B_2,\dots,B_{|\tB|})$.
Given $\tB,\tC\in \Pset(A)^{k}$, we denote by $\tB \cup \tC$ the tuple $\tD \in\Pset(A)^{k}$ with $D_i = B_i \cup C_i$ for all $i \in [k]$.
For a subset $S$ of $A$, we denote by $\tB \cap S$ the tuple $\tC \in \Pset(S)^{k}$ with $C_i = B_i \cap S$ for all $i \in [k]$.

\myparagraph{Graphs}
We consider simple, vertex-colored, vertex-weighted graphs.
We denote by $n$ the order of $V(G)$, where the corresponding graph is always clear from the context.
For every vertex set $A\subseteq V(G)$ we denote by $\comp{A}$ the set $V(G)\setminus A$.
For a graph $G$ and color class $\textbf{P}$, we denote the vertices in $G$ with color $\textbf{P}$ by $\textbf{P}(G) \subseteq V(G)$.
Note that color classes may be empty and that vertices may have zero, one or multiple colors.
A \emph{$k$-weighted graph} is a graph $G$ that associates with every $v \in V(G)$ and every $S \subseteq [k]$ a weight $w(v,S) \in \N$.
For every $k$-weighted graph $G$ and $k$-tuple $\tB\in \Pset(V(G))^{k}$ we say the \emph{weight} of $\tB$ in $G$ equals $\obj_G(\tB) = \sum_{v \in V(G)} w(v,\{ i \mid v \in B_i \})$.
We use the well-known random access model where atomic operations on integers take constant time.
When taking a $k$-weighted graph as input, its weights are loaded into separate registers, so we may add and compare them in constant time.
We promise that at all times, the values of these weight computations do not exceed $\sum_{S \subseteq [k]} \sum_{v \in V(G)} w(v,S)$,
so that these computations could be executed on a Turing machine with an overhead logarithmic in the highest occuring weight.

Given two vertices $u$ and $v$, we denote by $\dist(u,v)$ the distance between $u$ and $v$ in $G$, i.e., the number of edges in a shortest path between $u$ and $v$.
The \emph{distance matrix} of $G$ is the $n \times n$-matrix $M$ such that $M[u,v]=\dist(u,v)$ for every $u,v\in V(G)$.

The subgraph of $G$ induced by a subset $A \subseteq V(G)$ of its vertices is denoted by $G[A]$.
For two disjoint subsets $A$ and $B$ of $V(G)$, we denote by $G[A,B]$ the bipartite graph with vertex set $A\cup B$ and edge set $\{ab \in E(G)\mid a\in A \text{ and } \ b\in B \}$. 
Moreover,  we denote by $M_{A,B}$ the adjacency matrix between $A$ and $B$, i.e., the $|A| \times |B|$-matrix such that $M_{A,B}[a,b]=1$ if $ab\in E(G)$ and 0 otherwise.

Given $r\in \N$ and $v\in V(G)$, we denote the open neighborhood of $v$ in $G^r$ (the $r$-th power of $G$) by $N^r(v)=\{u\in V(G)\setminus \{v\} \mid \dist(u,v)\leq r\}$.
\begin{definition}\label{def:neighborhood:operator}
	For $d,r\in \N$ and $U\subseteq V(G)$, we define the $d$-neighborhood of $U$ in $G^r$ as
	$$N_d^r(U)=\{v\in V(G)\mid \abs{N^r(v)\cap U}\geq d\}.$$
\end{definition}
Observe that $N_d^1(U)$ equals the set of vertices $v\in V(G)$ with at least $d$ neighbors in $U$.

\myparagraph{\bf $(d,R)$-neighbor equivalence} The following concepts were introduced in \cite{Bui-XuanTV2013}, but we extend them to higher distances. 
Let $G$ be a graph, $A\subseteq V(G)$ and $d,r\in \bN^+$.
Two subsets $B$ and $C$ of $A$ are \emph{$(d,r)$-neighbor equivalent over $A$}, denoted by $B\equi{A}{d,r} C$, if $\min(\abs{N^r(v)\cap B},d)=\min(\abs{N^r(v)\cap C},d)$ for every $v\in \comp{A}$. 
Given a set $R\subseteq \bN^+$, we say that $B\equi{A}{d,R} C$ if $B\equi{A}{d,r} C$ for every $r\in R$.
It is not hard to check that $\equi{A}{d,R}$ is an equivalence relation.
See Figure \ref{fig:nec} for an example of $(2,1)$-neighbor equivalent sets.
\begin{figure}[h!]
	\centering
	\includegraphics[scale=0.95]{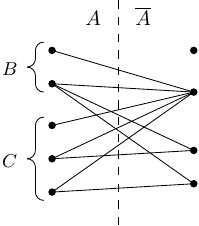}
	\caption{It holds that $B\equi{A}{2,1} C$, but not $B\equi{A}{3,1} C$.}
	\label{fig:nec}
\end{figure}

For a graph $G$, $d\in \bN^+$ and $R\subseteq \bN^+$, let $\nec_{d,R} : 2^{V(G)} \to \bN$ be such that for all $A\subseteq V(G)$, $\nec_{d,R}(A)$ equals the number of equivalence classes of $\equi{A}{d,R}$ in $G$.
To simplify notation, we will use the shorthand $\snec_{d,R}(A)$ to denote $\max(\nec_{d,R}(A),\nec_{d,R}(\comp{A}))$ (where $\mathsf{s}$ stands for symmetric).

To manipulate the equivalence classes of $\equi{A}{d,R}$, we compute a representative of each equivalence class in polynomial time.
This is achieved with the following notion of representatives.
We fix an arbitrary ordering of $V(G)$.
For each $B\subseteq A$, let us denote by $\rep{A}{d,R}(B)$ the lexicographically smallest set $C\subseteq A$ among all $C\equi{A}{d,R} B$ of minimum size.
Moreover, we denote by $\Rep{A}{d,R}$ the set $\{\rep{A}{d,R}(B)\mid B\subseteq A\}$.
It is worth noticing that the empty set always belongs to $\Rep{A}{d,R}$, for all $A\subseteq V(G)$.
Moreover, we have $\Rep{V(G)}{d,R}=\Rep{\emptyset}{d,R}=\{\emptyset\}$ for all $d$ and $R$.
The following lemma was proved for $R=\{1\}$ in \cite[Lemma 1]{Bui-XuanTV2013}. We simply generalize it to handle any $R$. 
The proof can be found in the appendix.
\begin{restatable}[$\star$]{lemma}{computerep}\label{lem:computerep}
        Let $d\in \N^+$ and $R\subseteq \bN^+$ be finite.
        Given a graph $G$, its distance matrix (if $R \neq \{1\}$), and a set $A\subseteq V(G)$, in
        time $O(\nec_{d,R}(A)\cdot \abs{R} \cdot n^2 \cdot \log(\nec_{d,R}(A)))$ one can compute $\Rep{A}{d,R}$ together with a data
        structure that, given a set $B\subseteq A$, returns a pointer to $\rep{A}{d,R}(B)$ in time $O(n^2\cdot \abs{R} \cdot \log(\nec_{d,R}(A)))$.
\end{restatable}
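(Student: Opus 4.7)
The plan is to generalize the enumeration algorithm of Bui-Xuan, Telle, and Vatshelle~\cite{Bui-XuanTV2013}, which proves the case $R = \{1\}$. The only adaptation needed is to work with signatures indexed by pairs $(v,r) \in \comp{A} \times R$ rather than single vertices, using the distance matrix in place of the adjacency matrix. For every $B \subseteq A$, the \emph{signature} is defined as $\sigma(B)(v,r) \defeq \min(|N^r(v) \cap B|, d)$; directly from the definition of $\equi{A}{d,R}$, one has $B \equi{A}{d,R} C$ iff $\sigma(B) = \sigma(C)$. Given the distance matrix, $\sigma(B)$ can be computed from scratch in $O(|R| \cdot n^2)$ time, and if $\sigma(B)$ is already known and $a \in A \setminus B$, then $\sigma(B \cup \{a\})$ can be obtained in $O(|R| \cdot n)$ time by looking up, for each $v \in \comp{A}$, the value $\dist(v,a)$ in the distance matrix and incrementing (capped at $d$) the entries $(v,r)$ with $r \geq \dist(v,a)$.

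The enumeration algorithm maintains a trie $\mathcal{T}$ keyed by signatures, each of whose leaves points to a stored representative, together with a processing queue $Q$; both are initialized from $\emptyset$. While $Q$ is non-empty, a set $B$ is dequeued and, for each $a \in A \setminus B$, $\sigma(B \cup \{a\})$ is computed incrementally and looked up in $\mathcal{T}$; if new, the signature is inserted pointing to $B \cup \{a\}$ and $B \cup \{a\}$ is enqueued. To ensure that the entries of $\mathcal{T}$ end up being the lexicographically smallest minimum-size representatives demanded by the definition of $\rep{A}{d,R}$, I plan to process extensions in increasing size order and, within each size layer, in lexicographic order of the resulting set.

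Correctness is driven by the following \emph{shift-invariance} observation: if $B_1 \equi{A}{d,R} B_2$ and $a \notin B_1 \cup B_2$, then $B_1 \cup \{a\} \equi{A}{d,R} B_2 \cup \{a\}$, since at each coordinate $(v,r)$ the count increments by the same amount (one if $a \in N^r(v)$ and zero otherwise). An induction on the size of a minimum-size witness of each class, using the fact that such a witness can be built from $\emptyset$ by repeated single-vertex additions, then shows that every equivalence class is reached by the enumeration. The runtime bound follows by counting: at most $N \defeq \nec_{d,R}(A)$ representatives enter $Q$, each producing at most $n$ extensions; each extension costs $O(|R| \cdot n)$ for the signature update and $O(|R| \cdot n \cdot \log N)$ for the trie operation, yielding the total $O(N \cdot |R| \cdot n^2 \cdot \log N)$. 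For a query on arbitrary $B \subseteq A$, $\sigma(B)$ is computed in $O(|R| \cdot n^2)$ and $\mathcal{T}$ is descended in $O(|R| \cdot n \cdot \log N)$, matching the stated query bound.

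The most delicate part will be the scheduling of extensions so that the representative stored for each class is genuinely the lex-smallest minimum-size one rather than an arbitrary member; shift-invariance already guarantees reachability of every class by single-vertex extensions, and the scheduling plus its correctness transfer transparently from the treatment of the case $R = \{1\}$ in~\cite{Bui-XuanTV2013}.
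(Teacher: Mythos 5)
Your proposal is correct and follows essentially the same approach as the paper: both define the signature/matrix $M_B[v,r] = \min(d,|N^r(v)\cap B|)$ characterizing $\equi{A}{d,R}$-classes, both run a level-by-level (size-increasing, lexicographic) BFS enumeration of single-vertex extensions from $\emptyset$ with an $O(|R|\cdot n)$ incremental signature update, and both appeal to the shift-invariance of $\equi{A}{d,R}$ under adding a common vertex to justify that every class, and in fact its canonical representative, is reached. The paper stores signatures in a self-balanced binary search tree rather than a trie, and makes the ``scheduling produces the lex-smallest minimum-size representative'' claim you defer to~\cite{Bui-XuanTV2013} explicit via a short minimality/contradiction argument (take a minimal $U\in\Rep{A}{d,R}$ missed by the algorithm and show $U\setminus\{v\}=\rep{A}{d,R}(U\setminus\{v\})$ must already have been enumerated), but these are presentational choices, not different ideas.
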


\myparagraph{Graph width measures}
A \emph{rooted binary tree} is a binary tree with a distinguished vertex called the \emph{root}. 
Let $G$ be a graph.
A \emph{rooted layout} of $G$ is a pair $\cL=(T,\delta)$, where $T$ is a rooted binary tree and $\delta$ is a bijective function from $V(G)$ to the leaves of $T$. 
We call the elements of $V(T)$ \emph{nodes}, to avoid confusion with the vertices of $G$.
For each node $x$ of $T$, let $L_x$ be the set of all leaves $l$ of $T$ such that the path from the root of $T$ to $l$ contains $x$. 
We denote by $V_x^{\cL}$ the set of vertices that are in bijection with $L_x$, i.e., 
$V_x^{\cL}=\{v \in V(G)\mid \delta(v)\in L_x\}$. We omit $\cL$ from the superscript when it is clear from the context.

A function $\f: 2^{V(G)} \to\bN$ is called a \emph{set function} on $G$.
The width measures considered in this paper are instantiations of the following one.
Given a set function $\f: 2^{V(G)} \to\bN$ and a rooted layout $\cL=(T,\delta)$,
the $\f$-width of $G$ on $\cL$, denoted by $\f(G,\cL)$, is $\max\{\f(V_x ) \mid x \in V(T)\}$.  
The $\f$-width of $G$, denoted by $\f(G)$, is the minimum $\f$-width over all rooted layouts of $G$.

In particular, this yields the width measure $\snec_{d,R}(G)$
using the set function $\snec_{d,R}(A)$ defined previously.
We can define various additional width measures by using different set functions.
Recall that a \emph{matching} $M \subseteq E(G)$ is a set of pairwise disjoint edges, 
and that $M$ is an \emph{induced matching} if there are no additional edges between the endpoints of $M$ in $G$, i.e.,
$M = E(G[V(M)])$, where $V(M)$ is the set of endpoints of the edges in $M$.
\begin{itemize}
    \item We define $\mmw(A)$ as the size of a maximum matching in the graph $G[A,\comp{A}]$ and 
        $\mmw(G)$ is called the \emph{maximum-matching-width} of $G$.
    \item We define $\mim(A)$ as the size of a maximum induced matching in the graph $G[A,\comp{A}]$ and
        $\mim(G)$ is called the \emph{mim-width} of $G$.
    \item We define $\mw(A)$ as the cardinality of $\{ N(v)\cap \comp{A} \mid v\in A \}$ and 
        $\mw(G)$ is called the \emph{module-width} of $G$.
    \item We define $\rw(A)$ as the rank over $GF(2)$ of the matrix $M_{A,\comp{A}}$ and
        $\rw(G)$ is called the \emph{rank-width} of $G$.
\end{itemize}

For every graph $G$, $\mmw(G)\leq \tw(G) +1 \leq 3\cdot\mmw(G)$ and $\mw(G)\leq \cw(G)\leq 2\cdot\mw(G)$ where $\tw(G)$ and $\cw(G)$ are the tree-width and the clique-width of $G$ \cite{Rao2006,VatshelleThesis}. One can moreover translate, in time at most $O(n^2)$, a given decomposition into the other one with width at most the given bounds.
The following lemma shows how $\snec_{d,1}(A)$ and $\snec_{d,r}(A)$ is upper bounded by the other parameters. 
\newcommand\booldim{\text{\textsf{bool-dim}}}
\begin{lemma}[\cite{BelmonteVatshelle2013,VatshelleThesis}]\label{lem:necd1}
	Let $G$ be a graph and $d,r\in \bN^+$. For every $A\subseteq V(G)$, we have the following upper bounds on $\snec_{d,1}(A)$:
	\begin{multicols}{2}
        \begin{enumerate}[(a)]
			\item\label{lem:necd1:mmw} $(2d+2)^{\mmw(A)}$
			\item\label{lem:necd1:mw} $(d+1)^{\mw(A)}$
			\item\label{lem:necd1:rw} $2^{d \cdot \rw(A)^2}$
			\item\label{lem:necd1:mimw} $n^{d \cdot \mim(A)}$
		\end{enumerate}
	\end{multicols}
    \noindent Moreover, we have $\snec_{1,r}(A)\leq \snec_{1,1}(A)^r$ and the following upper bounds on $\snec_{d,r}(A)$:
	\begin{multicols}{2}
        \begin{enumerate}[(a)]
			\setcounter{enumi}{4}
			\item\label{lem:necd1:r} $\snec_{1,1}(A)^{dr^2\cdot \log(\snec_{1,1}(A))}$
			\item\label{lem:necd1:r:mimw} $n^{2d\cdot\mim(A)}$
		\end{enumerate}
	\end{multicols}
\end{lemma}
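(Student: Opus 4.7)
The plan is to observe first that every equivalence class of $\equi{A}{d,1}$ is determined by the ``multiplicity vector'' $\tau_B \in \{0,1,\dots,d\}^{\comp{A}}$ given by $\tau_B(v)=\min(|N(v)\cap B|,d)$, so $\nec_{d,1}(A)$ equals the number of realized vectors $\tau_B$. A symmetric argument applied to $\comp{A}$ handles $\nec_{d,1}(\comp{A})$ and together they bound $\snec_{d,1}(A)$. Each of bounds (a)--(d) then amounts to showing that $\tau_B$ admits a compact encoding in terms of the corresponding width parameter.

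For (a), I would invoke K\"onig's theorem on the bipartite graph $G[A,\comp{A}]$ to obtain a vertex cover $C = C_A \cup C_{\comp{A}}$ of size $\mmw(A)$. For any $v\in\comp{A}\setminus C_{\comp{A}}$, every neighbor of $v$ in $A$ lies in $C_A$, so $\tau_B(v)$ is a function of $B\cap C_A$, yielding $2^{|C_A|}$ possibilities; for $v\in C_{\comp{A}}$ one pays $(d+1)^{|C_{\comp{A}}|}$ for the remaining coordinates, and the product is at most $(2d+2)^{\mmw(A)}$. For (b), I would partition the vertices of one side of the cut into twin classes (same neighborhood across the cut), observe that $\tau_B$ is constant within each class, and deduce the bound $(d+1)^{\mw(A)}$ on the appropriate side, with the analogous bound on the other. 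For (c), I would use the classical fact that a $0/1$ matrix of $GF(2)$-rank $r$ has at most $2^{r(r+1)/2}\le 2^{r^2}$ distinct Boolean combinations of its rows: the set of vertices in $\comp{A}$ having at least $i$ neighbors in $B$ is such a Boolean combination of the rows of $M_{A,\comp{A}}$, and knowing these for $i=1,\dots,d$ determines $\tau_B$, giving $2^{d\,\rw(A)^2}$.

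The main obstacle is (d). The plan is to prove that every equivalence class contains a representative $B$ of size at most $d\cdot\mim(A)$, after which $\nec_{d,1}(A)\le n^{d\cdot\mim(A)}$ follows by counting small subsets of $A$. Take $B$ of minimum size in its class. For each $u\in B$ the minimality forces a witness $v_u\in N(u)\cap\comp{A}$ whose coordinate in $\tau$ drops when $u$ is removed, which in turn forces $|N(v_u)\cap B|\le d$; thus the map $u\mapsto v_u$ has fibers of size at most $d$, and there are at least $|B|/d$ distinct witnesses. The crux is extracting an induced matching of size $\ge|B|/d$ between $B$ and these witnesses in $G[A,\comp{A}]$: this is achieved by a greedy selection that, at each step, picks a representative from a yet-untouched fiber whose witness is not adjacent to any previously chosen vertex of $B$, using the degree bound $|N(v_u)\cap B|\le d$ to keep the number of ``forbidden'' pairs under control, together with the minimality of $B$ to discard pairs that would spoil inducedness.

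The higher-distance bounds (e) and (f) follow by passing to the power graph: since $N^r(v)$ is the open neighborhood of $v$ in $G^r$, the quantity $\nec_{d,r}(A)$ computed in $G$ equals $\nec_{d,1}(A)$ computed in $G^r$, and a corresponding identity holds for $\snec$. For (f), applying (d) in $G^r$ together with the result of Jaffke, Kwon, Str\o mme, and Telle~\cite{JaffkeKwonStrommeTelle2019} that $\mim_{G^r}(A)\le 2\mim_G(A)$ immediately yields $n^{2d\,\mim(A)}$. For (e), the inequality $\snec_{1,r}(A)\le\snec_{1,1}(A)^r$ reduces to a direct composition --- the open $r$-neighborhood of a set can be built from $r$ successive $1$-neighborhood operations, each step multiplying the number of distinguishable types by at most $\snec_{1,1}(A)$ --- and the final bound for $\snec_{d,r}(A)$ follows by combining this with the additional $O(d\log\snec_{1,1}(A))$ bits per step needed to record the $d$-capped multiplicities that arise when lifting from $(1,r)$- to $(d,r)$-equivalence.
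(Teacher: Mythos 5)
Your overall framing via the capped multiplicity vector $\tau_B$ is sound, and for item~(a) you take a route genuinely different from the paper's: the paper fixes a maximal matching $M$ in $G[A,\comp{A}]$ and observes that any $v\in\comp{A}\setminus V(M)$ has $N(v)\cap A\subseteq V(M)$, whereas you invoke K\"onig's theorem to obtain a minimum vertex cover $C=C_A\cup C_{\comp A}$ of size $\mmw(A)$ and split $2^{|C_A|}(d+1)^{|C_{\comp A}|}$. Both are correct; yours actually gives the slightly stronger bound $(d+1)^{\mmw(A)}$. Your argument for (f) and for $\snec_{1,r}(A)\le\snec_{1,1}(A)^r$ matches the paper's (pass to $G^r$, use $\mim_{G^r}(A)\le 2\mim_G(A)$; respectively induct on $r$ using $N^r(X)\cap\comp A = (N^{r-1}(X)\cap\comp A)\cup(N(N^{r-1}(X)\cap A)\cap\comp A)$). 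Note also that the paper does \emph{not} prove (b), (c), (d) --- it cites them from Vatshelle's thesis, Bergougnoux--Kant\'e, and Belmonte--Vatshelle, respectively --- so for those items you are comparing against external proofs rather than anything in this paper.

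That said, there are two places where your argument has a real gap. First, in (e) the entire content is the passage from $d=1$ to general $d$, and you gloss it as ``an additional $O(d\log\snec_{1,1}(A))$ bits per step.'' That is not a proof: the inequality $\snec_{d,1}(A)\le\snec_{1,1}(A)^{d\log\snec_{1,1}(A)}$ is a nontrivial lemma (Bui-Xuan--Telle--Vatshelle, Lemma~5) which the paper explicitly invokes and then applies in $G^r$; it does not fall out of an ``information per step'' heuristic, and you would need to either cite it or reprove it. Second, your greedy in (d) is underspecified in a way that I do not see how to repair: the degree bound $|N(v_u)\cap B|\le d$ controls how many $u'\in B$ are adjacent to a previously chosen witness $w_i$ (at most $d$ fibers blocked per step on that account), but it says nothing about how many fibers have their witness adjacent to a previously chosen $u_i$ --- that count is $|N(u_i)\cap W|$ with $W$ the witness set, and $u_i$ may have unbounded degree into $\comp A$. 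So the rule ``pick a fiber whose witness is not adjacent to any previously chosen $u_j$'' can exhaust all remaining fibers after a single step, and you have not used minimality of $B$ in a way that rules this out. On a smaller note, in (b) the partition you want for bounding $\nec_{d,1}(A)$ by $(d+1)^{\mw(A)}$ is a partition of $A$ into twin classes (so that $B$ is encoded by the $\mw(A)$ capped class-intersection sizes), not of $\comp A$; partitioning $\comp A$ and using ``$\tau_B$ is constant on each class'' yields $(d+1)^{\mw(\comp A)}$, which is the wrong side and can be exponentially larger.
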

\begin{proof}
    The bound (\ref{lem:necd1:mw}) has been shown in \cite[Lemma 5.2.2]{VatshelleThesis},
    (\ref{lem:necd1:rw}) has been derived in \cite{BergougnouxKante2021},
    (\ref{lem:necd1:mimw}) is from \cite[Lemma 2]{BelmonteVatshelle2013}.
	We show the remaining bounds.
	
    For (\ref{lem:necd1:mmw}),
    let $M$ be a maximal matching in $G[A, \comp{A}]$.
    We define an equivalence relation $\sim_M$ over $2^A$ as follows.
    For $B, C \subseteq A$, we let $B \sim_M C$ if $B \cap V(M) = C \cap V(M)$ 
    and for all $v \in V(M) \cap \comp{A}$, $\min(d, \card{N(v) \cap B}) = \min(d, \card{N(v) \cap C})$.
    Suppose that $B \sim_M C$, and let $v \in \comp{A}$.
    If $v \notin V(M)$, then we observe that $N(v) \cap A \subseteq V(M)$, 
    otherwise $M$ is not a maximal matching.
    Since $B \cap V(M) = C \cap V(M)$, we conclude that $N(v) \cap B = N(v) \cap C$.
    If $v \in V(M)$, then $\min(d, \card{N(v) \cap B}) = \min(d, \card{N(v) \cap C})$ by the definition of $\sim_M$.
    We conclude that $B \equiv^A_{d, 1} C$,
    and therefore $\snec_{d,1}(A) \le \card{2^A/{\sim_M}} \le 2^{\card{M}}\cdot (d+1)^{\card{M}} = (2d+2)^{\mmw(A)}$.
	
	We show that $\snec_{1,r}(A)\leq \snec_{1,1}(A)^r$ by induction on $r$. 
	For $r = 1$ it is trivial. 
	Suppose $r > 1$.
	We observe that for any $X \subseteq A$, $N^r(X) \cap \comp{A} = N^{r-1}(X) \cap \comp{A} \cup N(N^{r-1}(X) \cap A) \cap \comp{A}$.
	By the induction hypothesis, $\card{\{N^{r-1}(X) \cap \comp{A} \mid X \subseteq A\}} \le \snec_{1, 1}(A)^{r-1}$,
	and since $N^{r-1}(X) \cap A \subseteq A$, 
	$\card{\{N(N^{r-1}(X) \cap A) \cap \comp{A} \mid X \subseteq A\}} \le \snec_{1, 1}(A)$,
	so the claimed bound follows.
	
    Consider (\ref{lem:necd1:r}). 
	In \cite[Lemma 5]{Bui-XuanTV2013}, it is shown%
	\footnote{In~\cite[Lemma 5]{Bui-XuanTV2013}, this bound is stated using the quantity $\booldim(A)$ instead of $\snec_{1,1}(A)$; however $\booldim(A)$ is defined as  $\log \snec_{1,1}(A)$.}
    that $\snec_{d, 1}(A) \le \snec_{1,1}(A)^{d\cdot \log \snec_{1,1}(A)}$.
	Since $\snec_{d, r}(A) = \snec^{G^r}_{d,1}(A)$ we can apply this lemma, and using the previous bound,
	\begin{align*}
		\snec_{d, r}(A) &\le \snec_{1,r}(A)^{d\cdot \log \snec_{1,r}(A)} \\
				&\le \left(\snec_{1,1}(A)^r\right)^{dr\cdot \log\snec_{1,1}(A)} \\
				&\le \snec_{1,1}(A)^{dr^2\log\snec_{1,1}(A)}.
	\end{align*}
	
    Lastly, we show (\ref{lem:necd1:r:mimw}).
	By~\cite[Theorem 5]{JaffkeKwonStrommeTelle2019}, we have that $\mim^{G^r}(A) \le 2\cdot \mim^G(A)$.
    Therefore, by (\ref{lem:necd1:mimw}), we have that $\snec^G_{d, r}(A) = \snec^{G^r}_{d, 1} \le n^{d\cdot\mim^{G^r}(A)} \le n^{2d\cdot\mim(A)}$.
\end{proof}

\section{Distance neighborhood logic}\label{sec:logic}

We define our distance neighborhood logic (\en for short) and its extension with acyclicity and connectivity constraints (\ac \en for short) by extending existential monadic second-order (\MSOone) logic.
Remember that \MSOone allows quantification over vertices and sets of vertices
together with an adjacency relation $E(\cdot,\cdot)$ and equality relation $=$ between vertices, unary vertex relations (i.e., colors)
as well as the containment relation $\in$ of vertices in sets.
Existential \MSOone is the restriction of \MSOone to existential quantifiers, while furthermore requiring that only quantifier-free formulas may be negated.
Vertex variables are denoted by lower-case letters ($x,y,z,\dots$), 
while set variables are denoted by upper-case letters ($X,Y,Z,\dots$).
Furthermore, unary relations (or colors) are denoted by bold letters ($\mathbf{P},\mathbf{Q},\mathbf{R},\dots$).

\myparagraph{Syntax}
We first define so-called \emph{neighborhood terms} using the following rules:

\begin{enumerate}
	\item Every set variable $X$ is a neighborhood term.
	\item $N^r_d(t)$ is a neighborhood term for every $d,r \in \N^+$ and neighborhood term $t$.
	\item Every unary relational symbol $\mathbf{P}$ is a neighborhood term.
	\item $\emptyset$ is a neighborhood term.
    \item If $t_1$ and $t_2$ are neighborhood terms then $\comp{t_1}$, $t_1 \cap t_2$, $t_1 \cup t_2$ and $t_1 \setminus t_2$ are also neighborhood terms.
\end{enumerate}

\noindent Then \en is the extension of existential \MSOone by the following two rules:
\begin{enumerate}
	\setcounter{enumi}{5}
	\item If $t$ is a neighborhood term and $m \in \N$ then $|t| = m$, $|t| \le m$ and $|t| \ge m$ are formulas called \emph{size measurements}.
	\item If $t_1$ and $t_2$ are neighborhood terms then $t_1 = t_2$, $t_1 \subseteq t_2$ and $t_1 \supseteq t_2$ are formulas.
\end{enumerate}
 
\noindent On top of this, \ac \en extends \en with the following additional rule.
\begin{enumerate}
	\setcounter{enumi}{7}
	\item If $t$ is a neighborhood term, then $\conn(t)$ and $\acy(t)$ are formulas.
\end{enumerate}

Let $\xi$ be a formula or a term of \en or \ac \en.
We denote the variables that occur in $\xi$ by the ordered tuple $\var(\xi)$.
We define $d(\xi) \defeq \max\{2,d\}$, where $d$ is the largest value such that a term of the form $N^\cdot_d(\cdot)$ appears in $\xi$.
We denote by $R(\xi)$ the set of all numbers $r \in N$ such that either $r=1$ or a term of the form $N^r_\cdot(\cdot)$ appears in $\xi$.
We further set $r(\xi) = \max R(\xi)$.

We define the \emph{length} $|\phi|$ of a (\ac) \en formula $\phi$ to be the number of symbols of $\phi$.
Note that every number (as occurring for example in a size measurement $t \le m$ or in a super or subscript of a neighborhood term $N^r_d(\cdot)$) is \emph{one} symbol.

\myparagraph{Semantics}
Next, we define the semantics of our logic. 
We consider \emph{vertex-colored} graphs.
This means, each vertex of a graph may be in zero, one or more unary relations ($\mathbf{P},\mathbf{Q},\mathbf{R},\dots$).
An \emph{interpretation} of a formula $\phi$ is a tuple $(G,\beta)$ consisting of a graph $G$
and a function $\beta \colon \dom(\beta) \to 2^{V(G)}$ with $\var(\phi) \subseteq \dom(\beta)$.
Given an interpretation $(G,\beta)$, we define the semantics of neighborhood terms.

\begin{enumerate}
	\item $\ip{X}^{(G,\beta)} = \beta(X)$,
    \item $\ip{N^r_d(t)}^{(G,\beta)} = N^r_d(\ip{t}^{(G,\beta))})$ where the second $N^r_d$ is evaluated in $G$ (see \Cref{def:neighborhood:operator}),
	\item $\ip{\mathbf{P}}^{(G,\beta)} = \mathbf{P}(G)$,
	\item $\ip{\emptyset}^{(G,\beta)} = \emptyset$,
	\item $\ip{\comp t}^{(G,\beta)} = V(G) \setminus \ip{t}^{(G,\beta)}$,
	$\ip{t_1 \star t_2}^{(G,\beta)} = \ip{t_1}^{(G,\beta)} \star \ip{t_2}^{(G,\beta)}$ for $\star \in \{\cap,\cup, \setminus\}$,
\end{enumerate}
\en inherits the semantics from \MSOone, with the following semantics of the additional rules.
\begin{enumerate}
	\setcounter{enumi}{5}
	\item $\ip{|t| \prec m}^{(G,\beta)} = 1$ if $|\ip{t}^{(G,\beta)}| \prec m$ and $\ip{|t| \prec m}^{(G,\beta)} = 0$ otherwise, for $\prec \in \{=,\le,\ge\}$,
\item $\ip{t_1 \prec t_2}^{(G,\beta)} = 1$ if $\ip{t_1}^{(G,\beta)} \prec \ip{t_1}^{(G,\beta)}$ and $\ip{t_1 \prec t_2}^{(G,\beta)} = 0$ otherwise, for $\prec \in {\{=,\subseteq,\supseteq\}}$.
\end{enumerate}
For \ac \en, the semantics of the additional rule are as follows. 
\begin{enumerate}
	\setcounter{enumi}{7}
	\item $\ip{ \conn(t)}^{(G,\beta)}=1$ if $G[\ip{t}^{(G,\beta)}]$ is connected and $\ip{ \conn(t)}^{(G,\beta)}=0$ otherwise, \\
	      $\ip{ \acy(t)}^{(G,\beta)}=1$ if $G[\ip{t}^{(G,\beta)}]$ is acyclic and $\ip{ \acy(t)}^{(G,\beta)}=0$ otherwise.
\end{enumerate}

For a formula or term $\xi$, we write $\xi(X_1,\dots,X_k)$ to indicate that its free variables are exactly the set variables $X_1,\dots,X_k$.
For a graph $G$, formula or term $\xi$ and tuple $\tB \in \Pset(V(G))^{k}$ we write $\ip{\xi(\tB)}^{G}$ as a synonym for 
$\ip{\xi}^{G,\beta}$, where $\beta$ assigns $X_i$ to $B_i$ for all $1 \le i \le k$.
We write $G \models \phi(\tB)$ as a shorthand for $\ip{\phi(\tB)}^G = 1$.

\subsection{Core logic}
Some of the operations in our logic can be understood as ``syntactic sugar'', that does not increase the expressiveness, but merely reduces some friction when expressing problems.
To facilitate our proofs, we consider a smaller fragment \encore (core distance neighborhood logic) of \en that has the same expressive power as \en,
and a similar equivalent fragment \ac \encore of \ac \en.
To this end, we first describe a procedure that simplifies \ac \en formulas.
Assume we start with a \ac \en formula $\phi$.
First, we get rid of all vertex quantifiers.
\begin{itemize}
    \item A vertex variable $x$ can be replaced by a set variable $X$ with $|X| = 1$, where
    $x \in Y$ translates to $X \subseteq Y$,
    $x=y$ translates to $X=Y$, and
    $E(x,y)$ translates to $X\neq Y \land X \subseteq N_1^1(Y)$.
\end{itemize}

Let $Y_\emptyset$ be a new set variable.
We can construct a formula $\phi'$ from $\phi$ by replacing every occurrence of $\emptyset$ with $Y_\emptyset$.
Then $\phi$ is equivalent to $\exists Y \, |Y_\emptyset| \le 0 \land \phi'$.
Next, we exhaustively apply the following simplifications:
\begin{itemize}
	\item $|t| = m$ can be replaced with $|t| \le m \land |t| \ge m$,
	\item $|t| \ge m$ can be replaced with $\neg(\abs{t}\leq m-1)$,
    \item $t_1 \supseteq t_2$ can be replaced with $\comp{t_1} \cap t_2 = Y_\emptyset$.
    \item $t_1 \subseteq t_2$ can be replaced with $t_1 \cap \comp{t_2} = Y_\emptyset$.
    \item $t_1 \cup t_2$ can be replaced with $\overline{\comp{t_1} \cap \comp{t_2}}$.
    \item $t_1 \setminus t_2$ can be replaced with $t_1 \cap \comp{t_2}$.
\end{itemize}

Thus, we also get rid of the $\cup$, $\setminus$, $\subseteq$, $\supseteq$, $\ge m$ and $= m$ operators.
Next, we convert $\phi$ into prenex-normal form. 
Then we can exhaustively apply the following simplification:
\begin{itemize}
    \item
        The formula $\phi$ is in prenex-normal form, i.e., of the form $\exists X_1 \, \dots \exists X_k \, \psi$, where $\psi$ is quantifier-free.
        If $\psi$ contains a term $N^r_d(t)$, $t \cap t'$, $\comp t$, $\conn(t)$, $\acy(t)$ where $t$ is not a variable
        then we can replace $\psi$ by 
        $\exists Y \, t=Y \land \psi'$, where $Y$ is some unused variable and $\psi'$ is obtained from $\psi$ by replacing $t$ with $Y$.
\end{itemize}

This means, we removed nested terms
and can therefore assume that all terms in $\phi$ are of the form $\mathbf P$, $N^r_d(X)$, $X \cap Y$, $\comp X$, $\conn(X)$ and $\acy(X)$.
This invites the following definition of our core neighborhood logics \encore and \ac \encore.

\myparagraph{Definition of core logics}
We first define \emph{primitive formulas}.
\begin{enumerate}
    \item If $\mathbf{P}$ is a unary relational symbol and $X$ is a variable then $\mathbf{P}=X$ is a \emph{primitive formula}.
    \item If $X$, $Y$ and $Z$ are variables then $X = Y$, $X = \comp Y$, 
        and $X \cap Y = Z$ are \emph{primitive formulas}.
    \item If $X$ and $Y$ are variables and $m,d \in \N$ then $N^r_d(X) = Y$ is a \emph{primitive formula}.
    \item If $X$ is a variable and $m \in \N$ then $|X| \le m$ is a \emph{primitive formula}.
    \item If $X$ is variable then $\acy(X)$ and $\conn(X)$ are \emph{primitive \ac formulas}.
\end{enumerate}

Let \encore be the fragment of \en containing all formulas of the form
$\exists X_1 \, \dots \exists X_k \, \psi$, where $\psi$ is a Boolean combination
of primitive formulas as described by the items 1.\ to 4.
We define \ac \encore to be the fragment of \ac \en containing all formulas of the form
$\exists X_1 \, \dots \exists X_k \, \psi$, where $\psi$ is a Boolean combination
of primitive formulas as described by the items 1.\ to 5.
The following observation follows from applying the aforementioned simplifications.

\begin{observation}\label{obs:core}
    For every formula $\phi \in \text{\ac \en}$ one can compute in time $O(|\phi|^2)$ an equivalent formula $\phi' \in \text{\ac \encore}$ with
    \begin{itemize}
        \item $|\phi'| \le 10|\phi|$,
        \item $\var(\phi') \le |\phi|$,
        \item $R(\phi') = R(\phi)$,
        \item $d(\phi') = d(\phi)$,
        \item $\prod_{i \in [\ell']} (m'_i+2) \le 2\prod_{i\in[\ell]} (m_i+2)^2$, 
            where $m'_1,\dots,m'_{\ell'}$ are the values of size measurements in $\phi'$ and
            $m_1,\dots,m_\ell$ are the values of size measurements in $\phi$.
    \end{itemize}
    Moreover, if $\phi \in \text{\en}$ then $\phi' \in \text{\encore}$.
\end{observation}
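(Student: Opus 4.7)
The plan is to execute the six-step simplification procedure laid out just before the observation and carefully account for how each transformation influences $|\phi|$, $\var(\phi)$, $R(\phi)$, $d(\phi)$, and the product $\prod_i(m_i+2)$.

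First I would verify correctness of each local rewrite rule: the singleton-set translation of vertex variables and the atomic formulas $x\in Y$, $x=y$, $E(x,y)$; the introduction of $Y_\emptyset$ with $|Y_\emptyset|\le 0$; the equivalence $|t|=m \equiv |t|\le m \land |t|\ge m$; the replacement of $|t|\ge m$ by $\neg(|t|\le m-1)$; and the Boolean replacements for $\supseteq,\subseteq,\cup,\setminus$. Since these rewrites only combine existing neighborhood terms using complement and intersection, they introduce no new subterm of the form $N^r_d(\cdot)$ and therefore preserve $R(\phi)$ and $d(\phi)$.

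For the length and variable counts I would argue by a per-occurrence blow-up analysis on the syntax tree of $\phi$. Each vertex variable is replaced by one fresh set variable and its atomic occurrences expand by at most a constant number of symbols; the $\emptyset$-replacement introduces one fresh variable $Y_\emptyset$ and a constant-length prefix; the Boolean rewrites above each expand one symbol into at most a constant number of symbols without creating new variables or size measurements; the prenex-normal-form step only reorders quantifiers; and the final flattening step (repeatedly factoring out a non-variable subterm $t$ as $\exists Y\, t=Y$) introduces at most one fresh variable per internal node of the term-forest of $\phi$, hence at most $|\phi|$ fresh variables and a linear number of new primitive formulas. Summing the per-symbol constants gives the bounds $|\phi'|\le 10|\phi|$ and $\var(\phi')\le |\phi|$ for appropriately chosen constants, which I would verify by an explicit bookkeeping pass over the rewrite list.

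For the product of size measurements I would track how each rule acts on the factor $m_i+2$ contributed by a size measurement. The only rules that touch size measurements are $|t|=m \mapsto |t|\le m \land |t|\ge m$ (turning a factor $m+2$ into $(m+2)^2$ in the worst case, since $|t|\ge m$ subsequently becomes $\neg(|t|\le m-1)$ contributing $(m-1)+2=m+1\le m+2$), and the introduction of $Y_\emptyset$ (contributing one additional factor $0+2=2$). All remaining rules leave size measurements untouched. Hence $\prod_{i\in[\ell']}(m_i'+2)\le 2\prod_{i\in[\ell]}(m_i+2)^2$, as required. The main obstacle is the simultaneous tracking of these five quantities through interacting rewrites; I would handle this by executing the rewrites in the fixed order listed in the excerpt, arguing invariants after each phase, rather than allowing them to interleave. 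The $O(|\phi|^2)$ runtime follows because each of the $O(|\phi|)$ rewriting steps can be carried out in $O(|\phi|)$ time by a linear traversal of the current syntax tree, and the AC-part of the logic (the predicates $\conn(t)$ and $\acy(t)$) is handled identically to $N^r_d(t)$ by the flattening step, which yields the final \enquote{moreover} clause about \encore for formulas in \en.
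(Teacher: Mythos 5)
Your proposal is correct and follows exactly the approach implicit in the paper: the observation is stated to ``follow from applying the aforementioned simplifications,'' and you execute that plan by tracking length, variable count, $R$, $d$, and the size-measurement product through each rewrite phase in the fixed order given, arriving at the same bounds. The bookkeeping you outline (per-occurrence constant blow-up for the rewrites, one fresh variable per flattening step, the factor decomposition $(m+2)(m+1)\le (m+2)^2$ for $|t|=m$, and the extra factor $2$ for $Y_\emptyset$) matches what the paper relies on without spelling out.
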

\section{Toolkit for dynamic programming algorithms}
\label{sec:genericAlgo}
This section provides tools to bound the run time of algorithms relying on recursive graph decompositions.
In this paper, we focus on optimization problems whose solutions are elements of $\Pset(V(G))^k$, i.e., $k$-tuples of vertex sets of the input graph $G$ and where the recursive decomposition is a rooted layout, but these tools can be adapted to other settings as well.

Recall that for a $k$-weighted graph $G$ and a $k$-tuple $\tB\in \Pset(V(G))^{k}$, 
the weight of $\tB$ in $G$ is
$\obj(G,\tB) = \sum_{v \in V(G)} w(v,\{ i \mid v \in B_i \})$.
\begin{definition}\label{def:kproblem}
    For a constant $k$, a \emph{$k$-problem} $\Pi$ associates every $k$-weighted graph $G$ with a set of \emph{solutions} $\Pi(G) \in \Pset(V(G))^{k}$. 
    For a $k$-weighted graph $G$, an \emph{optimal solution} to $\Pi$ is 
    a tuple $\tB^* \in \Pi(G)$ that
    maximizes the weight, i.e., such that $\obj_G(\tB^*) = \max \{\obj_G(\tB) \mid \tB \in \Pi(G) \}$.
\end{definition}
Our tools are based on the following notion of representativity between tuples of vertices.
\begin{definition}\label{def:represent}
    Let $\Pi$ be a $k$-problem, $G$ be a $k$-weighted graph and $A\subseteq V(G)$.
	Given $\cB\subseteq \Pset(A)^{k}$ and $\tD\in \Pset(\comp{A})^{k}$, we define 
    $\best_{\Pi,G}(\cB,\tD)=\max\{\obj_G(\tB)\mid \tB\in \cB \text{ and } \tB\cup \tD\in \Pi(G)\}.$
    For $\cB_1,\cB_2\subseteq \Pset(A)^{k}$, we say that \emph{$\cB_1$ $\Pi$-represents $\cB_2$ over $A$} if for every 
    $\tD\in \Pset(\comp{A})^{k}$, we have $\best_{\Pi,G}(\cB_1,\tD)=\best_{\Pi,G}(\cB_2,\tD)$.
    We drop $\Pi$ and $G$ from these notations if they are clear from the context.
\end{definition}
Observe that if there is no $\tB\in \cB$ such that $\tB\cup \tD\in\Pi(G)$, then $\best(\cB,\tD)=\max(\emptyset)=-\infty$.
It is easy to see that the relation ``represents over $A$'' is an equivalence relation.

To solve $\Pi$ on a $k$-weighted graph $G$ with a rooted layout $(T,\delta)$, one can use a
standard dynamic programming algorithm that does a bottom-up traversal of $T$
and computes, for each node $t$ of $T$, a set $\cB_t$ that represents $\Pset(V_t)^{k}$ over $V_t$.
For a leaf $t$, we can choose $\cB_t=\Pset(V_t)^{k}$ since $\abs{V_t}=1$.
	For an internal node $t$ with children $a$ and $b$, 
	we aim to inductively compute $\cB_t$ from $\cB_a$ and $\cB_b$,
	where $\cB_a$ represents $\Pset(V_a)^k$ over $V_a$
	and $\cB_b$ represents $\Pset(V_b)^k$ over $V_b$.
	However, in this process we have to make sure that the size of $\cB_t$ remains small.
	If we succeed, then at the root node $r$ of $T$, we have a small set $\cB_r$ that represents $\Pset(V(G))^{k}$ over $V(G)$
	and it suffices to search for the optimal solution to $\Pi$ in $G$ among the elements of $\cB_r$ instead of 
	all tuples in $\Pset(V(G))^{k}$.

For two sets $\cB_1,\cB_2\subseteq \Pset(V(G))^{k}$, we denote the product between these sets by $\cB_1\otimes \cB_2= \{\tB_1 \cup \tB_2 \mid \tB_1\in \cB_1 \text{ and } \tB_2\in \cB_2 \}$.
For an internal node $t$ of $T$ with children $a$ and $b$, 
the following lemma shows that we could obtain a representative set $\cB_t$ as a product $\cB_a \otimes \cB_b$
(with the obvious downside that 
$\abs{\cB_1\otimes \cB_2} = \abs{\cB_1}\cdot\abs{\cB_2}$ and therefore one cannot rely only on $\otimes$ to compute small representative sets).

\begin{lemma}\label{lem:merge}
	Let $\Pi$ be a $k$-problem, $G$ be a $k$-weighted graph, $A_1,A_2$ be two disjoint subsets of $V(G)$, and $\cB_1\subseteq \Pset(A_1)^{k}$, $\cB_2\subseteq \Pset(A_2)^{k}$.
    If $\cB_1$ represents $\Pset(A_1)^{k}$ over $A_1$ and $\cB_2$
    represents $\Pset(A_2)^{k}$ over $A_2$ then $\cB_1\otimes \cB_2$
    represents $\Pset(A_1\cup A_2)^{k}$ over $A_1\cup A_2$.
\end{lemma}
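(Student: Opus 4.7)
The plan is to establish the two inequalities that together give $\best(\cB_1 \otimes \cB_2, \tD) = \best(\Pset(A_1\cup A_2)^k, \tD)$ for every $\tD \in \Pset(\comp{A_1\cup A_2})^k$. The easy direction is $\le$, which follows immediately from $\cB_1 \otimes \cB_2 \subseteq \Pset(A_1\cup A_2)^k$. The content lies in the $\ge$ direction, which will be proved by a two-step exchange argument: starting from an arbitrary $\tB \in \Pset(A_1\cup A_2)^k$ with $\tB \cup \tD \in \Pi(G)$, we successively replace its $A_1$-part and then its $A_2$-part by representatives drawn from $\cB_1$ and $\cB_2$, using the representativity hypothesis once at each step.

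Concretely, first note the additivity $\obj_G(\tC_1 \cup \tC_2) = \obj_G(\tC_1) + \obj_G(\tC_2)$ whenever the supports are disjoint; this is immediate from the definition of $\obj_G$ as a sum over vertices, and it applies in particular to $\tB_1 \in \Pset(A_1)^k$ and $\tB_2 \in \Pset(A_2)^k$ since $A_1\cap A_2 = \emptyset$. Now fix $\tD$, and suppose $\best(\Pset(A_1\cup A_2)^k, \tD)$ is finite (otherwise there is nothing to prove); pick a witness $\tB = \tB_1 \cup \tB_2$ with $\tB_1 \in \Pset(A_1)^k$, $\tB_2 \in \Pset(A_2)^k$, $\tB \cup \tD \in \Pi(G)$, and $\obj_G(\tB) = \best(\Pset(A_1\cup A_2)^k, \tD)$.

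Since $\cB_1$ represents $\Pset(A_1)^k$ over $A_1$, applying the representativity to the outside tuple $\tB_2 \cup \tD \in \Pset(\comp{A_1})^k$ yields some $\tB_1' \in \cB_1$ with $\tB_1' \cup \tB_2 \cup \tD \in \Pi(G)$ and $\obj_G(\tB_1') \ge \obj_G(\tB_1)$. Then, applying the representativity of $\cB_2$ over $A_2$ to the outside tuple $\tB_1' \cup \tD \in \Pset(\comp{A_2})^k$, we get some $\tB_2' \in \cB_2$ with $\tB_1' \cup \tB_2' \cup \tD \in \Pi(G)$ and $\obj_G(\tB_2') \ge \obj_G(\tB_2)$. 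By construction $\tB_1' \cup \tB_2' \in \cB_1 \otimes \cB_2$, and the additivity of $\obj_G$ gives $\obj_G(\tB_1' \cup \tB_2') \ge \obj_G(\tB_1) + \obj_G(\tB_2) = \obj_G(\tB)$, which yields the desired inequality $\best(\cB_1 \otimes \cB_2, \tD) \ge \best(\Pset(A_1\cup A_2)^k, \tD)$.

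The only subtle point is bookkeeping: one must correctly package the ``outside'' tuples when invoking each hypothesis, namely regard $\tB_2 \cup \tD$ as living in $\Pset(\comp{A_1})^k$ (legitimate because $A_2, \comp{A_1\cup A_2} \subseteq \comp{A_1}$) and similarly for $\tB_1' \cup \tD \in \Pset(\comp{A_2})^k$. Once this is set up, the argument is essentially symmetric in $\cB_1$ and $\cB_2$, and the proof is short.
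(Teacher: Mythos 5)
Your proof is correct and takes essentially the same approach as the paper's: both apply the representativity hypothesis for $\cB_1$ and then for $\cB_2$, and both rely on the fact that $\obj_G$ splits across the disjoint supports $A_1$ and $A_2$. The only difference is presentational — the paper works through a chain of equalities on the $\max$-sets ($\best(\cB_1\otimes\cB_2,\tD) = \best(\Pset(A_1)^k\otimes\cB_2,\tD) = \best(\Pset(A_1)^k\otimes\Pset(A_2)^k,\tD)$), whereas you split the equality into two inequalities and run a pointwise exchange argument on a witness. One pedantic remark (which applies equally to the paper's appeal to ``linearity''): with $\obj_G(\tB) = \sum_{v\in V(G)} w(v,\{i \mid v\in B_i\})$, the precise identity is $\obj_G(\tB_1\cup\tB_2) = \obj_G(\tB_1)+\obj_G(\tB_2) - \obj_G(\emptyset^k)$, since the empty-set weight of each vertex is counted twice in the sum but once on the left. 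The constant $\obj_G(\emptyset^k)$ cancels in every comparison you make, so your argument (and the paper's) goes through unchanged.
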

\begin{proof}
	Assume $\cB_i\subseteq \Pset(A_i)^{k}$ represents $\Pset(A_i)^{k}$ over $A_i$ for $i\in\{1,2\}$.
    To prove that $\cB_1\otimes \cB_2$ represents $\Pset(A_1\cup A_2)^{k}$ over
    $A_1\cup A_2$, we need to prove that for every $\tD\in \Pset(\comp{A_1\cup A_2})^{k}$ we have $\best(\cB_1\otimes\cB_2,\tD)=\best(\Pset(A_1\cup A_2)^{k},\tD)$.
	Let $\tD\in \Pset(\comp{A_1\cup A_2})^{k}$.
	From the definition of $\best$ and the linearity of the weight term, we deduce that
	\begin{align*}
        \hspace{-0.3cm}\best(\cB_1\otimes\cB_2,\tD) &= \max \{\obj(\tB_1\cup \tB_2)\mid \tB_1\in\cB_1, \tB_2\in\cB_2,  (\tB_1\cup\tB_2\cup\tD)\in \Pi(G)\}\\
                                         &= \max\{\best(\cB_1,\tB_2\cup \tD) + \obj(\tB_2) \mid \tB_2 \in \cB_2\}.
	\end{align*}
    Since $\cB_1$ represents $\Pset(A_1)^{k}$ over $A_1$, can substitute
	\begin{align*}
        \hspace{-0.3cm}\best(\cB_1\otimes\cB_2,\tD) &= \max\{\best(\Pset(A_1)^{k},\tB_2\cup \tD) + \obj(\tB_2) \mid \tB_2 \in \cB_2\}\\
        {} &= \max \{\obj(\tB_1\cup \tB_2)\mid \tB_1\in\Pset(A_1)^{k}, \tB_2\in\cB_2,  (\tB_1\cup\tB_2\cup\tD)\in \Pi(G)\}\\
        {} &= \best(\Pset(A_1)^{k}\otimes \cB_2, \tD).
	\end{align*}
    Symmetrically, since $\cB_2$ represents $\Pset(A_2)^{k}$ over $A_2$, $\best(\cB_1\otimes\cB_2,\tD)=\best(\Pset(A_1)^{k}\otimes\Pset(A_2)^{k},\tD)$.
	Observe that further $\Pset(A_1\cup A_2)^{k}=\Pset(A_1)^{k}\otimes \Pset(A_2)^{k}$.
	Consequently, we have $\best(\cB_1\otimes\cB_2,\tD)=\best(\Pset(A_1\cup A_2)^{k},\tD)$.
    As this holds for every $\tD\in \Pset(\comp{A_1\cup A_2})^{k}$, this last equality proves that $\cB_1\otimes\cB_2$ represents $\Pset(A_1\cup A_2)^{k}$ over $A_1\cup A_2$.
\end{proof}

	As mentioned above, we require a mechanism that computes a small representative set $\cB_t$ 
	at each internal node $t \in V(T)$ with children $a$ and $b$.
	Using the previous lemma, the starting point of this computation is $\cB_a \otimes \cB_b$
	(where $\cB_a$ and $\cB_b$ are the corresponding representative sets),
	which might become prohibitively large.
	The way we compute a small enough subset that still represents $\cB_a \otimes \cB_b$
	typically depends on the concrete application setting.
	We therefore define the following notion of a \emph{reduce routine} as a placeholder which,
	if implemented accordingly, will yield desired bounds on the size of $\cB_t$ as well as on the time 
	needed to compute $\cB_t$ from $\cB_a \otimes \cB_b$.
\begin{definition}\label{def:reduce}
    Let $\Pi$ be a $k$-problem.
    A \emph{reduce routine for $\Pi$} is a subroutine $\reduce_{G,A}(\cB)$ that takes as input 
    \begin{itemize}
        \item a $k$-weighted graph $G$ together with its distance matrix, and
        \item sets $A\subseteq V(G)$, $\cB\subseteq \Pset(A)^{k}$,
    \end{itemize}
    and returns a subset $\reduce_{G,A}(\cB) = \reduce(\cB) \subseteq \cB$ such that $\reduce(\cB)$ represents $\cB$ over~$A$.
    We say a function $\s(G,A)$ is a \emph{size bound} to the reduce routine if $\abs{\reduce(\cB)}\leq \s(G,A)$ for all $k$-weighted graphs $G$ and $A \subseteq V(G)$.
    We further say a function $\f(G,A)$ is a \emph{run time bound} to the reduce routine if
    it runs on every input in time at most $\f(G,A)\cdot \abs{\cB}$.
    Given a rooted layout $\cL=(T,\delta)$ of a graph $G$,
    we define $\s(G,\cL) = \max\{\s(G,V_x) \mid x \in V(T)\}$
    and $\f(G,\cL) = \max\{\f(G,V_x) \mid x \in V(T)\}$.
\end{definition}
Note that the distance matrix could be computed in polynomial time by the reduce routine itself,
but is given as input to speed up the computation by a polynomial factor.
If we have access to a reduce routine, we 
can compute $\cB_t$ from $\cB_a$ and $\cB_b$ by setting $\cB_t=\reduce_{G,A}(\cB_a\otimes\cB_b)$ (\Cref{algo:generic}).
We now show that problems that admit reduce routines with small size and run time bounds
can be solved efficiently.
Note that after applying \Cref{thm:genericAlgo}, we merely have to evaluate
$\Pi$ for at most $s(G,\cL)$ tuples and take one with the largest weight.

\begin{theorem}\label{thm:genericAlgo}
    If a $k$-problem $\Pi$ admits reduce routine with size bound $\s$ and run
    time bound $\f$ then there exists an algorithm that, given a $k$-weighted $n$-vertex graph $G$ and a rooted
    layout $\cL$ of $G$, computes in time 
    $O\left(2^{2k}k \cdot s(G,\cL)^2  \cdot (\f(G,\cL)\cdot n + n^2) +n^3 \right)$
    a set that has size at most $s(G,\cL)$ and contains an optimal solution to $\Pi$ if $\Pi$ has a solution.
\end{theorem}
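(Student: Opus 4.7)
The plan is to perform bottom-up dynamic programming along the given rooted layout $\cL=(T,\delta)$, maintaining the invariant that at every node $t \in V(T)$, the computed set $\cB_t \subseteq \Pset(V_t)^{k}$ represents $\Pset(V_t)^{k}$ over $V_t$ in the sense of \Cref{def:represent}. First I precompute the distance matrix of $G$ in $O(n^3)$ time (e.g., via Floyd--Warshall), since it is required as input by the reduce routine. I then process $T$ in post-order. At each leaf $\ell$, $|V_\ell|=1$ so $\Pset(V_\ell)^{k}$ has exactly $2^k$ elements, and I set $\cB_\ell \coloneqq \Pset(V_\ell)^{k}$, which trivially represents itself. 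At each internal node $t$ with children $a$ and $b$, I set $\cB_t \coloneqq \reduce_{G,V_t}(\cB_a \otimes \cB_b)$.

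The inductive correctness step is straightforward: by induction, $\cB_a$ represents $\Pset(V_a)^{k}$ over $V_a$ and $\cB_b$ represents $\Pset(V_b)^{k}$ over $V_b$; since $V_a, V_b$ are disjoint with $V_a \cup V_b = V_t$, \Cref{lem:merge} gives that $\cB_a \otimes \cB_b$ represents $\Pset(V_t)^{k}$ over $V_t$, and then by the definition of $\reduce$ and the transitivity of "represents over $V_t$", so does $\cB_t$. At the root $r$ we have $V_r = V(G)$, so the only possible completion from $\Pset(\comp{V(G)})^k$ is the all-empty tuple $\tD_\emptyset$; hence $\best_{\Pi,G}(\cB_r, \tD_\emptyset) = \best_{\Pi,G}(\Pset(V(G))^{k}, \tD_\emptyset)$, which is precisely $\max\{\obj_G(\tB) \mid \tB \in \Pi(G)\}$. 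Thus $|\cB_r| \le s(G,\cL)$ and $\cB_r$ contains an optimal solution whenever $\Pi(G)\neq\emptyset$, which is what the theorem asks to produce.

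For the run time, there are $O(n)$ internal nodes in $T$. By the size bound, after reduction $|\cB_t|\le s(G,V_t)\le s(G,\cL)$ at every internal node, while at leaves $|\cB_\ell| = 2^k$; together these yield $|\cB_a \otimes \cB_b| = O(2^{2k} \cdot s(G,\cL)^2)$ uniformly (the factor $2^{2k}$ absorbs the base case where children of leaf-adjacent nodes may not yet be reduced). Constructing the product $\cB_a \otimes \cB_b$ takes $O(kn)$ time per tuple (each tuple is encoded by $kn$ bits), and running the reduce routine on the product takes at most $f(G,V_t) \cdot |\cB_a \otimes \cB_b|$ time. Summing over all internal nodes and adding the $O(n^3)$ preprocessing for the distance matrix gives the claimed bound $O(2^{2k}k \cdot s(G,\cL)^2 \cdot (f(G,\cL)\cdot n + n^2) + n^3)$.

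The main point requiring some care is the uniform bound on $|\cB_a \otimes \cB_b|$: children sizes may equal $2^k$ (at leaves) or at most $s(G,\cL)$ (after reduction), so a crude bound $(2^k + s(G,\cL))^2 = O(2^{2k} s(G,\cL)^2)$ is used to justify the factor $2^{2k}$ appearing in the final run time. Beyond this bookkeeping, the proof is a textbook induction using \Cref{lem:merge} together with the representation and subset guarantees built into \Cref{def:reduce}.
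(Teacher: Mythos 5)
Your proof is correct and follows essentially the same approach as the paper: the same algorithm (Floyd--Warshall preprocessing, $\cB_\ell = \Pset(V_\ell)^k$ at leaves, $\cB_t = \reduce_{G,V_t}(\cB_a \otimes \cB_b)$ at internal nodes), the same inductive correctness argument via \Cref{lem:merge} and transitivity of representation, and the same root observation and run time accounting. The only cosmetic difference is that you bound $|\cB_a \otimes \cB_b|$ by $(2^k + s(G,\cL))^2$ whereas the paper uses $\max(2^{2k}, s(G,\cL)^2)$; both give $O(2^{2k} s(G,\cL)^2)$.
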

\begin{proof}
    Let $\Pi$ be a $k$-problem admitting a reduce routine $\reduce$ with size bound $\s$ and run time bound $\f$,
    and $G$ be a $k$-weighted graph with a rooted layout $\cL=(T,\delta)$.
	We claim that \Cref{algo:generic} solves the problem.

\SetAlCapSkip{1em} 
\begin{algorithm}
	\SetAlgoLined
	Compute the distance matrix of $G$ using the Floyd-Warshall algorithm\label{line:distmatrix}\;
	\For{every leaf $t$ of $T$}
	{
		$\cB_t=\Pset(V_t)^{k}$\;
	}
    \For{every internal node $t$ of $T$ with children $a,b$ in a bottom-up traversal of $T$}
	{	
        Compute $\cB_t=\reduce_{G,V_t}(\cB_a\otimes \cB_b)$\label{line:reduce}\;
	}
    \Return $\cB_r$, where $r$ is the root of $T$\;
    \caption{Solving a $k$-problem $\Pi$ using a reduce routine $\reduce$, a $k$-weighted graph $G$, and a rooted layout $(T,\delta)$ of $G$.}
    \label{algo:generic}
\end{algorithm}

\myparagraph{Correctness} We claim that for every node $t$ of $T$, $\cB_t$ represents $\Pset(V_t)^{k}$ over $V_t$.
    This is obviously true for the leaves of $T$. Assume that $t$ is an internal node with children $a,b$ and for $i\in\{a,b\}$ suppose that $\cB_i$ represents $\Pset(V_i)^{k}$ over $V_i$.
    Observe that when $\reduce_{G,V_t}(\cB_a\otimes \cB_b)$ is called at
    line~\ref{line:reduce}, we give as input the distance matrix of $G$
    computed at line~\ref{line:distmatrix}.
    By \Cref{lem:merge}, we know that $\cB_a\otimes \cB_b$ represents $\Pset(V_t)^{k}$ over $V_t=V_a\cup V_b$.
    Moreover, $\reduce_{G,V_t}(\cB_a\otimes \cB_b)$ represents $\cB_a\otimes\cB_b$ over $V_t$.
    Since ``represents over $V_t$'' is a transitive relation, we deduce that $\cB_t=\reduce_{G,V_t}(\cB_a\otimes \cB_b)$ represents $\Pset(V_t)^{k}$ over $V_t$.
    By induction, we conclude that $\cB_t$ represents $\Pset(V_t)^{k}$ over $V_t$ for all nodes $t$ of $T$.
	
    Let $\emptyset^k$ be the $k$-tuple $(\emptyset,\dots,\emptyset)$.
    Since for the root $r$ of $T$ holds $V_r=V(G)$, we conclude $\max\{\obj(\tB)
    \mid \tB\in \cB_r \cap \Pi(G) \} =
    \best(\cB_r,\emptyset^k)=\best(\Pset(V(G))^{k},\emptyset^k) =
    \max\{\obj(\tB) \mid \tB\in \Pi(G)\}$.
    This proves the correctness of \Cref{algo:generic}.

	\myparagraph{Run time}
    Computing the distance matrix of $G$ using the Floyd-Warshall algorithm takes $O(n^3)$ time~\cite{cormen01introduction}.
    For a leaf $t$ of $T$, since $|V_t|=1$, we can compute $\cB_t = \Pset(V_t)^{k}$ in time $O(2^k)$. 
    Let $t$ be an internal node of $T$ with children $a,b$.
    We have $\abs{\cB_a \otimes \cB_b} = \abs{\cB_a}\cdot \abs{\cB_b}\leq \max(2^{2k}, \s(G,V_a)^2)\leq 2^{2k}\s(G,\cL)^2$.
    Consequently, computing the set $\cB_a\otimes \cB_b$  takes time $O(2^{2k} \cdot \s(G,\cL)^2\cdot k \cdot n)$.
    Computing $\reduce_{G,V_t}(\cB_a\otimes\cB_b)$ with run time bound $\f$ takes time $\f(G,V_t)\cdot\abs{\cB_a\otimes\cB_b} = O(\s(G,\cL)^2 \cdot \f(G,\cL))$.
	Since every rooted layout has $2n-1$ nodes, we conclude that \Cref{algo:generic} runs in the claimed run time. 
\end{proof}
\section{Algorithmic meta-theorems}
\label{sec:DN:metaAlgo}
For a (\ac) \en formula $\phi$, we simplify the notations for the $(d(\phi),R(\phi))$-neighbor equivalence by using $\phi$ as a synonym for $d(\phi),R(\phi)$. 
For example, we denote $\nec_{d(\phi),R(\phi)}(A)$ by $\nec_{\phi}(A)$ and $\tB\equi{A}{d(\phi),R(\phi)} \tC$ by $\tB\equi{A}{\phi} \tC$.
We associate with every \ac \en formula $\phi(X_1,\dots,X_k)$
a $k$-problem $\Pi_\phi$ such that $\Pi_\phi(G) = \{ \tB \in \Pset(V(G))^{k} \mid G \models \phi(\tB)\}$.

\subsection{Distance neighborhood logic}
We start with a base version of our meta theorem without acyclicity or connectivity constraints.
We restrict our attention first to \encore formulas and then lift the result to \en.
We will need the following family of equivalence relations based on $\equi{A}{\phi}$.
\begin{lemma}\label{lem:equivalenceformula}
    Let $\phi(X_1,\dots,X_k)$ be a quantifier-free \encore formula with $\ell$ size measurements $\abs{t_1}\leq m_1,\dots,\abs{t_\ell}\leq m_\ell$.
	Let $A\subseteq V(G)$. For every $\btE \in (\Rep{\comp{A}}{\phi})^{k}$, there exists an equivalence relation $\bowtie_{\btE}$ over $\Pset(A)^{k}$ with at most
	$\nec_{\phi}(A)^{k}\cdot 2^{\abs{\phi}} \cdot \prod_{1\leq i \leq \ell} (m_i+2)$ equivalence classes such that for every
	$\tD \in \Pset(\comp{A})^{k}$ with $\tD\equi{\comp{A}}{\phi} \btE$
	and $\tB, \tC \in \Pset(A)^{k}$ with $\tB \bowtie_{\btE} \tC$, we have 
	$G \models \phi(\tB \cup \tD)$ if and only if 
	$G \models \phi(\tC \cup \tD)$.
\end{lemma}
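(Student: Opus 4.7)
The plan is to define $\bowtie_\btE$ as the conjunction of three orthogonal conditions on tuples $\tB, \tC \in \Pset(A)^k$, each contributing one factor to the claimed index bound. Writing $\tE^* = (E_1^*, \dots, E_k^*)$ with $E_i^* = \btE_i$ (each $\btE_i$ is already its canonical representative), I would declare $\tB \bowtie_\btE \tC$ precisely when: (a) $\rep{A}{\phi}(B_i) = \rep{A}{\phi}(C_i)$ for every $i \in [k]$; (b) for every primitive subformula $\psi$ of $\phi$ that is not a size measurement, $G \models \psi(\tB \cup \tE^*)$ iff $G \models \psi(\tC \cup \tE^*)$; and (c) for each size measurement $|t_i| \le m_i$ with $t_i = X_{j_i}$, $\min(|B_{j_i}|, m_i + 1) = \min(|C_{j_i}|, m_i + 1)$. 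This is manifestly an equivalence relation. The index bound is then immediate by multiplication: (a) contributes $\nec_\phi(A)^k$, (b) contributes at most $2^{|\phi|}$ because $\phi$ contains at most $|\phi|$ primitive subformulas each carrying one bit, and (c) contributes $\prod_{i=1}^{\ell}(m_i + 2)$ since $\min(|B_{j_i}|, m_i + 1)$ takes exactly $m_i+2$ values.

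For correctness I would fix arbitrary $\tD$ with $\tD \equi{\comp{A}}{\phi} \btE$ and $\tB \bowtie_\btE \tC$, and verify for every primitive subformula $\psi$ that $G \models \psi(\tB \cup \tD)$ iff $G \models \psi(\tC \cup \tD)$; the statement for Boolean combinations follows. For the set-operation primitives $\mathbf{P} = X_i$, $X_i = X_j$, $X_i = \comp{X_j}$, and $X_i \cap X_j = X_k$, the truth value splits as a conjunction of an $A$-side equality and a $\comp{A}$-side equality. The $\comp{A}$-side depends only on $\tD$ (and graph constants) and is therefore identical for $\tB$ and $\tC$; whenever it holds, the $A$-side equality is an $A$-internal condition on $\tB$ (unaffected by swapping $\tD$ for $\tE^*$) and is therefore controlled by (b); whenever it fails, both original truth values are false. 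Size measurements are handled directly by (c) through a short case distinction on whether the threshold $m_i$ is crossed.

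The main obstacle is the neighborhood primitive $\psi = (N_d^r(X_i) = X_j)$, whose operator crosses the cut. The key fact I would establish is a ``replacement'' lemma for the $(d, R)$-neighbor equivalence: for $v \in A$, $B \subseteq A$, and $D \equi{\comp{A}}{\phi} E$, one has $|N^r(v) \cap B| + |N^r(v) \cap D| \ge d$ iff $|N^r(v) \cap B| + |N^r(v) \cap E| \ge d$, since $\min(|N^r(v) \cap D|, d) = \min(|N^r(v) \cap E|, d)$ and the truncation at $d$ does not affect the threshold test; consequently $N_d^r(B \cup D) \cap A = N_d^r(B \cup E) \cap A$. Applying this with $D = D_i$ and $E = E_i^*$, the $A$-side of $\psi$ reduces to the $A$-internal equality $N_d^r(B_i \cup E_i^*) \cap A = B_j$, which is controlled by (b). Symmetrically, from (a) together with $B_i \equi{A}{\phi} C_i$, one obtains $N_d^r(B_i \cup D_i) \cap \comp{A} = N_d^r(C_i \cup D_i) \cap \comp{A}$, so the $\comp{A}$-side of $\psi$ gives the same truth value for $\tB$ and $\tC$, completing the argument.
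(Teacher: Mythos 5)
Your decomposition follows the paper's at a structural level: condition (a) is the paper's requirement $\tB \equi{A}{\phi} \tC$, condition (c) is the paper's size-measurement condition, and the ``replacement'' observation you establish for $N_d^r$-terms is exactly the crux of the paper's central claim. However, your condition (b) is strictly weaker than what the paper uses, and this creates a genuine gap.

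You define (b) as: $G \models \psi(\tB \cup \btE)$ iff $G \models \psi(\tC \cup \btE)$, i.e., the \emph{full} truth value of the primitive $\psi$ at the representative extension $\btE$ must agree. The paper instead requires, for each primitive $t = t'$, that the \emph{$A$-side} equality $\ip{t(\tB \cup \btE)}^G \cap A = \ip{t'(\tB \cup \btE)}^G \cap A$ holds for $\tB$ if and only if it holds for $\tC$, which is strictly stronger. The difference matters because your (b) becomes vacuous whenever the $\comp A$-side equality fails at $\btE$; yet a general $\tD$ with $\tD \equi{\comp{A}}{\phi} \btE$ can satisfy the $\comp A$-side equality even when $\btE$ does not, since $\tD \equi{\comp{A}}{\phi} \btE$ constrains only how $\tD$ ``looks'' from $A$ through truncated neighborhood counts and does \emph{not} preserve identities of subsets of $\comp A$. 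Your justification, ``whenever [the $\comp A$-side] holds, the $A$-side equality is \dots{} controlled by (b),'' silently assumes that the $\comp A$-side at $\tD$ and at $\btE$ always agree, which is false.

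Concretely, take $G = K_{2,2}$ with $A = \{u_1, u_2\}$, $\comp{A} = \{v_1, v_2\}$, a color $\mathbf P(G) = \{u_1, v_2\}$, and the single primitive $\psi = (\mathbf P = X_1)$. Let $\tB = (\{u_1\})$, $\tC = (\{u_2\})$, $\tD = (\{v_2\})$ and $\btE = (\{v_1\})$ (which is the representative of $\{v_2\}$ since $v_1, v_2$ are twins). Then (a) holds since $u_1, u_2$ are twins; (b) holds vacuously because $\mathbf P \cap \comp{A} = \{v_2\} \neq \{v_1\}$ makes both $\psi(\tB \cup \btE)$ and $\psi(\tC \cup \btE)$ false; (c) is empty; but $G \models \psi(\tB \cup \tD)$ while $G \not\models \psi(\tC \cup \tD)$. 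So the relation $\bowtie_{\btE}$ you define violates the conclusion of the lemma already for the simplest primitive $\mathbf P = X$. The fix is to replace (b) by the paper's $A$-side condition; the rest of your argument, including the replacement lemma for $N_d^r$-terms and the index count, then goes through.
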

\begin{proof}
	We define the equivalence relation $\bowtie_{\btE}$ over $\Pset(A)^{k}$ such that $\tB\bowtie_{\btE} \tC$ if the following conditions are satisfied:
	\begin{enumerate}[(A)]
		\item\label{item:equiphi} $\tB\equi{A}{\phi} \tC$.
		\item\label{item:size:measurement} For every $i\in [\ell]$, we have $\max(m_i+1,\abs{ \ip{t_i(\tB \cup \btE)}^G \cap A})=\max(m_i+1,\abs{\ip{t_i(\tC \cup \btE)}^G \cap A})$.
		\item\label{item:primitive:formula} For every primitive formula of the form $t= t'$ in $\phi$, 
        ${\ip{t(\tB\cup \btE)}^G \cap A = \ip{t'(\tB\cup \btE)}^G \cap A}$ iff $\ip{t(\tC\cup \btE)}^G \cap A = \ip{t'(\tC\cup \btE)}^G \cap A$
	\end{enumerate}

	Since $\phi$ has at most $\abs{\phi}$ primitive formulas,
	the number of equivalence classes of $\bowtie_{\btE}$ is at most $\nec_{\phi}(A)^{k}\cdot \prod_{1\leq i \leq \ell} (m_i+2) \cdot 2^{\abs{\phi}}$.
	Let $\tD \in \Pset(\comp{A})^{k}$ and $\tB, \tC\in \Pset(A)^{k}$ such that $\tD\equi{\comp{A}}{\phi} \btE$ and $\tB\bowtie_{\btE} \tC$.
	To prove this lemma, it remains to prove that $G \models \phi(\tB \cup \tD)$ if and only if $G \models \phi(\tC \cup \tD)$.
	We start by proving the following claim.
	\begin{claim}\label{claim:terms}
		For every term $t$ of $\phi$ the following three equalities hold.
		\begin{align}
			\ip{t(\tB \cup \tD)}^{G} \cap \comp{A} &= \ip{t(\tC \cup \tD)}^{G} \cap \comp{A} \label{eq:overcompA} \\
			\ip{t(\tB \cup \tD)}^{G} \cap A & = \ip{t(\tB \cup \btE)}^{G} \cap A   \label{eq:overA1} \\
			\ip{t(\tC\cup \tD)}^{G} \cap A &= \ip{t(\tC \cup \btE)}^{G} \cap A \label{eq:overA2}
		\end{align} 
	\end{claim}
	\begin{claimproof}
		Let $t$ be a term of $\phi$. We do a case distinction on the structure of $t$. 
		
		\myparagraph{When $t \equiv \bfP$}
		The equations hold since $\ip{t}^G = \ip{\textbf{P}}^G=\bfP(G)$ is independent of $\tB,\tC,\tD,\btE$.
		
		\myparagraph{When $t \equiv X$ or $t \equiv \comp{X}$ or $t \equiv X \cap Y$}
		The equations hold because $B_i,C_i\subseteq A$ and $D_i,\bE_i\subseteq \comp{A}$ for every $1 \leq i \leq k$.
		Then in all three cases we have (\ref{eq:overcompA})~$t(\tB\cup \tD)\cap \comp{A} = t(\tD) = t(\tC\cup \tD)\cap \comp{A}$,
		(\ref{eq:overA1})~$t(\tB \cup \tD) \cap A=t(\tB)=t(\tB \cup \btE) \cap A$ and (\ref{eq:overA2})~$t(\tC \cup \tD) \cap A = t(\tC) =t(\tC \cup \btE) \cap A$.
		
		\myparagraph{When $t \equiv N_d^r(X_i)$}
		Since $B_i$ and $D_i$ are disjoint and $N^r_d(B_i\cup D_i)$ contains all vertices with at least $d$ neighbors in $B_i\cup D_i$ in $G^r$, we deduce that
		\begin{align*}
			t(\tB\cup \tD)\cap \comp{A}= N^r_d(B_i\cup D_i)\cap \comp{A}= &\bigcup_{\substack{d_1,d_2\in \N \\ d_1+d_2=d}} 
			N^r_{d_1}(B_i) \cap N^r_{d_2}(D_i) \cap \comp{A},
		\end{align*}
		where $N_0^r(W)=V(G)$ for every $W\subseteq V(G)$.
		Since $\tB\equi{A}{\phi} \tC$, we have $B_i\equi{A}{d,r} C_i$.
		By definition of $\equi{A}{d,r}$, for every $v\in \comp{A}$ we have $\min{d,\abs{N^r(v)\cap B_i}}=\min{d,\abs{N^r(v)\cap C_i}}$.
		Thus, for every $d_1\leq d$ we have $N^r_{d_1}(B_i)\cap \comp{A} = N^r_{d_1}(C_i)\cap \comp{A}.$
		Therefore,
		\begin{align*}
			t(\tB\cup \tD)\cap \comp{A} &=\bigcup_{\substack{d_1,d_2\in \N \\ d_1+d_2=d}} 
			N^r_{d_1}(C_i) \cap N^r_{d_2}(D_i) \cap \comp{A}
			=N^r_d(C_i\cup D_i)\cap \comp{A} = t(\tC\cup \tD)\cap \comp{A}.
		\end{align*}
		This proves (\ref{eq:overcompA}). The proof of (\ref{eq:overA1}) and (\ref{eq:overA2}) is symmetrical since $\tD \equi{\comp{A}}{\phi} \btE$.
	\end{claimproof}
	
	From condition~(\ref{item:size:measurement}) and \Cref{claim:terms}, we deduce that for every $1\leq i\leq \ell$, we have 
	$\ip{(\abs{t_i}\leq m_i)(\tB\cup \tD)}^G=\ip{(\abs{t_i}\leq m_i)(\tC\cup \tD)}^G$.
	Moreover, from condition~(\ref{item:primitive:formula}) and \Cref{claim:terms}, we deduce that for every primitive subformula $\psi$ in $\phi$ of the form $\psi\equiv t= t'$, we have $\ip{\psi(\tB\cup\tD)}^G = \ip{\psi(\tB \cup \tD)}^G$.
	Since $\phi$ is a Boolean combination of its primitive formulas, for every $\tD\in \{\tB,\tC\}$, the value of $\ip{\phi(\tD\cup \tD)}^G$ is entirely determined by the values of $\ip{\psi(\tD\cup \tD)}^G$ for every primitive formula $\psi$ of $\phi$.
	We conclude that $G \models \phi(\tB \cup \tD)$ if and only if $G \models \phi(\tC \cup \tD)$.
\end{proof}

In the following, we prove \Cref{thm:main:basic} by designing a reduce routine for $\Pi_\phi$. 

\begin{lemma}\label{lem:reducelogic}
    Let $\phi(X_1,\dots,X_k)$ be a quantifier-free \encore formula whose size measurements have values $m_1,\dots,m_\ell$.
	The $k$-problem $\Pi_\phi$ admits a reduce routine with size bound
	\[
	\s(G,A) = \snec^G_{\phi}(A)^{2k}\cdot 2^{\abs{\phi}} \cdot  \prod_{1\leq i\leq \ell} (m_i+2)
	\]
	and run time bound 
	\[
    \f(G,A) =
    O(|\phi|^2 \cdot \snec^G_{\phi}(A)^{k+1} \cdot n^2).
	\]
\end{lemma}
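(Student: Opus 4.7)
The plan is to build a reduce routine that leverages \Cref{lem:equivalenceformula} directly. Given input $\cB \subseteq \Pset(A)^{k}$, the routine first invokes \Cref{lem:computerep} on both $A$ and $\comp{A}$ to obtain $\Rep{A}{\phi}$, $\Rep{\comp{A}}{\phi}$, and their lookup structures (the distance matrix required for this is supplied as part of the reduce routine input). Then, for every $\btE \in (\Rep{\comp{A}}{\phi})^{k}$, the routine partitions $\cB$ according to $\bowtie_{\btE}$ and retains exactly one element of maximum weight from each class; the output $\reduce(\cB)$ is the union of the retained tuples over all $\btE$.

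Correctness is immediate from \Cref{lem:equivalenceformula}: for an arbitrary completion $\tD \in \Pset(\comp{A})^{k}$, let $\btE$ be its componentwise $\equi{\comp{A}}{\phi}$-representative, so that $\tD \equi{\comp{A}}{\phi} \btE$. Then for every $\tB \in \cB$ the output contains some $\tB'$ in the same $\bowtie_{\btE}$-class with $\obj_G(\tB') \ge \obj_G(\tB)$, and \Cref{lem:equivalenceformula} gives $G \models \phi(\tB\cup\tD) \Leftrightarrow G \models \phi(\tB'\cup\tD)$; hence $\best(\reduce(\cB),\tD) \ge \best(\cB,\tD)$, while the reverse inequality is trivial since $\reduce(\cB) \subseteq \cB$. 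The size bound is direct: there are $|\Rep{\comp{A}}{\phi}|^{k} \le \snec^G_\phi(A)^{k}$ choices of $\btE$ and each $\bowtie_{\btE}$ has at most $\snec^G_\phi(A)^{k} \cdot 2^{|\phi|} \cdot \prod_i (m_i+2)$ classes, yielding the claimed product.

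The main obstacle is the run time. For each pair $(\tB,\btE)$ the routine must compute a hashable signature determining $\tB$'s $\bowtie_{\btE}$-class: the componentwise $\equi{A}{\phi}$-representative of $\tB$, the capped sizes $\max(m_i+1,|\ip{t_i(\tB\cup\btE)}^G \cap A|)$ for the at most $|\phi|$ size measurements, and the truth values of the primitive equations $t = t'$ restricted to $A$. For terms of the forms $\mathbf{P}$, $X_i$, $\comp{X_i}$, $X_i\cap X_j$, the relevant intersection with $A$ depends only on $\tB$ and is computed once per $\tB$. For the crucial $N^r_d(X_j)$-terms, the argument from the proof of \Cref{claim:terms} shows that $\ip{N^r_d(X_j)(\tB\cup\btE)}^G \cap A$ depends only on the $\equi{A}{\phi}$-class of $B_j$ and the $\equi{\comp{A}}{\phi}$-class of $E_j$; replacing these by their representatives, the evaluation at each $v \in A$ becomes the sum of two precomputed saturated counts $\min(d,|N^r(v)\cap B|)$ and $\min(d,|N^r(v)\cap E|)$, tabulated once outside the main loop for all $v \in V(G)$, $r \in R(\phi)$, $B \in \Rep{A}{\phi}$, and $E \in \Rep{\comp{A}}{\phi}$ using the distance matrix. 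Combining the per-$\tB$ representative lookups (costing $O(|R(\phi)|\cdot n^2 \cdot \log\snec^G_\phi(A))$ via \Cref{lem:computerep}), the $O(|\phi|)$ term evaluations per $(\tB,\btE)$, the $O(\log\maxWeight(G))$-time weight comparisons required to maintain the class-wise maximum, and the outer loop over the $\snec^G_\phi(A)^k$ choices of $\btE$, the per-element work fits within the stated bound, and hashing the signatures then extracts the maximum-weight representative of each class within the same budget.
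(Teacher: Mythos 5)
Your overall architecture matches the paper's proof: loop over all $\btE \in (\Rep{\comp{A}}{\phi})^{k}$, keep one heaviest element from each $\bowtie_\btE$-class, and take the union; the correctness and size-bound arguments are essentially identical to the paper's and are fine.

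However, your run time argument has a genuine flaw in the treatment of the $N^r_d(X_j)$-terms. You assert that Claim~\ref{claim:terms} shows $\ip{N^r_d(X_j)(\tB\cup\btE)}^G \cap A$ depends only on the $\equi{A}{\phi}$-class of $B_j$ and the $\equi{\comp{A}}{\phi}$-class of $E_j$. The claim (equations~\eqref{eq:overA1} and \eqref{eq:overA2}) only gives the analogous invariance in the $\tD$-coordinate; it says nothing about replacing $\tB$ by a $\equi{A}{\phi}$-equivalent tuple when restricting to $A$. Indeed, the relation $\equi{A}{\phi}$ constrains $\min(d,|N^r(v)\cap B_j|)$ only for $v\in\comp{A}$, whereas the set you need to evaluate, $\ip{N^r_d(X_j)(\tB\cup\btE)}^G\cap A$, consists precisely of vertices of $A$. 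Two sets $B_j\equi{A}{\phi}B_j'$ can have entirely different neighborhoods inside $A$, so replacing $B_j$ by $\rep{A}{\phi}(B_j)$ and consulting your table of saturated counts would return the wrong set. This in turn corrupts the capped cardinalities needed for condition~(\ref{item:size:measurement}) and the set-equality tests needed for condition~(\ref{item:primitive:formula}) of $\bowtie_\btE$, and inequivalent tuples can then be merged into the same bucket, breaking representativity.

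The stated run time is still achievable without this shortcut. As the paper does, compute $\ip{t(\tB\cup\btE)}^G\cap A$ directly from $\tB$ and $\btE$ for each of the $O(|\phi|)$ terms at $O(n^2)$ cost, giving $O(|\phi|\cdot n^2)$ per pair $(\tB,\btE)$. Summing over $|\cB|$ tuples and the $\nec_\phi(\comp{A})^{k}\le\snec^G_\phi(A)^{k}$ choices of $\btE$, together with the logarithmic overhead of the representative lookups and the $O(\log\maxWeight(G))$ weight comparisons, stays within $\f(G,A)\cdot|\cB|$. So your proposal is salvageable, but the precomputation-by-representatives step must be dropped and replaced by direct evaluation.
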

\begin{proof}
	Let $G$ be a $\abs{\phi}$-weighted graph, $A\subseteq V(G)$ and $\cB\subseteq \Pset(A)^{k}$.
	Let us explain how to compute a subset $\reduce_{G,A}(\cB)= \reduce(\cB)$ of $\cB$ in time $\abs{\cB}\cdot \f(G,A)$ such that $\reduce(\cB)$ represents $\cB$ over $A$ and $\abs{\reduce(\cB)}\leq \s(G,A)$.
	
	For every $\btE\in (\Rep{\comp{A}}{\phi})^{k}$, let $\bowtie_\btE$ be the equivalence relation given by \Cref{lem:equivalenceformula}.
	Let further $\reduce(\cB,\btE)$ be a set that, for every equivalence class $\cC$ of $\bowtie_\btE$ over $\cB$, contains a tuple $\tB \in \cC$ such that $\obj(\tB)$ is maximal among all elements in $\cC$.
    We will first compute $\reduce(\cB,\btE)$ for all $\btE\in (\Rep{\comp{A}}{\phi})^{k}$
    and then output the set $\reduce(\cB)$ as the union of all $\reduce(\cB,\btE)$ over all $\btE\in (\Rep{\comp{A}}{\phi})^{k}$.
	
	\myparagraph{Representativity}
	To prove that $\reduce(\cB)$ represents $\cB$ over $A$, we need to show that for every $\tD\in \Pset(\comp{A})^{k}$, we have $\best(\reduce(\cB),\tD) \ge \best(\cB,\tD)$.
	Let $\tD\in \Pset(\comp{A})^{k}$.
	If $\best(\cB,\tD)=-\infty$ then the statement holds, thus suppose that $\best(\cB,\tD)\neq-\infty$. 
    Then, there exists $\tB\in \cB$ such that $G \models \phi(\tB\cup \tD)$ and $\obj(\tB)=\best(\cB,\tD)$.
	Let $\btE\in (\Rep{\comp{A}}{k})$ such that $\tD\equi{\comp{A}}{\phi} \btE$.
	By construction, $\reduce(\cB,\btE)$ contains a solution $\tC$ such that $\tB \bowtie_\btE \tC$ and $\obj(\tB)\leq \obj(\tC)$.
	From \Cref{lem:equivalenceformula}, we know that $G\models \phi(\tB\cup \tD)$ iff $G\models \phi(\tC\cup \tD)$.
	Thus, $G\models \phi(\tC\cup \tD)$ and we have
    $\obj(\tC)\ge\best(\reduce(\cB),\tD)$.

	\myparagraph{Size bound}
    We claim that $\abs{\reduce(\cB)}\leq \s(G,A)$. By \Cref{lem:equivalenceformula}, each relation $\bowtie_\btE$ has 
    $\nec_{\phi}(A)^{k}\cdot 2^{|\phi|} \cdot \prod_{1\leq i \leq \ell} (m_i+2)$ equivalence classes.
	Since $\abs{\Rep{\comp{A}}{\phi}}=\nec_{\phi}(\comp{A})$ and $\nec_{\phi}(A)\cdot \nec_{\phi}(\comp{A})\leq \snec_{\phi}(A)^2$, we conclude that $\abs{\reduce(\cB)}\leq \s(G,A)$.
	
	\myparagraph{Run time bound} 
	According to \Cref{def:reduce}, the reduce routine we design gets the distance matrix of $G$ as part of the input.
	We claim that computing $\reduce(\cB)$ can be done in time $O(\abs{\cB}\cdot \f(G,A))$.
	Observe that for every term $t$ of $\phi$ and interpretation $\beta$ of $\phi$, we have $\ip{\phi}^{G,\beta}\subseteq V(G)$. 
    Consequently, $\abs{ \ip{\phi}^{(G,\beta)} } \leq n$ and we can assume w.l.o.g. that for every $1\leq i \leq \ell$, we have $m_i\leq n$. 
	To compute $\reduce(\cB)$ in time $O(\abs{\cB}\cdot \f(G,A))$, we do the following computations.
	\begin{itemize}
		\item Using \Cref{lem:computerep} and $\abs{R(\phi)}\leq \abs{\phi}$, we compute
        in time $O(\snec_{\phi}(A)\cdot \abs{\phi} \cdot n^2 \cdot \log(\snec_{\phi}(A)))$ 
        the sets $\Rep{A}{\phi}$, $\Rep{\comp{A}}{\phi}$ and the data structures that, given a set $B\subseteq A$, computes a pointer to $\rep{A}{\phi}(B)$ in time $O(n^2\cdot \log(\nec_{\phi}(A)))$.
		
    \item For every $1\leq i\leq \ell$, $\tB\in \cB$ and $\btE\in (\Rep{\comp{A}}{\phi})^{k}$, we compute a list $L_{\tB}$ containing the pointers to $\rep{A}{\phi}(B_1),\dots,\rep{A}{\phi}(B_{k})$
        and we compute $g(\btE,\tB,i)=\max(m_i+1, \abs{\ip{t_i(\tB\cup \btE)}^G \cap A})$.
        Let $t_1=t_1',\dots,t_p=t_p'$ be all primitive formulas of this form in $\phi$.
        Let $h(\btE,\tB,i) = 1$ if $\ip{t_i(\tB\cup \btE)}^G \cap A = \ip{t_i'(\tB\cup \btE)}^G \cap A$ and $h(\btE,\tB,i) = 0$ otherwise. 

        Observe that each $L_{\tB}$ can be computed in time $O(n^2\cdot\log(\snec_{\phi}(A)) \cdot |\phi|)$
		 thanks to the data structures of \Cref{lem:computerep}.
		For given $\tB,\btE$, we can compute $g(\btE,\tB,1),\dots,g(\btE,\tB,\ell)$ and $h(\btE,\tB,1),\dots,h(\btE,\tB,p)$ in time $O(|\phi|\cdot n^2)$ through basic algorithmic techniques.
        Thus, computing all sets $L_{\tB}$ takes $O(\abs{\cB}\cdot n^2 \cdot \log(\snec_{\phi}(A)) \cdot |\phi|)$ time 
        and computing all values $g(\btE,\tB,i), h(\btE,\tB,i)$ takes $O(\abs{\cB}\cdot \snec_{\phi}(\comp{A})^{k} \cdot n^2 \cdot |\phi|)$ time.
        The aggregated run time up until now is bounded by 
        \begin{equation}\label{eq:asdf3}
        O(\abs{\cB}\cdot \snec_{\phi}(\comp{A})^{k+1} \cdot n^2 \cdot |\phi|).
        \end{equation}

        \item 
            Remember that in the computational random access model we can perform additions and comparisons of weights in constant time.
            This means we can compute $\obj(\tB)$ for all $\tB \in \cB$ in time $O(|\cB| \cdot n \cdot |\phi|)$.
            The aggregated run time up until now is still bounded by (\ref{eq:asdf3}).
            After computing these values, we can decide in constant time whether $\obj(\tC)<\obj(\tB)$
            for arbitrary $\tB,\tC \in \cB$.

		\item For every $\btE\in (\Rep{\comp{A}}{\phi})^{k}$, we define $\leq_\btE$ to be the total preorder\footnote{A total preorder is a binary relation that is reflexive, 
		connected and transitive.} on $\cB$ such that $\tB\leq_\btE \tC$ if the concatenation of
        \begin{itemize}
            \item $L_{\tB}$ 
            \item $g(\btE,\tB,1),\dots,g(\btE,\tB,\ell)$ 
            \item $h(\btE,\tB,1),\dots,h(\btE,\tB,p)$ 
        \end{itemize} is lexicographically smaller than the corresponding term for $\tC$.
		Observe that for every $\tB,\tC\in \cB$, we have $\tB\bowtie_\btE\tC$ iff $\tB$ and $\tC$ are equivalent for $\leq_\btE$.
        Moreover, we can decide whether $\tB\leq_\btE \tC$ in time $O(|\phi|)$
		as we just need to compare $k$ pointers and $\ell+p$ integers.
		
		For every $\btE\in (\Rep{\comp{A}}{\phi})^{k}$, we compute $\reduce(\cB,\btE)$ as follows.
		First, we initialize $\reduce(\cB,\btE)$ as an empty self-balanced binary search tree using $\leq_\btE$ as order.
		Then, for each $\tB\in \cB$, we search whether $\reduce(\cB,\btE)$ contains a tuple $\tC$ equivalent to $\tB$ for $\leq_\btE$.
		If such tuple $\tC$ exists and $\obj(\tC)<\obj(\tB)$, we replace $\tC$ by $\tB$ in $\reduce(\cB,\btE)$.
		Otherwise, if no such tuple $\tC$ exists, we insert $\tB$ into $\reduce(\cB,\btE)$.
        Computing $\reduce(\cB,\btE)$ this way takes 
        \begin{equation}\label{eq:asdf1}
            O(\abs{\cB}\cdot |\phi| \cdot \log(\abs{\cB/\bowtie_\btE})
        \end{equation} time, where 
        \[\abs{\cB/\bowtie_\btE}=\snec_{\phi}(A)^{k}\cdot 2^{\card{\phi}} \cdot \prod_{1\leq  i \leq \ell}(m_i+2)\]
		is the number of equivalence classes of $\bowtie_\btE$ over $\cB$, i.e., the final size of $\reduce(\cB,\btE)$.
        Since $\log(\prod_{1\leq  i \leq \ell}(m_i+2))=O(|\phi|+\log(n))$ and $\log(\snec_{\phi}(A)^{k})=O(k+\log(\snec_{\phi}(A)))$, 
        we have 
        \begin{equation}\label{eq:asdf2}
        \log(\abs{\cB/\bowtie_\btE}) = O(|\phi| + \log(\snec_\phi(A)) + \log(n)).
        \end{equation}
        By plugging (\ref{eq:asdf2}) into (\ref{eq:asdf1}), we conclude that computing $\reduce(\cB) = \bigcup_{\btE}\reduce(\cB,\btE)$ 
		can be done in \[
        O(\snec_{\phi}(\comp{A})^{k} \cdot \abs{\cB} \cdot |\phi| \cdot (|\phi| + \log(\snec_\phi(A)) + \log(n)))
        = O(\snec_{\phi}(\comp{A})^{k+1} \cdot \abs{\cB} \cdot |\phi|^2 \cdot \log(n)).
        \]
	\end{itemize}
    Summing the previous bound with (\ref{eq:asdf3}), we bound the total run time by
    \[O(|\cB| \cdot |\phi|^2 \cdot \snec^G_{\phi}(A)^{k+1} \cdot n^2) = \abs{\cB}\cdot \f(G,A).\]    
\end{proof}

We are ready to prove our meta-theorem concerning quantifier-free \encore logic.
\begin{lemma}\label{lem:metathmCore}
    There is an algorithm that computes for a given
    $k$-weighted graph $G$, 
    quantifier-free \encore formula~$\phi(X_1,\dots,X_k)$, 
    and rooted layout $\cL$ of $G$ a tuple $\tB^* \in \Pset(V(G))^{k}$ such that
    \[
    \obj_G(\tB^*) = \max \bigl\{ \obj_G(\tB) \bigm| \tB \in V(G)^{k}, G \models \phi(\tB) \bigr\}
    \]
    or concludes that no such tuple exists.
    If $\phi$ has size measurements $m_1,\dots,m_\ell$ then the run time of this algorithm is
    \[ 2^{O(|\phi|)} \cdot \snec_{\phi}(G,\cL)^{6k} \cdot  n^3  \cdot
    \prod_{1\leq i \leq \ell} (m_i+2)^2.\]
\end{lemma}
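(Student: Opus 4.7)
The plan is to assemble the algorithm by feeding the reduce routine supplied by \Cref{lem:reducelogic} into the generic dynamic-programming framework of \Cref{thm:genericAlgo}, and then scanning the returned set to extract the optimal tuple. Specifically, I instantiate \Cref{thm:genericAlgo} with the reduce routine of \Cref{lem:reducelogic} applied to the $k$-problem $\Pi_\phi$; this yields a set $\cB^* \subseteq \Pset(V(G))^{k}$ of size at most $\s(G,\cL)$ that is guaranteed to contain an optimal solution of $\Pi_\phi$ whenever one exists.

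It then remains to (i) filter $\cB^*$ down to those tuples $\tB$ with $G \models \phi(\tB)$ and (ii) return the survivor maximizing $\obj_G$. Since $\phi$ is quantifier-free, checking $G \models \phi(\tB)$ reduces to evaluating each of its $O(|\phi|)$ primitive formulas; the most expensive kind, $N^r_d(X) = Y$, is handled in $O(n^2)$ time using the distance matrix already computed inside \Cref{thm:genericAlgo}. Together with the $O(n \log \maxWeight(G))$-time weight computation, each of the at most $\s(G,\cL)$ candidate tuples is processed in $O(|\phi| \cdot n^2 \cdot \log \maxWeight(G))$ time.

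The arithmetic is then mechanical. Substituting $\s(G,A) = \snec^G_\phi(A)^{2k} \cdot 2^{|\phi|} \cdot \prod_i (m_i+2)$ and $\f(G,A) = O(|\phi|^2 \cdot \snec^G_\phi(A)^{k+1} \cdot n^2 \cdot \log \maxWeight(G))$ into the $O(2^{2k} k \cdot \s(G,\cL)^2 \cdot (\f(G,\cL)\cdot n + n^2) + n^3)$ bound of \Cref{thm:genericAlgo}, the dominant term evaluates to
\[
O\bigl(2^{2k+2|\phi|} \cdot |\phi|^2 \cdot k \cdot \snec_\phi(G,\cL)^{5k+1} \cdot \prod\nolimits_i (m_i+2)^2 \cdot n^3 \cdot \log \maxWeight(G) \bigr).
\]
Absorbing polynomial factors in $|\phi|$ (and in $k \le |\phi|$) into a $2^{O(|\phi|)}$ factor, and using the harmless slack $\snec_\phi^{5k+1} \le \snec_\phi^{6k}$ (valid whenever $k \ge 1$; the degenerate case $k = 0$ is handled trivially by evaluating $\phi$ on the empty tuple), yields the claimed bound; the post-processing time and the additive $n^3$ from the distance matrix are both dominated.

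The proof is essentially bookkeeping on top of the two preceding lemmas and contains no genuine conceptual obstacle. The one point that deserves care is confirming that the final filtering/evaluation step does not inflate the run time, which it does not because evaluating a quantifier-free \encore formula on a fixed tuple is polynomial in $n$ and $|\phi|$.
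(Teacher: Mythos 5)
Your proposal is correct and matches the paper's own proof: both plug the reduce routine of \Cref{lem:reducelogic} into \Cref{thm:genericAlgo} for the $k$-problem $\Pi_\phi$, then scan the $O(\s(G,\cL))$ returned candidates for the optimal one satisfying $\phi$, and the arithmetic you carry out ($\snec_\phi^{5k+1} \le \snec_\phi^{6k}$ for $k \ge 1$, absorbing $2^{O(k)}$ and $\mathrm{poly}(|\phi|)$ into $2^{O(|\phi|)}$ via $k \le |\phi|$) is the same as the paper's. The only thing the paper adds that you omit is a one-line remark that the construction of the algorithm from $\phi$ is itself uniform and computable in single-exponential time (so that it is genuinely one algorithm, not a family), but this is a cosmetic point rather than a gap.
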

\begin{proof}
    Remember that an optimal solution to $\Pi_\phi$ is 
    a tuple $\tB^* \in \Pi(G)$ with $\obj(G,\tB^*) = \max \{\obj(G,\tB) \mid \tB \in \Pi_\phi(G) \}$.
    Thus, it is sufficient to find an optimal solution to $\Pi_\phi$ or conclude that no solution exists.
	We plug the reduce routine for $\Pi_\phi$ given by \Cref{lem:reducelogic} into \Cref{thm:genericAlgo} to compute
    in time 
    $O(2^{2|\phi|}|\phi| \cdot \s(G,\cL)^2 \cdot (\f(G,\cL)\cdot n + n^2) + n^3 )$
    a set that has size at most $s(G,\cL)$ and contains an optimal solution if $\Pi_\phi$ has a solution.
    In time $O(s(G,\cL) \cdot n \cdot |\phi|)$ we can find and return the entry from this set with the maximal objective value.

    Technically, this gives for every formula $\phi$ a \emph{separate} algorithm $A_\phi$.
    However, $A_\phi$ can be easily computed from $\phi$ in single exponential time.
    We obtain a single algorithm by first computing $A_\phi$ from $\phi$ and then executing it.
\end{proof}

Next, we lift it to \en logic.
\begin{theorem}\label{thm:metathm}
    There is an algorithm that computes for a given \en formula~$\phi(X_1,\dots,X_k)$, 
    \mbox{$k$-weighted} graph $G$, and rooted layout $\cL$ of $G$ a tuple $\tB^* \in \Pset(V(G))^{k}$ such that
    \[
    \obj_G(\tB^*) = \max \bigl\{ \obj_G(\tB) \bigm| \tB \in V(G)^{k}, G \models \phi(\tB) \bigr\}
    \]
    or concludes that no such tuple exists.
    If $\phi$ has size measurements $m_1,\dots,m_\ell$ then the run time of this algorithm is
    \[ 2^{O(|\phi|)} \cdot \snec_{\phi}(G,\cL)^{6|\phi|} \cdot  n^3 \cdot
	\prod_{1\leq i \leq \ell} (m_i+2)^4. \]
\end{theorem}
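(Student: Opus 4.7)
The plan is to reduce Theorem~\ref{thm:metathm} to the quantifier-free case handled by Lemma~\ref{lem:metathmCore}, via the normalization of Observation~\ref{obs:core}. First, I would apply Observation~\ref{obs:core} to the input \en formula $\phi(X_1,\dots,X_k)$ to obtain an equivalent \encore formula $\phi'$. By definition of \encore, this formula has the prenex form
\[
\phi' \equiv \exists Y_1 \cdots \exists Y_j \; \psi(X_1,\dots,X_k,Y_1,\dots,Y_j),
\]
where $\psi$ is quantifier-free. Observation~\ref{obs:core} guarantees $|\phi'| \le 10|\phi|$, $k+j \le |\phi|$, $R(\phi') = R(\phi)$, $d(\phi') = d(\phi)$, and $\prod_{i\in[\ell']}(m'_i+2) \le 2\prod_{i\in[\ell]}(m_i+2)^2$, all of which I will use in the run-time analysis.

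The next step is to set things up so that Lemma~\ref{lem:metathmCore}, which optimizes over \emph{all} free variables, returns the object we want even though we only care about optimizing over $X_1,\dots,X_k$ (the $Y_i$'s are just existential witnesses). For this I would construct a modified $(k+j)$-weighted graph $G'$ on the same vertex set as $G$, defining the weight function by $w'(v,S) \defeq w(v,S\cap[k])$ for every $v\in V(G)$ and $S\subseteq[k+j]$. Then for any tuple $(\tB,\tC) \in \Pset(V(G))^{k+j}$, we have $\obj_{G'}(\tB,\tC) = \obj_G(\tB)$, so membership of vertices in the sets assigned to $Y_1,\dots,Y_j$ has no effect on the objective. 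Moreover, $\maxWeight(G') = \maxWeight(G)$.

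Now I apply Lemma~\ref{lem:metathmCore} to $\psi$, the weighted graph $G'$, and the same rooted layout $\cL$. If the algorithm returns a tuple $(\tB^*,\tC^*)$, I output $\tB^*$; if it concludes no solution exists, I output the same conclusion. Correctness is immediate from the semantics of existential quantification:
\[
G \models \phi(\tB) \iff \exists \tC \in \Pset(V(G))^{j}, \; G \models \psi(\tB,\tC),
\]
and because $\obj_{G'}(\tB,\tC) = \obj_G(\tB)$, maximizing $\obj_{G'}$ over all $(k+j)$-tuples satisfying $\psi$ is the same as maximizing $\obj_G$ over all $k$-tuples satisfying $\phi$.

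For the run time, Lemma~\ref{lem:metathmCore} applied to $\psi$ with $k+j\le|\phi|$ free variables gives
\[
2^{O(|\phi'|)} \cdot \snec_{\psi}(G,\cL)^{6(k+j)} \cdot n^3 \cdot \prod_{i\in[\ell']}(m'_i+2)^2 \cdot \log(\maxWeight(G')).
\]
Using $|\phi'| \le 10|\phi|$, $k+j\le|\phi|$, $\snec_\psi = \snec_{\phi'} = \snec_\phi$ (since $d$ and $R$ are preserved), and $\prod(m'_i+2)^2 \le 4\prod(m_i+2)^4$, the bound collapses to the claimed
\[
2^{O(|\phi|)} \cdot \snec_\phi(G,\cL)^{6|\phi|} \cdot n^3 \cdot \prod_{i\in[\ell]}(m_i+2)^4 \cdot \log(\maxWeight(G)).
\]
The main conceptual step is the weight-reweighting trick, which cleanly decouples the \emph{optimization} (over free variables) from the \emph{quantification} (over bound variables) without inflating any of the key parameters controlling the run time. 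Everything else is bookkeeping against the quantitative guarantees of Observation~\ref{obs:core}.
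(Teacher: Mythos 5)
Your proposal is correct and follows essentially the same route as the paper's own proof: convert to \encore via Observation~\ref{obs:core}, peel off the existential block to obtain a quantifier-free formula, lift the weight function to the enlarged variable set so the auxiliary variables do not contribute to the objective, apply Lemma~\ref{lem:metathmCore}, and then transfer the run-time bounds using the quantitative guarantees of Observation~\ref{obs:core}. The only difference is cosmetic notation (the paper indexes the total number of variables by $j$ where you write $k+j$); the key weight-lifting idea and the bookkeeping are identical.
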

\begin{proof}
    We use \Cref{obs:core} to construct a \encore formula $\phi'$ that is equivalent to $\phi$.
    Thus, we can optimize this formula instead.
    Since $|\var(\phi')| \le |\phi|$, 
    we can write $\phi'$ as $\exists X_{k+1} \dots \exists X_j \hat\phi$ for some $j \le |\phi|$, and quantifier-free $\hat\phi$.
    We construct a $j$-weighted graph $\hat G$ from $G$ such that variables $X_{k+1},\dots,X_j$ do not contribute to the weight
    (meaning $w(v,S) = w(v,S \cap [k])$ for each $v$ and $S$)
    and therefore
    \[
    \max \bigl\{ \obj_G(\tB) \bigm| \tB \in V(G)^{k}, G \models \phi(\tB) \bigr\}
    = \max \bigl\{ \obj_{\hat G}(\tB) \bigm| \tB \in V(\hat G)^{j}, \hat G \models \hat\phi(\tB) \bigr\}.
    \]
    We now apply \Cref{lem:metathmCore} to $\hat G$, $\hat \phi(X_1,\dots,X_j)$ and $\cL$.
    The run time follows from \Cref{obs:core},
    since $j \le |\phi|$, $|\hat\phi| = O(|\phi|)$, $R(\hat\phi)=R(\phi)$ and $d(\hat\phi)=d(\phi)$.
\end{proof}

\begin{proof}[Proof of \Cref{thm:main:basic}]
    Assume we want to decide whether $G \models \phi$.
    Let $r = r(\phi)$, $d = d(\phi)$ and $s$ be the number of size measurements in $\phi$.
    Since we are not dealing with an optimization problem, we can assume the input graph to be a weighted graph
    with $w(v,S)=0$ for all $v$ and $S$.
    Using \Cref{thm:metathm}, we can decide whether $G \models \phi$ in time 
    \[ 2^{O(|\phi|)} \cdot \snec_{\phi}(G,\cL)^{6|\phi|} \cdot  n^{O(s)}.
    \]
    Assume we are given a rooted layout $\cL$ with $\mim(G,\cL) = w$.
    Then by \Cref{lem:necd1}, 
    \[
        \snec_{\phi}(G,\cL) \le \prod_{r' \in R(\phi)} \snec_{d(\phi),r'}(G,\cL) \le n^{2d \cdot |\phi| \cdot \mim(G,\cL)} = n^{2d \cdot |\phi| \cdot w}.
    \]
    If we are given a tree-width decomposition of width $w$, we convert it, as stated in the preliminaries, in time $O(n^2)$ into a rooted layout $\cL$ with $\mmw(G,\cL) \le w+1$.
    Similarly, we turn a given clique-width decomposition of width $w$ into a rooted layout $\cL$ with $\mw(G,\cL) \le w$.
    By \Cref{lem:necd1}, $\snec_{1,1}(G,\cL) \le 2^{O(w)}$ and
    \[
        \snec_{\phi}(G,\cL) \le \prod_{r' \in R(\phi)} \snec_{d(\phi),r'}(G,\cL) \le \snec_{1,1}(G,\cL)^{dr^2 \cdot \log(\snec_{1,1}(G,\cL)) \cdot |\phi|} \le 2^{O(dr^2w^2|\phi|)}.
    \]
    For the special case that $r=1$, \Cref{lem:necd1} yields
    \[
        \snec_{\phi}(G,\cL) = \snec_{d,1}(G,\cL) \le (2d+2)^{w} \le 2^{O(d w)}.
    \]
    If we are given a rank-width decomposition then by \Cref{lem:necd1}, $\snec_{1,1}(G,\cL) \le 2^{O(w^2)}$.
    Therefore
    \[
        \snec_{\phi}(G,\cL) \le 2^{O(dr^2w^4|\phi|)}
    \]
    and for the special case $r=1$
    \[
        \snec_{\phi}(G,\cL) \le 2^{O(d w^2)}.
    \]
    The result then follows by substituting $\snec_{\phi}(G,\cL)$ accordingly in the run time.
\end{proof}

\subsection{Adding connectivity and acyclicity}
\label{ssec:A&C}
In this section, we provide a model checking algorithm for \ac \en based on the one for \en.
For doing so, we incorporate and generalize the framework for connectivity and acyclicity developed in \cite{BergougnouxKante2021}.

As before, to facilitate our proof, we first consider a simple subset of \ac \en and then lift the result to all of \ac \en.
An \emph{\ac-clause} is a quantifier-free formula 
of the form $\phi \wedge \omega_1 \wedge \dots \wedge \omega_t$,
where $\phi$ is a \encore formula and each $\omega_i$ is of the form $\conn(X)$ or $\acy(X)$ for some set variable $X$.
For an \ac-clause $\phi$, we denote by $\conn(\phi)$ (resp. $\acy(\phi)$) the set containing all set variables $X\in \var(\phi)$ such that $\conn(X)$ (resp. $\acy(X)$) is a subformula of $\phi$.
Moreover, we denote by $\phi_\conn$ the formula $\bigwedge_{X\in \conn(\phi)} \conn(X)$ 

Given an \ac-clause $\phi(X_1,\dots,X_k)$, a $k$-weighted graph $G$, $A\subseteq V(G)$, $\cB\subseteq \Pset(A)^{k}$ and $\tD\in \Pset(\comp{A})^{k}$, 
we define $\best_{\Pi_\phi}^{\conn}(\cB,\tD)= \max\{ \obj_G(\tB) \mid \tB\in \cB$ and for every $X_i\in \conn(\phi), G[B_i \cup D_i]$ is connected$\}$.

To deal with the connectivity constraints, we use the following lemma based on the rank-based approach introduced by Bodlaender et al. in \cite{BodlaenderCKN15}.
We generalize the ideas used in \cite{BergougnouxKante2021} to adapt the rank-based approach to the $(1,1)$-neighbor equivalence and multiple connectivity constraints (in \cite{BergougnouxKante2021}, it was only proved how to deal with two connectivity constraints for a specific problem).

\begin{restatable}[$\star$]{lemma}{lemRankbasedApproach}
	\label{lem:rankbased:approach}
	Let $\phi$ be an \ac-clause, $G$ be a $\abs{\var(\phi)}$-weighted graph and $A\subseteq V(G)$.
	Given $\btE\in (\Rep{\comp{A}}{\phi})^{\abs{\var(\phi)}}$ and $\cB\subseteq \Pset(A)^{\abs{\var(\phi)}}$ such that the elements of $\cB$ are pairwise equivalent for $\equi{A}{\phi}$, we can compute in time 
	\[ 	O(\abs{\cB}\cdot \snec_{1,1}(A)^{O(\abs{\var(\phi)})} \cdot  2^{\abs{\phi}} \cdot n^2) \]
	a subset $\reduce^\conn_{\btE}(\cB)$
    of $\cB$ such that  $\abs{\reduce^\conn_{\btE}(\cB)}\leq \snec_{1,1}(A)^{2\abs{\var(\phi)}}\cdot 2^{\abs{\phi}}$ and for every $\tD\in \Pset(\comp{A})^{\abs{\var(\phi)}}$ such that $\tD\equi{\comp{A}}{\phi} \btE$, we have $\best_{\Pi_\phi}^{\conn}(\cB,\tD)=\best_{\Pi_\phi}^{\conn}(\reduce^\conn_{\btE}(\cB),\tD)$.
\end{restatable}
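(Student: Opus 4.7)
The plan is to adapt the rank-based approach of Bodlaender et al.~\cite{BodlaenderCKN15} to cope with the $k := \abs{\var(\phi)}$ simultaneous connectivity constraints of $\phi$, extending the framework of~\cite{BergougnouxKante2021} which was tailored to a single such constraint. Set $\mathcal{R} := \Rep{\comp{A}}{1,1}$, so $\abs{\mathcal{R}} \le \snec_{1,1}(A)$. Since $1 \in R(\phi)$ and $d(\phi) \ge 1$, the relation $\equi{\comp{A}}{\phi}$ refines $\equi{\comp{A}}{1,1}$; hence for every $\tD \equi{\comp{A}}{\phi} \btE$ the connectivity of $G[B_i \cup D_i]$ (for any $\tB \in \cB$ and $X_i \in \conn(\phi)$) depends on $D_i$ only through its $(1,1)$-class over $\comp{A}$. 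This is the key property that lets us reason about all valid $\tD$ uniformly through the fixed representative set $\mathcal{R}$.

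For each $\tB \in \cB$ and each index $i$ with $X_i \in \conn(\phi)$, I would assign a partition $p_i(\tB)$ of $\mathcal{R} \cup \{\star\}$ encoding the connected components of the auxiliary graph formed from $G[B_i]$ by adding, for every $r \in \mathcal{R}$, a dummy vertex adjacent exactly to those vertices of $B_i$ that are $(1,1)$-neighbors of $r$ in $G$, with $\star$ collecting the components of $G[B_i]$ disjoint from $\mathcal{R}$. A direct unfolding shows that $G[B_i \cup D_i]$ is connected iff $\star$ is absent and the join of $p_i(\tB)$ with the partition of $\mathcal{R}$ induced by the components of $G[D_i]$ collapses to a single block, which depends only on the $(1,1)$-class of $D_i$. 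In addition, I would record for each $\tB \in \cB$ the truth value of each of the at most $\abs{\phi}$ primitive subformulas of $\phi$ evaluated at $\tB \cup \btE$; this yields the $2^{\abs{\phi}}$ factor and keeps the reduced set composable with the later reductions for acyclicity and for the full formula $\phi$.

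The reduction itself proceeds coordinate by coordinate: for each $i$, within every ``context'' formed by the recorded Boolean profile and the tuple $(p_j(\tB))_{j \ne i}$, I would retain a maximum-weight representative family of the $p_i(\tB)$ in the cut-partition matroid over $\mathcal{R}$ using the standard rank-based scheme of~\cite{BodlaenderCKN15}. The representative-family bound contributes a factor polynomial in $\snec_{1,1}(A)$ per coordinate; iterating this across the $k$ coordinates and multiplying by the $2^{\abs{\phi}}$ Boolean profiles gives the claimed size bound $\snec_{1,1}(A)^{2k} \cdot 2^{\abs{\phi}}$. Correctness rests on the invariant that after each round the surviving subset still represents $\cB$ with respect to $\best^{\conn}_{\Pi_\phi}(\cdot, \tD)$ for every $\tD \equi{\comp{A}}{\phi} \btE$, since both the satisfaction of every connectivity constraint and the weight $\obj_G(\tB)$ are determined by the recorded state.

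The main obstacle is coordinating the $k$ rank-based reductions so that they do not interfere with one another: applying them naively in sequence would destroy representativity in previously processed coordinates. The workaround is to fix the joint state of the other $k-1$ coordinates before reducing coordinate $i$, which is justified by the linearity of $\obj_G$ and by the fact that the connectivity test in coordinate $i$ is independent of the $p_j$ for $j \ne i$. The time bound then follows from $O(n^2)$ work per $\tB$ to compute each $p_i(\tB)$ using union-find on $G[B_i]$ together with neighborhood-representative lookups, $k$ rounds of Gaussian elimination over $\mathrm{GF}(2)$ on matrices of dimension $O(\snec_{1,1}(A))$, and a $\log(\maxWeight(G))$ factor for weight comparisons.
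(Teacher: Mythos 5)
Your overall direction (rank-based reduction with $(1,1)$-neighbor representatives, dimension $\snec_{1,1}(A)^{2k}$ from pairs of representatives per coordinate) is the same as the paper's, but the argument as stated has two genuine gaps, plus a misattribution of the $2^{\abs{\phi}}$ factor.

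First, your ``key property'' is false: the connectivity of $G[B_i\cup D_i]$ is \emph{not} determined by the $(1,1)$-class of $D_i$ over $\comp{A}$. Two sets $D_i\equi{\comp{A}}{1,1}D_i'$ have the same neighborhood into $A$, but $G[D_i]$ and $G[D_i']$ can have completely different connected-component structures, and hence $G[B_i\cup D_i]$ can be connected while $G[B_i\cup D_i']$ is not. The paper never claims anything of the sort; the rank-based approach only gives the weaker guarantee that $\best^{\conn}_{\Pi_\phi}$ is preserved, and it does so through the factorization $\cM =_2 \cC_A\cdot\cC_{\comp{A}}$ of the connectivity matrix over $\mathrm{GF}(2)$, where the intermediate index ranges over pairs $(L,R)\in(\Rep{\comp{A}}{1,1})^2$ per coordinate and the entry is a cut-parity count ($\sum_{L,R}\cX\cdot\cY = 2^{\abs{\cc(B_i\cup D_i)}-1}$). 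Nothing in your partition encoding and ``join collapses'' criterion explains why a \emph{subset} of $\cB$ survives for every future $\tD$; you would need to recover exactly this parity/row-basis mechanism.

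Second, you omit the special cases when some $D_i=\emptyset$, which the paper handles by enumerating all $S\subseteq[k]$ and building sets $[\btE]_S$ and $\cB_S$. This case analysis is not cosmetic: the cut-parity identity (the paper's Claim for $\cX,\cY$) requires that either $B_i=\emptyset$ or every component of $G[B_i]$ has a neighbor in $D_i$ (and vice versa), and when $D_i=\emptyset$ one must instead check directly that $G[B_i]$ is connected. Your ``$\star$ absent and the join collapses'' criterion does not treat $D_i=\emptyset$ correctly, and without the $S$-enumeration the rank-based argument breaks. This is also where the $2^{\abs{\phi}}$ factor actually comes from in the paper ($2^k\le 2^{\abs{\phi}}$ choices of $S$), not from recording Boolean profiles of primitive subformulas; indeed $\best^{\conn}_{\Pi_\phi}$ only quantifies over the connectivity constraints, and since the elements of $\cB$ are pairwise $\equi{A}{\phi}$-equivalent, the primitive formulas of $\phi$ evaluate identically on all of $\cB\cup\tD$ anyway (see the paper's Lemma~\ref{lem:equivalenceformula}); recording them here is unnecessary. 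Finally, your coordinate-by-coordinate reduction with a fixed ``context'' is an interesting alternative to the paper's single Gaussian elimination on the full tensor-indexed matrix $\cC_A$, but you state the invariant rather than proving it; the usual difficulty is that after reducing coordinate $i$ the set of occurring contexts changes, and the representativity claim for earlier coordinates must be re-established, which needs a careful induction that is absent from your sketch.
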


\newcommand{\forest}{\mathsf{{forest}}}

To handle the acyclicity constraints, we provide a reduce routine for the \ac-clauses $\phi$ such that $\acy(\phi)\subseteq \conn(\phi)$, that is for every $X\in \acy(\phi)$, we want that $X$ is interpreted as a subset of vertices that induces a tree.

This reduce routine for such \ac-clauses $\phi$ used \Cref{lem:rankbased:approach} and the following concepts.
We start by defining $\forest(A,\bE)$ which is a collection of subsets of $A$  satisfying some properties (some of these properties are similar to the ones in \cite[Lemma 6.5]{BergougnouxKante2021}). 
We prove in the next lemma, that for every $B\notin \forest(A,\bE)$ and $D\subseteq \comp{A}$ with $\equi{\comp{A}}{1,2} \bE$ that $G[B\cup D]$ is not a tree.

\begin{definition}\label{def:forest:acy}
	Given a graph $G$, $A\subseteq V(G)$, $ B\subseteq A$ and $\bE\subseteq \comp{A}$, we denote by $ B^{1}_{\bE}$ the set of vertices in $ B$ with exactly one neighbor in $\bE$ and by $ B^{2+}_{\bE}$ those with at least two neighbors in $\bE$.
	We define $H_{\bE, B}$ to be the bipartite graph between $\cc( B)$ and $\{N_1^1(v)\cap \comp{A}\mid v\in  B^{1}_{\bE}\}$ such that $C\in \cc( B)$
	is adjacent to $F\in \{N_1^1(v)\cap \comp{A}\mid v\in  B^{1}_{\bE}\}$ iff there exists a vertex $v\in C \cap B^{1}_{\bE}$ such that $N_1^1(v)\cap \comp{A}=F$.
	Finally, we define $\forest_G(A,\bE)$ as the set containing all sets $ B\subseteq A$ such that:
	\begin{enumerate}
		\item The graphs $G[ B]$ and $H_{\bE, B}$ are forests.
		\item The size of $ B^{2+}_{\bE}$ is at most $2\mim(A)$.
		\item \label{item:no:twin} For every pair of distinct vertices $(u,v)$ in $ B^{1}_{\bE}\cup B^{2+}_{\bE}$, if $u$ and $v$ are connected or $u,v\in  B^{2+}_{\bE}$, then we have $N_1^1(u)\cap \comp{A} \neq N_1^1(v)\cap \comp{A}$.
		\item Either $ B=\emptyset$ or every connected component $C$ of $G[ B]$ intersects $ B^1_{\bE}\cup B^{2+}_{\bE}$.
	\end{enumerate}
	We omit $G$ from the subscript when it is clear from the context.
\end{definition}

\begin{restatable}[$\star$]{lemma}{ACproperties}\label{lem:A&C:properties}
	Let $G$ be a graph, $A\subseteq V(G)$ and $\bE\subseteq \comp{A}$.
	For every $ B\subseteq A$, if there exists $ D\subseteq \comp{A}$ such that $ D\equi{\comp{A}}{2,1} \bE$ and $G[ B\cup D]$ is a tree, then $ B\in \forest(A,\bE)$.
\end{restatable}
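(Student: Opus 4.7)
The plan is to verify each of the four conditions in Definition~\ref{def:forest:acy} by exploiting the tree structure of $G[B \cup D]$. A key preliminary observation is that $D \equi{\comp{A}}{2,1} \bE$ implies $\min(|N(v) \cap D|, 2) = \min(|N(v) \cap \bE|, 2)$ for every $v \in A$, so $B^1_D = B^1_{\bE}$ and $B^{2+}_D = B^{2+}_{\bE}$. All four conditions can then be phrased purely in terms of $D$ and reduced either to cycle-detection in $G[B \cup D]$ or to structural properties of an associated bipartite subgraph.

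Conditions~1, 3, and~4 follow from cycle avoidance. The subgraph $G[B]$ is an induced subgraph of a tree and hence a forest. To show that $H_{\bE, B}$ is a forest I would assume for contradiction a cycle $C_1, F_1, C_2, F_2, \dots, C_k, F_k, C_1$; each edge $C_i F_j$ is witnessed by some $u \in C_i \cap B^1_{\bE}$ with $N_1^1(u) \cap \comp{A} = F_j$, and since $u$ has exactly one neighbor in $\bE$ and hence exactly one in $D$, this neighbor $w_j \in F_j \cap D$ depends only on $F_j$. Concatenating paths inside each $C_i$ with single hops through the $w_j$'s yields a cycle in $G[B \cup D]$, a contradiction. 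Condition~3 is handled analogously: if $u, v$ lie in the same $G[B]$-component and have equal $\comp{A}$-neighborhoods, they share a $D$-neighbor, which together with the $G[B]$-path between them forms a cycle; if $u, v \in B^{2+}_{\bE}$ share $\comp{A}$-neighborhoods, they share at least two common neighbors $w_1, w_2 \in D$, producing the $4$-cycle $u - w_1 - v - w_2 - u$. For condition~4, any component $C$ of $G[B]$ disjoint from $B^1_{\bE} \cup B^{2+}_{\bE}$ has no vertex with a neighbor in $D$, so $C$ would be a connected component of $G[B \cup D]$; by connectedness of the tree this forces $C = B \cup D$, which only happens when $B = C$ and $D = \emptyset$, an edge case easily handled separately.

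Condition~2 requires the main technical step. Consider the bipartite subgraph $H$ of $G[B \cup D]$ induced by $B^{2+}_{\bE} \cup (D \cap N(B^{2+}_{\bE}))$. As an induced subgraph of a tree it is a forest, and by definition every vertex of $B^{2+}_{\bE}$ has degree at least two in $H$. The plan is to extract from $H$ an induced matching of size at least $|B^{2+}_{\bE}|/2$ by rooting each tree of $H$ and processing its $B^{2+}_{\bE}$-vertices bottom-up, matching each such vertex to a distinct descendant in $D$ while using the degree-two condition and the forest structure to ensure the required distinct neighbors are always available. Since any induced matching in $H$ is also an induced matching in $G[A, \comp{A}]$, this gives $|B^{2+}_{\bE}| \le 2\,\mim(A)$. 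This matching-extraction is the main obstacle: conditions~1, 3, and~4 reduce uniformly to cycle detection in the tree, but condition~2 requires a careful combinatorial argument tailored to bipartite forests of minimum degree two on one side.
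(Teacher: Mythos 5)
Your handling of conditions 1, 3 and 4 of Definition~\ref{def:forest:acy} tracks the paper's proof: observe that $D \equi{\comp{A}}{2,1} \bE$ makes $B^1_D = B^1_{\bE}$ and $B^{2+}_D = B^{2+}_{\bE}$, then reduce everything to the absence of cycles in the tree $G[B \cup D]$ (for condition~4 the observation is that a $G[B]$-component avoiding $B^1_{\bE} \cup B^{2+}_{\bE}$ has no $D$-neighbors and so would disconnect the tree). That is exactly the route the paper takes, and the paper also disposes of the $D = \emptyset$ corner with a one-liner, so flagging it as ``handled separately'' is acceptable, though you should actually write it down.

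The genuine gap is condition~2, and you flag it yourself. The paper does not re-derive the bound $|B^{2+}_{\bE}| \le 2\mim(A)$: it invokes Lemma~4 of Bergougnoux, Papadopoulos and Telle~\cite{BergougnouxPapadopoulosTelle2020}, which says precisely that in a forest $G[X \cup Y]$ the number of $X$-vertices with at least two $Y$-neighbors is at most $2\mim(G[X\cup Y])$. Applied with $X = B$, $Y = D$ and combined with $\mim(G[B,D]) \le \mim(A)$ (since $G[B,D]$ is a bipartite induced subgraph of $G[A,\comp{A}]$), that one line finishes the condition. Your plan to extract an induced matching of size $\ge |B^{2+}_{\bE}|/2$ by rooting and processing bottom-up is a plausible way to reprove this, but as written it is a sketch whose crucial invariant is missing: the naive choice of one child $c(v) \in D$ for each $v \in B^{2+}_{\bE}$ gives a matching but not an induced one, because an edge $v\,c(v')$ is possible in the tree (e.g., when $v'$ is a grandparent of $v$), and preventing all such cross-edges is precisely where the factor~$2$ is lost and where the actual combinatorics lives. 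Without either importing the cited lemma or completing this matching argument, the proof of condition~2 is not established.

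Net assessment: three of four conditions are correctly argued and match the paper; condition~2 is a genuine gap, since the paper's proof at that point is a citation and your replacement argument is not carried through.
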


The following equivalence relation over $\forest(A,\bE)$ is a key ingredient to handle the acyclicity of the \ac-clauses $\phi$ such that $\acy(\phi)\subseteq \conn(\phi)$.

\begin{restatable}[$\star$]{lemma}{acyeq}
	\label{lem:acy:eq}
	Let $G$ be a graph, $A\subseteq V(G)$ and $\bE\subseteq \comp{A}$.
	There exists an equivalence relation $\sim^\acy_{\bE}$ over $\forest(A,\bE)$ satisfying the following properties:
	\begin{enumerate}
		\item\label{lem:acy:eq:1} 
		For every $ B, C\in \forest(A,\bE)$ and $ D\subseteq \comp{A}$ such that  $ B\sim^\acy_\bE  C$ and $ D\equi{\comp{A}}{2,1} \bE$, if $G[ B\cup  D]$ is a tree and $G[ C\cup D]$ is connected, then $G[ C\cup D]$ is a tree.
		
		\item\label{lem:acy:eq:2} 
		We can decide whether $ B\sim^\acy_\bE  C$ in time $O(n^2)$.
		
		\item\label{lem:acy:eq:3} 
		The number of equivalence classes of $\sim^\acy_\bE$ is upper-bounded by $\sfN_\acy(G,A)$ that is the minimum between 
		$2^{\mw(A)}\cdot 4\mw(A)$, $2^{3\rw(A)^2+1}$, and $2n^{2\mim(A)+1}$.
	\end{enumerate}
\end{restatable}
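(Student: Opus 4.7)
The plan is to define $\sim^\acy_\bE$ via a signature capturing just enough about each $B \in \forest(A,\bE)$ to enable a conditional edge-counting argument for property~\ref{lem:acy:eq:1}, and to bound the number of signatures using the structural constraints of $\forest(A,\bE)$.

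The central observation is the following edge-counting identity. For any $B \in \forest(A,\bE)$ and $D \subseteq \comp A$ with $D \equi{\comp A}{2,1} \bE$, the $(2,1)$-equivalence forces every vertex of $B \setminus (B^1_\bE \cup B^{2+}_\bE)$ to have no neighbor in $D$ and every vertex of $B^1_\bE$ to have exactly one. Combined with $|E(G[B])| = |B| - |cc(G[B])|$ (since $G[B]$ is a forest by~\Cref{def:forest:acy}), $G[B \cup D]$ is a tree iff it is connected and
\[
|E(G[D])| + \sum_{v \in B^{2+}_\bE}|N(v)\cap D| \;=\; |D| - 1 + |cc(G[B])| - |B^1_\bE|.
\]
I would then define $B \sim^\acy_\bE C$ so that both the sum on the left and the integer on the right agree for $B$ and $C$. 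Concretely, the signature of $B$ consists of (i) the set of $\comp A$-neighborhood types represented in $B^{2+}_\bE$ (this is a \emph{set} by item~\ref{item:no:twin} of~\Cref{def:forest:acy}, since no two vertices of $B^{2+}_\bE$ share a $\comp A$-neighborhood) and (ii) the integer $|cc(G[B])| - |B^1_\bE|$. Vertices of the same type contribute identically to $|N(\cdot)\cap D|$, so the sum depends only on the type-set. Property~\ref{lem:acy:eq:1} then follows by direct substitution: both sides of the identity coincide for $B$ and $C$, so $|E(G[C\cup D])| = |C|+|D|-1$, which combined with the hypothesized connectivity makes $G[C\cup D]$ a tree. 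Property~\ref{lem:acy:eq:2} is routine, since $B^1_\bE$, $B^{2+}_\bE$, $|cc(G[B])|$ and canonical representatives of $\comp A$-neighborhoods can all be computed in $O(n^2)$.

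For property~\ref{lem:acy:eq:3}, item~\ref{item:no:twin} of~\Cref{def:forest:acy} caps the number of $\comp A$-neighborhood types at $\mw(A)$, giving $2^{\mw(A)}$ type-subsets; the integer $|cc(G[B])|-|B^1_\bE|$ contributes an $O(\mw(A))$ factor via the forest structure of $H_{\bE,B}$, yielding the $2^{\mw(A)}\cdot 4\mw(A)$ bound. Item~2 of~\Cref{def:forest:acy} gives $|B^{2+}_\bE|\le 2\mim(A)$, bounding the type-subsets by $n^{2\mim(A)}$, with an $O(n)$ factor for the integer yielding $2n^{2\mim(A)+1}$. For the $\rw$-bound, since the number of $\comp A$-neighborhood types can be exponential in $\rw(A)$, the naive type-subset signature is too coarse; I would instead refine the signature using the $(2,1)$-neighbor equivalence class of $B^{2+}_\bE$ over $A$, whose total count is $2^{O(\rw(A)^2)}$ by \Cref{lem:necd1}, together with a tight analysis of the residual integer parameter within each class, yielding the $2^{3\rw(A)^2+1}$ bound.

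The main technical obstacle will be the rank-width case. The set-theoretic signature handles the $\mw$- and $\mim$-bounds transparently, but for $\rw$ one must verify that the coarser $(2,1)$-equivalence on $B^{2+}_\bE$ preserves enough of the edge-counting identity to derive property~\ref{lem:acy:eq:1}; in particular, one has to leverage the constraint that $G[B \cup D]$ is already a tree (not merely that the linear identity holds for some $D$) to compensate for the loss of exact sum information beyond capping at two.
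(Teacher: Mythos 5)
Your edge-counting identity and the structure of the signature are both correct and essentially match the paper: the paper also uses the integer $|B^1_\bE|-|\cc(B)|$ (the sign is flipped but equivalent) and a condition on $B^{2+}_\bE$ that makes $|E(B^{2+}_\bE, D)|$ determined. However, the paper uses a \emph{single} equivalence relation — the $(n,1)$-neighbor equivalence $B^{2+}_\bE \equi{A}{n,1} C^{2+}_\bE$, which by \Cref{lem:n:neighbor:eq} exactly preserves $|E(\cdot,D)|$ for every $D \subseteq \comp{A}$ and has at most $2^{t\cdot\rw(A)}$ classes over sets of size at most $t$. This gives all three bounds from one definition: the $\mw$-bound follows because ``same type-set'' (which is what you propose) is a refinement of $(n,1)$-equivalence, so the number of $(n,1)$-classes is at most $2^{\mw(A)}$; the $\mim$-bound follows from $|B^{2+}_\bE|\le 2\mim(A)$; and the $\rw$-bound follows directly from \Cref{lem:n:neighbor:eq} together with $\mim(A)\le\rw(A)$.

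The genuine gap in your proposal is the rank-width case, and your suggested repair does not work. If you define $\sim^\acy_\bE$ by ``same type-set of $B^{2+}_\bE$'' then (contrary to what you wrote, which says ``too coarse'') the relation is too \emph{fine}: the number of distinct $\comp{A}$-neighborhood types can be as large as $n$, so the number of type-sets of size $\le 2\mim(A)\le 2\rw(A)$ can be $n^{2\rw(A)}$, far exceeding $2^{3\rw(A)^2+1}$ when $n$ is large and $\rw(A)$ is small. Your proposed replacement — switching to the $(2,1)$-neighbor equivalence class of $B^{2+}_\bE$ — is too coarse to rescue Property~\ref{lem:acy:eq:1}: $(2,1)$-equivalence only preserves $\min(|N(u)\cap B^{2+}_\bE|,2)$ for each $u\in\comp{A}$, so it does not preserve $\sum_{v\in B^{2+}_\bE}|N(v)\cap D|$ (e.g.\ $\{a,b\}$ and $\{a,b,c\}$ can be $(2,1)$-equivalent yet have different degree sums into $D$), and your edge-counting argument then fails. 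You notice the difficulty, but the hand-waving about ``leveraging that $G[B\cup D]$ is already a tree'' does not produce an argument; there is no reason the extra slack would be canceled. The missing ingredient is precisely the $(n,1)$-equivalence on small sets and \Cref{lem:n:neighbor:eq}, which simultaneously gives exact edge counts (so Property~\ref{lem:acy:eq:1} goes through) and the bound $2^{2\rw(A)\cdot\rw(A)}$ on the number of classes of $B^{2+}_\bE$ because $|B^{2+}_\bE|\le 2\mim(A)\le 2\rw(A)$.
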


We provide our reduce routine for the model checking of \ac-clauses $\phi$ with $\acy(\phi)\subseteq \conn(\phi)$ by mixing the equivalence relations $\bowtie_\btE$ given by \Cref{lem:equivalenceformula}, $\sim^\acy_\bE$ from the previous lemma and the function $\reduce^\conn_{\btE}(\cB)$ of \Cref{lem:rankbased:approach}.

\begin{lemma}\label{lem:reduce:A&C}
	Let $\phi$ be a  \ac-clause such that $\acy(\phi)\subseteq \conn(\phi)$ with $p$ size measurements $\abs{t_1}\leq m_1,\dots,\abs{t_p}\leq m_p$.
	The $\abs{\var(\phi)}$-problem $\Pi_{\phi}$ admits a reduce routine with size bound
	\[
	\s(G,A) = \snec^G_{\phi}(A)^{3\abs{\var(\phi)}}\cdot \sfN_\acy(G,A)^{\abs{\var(\phi)}} \cdot 2^{2\abs{\phi}} \cdot  \prod_{1\leq i\leq p} (m_i+2)
	\]
	and run time bound
	\[
	\f(G,A) =
	O(2^{\abs{\phi}}  \cdot \snec_{\phi}(A)^{O(\abs{\var(\phi)})} \cdot \sfN_\acy(G,A)^{\abs{\var(\phi)}} \cdot n^2).
	\]  
\end{lemma}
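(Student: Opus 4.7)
\medskip

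My plan is to combine the three equivalence relations developed so far: the relation $\bowtie_\btE$ from \Cref{lem:equivalenceformula} (which handles the quantifier-free \encore part), the relation $\sim^\acy_{\bE}$ from \Cref{lem:acy:eq} (which handles acyclicity, given connectivity), and the rank-based connectivity reduction $\reduce^\conn_\btE$ from \Cref{lem:rankbased:approach}. The overall strategy is: fix an expectation $\btE\in(\Rep{\comp{A}}{\phi})^{\abs{\var(\phi)}}$, filter out tuples that cannot possibly satisfy the acyclicity constraints together with any $\tD\equi{\comp{A}}{\phi}\btE$, partition the remainder into small blocks on which the three constraints can be decoupled, and on each block invoke $\reduce^\conn_\btE$ to preserve connectivity.

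More concretely, for each $\btE$ I would first discard every $\tB\in\cB$ for which there is some $X_i\in\acy(\phi)\subseteq\conn(\phi)$ with $B_i\notin\forest(A,\bE_i)$; by \Cref{lem:A&C:properties}, if $X_i\in\acy(\phi)$ then for any $\tD$ with $D_i\equi{\comp{A}}{2,1}\bE_i$ the subgraph $G[B_i\cup D_i]$ cannot be a tree, and in the presence of $\conn(X_i)$ this rules out $\tB$. Next, group the surviving tuples according to (i) their $\bowtie_\btE$-class and (ii) for every $X_i\in\acy(\phi)$, the $\sim^\acy_{\bE_i}$-class of~$B_i$. Since $\bowtie_\btE$ refines $\equi{A}{\phi}$, every tuple inside a single group is $\equi{A}{\phi}$-equivalent, so I may legally apply $\reduce^\conn_\btE$ to the group. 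The output is the union, over all $\btE$ and all such groups, of $\reduce^\conn_\btE$ applied to the group.

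For correctness, take any $\tD$ and let $\btE$ be its $\equi{\comp{A}}{\phi}$-representative. Given a good $\tB\in\cB$, the acyclicity-plus-connectivity of $G[B_i\cup D_i]$ for each $X_i\in\acy(\phi)$ forces $B_i\in\forest(A,\bE_i)$, so $\tB$ survives the filter; inside its group, $\reduce^\conn_\btE$ returns some $\tB'$ with $\obj(\tB')\geq\obj(\tB)$ and with $G[B'_i\cup D_i]$ connected for every $X_i\in\conn(\phi)$. Because $\tB'$ is $\bowtie_\btE$-equivalent to $\tB$, \Cref{lem:equivalenceformula} transfers the quantifier-free \encore part and the size measurements. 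Because $B'_i\sim^\acy_{\bE_i}B_i$ for every $X_i\in\acy(\phi)$, \Cref{lem:acy:eq}(\ref{lem:acy:eq:1}) upgrades the already-established connectivity of $G[B'_i\cup D_i]$ into the statement that it is a tree, giving acyclicity for free.

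The size bound follows by multiplying: $\nec_\phi(\comp A)^{\abs{\var(\phi)}}$ choices of $\btE$, $\nec_\phi(A)^{\abs{\var(\phi)}}\cdot 2^{\abs{\phi}}\cdot\prod(m_i+2)$ blocks of $\bowtie_\btE$, $\sfN_\acy(G,A)^{\abs{\var(\phi)}}$ sub-blocks of $\sim^\acy$, and $\snec_{1,1}(A)^{2\abs{\var(\phi)}}\cdot 2^{\abs{\phi}}$ survivors per sub-block from \Cref{lem:rankbased:approach}; bounding $\nec_\phi(A)\cdot\nec_\phi(\comp A)\leq\snec_\phi(A)^2$ and $\snec_{1,1}(A)\leq\snec_\phi(A)$ yields the claimed $\s(G,A)$. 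The run time likewise accumulates: per $\btE$, assigning a block label to each $\tB\in\cB$ costs $O(n^2)$ per tuple (the $\sim^\acy$ test of \Cref{lem:acy:eq}(\ref{lem:acy:eq:2}) is the dominant check), and summing the linear-in-input guarantee of $\reduce^\conn_\btE$ across all sub-blocks is charged against $|\cB|$ once. I expect the main obstacle to be precisely the interaction at the interface $\sim^\acy$/$\reduce^\conn$: it is crucial that $\reduce^\conn_\btE$ is applied \emph{inside} each $\sim^\acy$-sub-block rather than across them, since \Cref{lem:acy:eq}(\ref{lem:acy:eq:1}) only promises tree-ness conditional on both $\sim^\acy$-equivalence \emph{and} connectivity, and careless partitioning could lose either direction.
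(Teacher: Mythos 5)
Your plan coincides with the paper's proof: same filtering via $\forest(A,\bE_i)$ (using \Cref{lem:A&C:properties}), same refinement of $\bowtie_\btE$ by $\sim^\acy_{\bE_i}$ on the $\acy$-constrained coordinates, same application of $\reduce^\conn_\btE$ within each resulting block (exploiting that $\bowtie_\btE$ refines $\equi{A}{\phi}$), and the same correctness chain transferring the $\encore$ part by $\bowtie_\btE$, connectivity by $\reduce^\conn_\btE$, and acyclicity by \Cref{lem:acy:eq}(\ref{lem:acy:eq:1}). One small remark: your final multiplication gives $\snec_\phi(A)^{4\abs{\var(\phi)}}$ rather than the stated $\snec_\phi(A)^{3\abs{\var(\phi)}}$ (the paper's own derivation is similarly loose at this step), so the assertion that it ``yields the claimed $\s(G,A)$'' glosses over a constant in the exponent; this is immaterial to the downstream $\calO(\cdot)$ bounds.
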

\begin{proof}
	We assume w.l.o.g. that $\conn(\phi)$ the first $k$-variables $X_1,\dots,X_k$ of $\phi$ and $\acy(\phi)$ contains the first $\ell \leq k$ variables $X_1,\dots,X_\ell$.
	Let $\psi$ be the quantifier-free \encore formula such that $\phi\equiv \psi \wedge \bigwedge_{i\in[k]} \conn(X_i) \wedge \bigwedge_{j\in [\ell]} \acy(X_j)$.
	We assume w.l.o.g. that $\var(\psi)=\var(\phi)$ and $d(\phi)=d(\psi)$.
	This assumption implies that for every graph $G$ and $A\subseteq V(G)$, the equivalence relations $\equi{A}{\phi}$ and $\equi{A}{\psi}$ are the same.
	
	Let $G$ be a $\abs{\var(\phi)}$-weighted graph, $A\subseteq V(G)$ and $\cB\subseteq \Pset(A)^{\abs{\var(\phi)}}$.
	Let us explain how we compute a subset $\reduce(\cB)$ of $\cB$ of size at most $\s(G,A)$ such that $\reduce(\cB)$ $\Pi_\phi$-represents $\cB$ over $A$.
	We compute $\reduce(\cB)$ from the following equivalence relations and \Cref{lem:rankbased:approach}.
	
	For every $\btE\in \Rep{\comp{A}}{\phi}$, by \Cref{lem:equivalenceformula},
	there exists an equivalence relation $\bowtie_\btE$ over $\Pset(A)^{\abs{\var(\phi)}}$ such that:
	\begin{itemize}
		\item For every $\tB,\tC\in \cB$ and $\tD\in \Pset(\comp A)^{\abs{\var(\phi)}}$ such that $\tB\bowtie_{\btE} \tC$ and $\tD \equi{\comp{A}}{\phi} \btE$, we have $G\models \psi(\tB\cup \tD)$ iff $G\models \psi(\tC\cup\tD)$.
	\end{itemize}
	For every $\btE\in \Rep{\comp{A}}{\phi}$, we define $\forest(\cB,\btE)$ as the set of all $\tB\in \cB$ such that for every $i\in [\ell]$, we have $B_i\in \forest(A,\bE_i)$.
	We define the equivalence relation $\bowtie^\acy_\btE$ over $\forest(\cB,\btE)$ such that $\tB\bowtie^\acy_\btE \tC$ if $\tB\bowtie_{\btE} \tC$ and for every $i\in [\ell]$, we have $B_i\sim^\acy_{\bE_i} C_i$ where  $\sim^\acy_{\bE_i}$ is the equivalence relation given by \Cref{lem:acy:eq}.
	
	\noindent
	For every $\btE\in \Rep{\comp{A}}{\phi}$, we compute a subset $\cB_\btE \subseteq \cB$ as follows:
	\begin{itemize}
		\item We compute $\forest(\cB,\btE)$ and the equivalence classes $\cC_1,\dots,\cC_q$ of $\sim^\acy_{\btE}$ over $\forest(\cB,\btE)$.
		\item We compute $\cB_\btE=\bigcup_{i\in [q]} \reduce^\conn_{\btE}(\cC_i)$ where $\reduce^\conn_{\btE}$ is the algorithm given by \Cref{lem:rankbased:approach}.
	\end{itemize}
	Finally, we compute $\reduce(\cB)=\bigcup_{\btE\in (\Rep{\comp A}{\phi})^{\abs{\var{\phi}}}} \cB_\btE$.
	
	\myparagraph{Representativity} 
	We need to show that for every $\tD\in \Pset(\comp A)^{\abs{\var(\phi)}}$, we have $\best_{\Pi_{\phi}}(\cB,\tD)=\best_{\Pi_{\phi}}(\reduce(\cB),\tD)$.
	Let $\tD \in \Pset(\comp A)^{\abs{\var(\phi)}}$ and $\btE\in (\Rep{\comp A}{\phi})^{\abs{\var{\phi}}}$ with $\tD\equi{\comp A}{\phi} \btE$.
	If $\best_{\Pi_{\phi}}(\cB,\tD)=-\infty$, then $\best_{\Pi_{\phi}}(\reduce(\cB),\tD)=-\infty$ as $\reduce(\cB)$ is by construction a subset of $\cB$.
	
	Suppose that $\best_{\Pi_{\phi}}(\cB,\tD)\neq\infty$ and let $\tB\in \cB$ such that $G\models \phi(\tB\cup \tD)$ and $\obj(\tB)=\best_{\Pi_{\phi}}(\cB,\tD)$.
	Let $\cC$ be the equivalence class of $\bowtie^\acy_\btE$ such that $\tB\in \cC$.
	By \Cref{lem:rankbased:approach} we have $\best_{\Pi_\phi}^{\conn}(\cC,\tD)=\best_{\Pi_\phi}^{\conn}(\reduce^\conn_{\btE}(\cC),\tD)$.
	Since  $G\models \phi(\tB\cup \tD)$, we have also $G\models \phi_\conn(\tB\cup \tD)$ and thus $\obj(\tB)\geq \best_{\Pi_\phi}^{\conn}(\cC,\tD)$. 
	Thus, there exists $\tB^\star\in \reduce^\conn_{\btE}(\cC)$ such that $\obj(\tB)\leq \obj(\tB^\star)$ and $G\models \phi_\conn(\tB^\star\cup \tD)$.
	
	As $\cC$ is an equivalence class of $\bowtie^\acy_\btE$, we have $\tB \bowtie_{\btE} \tB^\star$ and for every $i\in[\ell]$, we have $B_i \sim^\acy_{\bE_i} B_i^\star$.
	We deduce that $G\models \phi(\tB^\star\cup \tD)$ from the following observations:
	\begin{itemize}
		\item We have $G\models \psi(\tB^\star\cup \tD)$ because $\tB \bowtie_{\btE} \tB^\star$ and $G\models \psi(\tB\cup \tD)$.
		\item For every $i\in [\ell]$, the graph $G[B_i^\star\cup D_i]$ is a tree.
		Indeed, since $G\models \phi_\conn(\tB^\star\cup \tD)$ and $\acy(\phi)\subseteq \conn(\phi)$, we know that $G[B_i\cup D_i]$ is connected. Thus, by \Cref{lem:acy:eq} since $B_i \sim^\acy_{\bE_i} B_i^\star$ and $G[B_i\cup D_i]$ is a tree, we deduce that $G[B_i^\star \cup D_i]$ is a tree.
	\end{itemize}
	By construction, we have $\tB^\star\in \reduce(\cB)$ and thus $\best_{\Pi_{\phi}}(\reduce(\cB),\tD)\geq \obj(\tB^\star)$. Since $\reduce(\cB) \subseteq \cB$ and $\obj(\tB)\leq \obj(\tB^\star)$, we deduce that $\best_{\Pi_{\phi}}(\cB,\tD)=\best_{\Pi_{\phi}}(\reduce(\cB),\tD)$.
	
	As this holds for every $\tD\in \Pset(\comp A)^{\abs{\var(\phi)}}$, $\reduce(\cB)$ $\Pi_{\phi}$-represents $\cB$.
	
	\myparagraph{Size bound}
	By \Cref{lem:acy:eq}, for every $\btE\in (\Rep{\comp A}{\phi})^{\abs{\var{\phi}}}$ and $i\in [\ell]$, the number of equivalence classes of $\sim^\acy_{\bE_i}$ over $\forest(A,\bE_i)$ is upper bounded by $\sfN_{\acy}(G,A)$.
	Moreover, by \Cref{lem:equivalenceformula}, $\bowtie_\btE$ has at most $\nec_{\phi}(A)^{k}\cdot 2^{\abs{\phi}} \cdot \prod_{1\leq i \leq p} (m_i+2)$ equivalence classes.
	We deduce that the number of equivalence classes of  $\bowtie^\acy_\btE$ over $\forest(\cB,\btE)$ is upper bounded by:
	\begin{equation}\label{eq:bowtie:acy}
		2^{\abs{\phi}} \cdot \nec_{\phi}(A)^{\abs{\var(\phi)}}\cdot \sfN_\acy(G,A)^{\ell} \cdot \prod_{1\leq i \leq p} (m_i+2).
	\end{equation}
	By \Cref{lem:rankbased:approach}, for every equivalence class $\cC$ of $\bowtie^\acy_\btE$ over $\forest(\cB,\btE)$ the size of $\reduce^\conn_{\btE}(\cC)$ is at most $2^{k} \cdot \snec_{1, 1}(A)^{2k}$.
	As $d(\phi)\geq 1$ and $1\in R(\phi)$, we have $\snec_{1, 1}(A)\leq \snec_{\phi}(A)$.
	Thus, for every $\btE\in (\Rep{\comp A}{\phi})^{\abs{\var{\phi}}}$, the size of $\cB_\btE$ is at most 
	\[  2^{\abs{\phi}+k} \cdot  \nec_{\phi}(A)^{\abs{\var(\phi)}+k}\cdot \sfN_\acy(A)^{\ell} \cdot 
	\prod_{1\leq i \leq p} (m_i+2). \]
	Since $k,\ell \leq \abs{\phi}$, $\abs{\Rep{\comp{A}}{\phi}}=\nec_{\phi}(\comp{A})$ and $\nec_{\phi}(A)\cdot \nec_{\phi}(\comp{A})\leq \snec_{\phi}(A)^2$, we can conclude that $\abs{\reduce(\cB)}\leq \s(G,A)$. 
	
	\myparagraph{Run time bound}
	According to \Cref{def:reduce}, the reduce routine we design gets the distance matrix of $G$ as part of the input.
	We claim that computing $\reduce(\cB)$ can be done in time $O(\abs{\cB}\cdot \f(G,A))$.
	We do the following computations:
	\begin{itemize}
		\item Using \Cref{lem:computerep}, we compute
		in time $O(\snec_{\phi}(A)\cdot \abs{R(\phi)} \cdot n^2 \cdot \log(\snec_{\phi}(A)))$ 
		the sets $\Rep{A}{\phi}$, $\Rep{\comp{A}}{\phi}$ and the data structures that, given a set $B\subseteq A$, computes a pointer to $\rep{A}{\phi}(B)$ in time $O(n^2\cdot\card{R(\phi)} \log(\nec_{\phi}(A)))$.
		As we are given the distance matrix of $G$ and $\abs{R(\phi)}\leq \abs{\phi}$, 
		computing these sets and data structures can be done in $O(\abs{\phi}\cdot\snec_{\phi}(A)\cdot n^2 \cdot \log(\snec_{\phi}(A)))$ time.
		
		\item To compute the sets $\forest(\cB,\btE)$, we need to compute $\mim(A)$ as we will use it in the following claim. 
		It is well known that $\mim(A)$ is the maximum size of a set in $\Rep{A}{1,1}$.
		By \Cref{lem:computerep}, we deduce that $\mim(A)$ can be computed in time $O(\nec_{1, 1}(A) \cdot n^2 \cdot \log(\nec_{1, 1}(A)))$. 
		
		\begin{claim}\label{claim:acy:forest}
			For every $B\subseteq A$ and  $\bE\subseteq \comp{A}$, we can decide whether $B\in \forest(A,\bE)$ in time $O(n^2)$.
		\end{claim}
		\begin{claimproof}
			Let $B\subseteq A$.
			We compute $B_{\bE}^{1}$ and $B^{2+}_{\bE}$ in time $O(n^2)$ by computing $\abs{N(v)\cap \bE}$ for every $v\in B$.
			We compute the set of connected components $\cc(B)$ in time $O(n^2)$.
			For every $v\in B^1_{\bE}\cup B^{2+}_{\bE}$, we consider the set $N(v)\cap \comp{A}$ as the restriction to $\comp{A}$ of the row of adjacency matrix associated with $v$. 
			Since $\abs{V(H_B)} = \abs{\cc(B)} + \abs{\{N(v)\cap \comp{A} \mid v\in B^{1}_{\bE}\}} \leq 2n$, it is easy to see that $H_B$ can be computed in time $O(n^2)$.
			
			It is trivial from these computations and given the value of $\mim(A)$, that we can check whether $B$ satisfies the properties of \Cref{def:forest:acy} in time $O(n^2)$.			
		\end{claimproof}
		For every $\btE\in (\Rep{\comp A}{\phi})^{\abs{\var{\phi}}}$, we compute $\forest(\cB,\btE)$ by checking for every $\tB\in \cB$ and $i\in [\ell]$ whether $B_i\in\forest(A,\bE_i)$.
		By \Cref{claim:acy:forest}, this can be done in time $O(\abs{\cB}\cdot \ell \cdot n^{2})$.
		
		\item For every $\btE\in (\Rep{\comp A}{\phi})^{\abs{\var(\phi)}}$, we compute the equivalence classes of $\bowtie_\btE^\acy$ over $\forest(\cB,\btE)$.
		
		We start by computing the equivalence classes of $\bowtie_\btE$ over $\forest(\cB,\btE)$ with the technique used in \Cref{lem:reducelogic} to compute, for each of equivalence class $\cC$ of $\bowtie_\btE$ over $\cB$, a tuple $\tB\in \cC$ such that $\obj(\tB)$ is maximum.
		Thanks to \Cref{lem:reducelogic}, we know that computing the equivalence classes of $\bowtie_\btE$ over $\forest(\cB,\btE)$  for all $\btE$, takes at most $\abs{\cB}\cdot \f(G,A)$ time.
		
		For each $\btE\in (\Rep{\comp A}{\phi})^{\abs{\var(\phi)}}$, we compute the equivalence classes of $\bowtie_{\btE}^\acy$ over $\forest(\cB, \btE)$ as follows.
		By definition, $\bowtie_{\btE}$ is a coarsening of $\bowtie_{\btE}^\acy$.
		We therefore take each equivalence class $\cC$ of $\bowtie_{\btE}$ over $\forest(\cB, \btE)$, 
		and further break it down according to the second condition of $\bowtie_{\btE}^\acy$.
		For every $\tB,\tC\in \cC$, we have $\tB\bowtie_\btE \tC$, and thus $\tB\bowtie_\btE^\acy \tC$ if and only if 
		for every $i\in [\ell]$, we have $B_i \sim^\acy_{\bE_i} C_i$.
		The latter can be decided in $O(n^2\cdot \ell)$ time for any such pair $\tB, \tC$ by \Cref{lem:acy:eq}.
		For every $\tB,\tC\in \cC$, we have $\tB\bowtie_\btE \tC$, and thus $\tB\bowtie_\btE^\acy \tC$ iff for every $i\in [\ell]$, we have $B_i \sim^\acy_{\bE_i} C_i$.
		Since by \Cref{lem:acy:eq}, each $\sim^\acy_{\bE_i}$ has $\sfN_\acy(G, A)$ equivalence classes and $\ell\leq \abs{\var(\phi)}$,
		we have that the equivalence classes of $\bowtie_{\btE}^\acy$ over $\cC$ can be computed in  
		$O(\abs{\cC}\cdot n^2 \cdot \sfN_\acy(G,A)^{\abs{\var(\phi)}})$ time.
		
		Thus, we can compute the equivalence classes of $\bowtie_\btE^\acy$ over $\forest(\cB,\btE)$ for all 
		$\btE\in (\Rep{\comp A}{\phi})^{\abs{\var(\phi)}}$  in time at most $\abs{\cB}\cdot \f(G,A)$.
		
		\item Let $\btE\in (\Rep{\comp A}{\phi})^{\abs{\var{\phi}}}$ and $\cC_1,\dots,\cC_q$ be the  the equivalence classes of $\bowtie_\btE^\acy$ over $\forest(\cB,\btE)$.
		By \Cref{lem:rankbased:approach}, for each $i\in [q]$, we can compute $\reduce^{\conn}_{\btE}(\cC_i)$ in time
		\[ 	O(\abs{\cC_i} \cdot  2^{\abs{\phi}} \cdot \snec_{\phi}(A)^{O(\abs{\var(\phi)})} \cdot  n^2). \]
		Since $\bigcup_{i\in[q]}\abs{\cC_i}=\forest(\cB,\btE)$ and $\abs{\forest(\cB,\btE)}\leq \abs{\cB}$, we deduce that computing  $\cB_\btE= \bigcup_{i\in[q]} \reduce^{\conn}_{\btE}(\cC_i)$ can be done in time
		\[ 	O(\abs{\cB}\cdot 2^{\abs{\phi}} \cdot \snec_{\phi}(A)^{O(\abs{\var(\phi)})} \cdot   n^2). \]
		
		\item 
		Since $\abs{(\Rep{\comp A}{\phi})^{\abs{\var(\phi)}}}\leq \snec_{\phi}(A)^{\abs{\var(\phi)}}$, we conclude that computing all the sets $\cB_\btE$ and $\reduce(\cB)$ takes at most $\abs{\cB}\cdot \f(G,A)$ time.
	\end{itemize}	
	The time necessary for each of these computations is bounded by  $\abs{\cB}\cdot \f(G,A)$.
\end{proof}

Through a reduction based on \cite{BergougnouxKante2021} and some modification of the reduce routine of \Cref{lem:reduce:A&C}, we are able to lift the statement to any \ac-clauses.

\begin{restatable}[$\star$]{lemma}{lemACclause}
	\label{lem:ACmetathmCore}
	There is an algorithm that, for a given quantifier-free \ac-clause~$\phi(X_1,\dots,X_k)$, 
	$k$-weighted graph $G$, and rooted layout $\cL$ of $G$, computes a tuple $\tB^* \in \Pset(V(G))^{k}$ such that
	\[
	\obj_G(\tB^*) = \max \bigl\{ \obj_G(\tB) \bigm| \tB \in V(G)^{k}, G \models \phi(\tB) \bigr\}
	\]
	or concludes that no such tuple exists.
	If $\phi$ has size measurements $m_1,\dots,m_p$ then the run time of this algorithm is
	\[ 2^{O(\abs{\phi})}  \cdot \snec_{\phi}(\cL)^{O(k)}\cdot \sfN_\acy(\cL)^{O(k)} \cdot n^3\cdot 
	\prod_{1\leq i \leq p} (m_i+2)^2\]
	Where $\sfN_\acy(\cL)$ is the maximum among $\sfN_\acy(G,V_x)$ over the node $x$ of $\cL$.
\end{restatable}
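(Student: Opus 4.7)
The plan is to lift \Cref{lem:reduce:A&C} to handle arbitrary \ac-clauses and then plug the resulting reduce routine into \Cref{thm:genericAlgo}, mirroring the structure of the proof of \Cref{lem:metathmCore}.

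The first step is to reduce the general case to the restricted setting $\acy(\phi)\subseteq \conn(\phi)$ required by \Cref{lem:reduce:A&C}. Given $(G, \phi, \cL)$ with $\phi(X_1,\dots,X_k)$, I would construct a modified weighted graph $G^\star$, a rooted layout $\cL^\star$ of $G^\star$, and a new \ac-clause $\phi^\star(X_1,\dots,X_k, Y_1,\dots,Y_\ell)$, where $\ell$ counts the variables that appear in an acyclicity constraint but not in a connectivity constraint. Following the approach of \cite{BergougnouxKante2021}, the construction attaches a small and carefully-crafted gadget to $G$ so that for each acyclicity-only variable $X_i$ the constraint $\acy(X_i)$ over $G$ is captured by $\conn(X_i \cup Y_i) \land \acy(X_i \cup Y_i)$ over $G^\star$ with $Y_i$ forced to lie inside the gadget. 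All auxiliary vertices receive weight zero, so that optimal solutions in $G^\star$ project onto optimal solutions of $\Pi_\phi$ in $G$ by discarding the $Y$-coordinates.

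With the reduction in hand, I would apply \Cref{lem:reduce:A&C} to $(G^\star, \phi^\star, \cL^\star)$ to obtain a reduce routine for $\Pi_{\phi^\star}$, and plug it into \Cref{thm:genericAlgo} exactly as in the proof of \Cref{lem:metathmCore}. This produces a set of at most $s(G^\star, \cL^\star)$ candidate tuples containing an optimal solution, from which $\tB^\star$ is extracted by a linear scan and then projected to its $X$-coordinates. The factor $\prod(m_i+2)^2$ in the claimed bound arises from squaring the size bound $\prod(m_i+2)$ of \Cref{lem:reduce:A&C} inside \Cref{thm:genericAlgo}, while $2^{O(|\phi|)}$ absorbs the $2^{2|\phi^\star|}$ of \Cref{lem:reduce:A&C} together with the overhead for encoding $\phi^\star$.

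The main obstacle is showing that the reduction is tight with respect to the width measures: the extended layout $\cL^\star$ must satisfy $\snec_{\phi^\star}(\cL^\star) \le \snec_\phi(\cL) \cdot 2^{O(1)}$ and $\sfN_\acy(\cL^\star) \le \sfN_\acy(\cL) \cdot 2^{O(1)}$, while simultaneously $|\var(\phi^\star)| = O(k)$ and $|\phi^\star| = O(|\phi|)$. The gadget must therefore be expressive enough to model unconstrained acyclicity without inflating the neighborhood-equivalence index or the tree-class count $\sfN_\acy$ beyond a constant factor, which is the crux of the construction. Some care is also needed in modifying the reduce routine of \Cref{lem:reduce:A&C} so that the localization constraints on the new $Y_i$-variables are respected throughout the dynamic program. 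Once these two points are established, the claimed run time follows by direct substitution into the bounds of \Cref{thm:genericAlgo} and \Cref{lem:reduce:A&C}.
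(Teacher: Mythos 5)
Your high-level plan --- attach a gadget to $G$ to turn each free $\acy(X_i)$ constraint into a paired $\conn(\cdot) \wedge \acy(\cdot)$ constraint so that \Cref{lem:reduce:A&C} becomes applicable, then plug the resulting reduce routine into \Cref{thm:genericAlgo} --- matches the paper's approach, and you correctly identify the two things that must be checked (the width of the extended layout does not blow up, and the reduce routine needs adaptation). However, there is a genuine gap in the middle step.

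You propose to cast the reduced instance as $\Pi_{\phi^\star}$ on a single graph $G^\star$ and then invoke \Cref{lem:reduce:A&C} directly. This cannot work as stated. The gadget necessarily changes the neighborhood structure of $G$: the paper's construction (which you cite) adds a pendant $v^\star$ to every $v \in V(G)$ and a hub $u_\conn$ adjacent to all the $v^\star$. Consequently every vertex of $G$ gains a new neighbor, and for $r \ge 4$ the hub creates shortcuts of length $4$ between any two original vertices, so $N^r_d$ evaluated in $G^\star$ bears no resemblance to $N^r_d$ evaluated in $G$. Hence the \encore part of your $\phi^\star$, interpreted in $G^\star$, does not express the same property as $\phi$ interpreted in $G$, and optimal solutions of $\Pi_{\phi^\star}(G^\star)$ need not project to optimal solutions of $\Pi_\phi(G)$. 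The paper avoids this by \emph{not} packaging the reduced instance as $\Pi_{\psi}$ on a single graph. Instead it defines a bespoke $(k+\ell)$-problem $\Pi$ whose membership condition refers to two graphs simultaneously: $G \models \phi'(B_1,\dots,B_k)$ (the \encore and connectivity part, evaluated in the \emph{original} $G$) and $H \models \omega(\tB)$ (only the new $\conn/\acy$ constraints, evaluated in the augmented $H$), together with the localization constraints $B_1,\dots,B_k \subseteq V(G)$ and $B_i \subseteq B_{i+k}$. Because of the two-graph formulation, \Cref{lem:reduce:A&C} cannot be used as a black box; the paper re-opens the reduce routine and splits the equivalence relations coordinate-wise, using $\equi{G,\,\cdot}{\phi}$ for coordinates $1,\dots,k$ and $\equi{H,\,\cdot}{2,1}$ together with $\sim^\acy$ on $H$ for coordinates $k+1,\dots,k+\ell$. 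Your remark about ``localization constraints on the $Y_i$'' does not cover this: the issue is that the original formula must be semantically anchored to $G$, not that the auxiliary variables must stay inside the gadget. Once the problem $\Pi$ is defined across two graphs and the reduce routine is modified accordingly, the rest of your accounting (the width bounds from \cite[Theorem 6.11]{BergougnouxKante2021} via \Cref{fact:upper:cL:H}, and the substitution into \Cref{thm:genericAlgo}) goes through as you describe.
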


We lift the statement to \ac \en logic via a turing reduction.

\begin{theorem}\label{thm:ACmetathm}
    There is an algorithm that, for a given \ac \en formula~$\phi(X_1,\dots,X_k)$, 
    $k$-weighted graph $G$, and rooted layout $\cL$ of $G$, computes a tuple $\tB^* \in \Pset(V(G))^{k}$ such that
    \[
    \obj_G(\tB^*) = \max \bigl\{ \obj_G(\tB) \bigm| \tB \in V(G)^{k}, G \models \phi(\tB) \bigr\}
    \]
    or concludes that no such tuple exists.
    If $\phi$ has size measurements $m_1,\dots,m_\ell$ then the run time of this algorithm is
    \[ 
    2^{O(\abs{\phi})}  \cdot \snec_{\phi}(\cL)^{O(\abs{\phi})}\cdot \sfN_\acy(\cL)^{O(\abs{\phi})} \cdot n^3\cdot 
    \prod_{1\leq i \leq p} (m_i+2)^4.\]
\end{theorem}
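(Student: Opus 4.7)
The plan is to reduce the general \ac \en model checking problem to the quantifier-free \ac-clause case handled by \Cref{lem:ACmetathmCore}. First I would apply \Cref{obs:core} to convert the input formula $\phi(X_1,\dots,X_k)$ into an equivalent \ac \encore formula of the shape $\exists X_{k+1} \, \dots \exists X_j \, \psi$ where $j \le |\phi|$ and $\psi$ is a Boolean combination of primitive \encore formulas together with the \ac-predicates $\conn(X)$ and $\acy(X)$. I would then extend the input graph $G$ to a $j$-weighted graph $\hat{G}$ whose weight function ignores the new variables $X_{k+1},\dots,X_j$, so that maximizing $\obj_{\hat{G}}(\hat{\tB})$ over tuples satisfying $\psi$ yields exactly the desired optimum on the first $k$ coordinates.

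Next, I would rewrite the quantifier-free body $\psi$ into disjunctive normal form over its top-level Boolean structure, treating each primitive formula (including $\conn(X)$ and $\acy(X)$) as an atom. This yields a disjunction $\psi \equiv \bigvee_{i=1}^{N} \chi_i$ with $N = 2^{O(|\phi|)}$ clauses, each clause being a conjunction of literals over primitive \encore formulas, $\conn(X)$, $\acy(X)$, and their negations. Negations of primitive \encore formulas stay inside the \encore-part, so they are harmless. The essential step is to eliminate negated \ac-predicates using the two identities mentioned in the techniques overview:
\begin{align*}
\neg \conn(X) &\equiv \exists Z_1 \exists Z_2 \, |Z_1| \ge 1 \wedge |Z_2| \ge 1 \wedge Z_1 \cup Z_2 = X \wedge N_1^1(Z_1) \cap Z_2 = \emptyset, \\
\neg \acy(X)  &\equiv \exists Z \, |Z| \ge 1 \wedge Z \subseteq X \wedge Z \subseteq N_2^1(Z).
\end{align*}
After this rewriting and pulling the new existentials outside each clause, every disjunct $\chi_i$ becomes of the form $\exists Y_1 \dots \exists Y_{p_i} \, \chi_i'$, where $\chi_i'$ is a bona-fide \ac-clause (a quantifier-free \encore formula conjoined with positive $\conn$ and $\acy$ predicates) of length $O(|\phi|)$ with at most $O(|\phi|)$ variables.

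Finally, I would run \Cref{lem:ACmetathmCore} on each enlarged clause $\chi_i'$ (over a further-extended weighting of $\hat{G}$ where the auxiliary $Y$-variables contribute zero weight) to obtain a best tuple for that clause; the global optimum is then obtained by taking the coordinate-restricted maximum over all $N = 2^{O(|\phi|)}$ clauses. Correctness follows from the equivalences above and the soundness of DNF. For the run time, I observe that the transformations preserve $d(\phi)$ (since the new neighborhood terms $N_1^1$ and $N_2^1$ fall within the default $d \ge 2$) and do not increase $R(\phi)$, and by \Cref{obs:core} the product of the $(m_i+2)$ terms gets at most squared. Multiplying the per-clause bound of \Cref{lem:ACmetathmCore} (with $k_i = O(|\phi|)$) by the $2^{O(|\phi|)}$ clauses yields exactly the claimed bound of $2^{O(|\phi|)} \cdot \snec_\phi(\cL)^{O(|\phi|)} \cdot \sfN_\acy(\cL)^{O(|\phi|)} \cdot n^2 \cdot \prod_i (m_i+2)^4 \cdot \log(\maxWeight(G))$. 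The main obstacle is the careful bookkeeping in this last step: one has to verify that introducing the auxiliary $Z_1, Z_2, Z$ variables (and the new terms $N_1^1$, $N_2^1$) does not inflate $d(\phi)$, $r(\phi)$, $R(\phi)$, or the size measurement product beyond the stated budget, and that the DNF blow-up is absorbed into the $2^{O(|\phi|)}$ factor.
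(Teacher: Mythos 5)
Your proposal follows essentially the same route as the paper's proof: apply \Cref{obs:core}, lift to a $j$-weighted graph $\hat G$ where the new existentially-bound variables carry zero weight, expand the quantifier-free body into a disjunction of conjunctions of literals, eliminate $\neg\conn$ and $\neg\acy$ via the two existential rewritings, invoke \Cref{lem:ACmetathmCore} on each resulting \ac-clause, and take the maximum over the $2^{O(|\phi|)}$ clauses. (Minor note: the paper writes ``conjunctive normal form'' but then displays $\psi_1 \lor \dots \lor \psi_q$, so your DNF terminology is the accurate description of what both arguments actually do; you are also a touch more explicit than the paper about extending the weighting a second time to absorb the auxiliary $Z$-variables introduced by the negation rewrites.)
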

\begin{proof}
    We use \Cref{obs:core} to construct a \ac \encore formula that is equivalent to $\phi$.
    We can write it as $\exists X_{k+1} \dots \exists X_j \hat\phi$, where $\hat\phi$ is quantifier-free.
    We construct a $j$-weighted graph $\hat G$ from $G$ such that variables $X_{k+1},\dots,X_j$ do not contribute to the weight
    (that is $w(v,S) = w(v,S \cap [k])$ for each $v$ and $S$)
    and therefore
    \[
    \max \bigl\{ \obj_G(\tB) \bigm| \tB \in V(G)^{k}, G \models \phi(\tB) \bigr\}
    = \max \bigl\{ \obj_{\hat G}(\tB) \bigm| \tB \in V(\hat G)^{j}, \hat G \models \hat\phi(\tB) \bigr\}.
    \]
    In time $2^{O(|\phi|)}$, we convert $\hat\phi$ into conjunctive normal form. 
    That is, we obtain a formula $\psi_1 \lor \dots \lor \psi_q$ that is equivalent to $\hat\phi$,
    where each $\psi_i$ has length $|\psi_i| \le |\phi|$ and is a conjunction of \ac \encore literals.
    This means all connectivity or acyclicity literals in $\psi_i$ are of the form $\conn(X)$, $\acy(X)$, $\neg\conn(X)$ or $\neg\acy(X)$.
    Note that the latter two can be expressed in \encore.
    \begin{itemize}
        \item $\neg\conn(X)$ is equivalent to the existence of a partition $Z_1,Z_2$ of $X$ such that no edges go between these two parts.
        That is, $\neg\conn(X) \equiv \exists Z_1 \exists Z_2 \, Z_1 \neq \emptyset \land Z_2 \neq \emptyset \land Z_1 \cup Z_2 = X \land N(Z_1) \cap Z_2 = \emptyset$.
        \item Also, $\neg\acy(X)$ is equivalent to the existence of a subset $Z$ inducing graph with minimal degree two.
        That is, $\neg\acy(X) \equiv \exists Z \, Z \neq \emptyset \land Z \subseteq X \land  Z \subseteq N_2^1(Z)$.
    \end{itemize}
    Using these substitutions, we obtain \ac-clauses $\psi'_i$ equivalent to $\psi_i$
    satisfying the following conditions.
    \begin{itemize}
        \item $|\psi'_i| = O(|\phi|)$
        \item $R(\psi'_i) \subseteq R(\phi)$,
        \item $d(\psi_i') \le d(\phi)$,
        \item $\prod_{i \in [\ell']} (m'_i+2) \le 2^{O(|\phi|)} \prod_{i\in[\ell]} (m_i+2)^2$, 
            where $m'_1,\dots,m'_{\ell'}$ are the values of size measurements in $\psi'_i$ and
            $m_1,\dots,m_\ell$ are the values of size measurements in $\phi$.
    \end{itemize}
    Since $\hat\phi = \psi_1' \lor \dots \lor \psi_q'$, we have
    \begin{multline*}
    \max \bigl\{ \obj_G(\tB) \bigm| \tB \in V(G)^{k}, G \models \phi(\tB) \bigr\} = \\
    \max_{1 \le i \le q} \max \bigl\{ \obj_{\hat G}(\tB) \bigm| \tB \in V(\hat G)^{j}, \hat G \models \psi_i'(\tB) \bigr\}.
    \end{multline*}
    We compute all values $\max \bigl\{ \obj_{\hat G}(\tB) \bigm| \tB \in V(\hat G)^{j}, \hat G \models \psi_i'(\tB) \bigr\}$
    using \Cref{lem:ACmetathmCore} and return the maximum.
\end{proof}

At last, \Cref{thm:main:AC} follows from \Cref{thm:ACmetathm}
in the same way as \Cref{thm:main:basic} follows from \Cref{thm:metathm},
so we omit the proof.
\section{Expressing properties in (\ac) \en logic}\label{sec:exp}
In this section we show how to express properties in \ac\en logic. 
In particular, we show how to express almost all problems that are known to be solvable in \XP time 
parameterized by the mim-width of a given rooted layout of the input graph,
some problems that were not known to be solvable in such a run time,
and diverse variants of problems expressible in \ac\en logic.

\subsection{Locally checkable vertex subset and partitioning problems}\label{sec:exp:lc}
We give the definitions of the \emph{locally checkable vertex subset and partitioning} problems 
as introduced by Telle and Proskurowski~\cite{TelleProskurowski1997},
and show that all of these problems are expressible in \en logic.
In fact, we work with their distance-$r$ variants~\cite{JaffkeKwonStrommeTelle2019};
which consider the $r$-th power of the input graph.
Setting $r$ to $1$, we recover the original definition of the locally checkable problems.

We start with subset problems.
For a non-negative integer $r$ and sets of non-negative natural numbers $\sigma, \rho \subseteq \bN$, 
a \emph{distance-$r$ $(\sigma, \rho)$-set} of a graph $G$ is a vertex set $S \subseteq V(G)$ such that
\begin{itemize}
	\item for all $v \in S$: $\card{N^r(v) \cap S} \in \sigma$, and
	\item for all $v \in V(G) \setminus S$: $\card{N^r(v) \cap S} \in \rho$.
\end{itemize}
To describe the complexity of a corresponding $(\sigma, \rho)$-problem,
we require the following notion of the \emph{$d$-value} of a set of numbers.
In particular, the $d$-value will determine the largest $d'$ such that $N^{\cdot}_{d'}(\cdot)$ 
appears in a \en logic formula describing a $(\sigma, \rho)$-set.
\begin{definition}[$d$-value]
	Let $d(\bN) = 0$.
	For every non-empty finite or co-finite set $\mu \subseteq \bN$,
	let $d(\mu) = 1 + \min\{\max\{x \mid x \in \mu\}, \max\{x \mid x \in \bN \setminus \mu\}\}$.
\end{definition}

\begin{lemma}\label{lem:lcv}
	Let $r$ be a positive integer and let $\mu \subseteq \bN$ be a nonempty finite or co-finite set.
    There is a \en term $t_{r, \mu}(X)$
    such that for all graphs $G$ and all $U \subseteq V(G)$,
    $\ip{t_{r, \mu}(U)}^{G} = \{v \in V(G) \mid \card{N^r(v) \cap U} \in \mu\}$.
\end{lemma}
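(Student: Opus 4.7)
The plan is to build $t_{r,\mu}(X)$ from the neighborhood operators $N_k^r(X)$ together with set-theoretic operations on them.

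First I would isolate, for each individual value $k \in \bN$, the set of vertices whose $r$-neighborhood meets $X$ in exactly $k$ elements. The key observation is that, by definition of $N_k^r$, we have $N_{k+1}^r(X) \subseteq N_k^r(X)$ for every $k \ge 1$, so
\[
\{v \in V(G) \mid |N^r(v) \cap X| = k\} \;=\; N_k^r(X) \setminus N_{k+1}^r(X) \qquad (k \ge 1),
\]
while for $k=0$ the corresponding set is $\overline{N_1^r(X)}$. Each of these is a legitimate \en term since $\setminus$ and complementation are available and $k,k+1 \in \N^+$.

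Next I would split into the two cases allowed for $\mu$. If $\mu$ is finite and nonempty, say $\mu = \{k_1,\dots,k_s\}$, then I set
\[
t_{r,\mu}(X) \;\defeq\; \bigcup_{i=1}^{s} \tau_{k_i}(X),
\]
where $\tau_k(X)$ denotes $\overline{N_1^r(X)}$ if $k=0$ and $N_k^r(X) \setminus N_{k+1}^r(X)$ if $k \ge 1$. By the previous observation this term evaluates in any graph $G$ to the set of vertices with $|N^r(v) \cap X| \in \mu$, as required. If $\mu$ is co-finite, then $\bN \setminus \mu$ is finite and nonempty, so I apply the previous case to obtain the term $t_{r,\,\bN\setminus\mu}(X)$ and set $t_{r,\mu}(X) \defeq \overline{\,t_{r,\,\bN\setminus\mu}(X)\,}$.

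I do not expect a real obstacle here; the construction is straightforward once the identity $N_k^r(X) \setminus N_{k+1}^r(X) = \{v : |N^r(v) \cap X|=k\}$ is in place. The only point that warrants care is the cofinite case, where complementing inside the graph (rather than inside $\bN$) is exactly what is needed so that the resulting term still lives in \en; the fact that the largest index used is controlled by $d(\mu)$ matches the definition of the $d$-value, which takes the minimum over $\mu$ and its complement precisely to allow this complementation trick.
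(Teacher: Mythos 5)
Your construction is correct, and it reaches the same $d$-value bound as the paper's, but it organizes the term differently. You partition: for each $k$ you isolate the level set $\{v \mid |N^r(v)\cap X| = k\}$ as $N^r_k(X)\setminus N^r_{k+1}(X)$ (or $\overline{N^r_1(X)}$ for $k=0$), union these over $\mu$ when $\mu$ is finite, and in the co-finite case apply this to $\bN\setminus\mu$ and complement. The paper instead builds a single term \emph{incrementally}: start from $t^a = N^r_a(X)$ with $a=\min\mu$ (or $\overline{\emptyset}$ if $a=0$), and for $i = a+1,\dots,d(\mu)$ either union in $N^r_i(X)$ (if $i\in\mu$) or subtract it off (if $i\notin\mu$). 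This iterative scheme handles the finite and co-finite cases uniformly without ever taking an outer complement, because stopping at index $d(\mu)$ automatically encodes the ``tail'' behaviour of $\mu$: once $i>d(\mu)$, membership in $\mu$ is constant, and the chain $N^r_{d(\mu)}(X)\supseteq N^r_{d(\mu)+1}(X)\supseteq\cdots$ takes care of the rest. Your version is a bit more explicit about where the $d(\mu)$ bound comes from (it is visible in the complementation trick you spell out at the end), whereas the paper's version yields a slightly more compact term and avoids the case split on whether $\mu$ or its complement is finite. Both are valid \en terms using only $N^r_d$ with $d\in\N^+$ and set operations, and both stay within $d(\mu)$, so either works in the downstream applications.
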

\begin{proof}
	Let $X$ be a variable, let $a = \min \mu$ and $t_{r, \mu}^a = N^r_a(X)$.
	(If $a = 0$, then $t_{r, \mu}^a = \overline{\emptyset}$.)
	Then, for $i \in \{a + 1, \ldots, d(\mu)\}$, let:
	\begin{align*}
		t_{r, \mu}^i = t_{r, \mu}^{i-1} \left\lbrace
			\begin{array}{ll}
				\cup\, N^r_i(X), &\mbox{ if } i \in \mu \\
				\setminus\, N^r_i(X), &\mbox{ if } i \notin \mu \\
			\end{array}
			\right.
	\end{align*}
	Finally, let $t_{r, \mu} = t_{r, \mu}^{d(\mu)}$.
	It is then straightforward to verify that the term $t_{r, \mu}$ has the claimed properties.
\end{proof}

\begin{proposition}\label{prop:sigmarho}
	Let $r$ be a positive integer, and
	let $\sigma, \rho \subseteq \bN$ be finite or co-finite.
	There is a \en term $\phi_{r, \sigma, \rho}(X)$ 
	which evaluates to true if and only if a vertex set $X$ of a graph $G$ is a distance-$r$ $(\sigma, \rho)$-set in $G$.
\end{proposition}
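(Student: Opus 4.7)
The plan is to build $\phi_{r,\sigma,\rho}(X)$ directly from the two terms supplied by \Cref{lem:lcv}, applied once to $\sigma$ and once to $\rho$. Let $t_{r,\sigma}(X)$ and $t_{r,\rho}(X)$ be these terms, so that under any interpretation $(G,\beta)$ we have
\[
\ip{t_{r,\sigma}(X)}^{(G,\beta)} = \{v \in V(G) \mid |N^r(v) \cap \beta(X)| \in \sigma\}
\]
and similarly for $\rho$. The definition of a distance-$r$ $(\sigma,\rho)$-set then unfolds into two containments: every vertex of the candidate set $X$ must lie in $t_{r,\sigma}(X)$, and every vertex outside $X$ must lie in $t_{r,\rho}(X)$.

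I therefore propose defining
\[
\phi_{r,\sigma,\rho}(X) \;\equiv\; X \subseteq t_{r,\sigma}(X) \,\wedge\, \overline{X} \subseteq t_{r,\rho}(X),
\]
which is a well-formed \en formula since $\subseteq$ between neighborhood terms is part of the logic (rule 7) and complementation of terms is allowed (rule 5). Correctness is then immediate from the semantics of $\subseteq$ and the guarantee of \Cref{lem:lcv}: the first conjunct asserts that every $v \in X$ satisfies $|N^r(v) \cap X| \in \sigma$, and the second asserts that every $v \in V(G) \setminus X$ satisfies $|N^r(v) \cap X| \in \rho$, which is exactly the definition of a distance-$r$ $(\sigma,\rho)$-set.

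There is essentially no obstacle here; the content of the proposition lies in \Cref{lem:lcv}, and the present statement is just the observation that \en's term comparisons are expressive enough to glue together the two ``local check'' terms. If one prefers to stick even more closely to the primitive $= \emptyset$ style used elsewhere in the paper (e.g.\ in the introductory examples), one can equivalently write $\phi_{r,\sigma,\rho}(X) \equiv (X \cap \overline{t_{r,\sigma}(X)} = \emptyset) \wedge (\overline{X} \cap \overline{t_{r,\rho}(X)} = \emptyset)$, but this rewriting is cosmetic.
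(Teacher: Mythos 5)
Your construction is exactly the one the paper uses: applying \Cref{lem:lcv} to $\sigma$ and $\rho$ and conjoining $X \subseteq t_{r,\sigma}(X)$ with $\overline{X} \subseteq t_{r,\rho}(X)$. The reasoning is correct and matches the paper's proof.
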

\begin{proof}
	Using \cref{lem:lcv}, we construct the term as
	\(\phi_{r, \sigma, \rho}(X) = X \subseteq t_{r, \sigma}(X) \land \overline{X} \subseteq t_{r, \rho}(X).\)
\end{proof}

We now turn to the locally checkable \emph{partitioning} problems.
Let $q$ be a positive integer; a \emph{neighborhood constraint matrix} is a 
$q \times q$ matrix $D$ such that for all $i, j \in \{1, \ldots, q\}$,
$D[i, j]$ is a non-empty subset of $\bN$.
Fix $r$ to be a positive integer.
For a graph $G$, a vertex partition $(X_1, \ldots, X_q)$ is said to be a \emph{distance-$r$ $D$-partition} if 
for all $i, j \in \{1, \ldots, q\}$, we have that for all $v \in X_i$, $\card{N^r(v) \cap X_j} \in D[i, j]$.
If in a neighborhood constraint matrix $D$ all entries are either finite or co-finite sets,
we call $D$ a \emph{neighborhood constraint matrix over finite and co-finite sets}.

\begin{proposition}\label{prop:lcvp}
	Let $q$ and $r$ be positive integers, 
	and let $D$ be a $q \times q$ neighborhood constraint matrix
	over finite and co-finite sets.
	There is a \en formula $\phi_{r, D}(X_1, \ldots, X_q)$ which evaluates to true 
	if and only if a tuple of vertex sets $(X_1, \ldots, X_q)$ 
	in a graph $G$ is a distance-$r$ $D$-partition in $G$.
\end{proposition}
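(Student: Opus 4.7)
The plan is to assemble $\phi_{r,D}$ from two ingredients: (i) a partition statement asserting that $(X_1,\dots,X_q)$ is a vertex partition of $G$, and (ii) for every pair $(i,j) \in [q]^2$, a neighborhood-constraint statement using the term $t_{r,D[i,j]}(X_j)$ supplied by \cref{lem:lcv}.

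First, I would express partition in the natural way as
\[
\pi(X_1,\dots,X_q) \;\defeq\; \overline{X_1 \cup \cdots \cup X_q} = \emptyset \;\land\; \bigwedge_{1 \le i < j \le q} X_i \cap X_j = \emptyset,
\]
which is valid in \en since set operations, $\emptyset$, and equality between terms are all part of the syntax. Next, by \cref{lem:lcv}, for each ordered pair $(i,j)$ there is a neighborhood term $t_{r,D[i,j]}(X_j)$ evaluating in any graph $G$ to exactly the vertex set $\{v \in V(G) \mid |N^r(v) \cap X_j| \in D[i,j]\}$. The local constraint that every $v \in X_i$ has $|N^r(v) \cap X_j| \in D[i,j]$ is therefore precisely $X_i \subseteq t_{r,D[i,j]}(X_j)$. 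Putting both ingredients together, I propose
\[
\phi_{r,D}(X_1,\dots,X_q) \;\defeq\; \pi(X_1,\dots,X_q) \;\land\; \bigwedge_{1 \le i,j \le q} X_i \subseteq t_{r,D[i,j]}(X_j).
\]

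Correctness is then a short semantic check: if $(X_1,\dots,X_q)$ is a distance-$r$ $D$-partition then $\pi$ holds by definition of partition, and each containment $X_i \subseteq t_{r,D[i,j]}(X_j)$ holds by the defining property of $t_{r,D[i,j]}$ given by \cref{lem:lcv}; conversely, if $\phi_{r,D}(X_1,\dots,X_q)$ holds then $\pi$ ensures we have a partition of $V(G)$, and the containments ensure each local neighborhood count lies in the prescribed set $D[i,j]$. There is no real obstacle here, as the only non-trivial work, namely building the terms for ``the vertices with $|N^r(\cdot) \cap X| \in \mu$'' for finite or co-finite $\mu$, has already been done in \cref{lem:lcv}; the proposition is essentially a packaging of that lemma with a standard partition formula.
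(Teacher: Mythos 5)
Your proposal is correct and follows the same two-ingredient recipe as the paper: a partition formula together with one containment per ordered pair $(i,j)$ built from the term $t_{r, D[i,j]}$ of \cref{lem:lcv}. In fact, your containment direction $X_i \subseteq t_{r,D[i,j]}(X_j)$ is the one that literally matches the definition of a distance-$r$ $D$-partition (for $v \in X_i$, $|N^r(v) \cap X_j| \in D[i,j]$), whereas the paper writes $X_j \subseteq t_{r,D[i,j]}(X_i)$, which encodes $D^T$ rather than $D$ and only agrees with the definition when $D$ is symmetric; this is a minor index slip in the paper, and your version is the accurate one.
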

\begin{proof}
	To verify that $(X_1, \ldots, X_q)$ is a partition of the vertex set, we construct the formula
	\begin{align}
		\label{eq:exp:part}
		\partition(X_1, \ldots, X_q) \equiv \overline{X_1 \cup \cdots \cup X_q} = \emptyset \land \bigwedge\nolimits_{1 \le i < j \le q} X_i \cap X_j = \emptyset.
	\end{align}
	Using \cref{lem:lcv} we then construct the desired formula
    \[
		\phi_{r, D}(X_1, \ldots, X_q) \equiv \partition(X_1, \ldots, X_q) \land \bigwedge\nolimits_{i, j \in \{1, \ldots, q\}} X_j \subseteq t_{r, D[i, j]}(X_i).
    \]
\end{proof}

For fixed $r$ and finite or co-finite nonempty $\sigma, \rho \subseteq \bN$, 
we express the problem asking for a size-$k$ distance-$r$ $(\sigma, \rho)$-set using \cref{prop:sigmarho} 
as $\exists X\, \card{X} = k \land \phi_{r, \sigma, \rho}(X)$.
For fixed $r$ and a $q \times q$ neighborhood constraint matrix over finite and co-finite sets $D$, 
we express the problem asking for a distance-$r$ $D$-partition of a graph using \cref{prop:lcvp}
as $\exists X_1 \cdots \exists X_q\, \phi_{r, D}(X_1, \ldots, X_q)$.

\subsection{\labprob}\label{sec:exp:l21}
One strength of \en logic is that we can check neighborhoods of sets in various powers of the graph.
This way we were able to express the property that a graph has a semitotal dominating set of size at most $k$, 
see~equation~(\ref{eq:semitotal:dnlog}) on page~\pageref{eq:semitotal:dnlog}.
Here, we show that we can express the \labprob problem in \en logic.
First, we recall the problem definition.
For a graph $G$, a $k$-$L(d_1, \ldots, d_s)$-labeling is a coloring $\coloring$ 
of the vertices of $G$ with colors $[k]$ such that 
for all $i \in [s]$ and all $u, v \in V(G)$, if $\dist(u, v) \le i$,
then $\abs{\coloring(u) - \coloring(v)} \ge d_i$.
Note that we use here that the distance between $u$ and $v$ is \emph{at most $i$}.
Other definitions of the \labprob problem impose a constraint on $u$ and $v$ being at distance \emph{exactly $i$}.
We would like to remark that for this more general setting, we can design a \en formula expressing the problem as well.
\begin{proposition}
	For fixed $k$ and $d_1, \ldots, d_s$, 
	\labprob is expressible in \en logic.
\end{proposition}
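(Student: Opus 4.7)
\medskip

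The plan is to represent a coloring as a $k$-tuple of set variables $X_1, \ldots, X_k$, where $X_j$ stands for the color class $\coloring^{-1}(j)$, and then translate the two defining properties of a labeling (partition into color classes; distance-based separation of colors) into a conjunction of primitive \en formulas.

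First, I would express that $(X_1, \ldots, X_k)$ forms a partition of $V(G)$ using exactly the formula $\partition(X_1, \ldots, X_k)$ constructed in the proof of \cref{prop:lcvp}. Next, for the labeling condition, recall that the operator $N_1^i(U)$ evaluates to the set of vertices that have at least one vertex of $U$ at distance between $1$ and $i$. Hence the constraint ``whenever $\dist(u,v) \le i$ and $u \neq v$, the colors of $u$ and $v$ differ by at least $d_i$'' can be rephrased as: for every pair of color indices $(j, j')$ with $|j - j'| < d_i$, no vertex of $X_{j'}$ lies in $N_1^i(X_j)$. This yields the formula
\[
\phi_{k, d_1, \ldots, d_s} \;\equiv\; \exists X_1 \cdots \exists X_k \;\; \partition(X_1, \ldots, X_k) \;\land\; \bigwedge_{i \in [s]} \bigwedge_{\substack{j, j' \in [k] \\ |j-j'| < d_i}} X_{j'} \cap N_1^i(X_j) = \emptyset.
\]

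A small sanity check is needed for the diagonal case $j = j'$ (which is included whenever $d_i \ge 1$). Here the conjunct becomes $X_j \cap N_1^i(X_j) = \emptyset$, and since by the semantics given in \Cref{sec:prelim} we have $N^i(v) = \{u \in V(G) \setminus \{v\} \mid \dist(u,v) \le i\}$, this correctly enforces that no two \emph{distinct} vertices of $X_j$ are at distance at most $i$, as required by $|c(u) - c(u')| \ge d_i$. For $j \ne j'$ the conjunct enforces the corresponding cross-class separation. There is no real obstacle here: since $k$, $s$, and $d_1, \ldots, d_s$ are fixed, the formula has constant length and uses only the operator $N_1^{i}$ for $i \in [s]$, so it is a well-formed \en sentence, and by \Cref{thm:main:basic} it is model-checkable in the claimed runtimes on decompositions of bounded mim-width.
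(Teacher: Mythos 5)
Your proof is correct and takes essentially the same approach as the paper: the paper encodes the constraints through the distance-$r$ $D$-partition machinery of Proposition~\ref{prop:lcvp}, setting $D_h[i,j] = \{0\}$ whenever $|i-j| < d_h$, which after unfolding $t_{h,\{0\}}(X_i)$ to the complement of $N_1^h(X_i)$ yields exactly your conjuncts $X_{j'} \cap N_1^i(X_j) = \emptyset$. Your explicit sanity check of the diagonal case $j = j'$, relying on $N^r(v)$ being an open ball, is a welcome clarification but does not change the argument.
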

\begin{proof}
	For all $h \in \{1,\ldots,s\}$, we create a $k \times k$ neighborhood constraint matrix $D_h$ as follows:
	\[
		\forall i, j \in \{1, \ldots, k\}\colon 
			D_h[i, j] = \left\lbrace
				\begin{array}{ll}
					\bN, &\mbox{ if } \abs{i - j} \ge d_h \\
					\{0\}, &\mbox{ otherwise}
				\end{array}
				\right.
	\]
	We can then express \labprob using \cref{prop:lcvp} by the formula
	\[
		\exists X_1 \cdots \exists X_k\, \bigwedge\nolimits_{h \in \{1, \ldots, s\}} \phi_{h, D_h}(X_1, \ldots, X_k).
	\]
\end{proof}

\subsection{Not locally checkable problems}
Here we show that several problems that are not LCVS or LCVP, some of them
studied in~\cite{BergougnouxKante2021,JaffkeKwonTelle2020b,JaffkeKwonTelle2020},
on graphs of bounded mim-width, 
and some of them not considered on graphs of bounded mim-width before,
are expressible in \ac \en logic.
We only show a few examples such problems and would like to point out that 
for any (locally checkable (distance-$r$)) partitioning problem, we can add connectivity or acyclicity constraints, 
similar to the framework of~\cite{BergougnouxKante2021}.
The \textsc{Acyclic $k$-Coloring} problem studied below is a nice example of how 
an independence and acyclicity condition can be combined in this framework.

The \textsc{Longest Induced Path} problem asks, 
given a graph $G$ and an integer $k$, whether $G$ contains an induced path on $k$ vertices.
The \textsc{Induced Disjoint Paths} problem asks, given a graph $G$ and terminal pairs $(s_1, t_1)$, $\ldots$, $(s_k, t_k)$,
whether $G$ contains a set of paths $P_1, \ldots, P_k$ 
such that for all $i \in \{1, \ldots, k\}$, $P_i$ is an induced $(s_i, t_i)$-path in $G$ 
and such that for all $i \neq j$,
each vertex in $V(P_i)$ has no neighbor in $V(P_j)$, except possibly when terminals coincide.
In the \textsc{Induced Disjoint Connected Subgraphs} problem, we 
are given a graph $G$ and sets of terminal vertices $T_1, \ldots, T_k \subseteq V(G)$ 
and the question is whether $G$ contains induced subgraphs $D_1, \ldots, D_k$ such that for all $i \in [k]$, 
$T_i \subseteq V(D_i)$, and such that for all $i \neq j$, $V(D_i)$ has no neighbor in $V(D_j)$, 
except possibly when terminals coincide.
For a fixed graph $H$, the \textsc{$H$-Induced Topological Minor} problem asks 
whether a graph $G$ has a subdivision of $H$ as an induced subgraph.
An \emph{acyclic $k$-coloring} of a graph is a proper coloring with $k$ colors such that each pair of color classes induces a forest.
A \emph{star $k$-coloring} of a graph is an acyclic coloring such that each pair of color classes induces a star forest.
A \emph{$b$-coloring} of a graph is a proper coloring such that each color class contains a vertex that has a neighbor in each of the other color classes.
A \emph{conflict free $k$-coloring} of a graph is a vertex-coloring (not necessarily proper) such that for each vertex, there is one color that appears at most once in its neighborhood.
Both the open and closed neighborhoods can be considered and we 
call the resulting problems \textsc{Open Conflict-free $k$-Coloring} and \textsc{Closed Conflict-free $k$-Coloring}, respectively.
\begin{proposition}
	The following problems are expressible in \ac \en logic:
	\begin{enumerate}
		\item\label{exp:fvs} \textsc{Feedback Vertex Set}
		\item\label{exp:lip} \textsc{Longest Induced Path}
		\item\label{exp:idcs} \textsc{Induced Disjoint Connected Subgraphs}
		\item\label{exp:idp} \textsc{Induced Disjoint Paths}
		\item\label{exp:Hitm} \textsc{$H$-Induced Topological Minor}
		\item\label{exp:acycol} \textsc{Acyclic $k$-Coloring}
	\end{enumerate}
	Moreover, the following problems are expressible in \en logic:
	\begin{enumerate}
		\setcounter{enumi}{6}
		\item\label{exp:starcol} \textsc{Star $k$-Coloring}
		\item\label{exp:bcol} \textsc{$b$-Coloring} with fixed number of colors $k$.
		\item\label{exp:ocfcol} \textsc{Open Conflict-free $k$-Coloring}
		\item\label{exp:ccfcol} \textsc{Closed Conflict-free $k$-Coloring}
	\end{enumerate}
\end{proposition}
\begin{proof}
	For a constant $r$ and a non-empty finite or co-finite set $\mu \subseteq \bN$, 
	we may use the term $t_{r, \mu}$ given in \cref{lem:lcv}. 
	For \textsc{Induced Disjoint Paths}, we create $k$ colors $\mathbf{P}_1 = \{s_1, t_1\}$, $\ldots$, $\mathbf{P}_k = \{s_k, t_k\}$,
	and for \textsc{Induced Disjoint Connected Subgraphs}, 
	we create $k$ colors $\mathbf{P}_1 = T_1$, $\ldots$, $\mathbf{P}_k = T_k$.
	For \textsc{$H$-Induced Topological Minor}, we assume that $V(H) = \{1, \ldots, k\}$, and $E(H) = \{i_1j_1, i_2j_2, \ldots, i_\ell j_\ell\}$.
	For a variable $X$, the predicate $\partition_X(X_1, \ldots, X_r)$ verifies that $(X_1, \ldots, X_r)$ is a partition of $X$ 
	and can be defined similarly to the $\partition$ predicate, see~\eqref{eq:exp:part}.
	
    We express the problems as follows.
    \begin{itemize}
    		\item[\ref{exp:fvs}.] $\exists X\, \card{X} \le k \land \acy(\overline{X})$
		\item[\ref{exp:lip}.] $\exists X\, \card{X} \ge k \land \conn(X) \land \acy(X) \land X \subseteq t_{1, \{1, 2\}}(X)$
		\item[\ref{exp:idcs}.] $\exists X_1\cdots \exists X_k~\idcs(X_1, \mathbf{P}_1, \ldots, X_k, \mathbf{P}_k)$, where
			\begin{align*}
				\idcs(X_1, T_1, \ldots, X_k, T_k) \equiv \bigwedge\nolimits_{1 \le i \le k} \conn(X_i)
						\land \bigwedge\nolimits_{1 \le i \le k} X_i \cap (T_1 \cup \cdots \cup T_k) = T_i \\
				~~~~~\land \bigwedge\nolimits_{1 \le i < j \le k} (X_i \setminus T_i) \cap (X_j \setminus T_j) = \emptyset 
						\land \bigwedge\nolimits_{i \neq j} X_j \cap N^1_1(X_i) \subseteq T_i \cap T_j
			\end{align*}
		\item[\ref{exp:idp}.] $\idp(\mathbf{P}_1, \ldots, \mathbf{P}_k)$, where
			\begin{align*}
				&\idp(T_1, \ldots, T_k) \equiv
				 \exists X_1 \cdots \exists X_k\, \idcs(X_1, T_1, \ldots, X_k, T_k) \land \bigwedge\nolimits_{1 \le i \le k} \acy(X_i) \\	
				&~~~~~\land T_i \subseteq t_{1, \{1\}}(X_i) 
					\land (X_1 \setminus T_1) \cup \cdots \cup (X_k \setminus T_k) \subseteq t_{1, \{2\}}(X_1 \cup \cdots \cup X_k)
			\end{align*}
		\item[\ref{exp:Hitm}.] $\exists x_1 \cdots \exists x_k\, \idp(\{x_{i_1}, x_{j_1}\}, \ldots, \{x_{i_\ell}, x_{j_\ell}\})$
		\item[\ref{exp:acycol}.] $\exists X_1 \cdots \exists X_k\, \propcol(X_1, \ldots, X_k) \land \bigwedge\nolimits_{1 \le i < j \le k} \acy(X_i \cup X_j)$, where
			\begin{align*}
				\independent(X) &\equiv N_1^1(X) \cap X = \emptyset \\
				\propcol(X_1, \ldots, X_r) &\equiv \partition(X_1, \ldots, X_r) \land \bigwedge\nolimits_{i \in \{1, \ldots, r\}} \independent(X_i)
			\end{align*}
		\item[\ref{exp:starcol}.] $\exists X_1 \cdots \exists X_k\, \propcol(X_1, \ldots, X_k) 
					\land \bigwedge\nolimits_{1 \le i < j \le k} \starfor(X_i \cup X_j)$, where
			\begin{align*}
				\starfor(X) \equiv \exists X_1 \exists X_2~\partition_X(X_1, X_2) \land \independent(X_1)  \land \independent(X_2) \land X_2 \subseteq t_{1, \{1\}}(X_1)
			\end{align*}
		\item[\ref{exp:bcol}.] $\exists X_1 \ldots \exists X_k\, \propcol(X_1, \ldots, X_k) \land 
				\exists x_1 \ldots \exists x_k\, 
				\bigwedge\nolimits_{i \in [k]} \left(x_i \in X_i \land \bigwedge\nolimits_{j \neq i} \neg (N_1^1(x_i) \cap X_j = \emptyset)\right)$
		\item[\ref{exp:ocfcol}.] $\exists X_1 \ldots \exists X_k\, \partition(X_1, \ldots, X_k) \land 
			\bigcup_{i \in [k]} \overline{N_2^1(X_i)} = \overline{\emptyset}$
		\item[\ref{exp:ccfcol}.] $\exists X_1 \ldots \exists X_k\, \partition(X_1, \ldots, X_k) \land
			\bigcup_{i \in [k]} (X_i \setminus N_1^1(X_i)) \cup \overline{N_2^1(X_i)} = \overline{\emptyset}$
    \end{itemize}
\end{proof}

\subsection{Solution diversity}\label{sec:exp:diversity}
In this section we show that diversity constraints in the sense of~\cite{BasteEtAl2020} can be encoded in \en logic,
which shows that the diverse variants of a large number of vertex subset problems in graphs can be solved in \XP time 
parameterized by the mim-width of a given rooted layout of the input graph plus the number of solutions.
For two sets $X$ and $Y$, their \emph{Hamming Distance} is the size of their symmetric difference, 
in other words $\hamd(X, Y) = \card{(X \setminus Y) \cup (Y \setminus X)}$.
Two types of diversity constraints on a collection of sets are commonly used: the \emph{minimum} pairwise Hamming distance
and the \emph{sum} over all pairwise Hamming distances; we denote them by $\mindiv$ and $\sumdiv$, respectively.
Formally, 
\begin{align*}
	\mindiv(X_1, \ldots, X_p) &= \min\nolimits_{1 \le i < j \le p} \hamd(X_i, X_j) \mbox{ and } \\
	\sumdiv(X_1, \ldots, X_p) &= \sum\nolimits_{1 \le i < j \le p} \hamd(X_i, X_j).
\end{align*}

For a subset problem $\Pi$, we denote by \textsc{Min-Diverse $\Pi$} the problem 
where we are given an instance of $\Pi$ and integers $p$ and $t$ 
and the question is whether the input graph has $p$ solutions $X_1, \ldots, X_p$ to $\Pi$ 
such that $\mindiv(X_1, \ldots, X_p) \ge t$.
Similarly, the \textsc{Sum-Diverse $\Pi$} problem asks, given the same kind of input, 
whether the input graph has $p$ solutions $X_1, \ldots, X_p$ to $\Pi$ such that $\sumdiv(X_1, \ldots, X_p) \ge t$.
We can encode ``$\mindiv(X_1, \ldots, X_p) \ge t$'' straightforwardly in \en by a formula
	\begin{align}
		\label{eq:div:min}
		\bigwedge_{1 \le i \le j \le p} \card{(X_i \setminus X_j) \cup (X_j \setminus X_i)} \ge t.
	\end{align}
Since we cannot quantify universally in \en logic, 
we have to repeat the formula encoding $\Pi$ for each of the requested $p$ solutions separately;
therefore the formula length increases by a multiplicative factor of $p$.
The encoding~\eqref{eq:div:min} of $\mindiv$ increases the formula length by another additive factor of $\calO(p^2)$
and the number of size measurements by at most $p^2$.
\begin{observation}\label{obs:div:min}
	Let $\Pi$ be a vertex subset problem that is expressible in \ac \en logic via formula~$\phi(X)$.
	Then, \textsc{Min-Diverse $\Pi$} is expressible in \ac \en logic via a formula $\phi_{div}(X_1, \ldots, X_p)$
	such that $\card{\phi_{div}} = \calO(p\card{\phi} + p^2)$,
	$d(\phi_{div}) = d(\phi)$, and $r(\phi_{div}) = r(\phi)$.
	Moreover, if $\phi$ has $s$ size measurements, 
	then $\phi_{div}$ has at most $s + p^2$ size measurements.
\end{observation}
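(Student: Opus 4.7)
The plan is to build $\phi_{div}$ as a conjunction of $p$ renamed copies of $\phi$ together with $\binom{p}{2}$ Hamming-distance size measurements. Given $\phi(X)$, for each $i \in [p]$ I would let $\phi^{(i)}(X_i)$ be obtained from $\phi$ by renaming the free variable $X$ to $X_i$ and every existentially bound variable of $\phi$ to a fresh name, with the fresh names chosen pairwise disjoint across the $p$ copies to avoid variable capture when the copies are conjoined. Since \ac\en extends existential \MSOone and admits prenex normal form with a single leading block of existential quantifiers, the conjunction of the $\phi^{(i)}$ is a legitimate \ac\en formula once all existential quantifiers are pulled to the front.

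I would then define
\[
\phi_{div}(X_1,\ldots,X_p) \;\equiv\; \bigwedge_{i=1}^{p} \phi^{(i)}(X_i) \;\wedge\; \bigwedge_{1 \le i < j \le p} \bigl| (X_i \setminus X_j) \cup (X_j \setminus X_i) \bigr| \ge t.
\]
Correctness is immediate from unfolding the semantics: $G \models \phi_{div}(B_1,\ldots,B_p)$ iff each $B_i$ satisfies $\phi$ (so each $B_i$ is a $\Pi$-solution) and $\hamd(B_i,B_j) \ge t$ for all $i<j$ (so $\mindiv(B_1,\ldots,B_p) \ge t$), which matches exactly the definition of \textsc{Min-Diverse}~$\Pi$.

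Verifying the bounds is then bookkeeping. Each copy $\phi^{(i)}$ contributes length $|\phi|$, and each of the $\binom{p}{2}$ Hamming-distance atoms has constant length (with the integer $t$ counting as one symbol), so $|\phi_{div}| = \calO(p|\phi| + p^2)$. The neighborhood operators $N^r_d(\cdot)$ appearing in $\phi_{div}$ are exactly those appearing in the $\phi^{(i)}$, so $d(\phi_{div}) = d(\phi)$ and $r(\phi_{div}) = r(\phi)$. The new size-measurement atoms are precisely the $\binom{p}{2}$ symmetric-difference measurements, adding $\calO(p^2)$ to the count inherited from the copies of $\phi$.

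The argument is essentially a syntactic check, so there is no serious obstacle. The only delicate point is variable capture in the existentially bound subformulas of $\phi$, which the uniform renaming step resolves cleanly. One also needs to confirm that the symmetric-difference term $(X_i \setminus X_j) \cup (X_j \setminus X_i)$ is a legitimate neighborhood term, which it is because $\cup$ and $\setminus$ are among the permitted Boolean operations on neighborhood terms, so the Hamming-distance atom is a legitimate size measurement in the syntax of \ac\en.
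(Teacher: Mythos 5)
Your construction and verification match the paper's intended approach (the paper gives only the Hamming-distance formula in equation~(\ref{eq:div:min}) and a brief surrounding discussion rather than a formal proof). The renaming of bound variables to avoid capture, pulling the existential blocks to the front, and checking that the symmetric-difference term is a valid neighborhood term are all handled correctly, as are the bounds $|\phi_{div}| = \calO(p|\phi|+p^2)$, $d(\phi_{div}) = d(\phi)$, and $r(\phi_{div}) = r(\phi)$.

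There is one point you should flag explicitly. Your construction conjoins $p$ renamed copies of $\phi$, each contributing $s$ size measurements, plus $\binom{p}{2}$ Hamming-distance atoms, so the honest count is $ps + \calO(p^2)$ size measurements. The observation, however, claims ``at most $s + p^2$''. These agree only when $s$ is small relative to $p$; for $p \ge 2$ and $s$ larger than roughly $p/2$, $ps + \binom{p}{2} > s + p^2$. The paper's own surrounding text (``the formula length increases by a multiplicative factor of $p$'') is consistent with your $ps$ count for the inherited measurements, so the $s+p^2$ in the observation appears to drop the factor $p$. Your write-up correctly arrives at $\calO(p^2)$ \emph{new} measurements but then leaves the final total implicit; since the construction as given does not literally achieve the $s+p^2$ bound in the observation, you should say so, rather than leaving the reader to assume the bound is verified.
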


For \textsc{Sum-Diverse $\Pi$}, we can do something slightly more efficient, 
by assigning an appropriate weight function to the input graph and 
using the objective value.
This way we do not increase the size measurements in the formula.
Suppose \textsc{Sum-Diverse $\Pi$} asks for $p$ solutions.
Then, we take the input graph $G$ to the vertex subset problem $\Pi$
and turn $G$ into a $p$-weighted graph,
by assigning, to each vertex $v \in V(G)$ and each $S \subseteq [p]$,
the weight $w(v, S) = \card{S}\cdot(p-\card{S})$.
Then, given a $p$-tuple of solutions $\tB \in \Pset(V(G))^p$ to $\Pi$,
we have that
\begin{align*}
	\obj(G,\tB) &= \sum_{v \in V(G)} w(v,\{ i \mid v \in B_i \})
				= \sum_{v \in V(G)} \card{\{i \mid v \in B_i\}} \cdot \card{\{i \mid v \notin B_i\}} \\
				&=\sumdiv(B_1, \ldots, B_p).
\end{align*}

It is not difficult to derive the last equality, we only have to count the contribution of each 
vertex to $\sumdiv(B_1, \ldots, B_p)$ separately;
this equality has also been derived in~\cite{BasteEtAl2020}.
We have the following observation.
\begin{observation}\label{obs:div:sum}
	Let $\Pi$ be a vertex subset problem that is expressible in \ac \en logic via formula~$\phi(X)$.
	Then, \textsc{Sum-Diverse $\Pi$} is expressible as an \ac\en optimization problem via a formula $\phi_{div}(X_1, \ldots, X_p)$
	such that $\phi$ and $\phi_{div}$ have the same number of size measurements,
	$\card{\phi_{div}} = \calO(p\card{\phi})$, 
	$d(\phi_{div}) = d(\phi)$, and $r(\phi_{div}) = r(\phi)$.
\end{observation}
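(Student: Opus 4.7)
The plan is to set $\phi_{div}(X_1,\ldots,X_p) \equiv \bigwedge_{i=1}^{p} \phi(X_i)$, where $X_1,\ldots,X_p$ are fresh copies of the free variable of $\phi$, and to encode the sum-diversity measure entirely in the weight function of the input graph rather than inside the formula. Concretely, given an instance of \textsc{Sum-Diverse $\Pi$} with input graph $G$, I would equip $G$ with the $p$-weighting $w(v,S) = |S|(p-|S|)$ for every $v \in V(G)$ and $S \subseteq [p]$. Since $\phi_{div}(\tB)$ holds precisely when each $B_i$ is a solution to $\Pi$ on $G$, the \ac\en optimization problem over $\phi_{div}$ seeks a tuple $\tB = (B_1,\ldots,B_p)$ of solutions maximizing $\obj_G(\tB)$.

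The heart of the proof is the identity
\[
\obj_G(\tB) \;=\; \sumdiv(B_1,\ldots,B_p),
\]
which I would establish by a double-counting argument. Writing $k_v = |\{i \mid v \in B_i\}|$ for each vertex $v$, its contribution to $\obj_G(\tB)$ is exactly $w(v,\{i \mid v \in B_i\}) = k_v(p-k_v)$ by the choice of weighting. On the right-hand side, $v$ contributes one unit to $\hamd(B_i,B_j)$ exactly when it lies in exactly one of $B_i, B_j$; the number of such unordered pairs $\{i,j\}$ is $k_v(p-k_v)$. Summing over $v$ yields the identity, which is the same identity already highlighted in the paragraph preceding the statement. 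Consequently, any optimizer of $\obj_G$ over satisfying assignments of $\phi_{div}$ is an optimal solution of \textsc{Sum-Diverse $\Pi$} on $G$.

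The parameter bounds are then immediate bookkeeping. The length satisfies $|\phi_{div}| = p|\phi| + (p-1) = O(p|\phi|)$, counting the conjunction symbols. No new neighborhood operator $N_d^r(\cdot)$ is introduced during the renaming and conjoining, so $d(\phi_{div}) = d(\phi)$ and $r(\phi_{div}) = r(\phi)$. Finally, every size measurement appearing in $\phi_{div}$ is a renamed copy of a size measurement of $\phi$, so no new size-measurement constants are introduced beyond those inherited from $\phi$. I do not anticipate any real obstacle: the construction is purely syntactic apart from the counting identity, whose verification is a one-line argument. The only point requiring mild care is to ensure that the renaming of $X$ to $X_i$ is $\alpha$-fresh with respect to all other (bound) variables of $\phi$, which is a routine matter.
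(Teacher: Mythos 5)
Your construction is exactly the paper's: the discussion immediately preceding the observation sets $w(v,S)=|S|\,(p-|S|)$, establishes $\obj_G(\tB)=\sumdiv(B_1,\ldots,B_p)$ by precisely the per-vertex double count you give ($v$ contributes $k_v(p-k_v)$ to each side, where $k_v=|\{i\mid v\in B_i\}|$), and takes $\phi_{div}$ to be $p$ renamed copies of $\phi$ conjoined together. So this is the same argument, and the verification of the identity is correct.

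One point you are right to hedge on: the observation literally asserts that $\phi$ and $\phi_{div}$ have the \emph{same number} of size measurements, but $\bigwedge_{i=1}^{p}\phi(X_i)$ contains a renamed copy of every size measurement of $\phi$ for each $i$, hence $ps$ size-measurement atoms rather than $s$. Your softened formulation (``no new size-measurement constants are introduced beyond those inherited from $\phi$'') is the accurate reading, and it captures the point that actually distinguishes Sum-Diverse from Min-Diverse here: encoding the diversity objective in the weights adds no further size measurements, whereas the Min-Diverse encoding adds $\Theta(p^2)$ of them. The discrepancy is a slight imprecision in the statement's phrasing, not a gap in your reasoning.
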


\begin{corollary}\label{cor:div}
	Let $\Pi$ be a vertex subset problem expressible in \ac \en logic via formula $\phi$.
	Let $d \defeq d(\phi)$, $r \defeq r(\phi)$, and $s$ be the number of size measurements.
	If the input graph to $\Pi$ is given together with a rooted layout of $\mathsf{f}$-width $w$,
	we can solve \textsc{Min-Diverse $\Pi$} and \textsc{Sum-Diverse $\Pi$}, asking for $p$ solutions, 
	in time as shown in \cref{tab:div:runtime}.
\end{corollary}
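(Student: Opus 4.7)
The plan is to combine the encodings of the diverse variants, provided by Observations~\ref{obs:div:min} and~\ref{obs:div:sum}, with the model checking algorithm for \ac\en logic from Theorem~\ref{thm:main:AC}. Since the table in the statement presumably lists the running times corresponding to the three width measures (mim-width, clique-width, rank-width), the proof is essentially a parameter-substitution argument.

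First, I would handle \textsc{Min-Diverse $\Pi$}. By Observation~\ref{obs:div:min}, there is a formula $\phi_{div}(X_1,\dots,X_p) \in \text{\ac\en}$ expressing \textsc{Min-Diverse $\Pi$}, with $|\phi_{div}| = O(p|\phi|+p^2)$, $d(\phi_{div}) = d$, $r(\phi_{div}) = r$, and having $s + O(p^2)$ size measurements. Since this is a decision problem rather than a weighted optimization problem, I can feed $\phi_{div}$ to Theorem~\ref{thm:main:AC} on the input graph with the all-zero weight function (so $b = \maxWeight(G) = O(1)$). The running times claimed in \cref{tab:div:runtime} then follow by substituting the quantities $|\phi_{div}|$, $d(\phi_{div})$, $r(\phi_{div})$ and the new size measurements into the three bounds of Theorem~\ref{thm:main:AC}.

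Next, I would handle \textsc{Sum-Diverse $\Pi$}. Here I follow Observation~\ref{obs:div:sum}: take the formula $\phi_{div}(X_1,\dots,X_p)$ of length $O(p|\phi|)$ with $d(\phi_{div})=d$, $r(\phi_{div})=r$, and the same number $s$ of size measurements as $\phi$, and assign to the input graph the weight function $w(v,S) = |S|\cdot(p-|S|)$, so that $\obj_G(B_1,\dots,B_p) = \sumdiv(B_1,\dots,B_p)$. These weights are bounded by $p^2$, so $\log(\maxWeight(G))$ and the factor $b$ contribute only polynomially (in $p$). Then Theorem~\ref{thm:main:AC} directly yields an optimal tuple, and in particular lets us decide whether some tuple achieves $\sumdiv \geq t$ by comparing the optimum to $t$.

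The only non-routine aspect is to match the substituted expressions to the specific entries of \cref{tab:div:runtime}; the hard part, conceptually, has already been carried out in the previous observations and in Theorem~\ref{thm:main:AC}. No further combinatorial or logical reasoning is required. Thus the corollary is obtained purely by assembling these earlier ingredients, and its proof can be stated in a few lines by pointing at Observations~\ref{obs:div:min} and~\ref{obs:div:sum} together with Theorem~\ref{thm:main:AC}.
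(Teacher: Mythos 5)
Your proposal is correct and is essentially the proof the paper has in mind: Corollary~\ref{cor:div} is stated immediately after Observations~\ref{obs:div:min} and~\ref{obs:div:sum} precisely because it follows by substituting the formula bounds from those observations (together with $\maxWeight$ bounds of $O(1)$ for \textsc{Min-Diverse} and $O(p^2)$ for \textsc{Sum-Diverse}) into Theorem~\ref{thm:main:AC}. Nothing further is needed.
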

\begin{table}
	\begin{tabular}{r|c|c|c}
		Variant & clique-w.\ $w$ & rank-w.\ $w$ & mim-w.\ $w$ \\
		\hline
		\textsc{Min-Diverse $\Pi$}, $r = 1$ &  $2^{\calO(dw(p\card{\phi} + p^2))}n^{\calO(s + p^2)}$ & $2^{\calO(dw^2(p\card{\phi} + p^2))}n^{\calO(s + p^2)}$ & 
			$n^{\calO(wdp(\card{\phi} + p))}$ \\
		\textsc{Min-Diverse $\Pi$}	&  $2^{\calO(d(wr(p\card{\phi} + p^2))^2)}n^{\calO(s + p^2)}$ & $2^{\calO(dw^4(r(p\card{\phi} + p^2))^2)}n^{\calO(s + p^2)}$ & $n^{\calO(dw(p\card{\phi} + p^2))}$ \\
		\hline
		\textsc{Sum-Diverse $\Pi$}, $r = 1$ &  $2^{\calO(dwp\card{\phi})}n^{\calO(s)}$ & $2^{\calO(dw^2p\card{\phi})}n^{\calO(s)}$ & $n^{\calO(dwp\card{\phi})}$ \\
		\textsc{Sum-Diverse $\Pi$}	&  $2^{\calO(d(wrp\card{\phi})^2)}n^{\calO(s)}$ & $2^{\calO(dw^4(rp\card{\phi})^2)}n^{\calO(s)}$ & $n^{\calO(dw(p\card{\phi})^2)}$
	\end{tabular}
	\caption{Running times of \cref{cor:div}. Note in particular that when $s$ is a fixed constant, 
		we achieve \FPT running times parameterized by the number of solutions plus clique-width/rank-width for \textsc{Sum-Diverse $\Pi$},
		while for \textsc{Min-Diverse $\Pi$} we have an $\XP$ dependence on the number of solutions.
		}
	\label{tab:div:runtime}
\end{table}

One interesting aspect of the previous observations is that they give applications of the parameterized solution diversity paradigm 
on several well-studied graph classes, as long as they have constant mim-width.
In particular, for all problems $\Pi$ as in Observation~\ref{obs:div:min}~and~\ref{obs:div:sum},
\textsc{Min-Diverse $\Pi$} and \textsc{Sum-Diverse $\Pi$} parameterized by $r$ alone are in \XP 
when the input graph is restricted to, for instance, 
interval graphs, permutation graphs, circular arc graphs, and $H$-graphs (given an $H$-representation);
see for instance~\cite{JaffkeThesis,VatshelleThesis} for an overview.
It would be interesting to see if these problems are actually in \FPT or \W[1]-hard,
or to study solution diversity in graph classes of unbounded mim-width, 
e.g.\ in planar or bipartite graphs.

Inspecting \cref{tab:div:runtime}, 
we observe that we lift the \FPT-results of Baste et al.~\cite{BasteEtAl2020}
for \textsc{Sum-Diverse $\Pi$} from the parameterization treewidth plus number of solutions to clique-width (or rank-width) plus number of solutions,
albeit only for vertex subset problems expressible in \ac \en logic.
Moreover, it shows that when the number of solutions is a fixed constant, 
we get \FPT-algorithms parameterized by clique-width (or rank-width) for \textsc{Min-Diverse $\Pi$} as well.
\section{Comparison to modal logic}\label{sec:modal}
In this section, we introduce the \emph{existential distance modal logic} \edml and show that it is equivalent to \en.
The logic \edml is obtained from the original existential counting modal logic \ecml, introduced by Pilipczuk \cite{Pilipczuk11},
by removing edge-set quantification and ultimately periodic counting
and adding the ability to query $d$-neighborhoods in the $r$-th power of the graph.
In \Cref{sec:hardness} we show that this restriction is necessary, i.e., that model checking for the extension of \edml with edge-set quantification
or ultimately periodic counting is NP-hard on graph classes with constant mim-width.

Modal logic differs from the other logics mentioned in this paper,
in the sense that evaluation of formulas in modal logic is tied to an active vertex that changes over time and is not explicitly represented by a variable of the formula.
The central operators are $\square$ and $\diamondsuit$ and variations thereof.
In the beginning, when there is no active vertex yet, the operators $\rangle\square$ and $\rangle\diamondsuit$ quantify the active vertex existentially and universally, respectively.
When the active vertex is $v$,
a formula $\square \phi$ shall be read as ``$\phi$ holds on some neighbor $w$ of $v$'',
while $\diamondsuit \phi$ shall be read as ``$\phi$ holds on all neighbors $w$ of $v$''.
When testing whether $\phi$ holds on a neighbor $w$ of $v$, the active vertex then becomes $w$ during the evaluation.
For additional expressiveness, we 
further allow $\square^r_d \phi$, meaning ``$\phi$ holds on at least $d$ vertices $w$ with distance between 1 and $r$ to $v$''.
The corresponding $\diamondsuit^r_d$-operator is defined by $\diamondsuit^{r}_d \phi = \neg \square^r_{d} \neg \phi$.
The logic \edml consists of so-called \emph{inner formulas}
that are surrounded by a block of existential set quantifiers.

\begin{itemize}
	\item Every set variable $X$ and every unary relational symbol $\mathbf{P}$ is an unfinished inner formula.
    \item $\square^r_{d} \phi$ is an unfinished inner formula for every $d,r \in \N^+$ and unfinished inner formula $\phi$.
    \item $\rangle \square_{d} \phi$ is a finished inner formula for every $d \in \N^+$ and unfinished inner formula $\phi$.
    \item $|X| \le m$ and $|X| \ge m$ are finished inner formulas for every variable $X$ and $m \in \N$.
    \item If $\phi$ and $\psi$ are unfinished/finished inner formulas then so are $\neg\phi$, $\phi \land \psi$.
\end{itemize}

\edml formulas may existentially quantify over set variables, as the following rules say.
\begin{itemize}
    \item Every finished inner formula is a formula.
    \item If $\phi$ is a formula and $X$ is a set variable then $\exists X\, \phi$ is a formula.
\end{itemize}

Besides the obvious rules for the Boolean combinations, existential quantification and size measurements, the semantics of \edml are as follows.
Note that only the semantics of unfinished inner formulas depend on the current active vertex.
\begin{itemize}
    \item $\ip{X}^{(G,\beta,v)} = 1$ iff $v \in \beta(X)$.
    \item $\ip{\mathbf{P}}^{(G,\beta,v)} = 1$ iff $v \in \mathbf{P}(G)$.
    \item $\ip{\square^r_{d} \phi}^{(G,\beta,v)} = 1$ iff the number of $w \in N^r(v)$ with $\ip{\phi}^{(G,\beta,w)}$ is at least $d$.
    \item $\ip{\rangle \square_{d} \phi}^{(G,\beta)} = 1$ iff the number of $v \in V(G)$ with $\ip{\phi}^{(G,\beta,v)}$ is at least $d$.
\end{itemize}

We treat $\diamondsuit_{d}^r \phi$ as a shorthand for $\neg \square^r_{d} \neg \phi$.
We also write $\square$ and $\diamondsuit$ as a shorthand for $\square_1^1$ and $\diamondsuit_1^1$.
Then for example $\rangle\diamondsuit \phi$ 
means ``all elements satisfy $\phi$''.

\begin{lemma}\label{lem:convert_en_to_modal}
For every formula $\phi \in \text{\en}$, there exists an equivalent formula $\phi' \in \text{\edml}$ with $|\phi'| = O(|\phi|)$.
\end{lemma}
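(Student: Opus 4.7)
The plan is to reduce to \encore via \Cref{obs:core} and then translate primitive formulas one by one into \edml, preserving both the Boolean structure and the outer existential block. By \Cref{obs:core}, there is an equivalent \encore formula $\phi_c$ with $|\phi_c| = O(|\phi|)$, of the form $\exists X_1 \cdots \exists X_j \, \psi$, where $\psi$ is a Boolean combination of primitive formulas of \encore (items 1--4 of the primitive-formula list in \Cref{sec:logic}, without the \ac\ item~5). I will build $\phi'$ by keeping $\exists X_1 \cdots \exists X_j$ and the Boolean structure of $\psi$ intact, and replacing every primitive formula by a constant-size \edml equivalent.

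Writing $\rangle\diamondsuit\,\alpha$ as shorthand for $\neg \rangle\square_1 \neg\alpha$ (``every vertex satisfies $\alpha$''), the translations are:
\begin{itemize}
    \item $\mathbf{P} = X$, $X = Y$, and $X = \comp Y$ become $\rangle\diamondsuit$-quantifications over the obvious pointwise bi-implications, written out with $\land, \lor, \neg$; for instance, $X = \comp Y \;\mapsto\; \rangle\diamondsuit\bigl((X \land \neg Y) \lor (\neg X \land Y)\bigr)$.
    \item $X \cap Y = Z \;\mapsto\; \rangle\diamondsuit\bigl((X \land Y \land Z) \lor (\neg(X \land Y) \land \neg Z)\bigr)$.
    \item $N^r_d(X) = Y \;\mapsto\; \rangle\diamondsuit\bigl((\square^r_d X \land Y) \lor (\neg \square^r_d X \land \neg Y)\bigr)$.
    \item $|X| \le m$ is already an \edml formula.
\end{itemize}
The correctness of the $N^r_d(X) = Y$ translation is the key point: by the semantics of \edml, $\square^r_d X$ evaluated at active vertex $v$ holds iff $|N^r(v) \cap X| \ge d$, which is iff $v \in N_d^r(X)$. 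Hence the enclosing $\rangle\diamondsuit$ forces the pointwise equality $N_d^r(\beta(X)) = \beta(Y)$. The remaining cases are analogous and easier.

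Every primitive formula of \encore has length $O(1)$, and each translation above also has length $O(1)$ (numbers appear as single symbols in both logics, per the length conventions of \Cref{sec:logic} and \Cref{sec:modal}). Therefore, if $\psi'$ is obtained from $\psi$ by replacing each primitive by its translation, then $|\psi'| = O(|\psi|)$ and $\phi' \defeq \exists X_1 \cdots \exists X_j \, \psi'$ has $|\phi'| = O(|\phi_c|) = O(|\phi|)$. Since each translation is either a finished inner formula or a size measurement, $\psi'$ is a finished formula, so prepending the existential block yields a legal \edml formula. The only bookkeeping subtlety---rather than a real obstacle---is confirming that \edml's Boolean-combination rule applies uniformly to finished formulas, so that $\rangle\square_1$-quantified sub-formulas and size measurements $|X| \le m$ may be freely combined with $\land, \lor, \neg$ at the outer level; this is granted by the grammar in \Cref{sec:modal}.
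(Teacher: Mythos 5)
Your proposal is correct and mirrors the paper's proof: both apply \Cref{obs:core} to pass to \encore, then translate each primitive formula into a constant-length \edml inner formula via a $\rangle\diamondsuit$-quantified pointwise bi-implication (in particular replacing $N^r_d(X)=Y$ by the equivalence of $\square^r_d X$ and $Y$ at every vertex), and conclude by the same length accounting. The only cosmetic difference is that you spell out $\leftrightarrow$ in terms of $\land,\lor,\neg$, where the paper uses $\leftrightarrow$ as shorthand.
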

\begin{proof}
    By \Cref{obs:core}, we can assume $\phi \in \text{\encore}$.
    \encore formulas consist of an existential block followed by a Boolean combination of primitive formulas.
    On the other hand, formulas of \edml consist of an existential block followed by a Boolean combination inner formulas.
    The following list shows that every primitive formula of \encore is equivalent to some inner formula of \edml.
    \begin{itemize}
        \item $\mathbf{P}=X$ is equivalent to $\rangle\diamondsuit (\mathbf{P} \leftrightarrow X)$.
        \item $X = \comp Y$ is equivalent to $\rangle\diamondsuit(X \leftrightarrow \neg Y)$.
        \item $X = Y$ is equivalent to $\rangle\diamondsuit(X \leftrightarrow Y)$.
        \item $X \cap Y = Z$ is equivalent to $\rangle\diamondsuit(X \land Y \leftrightarrow Z)$.
        \item $N^r_d(X) = Y$ is equivalent $\rangle\diamondsuit ((\square^r_{d} X) \leftrightarrow Y)$.
        \item $|X| \le m$  is equivalent to $|X| \le m$.
    \end{itemize}
    We can therefore convert every \encore formula into an equivalent formula of \edml
    by substituting according to the previous list.
\end{proof}

\begin{lemma}\label{lem:convert_modal_to_en}
For every formula $\phi \in \text{\edml}$, there exists an equivalent formula $\phi' \in \text{\en}$ with $|\phi'| = O(|\phi|)$.
\end{lemma}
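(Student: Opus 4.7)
The plan is to mirror the inductive structure of \edml by mapping each unfinished inner formula to a neighborhood term of \en whose extension is exactly the set of vertices satisfying the formula, and mapping each finished inner formula to an \en formula. First, for every unfinished inner subformula $\psi$ of $\phi$, I would define a neighborhood term $t_\psi$ by structural induction so that for every interpretation $(G,\beta)$,
\[
\ip{t_\psi}^{(G,\beta)} \;=\; \{v \in V(G) \mid \ip{\psi}^{(G,\beta,v)} = 1\}.
\]
The cases are $t_X := X$, $t_\mathbf{P} := \mathbf{P}$, $t_{\square^r_d \psi'} := N^r_d(t_{\psi'})$, $t_{\neg \psi'} := \comp{t_{\psi'}}$ and $t_{\psi_1 \land \psi_2} := t_{\psi_1} \cap t_{\psi_2}$. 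Each clause is correct by the semantic definition of the corresponding \edml operator and the matching \en term operator; in particular, $N^r_d(t_{\psi'})$ collects exactly those vertices with at least $d$ elements of $\ip{t_{\psi'}}^{(G,\beta)}$ in their $r$-ball, which is the semantics of $\square^r_d \psi'$.

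Next, I would translate each finished inner subformula to an \en formula. The central rule is $\rangle \square_d \psi \mapsto |t_\psi| \ge d$, which directly expresses ``at least $d$ vertices satisfy $\psi$'' using the previously constructed term. The syntactic size measurements $|X| \le m$ and $|X| \ge m$ are kept verbatim, and the Boolean connectives $\neg$ and $\land$ between finished inner formulas are preserved. Finally, the outer block of existential set quantifiers of $\phi$ is carried over unchanged. Because in \edml the set quantifiers appear only at the outermost level, above every finished inner formula, the only negations in the resulting formula act on quantifier-free \en subformulas, so the restriction of \en permitting negation only on quantifier-free formulas is respected.

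A routine structural induction then shows that each construction step contributes only a bounded constant to the formula length, giving $|\phi'| = O(|\phi|)$ as required. The only point that requires care is the syntactic one just mentioned, namely verifying at each inductive step that the constructed expression is a valid \en term or formula, and in particular that negation is never applied to a subformula containing an \en set quantifier. This is immediate once one observes that \edml set quantifiers do not nest under negation and that every subformula produced during the translation of unfinished and finished inner subformulas is quantifier-free in the \en sense.
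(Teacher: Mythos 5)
Your proof is correct and takes essentially the same approach as the paper: both define, for each unfinished inner formula $\psi$, a neighborhood term whose extension is exactly the set of vertices satisfying $\psi$ (with $\square^r_d \mapsto N^r_d$, $\neg \mapsto \comp{\phantom{t}}$, $\land \mapsto \cap$), translate $\rangle\square_d\psi$ to a size measurement on that term, and carry the outer existential block over unchanged. The paper phrases this as exhaustive term substitution inside a mild syntactic extension of \edml, while you phrase it as a structural induction producing $t_\psi$, but the underlying translation is identical.
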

\begin{proof}
    We consider the extension of \edml using these two rules.
    \begin{itemize}
        \item Every neighborhood term $t$ is a unfinished inner formula with the semantics $\ip{t}^{(G,\beta,v)} = 1$ iff $v \in \ip{t}^{(G,\beta)}$.
        \item If $t$ is a neighborhood term and $m \in \N$ then $t \ge m$ is a finished inner formula with semantics
                $\ip{t \ge m}^{(G,\beta)} = 1$ iff 
                $|\ip{t}^{(G,\beta)}| \ge m$.
    \end{itemize}
    Note that for all graphs $G$, assignments $\beta$, $v \in V(G)$
    and neighborhood terms $t$ and $t'$,
    \begin{itemize}
        \item $\ip{t \lor t'}^{(G,\beta,v)} = \ip{t \cup t'}^{(G,\beta,v)}$,
        \item $\ip{\neg t}^{(G,\beta,v)} ~~~\, = \ip{\comp t}^{(G,\beta,v)}$,
        \item $\ip{\square^r_d t}^{(G,\beta,v)} ~\hspace{0.1cm} = \ip{N^r_d(t)}^{(G,\beta,v)}$,
        \item $\ip{\rangle\square_d t}^{(G,\beta)} ~~\hspace{0.1cm} = \ip{t \ge d}^{(G,\beta)}$.
    \end{itemize}
    Let us fix an \edml formula.
    We repeatedly substitute, in accordance with the list above,
    inner unfinished subformulas of the form $t \lor t'$ with $t \cup t'$; $\neg t$ with $\comp t$; and $\square^r_d t$ with $N^r_d(t)$.
    We also substitute all inner finished subformulas of the form $\rangle\square_d t$ with $t \ge d$.
    By the previous observation, these operations preserve the semantics of the formula.
    After doing these substitutions exhaustively, the result is a \en formula.
\end{proof}

The combination of \Cref{lem:convert_en_to_modal} and \Cref{lem:convert_modal_to_en} yields
the following equivalence.
\begin{theorem}
\en and \edml are equally expressive.
\end{theorem}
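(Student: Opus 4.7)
The plan is to derive this theorem as an immediate consequence of the two preceding conversion lemmas. To say that two logics are equally expressive means that for every formula in one logic there is an equivalent formula in the other, where equivalence means that the two formulas are satisfied by exactly the same interpretations $(G,\beta)$. So it suffices to exhibit translations in both directions that preserve semantics.

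The two directions are given directly by \Cref{lem:convert_en_to_modal} and \Cref{lem:convert_modal_to_en}. First I would invoke \Cref{lem:convert_en_to_modal}: for every $\phi \in \text{\en}$ it produces an equivalent $\phi' \in \text{\edml}$, which witnesses that \edml is at least as expressive as \en. Then I would invoke \Cref{lem:convert_modal_to_en} in the opposite direction: for every $\phi \in \text{\edml}$ it produces an equivalent $\phi' \in \text{\en}$, giving the other inclusion. Combining both inclusions yields that the two logics define the same class of properties on vertex-colored graphs, which is the statement of the theorem.

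There is essentially no obstacle here, since the substantive work is already done in the two lemmas. The only subtlety worth remarking on is that the conversion lemmas give a stronger quantitative statement: both translations incur only a linear blow-up in formula length, so the theorem is not merely an equivalence of expressive power but also preserves succinctness up to a constant factor. I would keep the proof to a single sentence observing that the theorem follows by combining the two conversion lemmas in both directions.
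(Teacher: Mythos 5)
Your proposal matches the paper's proof exactly: the paper derives this theorem as the immediate combination of \Cref{lem:convert_en_to_modal} and \Cref{lem:convert_modal_to_en}, giving the two directions of the equivalence. Your observation about linear blow-up in formula length is a correct and mildly useful aside, though the paper does not remark on it.
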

\section{Hardness of model checking for extensions}\label{sec:hardness}

In this section, argue that \en is at the edge of what is tractable on bounded mim-width, 
in the sense that already slight extensions of this logic are \paraNPhard to evaluate parameterized by formula length and mim-width.
In particular, we will show that two features present in the 
original existential counting modal logic \ecml
--- edge set quantification and ultimately periodic counting \cite[Section 3.1]{Pilipczuk11} ---
are not tractable on mim-width.
\cref{thm:hardness:extensions} follows directly from \cref{lem:edgeSetHard,lem:parityhard2,lem:oneUniversalHard} presented in this section.

\myparagraph{One universal quantifier}
We start by showing that one cannot add universal quantification to our neighborhood logic.
In particular, even if we allow only a single universal quantifier per formula, we can already express \textsc{Clique} by the sentence
\[
    \exists X \, \card{X} \ge k \wedge \forall X' \, \bigl( X' \cap X \neq \emptyset) \to  \bigl( X \subseteq X' \cup N_1^1(X') \bigr).
\]
Let \textsf{DN+$\forall$} be the logic obtained from \en by allowing a single innermost universal quantifier
but also requiring $R(\phi)=\{1\}$ and $d(\phi)=2$ for all formulas $\phi$.
Since the above sentence lies in \textsf{DN+$\forall$}, complements of planar graphs have linear mim-width at most $6$~\cite{BelmonteVatshelle2013},
and \textsc{Independent Set} is \NP-hard on planar graphs, we have the following observation.
\begin{lemma}\label{lem:oneUniversalHard}
    Model checking for
    \textsc{\textsf{DN+$\forall$}} is \paraNPhard 
	parameterized by formula length plus (linear) mim-width.
\end{lemma}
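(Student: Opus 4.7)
The plan is to give a direct polynomial-time reduction from \textsc{Independent Set} on planar graphs (which is NP-hard) to the model checking problem for \textsf{DN+$\forall$}, such that the produced instances have both constant formula length and constant (linear) mim-width. Together with the fact that \textsf{DN+$\forall$} is a well-defined logic with a single innermost universal quantifier and the given $R,d$ restrictions, this establishes para-NP-hardness.

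First, I would confirm that the displayed sentence
\[
\phi_k \;\equiv\; \exists X \, \card{X} \ge k \,\wedge\, \forall X' \, \bigl( X' \cap X \neq \emptyset \bigr) \rightarrow \bigl( X \subseteq X' \cup N_1^1(X') \bigr)
\]
belongs to \textsf{DN+$\forall$} and expresses that $G$ contains a clique of size at least $k$. For the forward direction, if $X$ is a clique and $X'$ hits $X$ at some vertex $v$, then every $u \in X$ is either equal to $v$ (so $u \in X'$) or adjacent to $v$ (so $u \in N_1^1(X')$). For the reverse direction, applying the universal constraint with the singleton $X' = \{v\}$ for each $v \in X$ shows that every other vertex of $X$ is adjacent to $v$, so $X$ is a clique. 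Crucially, the integer~$k$ occurring in a size measurement contributes only one symbol, so $|\phi_k|$ is a constant independent of $k$.

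Second, from a planar input graph $G$ and integer $k$ I would form the complement $\overline{G}$ in polynomial time and observe that $G$ has an independent set of size $k$ iff $\overline{G} \models \phi_k$. To complete the reduction, a suitable decomposition must accompany $\overline{G}$: by the result of Belmonte and Vatshelle~\cite{BelmonteVatshelle2013}, complements of planar graphs have (linear) mim-width at most~$6$, and a linear branch decomposition witnessing this bound is computable in polynomial time. Feeding the triple $(\phi_k, \overline{G}, \cL)$ into the model checking problem for \textsf{DN+$\forall$} therefore answers the planar \textsc{Independent Set} instance.

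Putting these ingredients together yields a polynomial-time many-one reduction from an NP-hard problem to model checking for \textsf{DN+$\forall$} on instances where the formula length and the (linear) mim-width are both bounded by a fixed constant, which is exactly para-NP-hardness parameterized by formula length plus (linear) mim-width. The only step that might be regarded as an obstacle is ensuring that the mim-width-$6$ decomposition can be produced in polynomial time, since para-NP-hardness genuinely requires the entire reduction (decomposition included) to be polynomial; this is however precisely what~\cite{BelmonteVatshelle2013} provides, so the argument goes through without further work.
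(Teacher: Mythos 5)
Your proposal is correct and takes essentially the same approach as the paper: express \textsc{Clique} via the displayed \textsf{DN+$\forall$} sentence, reduce from \textsc{Independent Set} on planar graphs by complementation, and invoke the mim-width bound of~$6$ for complements of planar graphs from~\cite{BelmonteVatshelle2013}. Your extra remarks verifying the semantics of $\phi_k$ and the polynomial-time computability of the mim-width decomposition are sound and merely flesh out what the paper leaves implicit.
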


Note that model checking becomes para-co\NP-hard if we allow a single \emph{outermost} (instead of innermost) universal quantifier,
as the complement of the \textsc{Clique}-sentence above falls within this logic.
Already with a single innermost universal \emph{vertex} quantifier we can express \textsc{Clique} via
$\exists X \, \card{X} \ge k \wedge \forall x \, \bigl( x \in X) \to  \bigl( X \subseteq N(x) \bigr)$,
while a single outermost universal vertex quantifier can always be evaluated with an additional run-time factor of $O(n)$.

\myparagraph{Edge Set quantification}
\ecml allows existential quantification over edge sets.
For every edge set $Y$, the logic allows the operator $Y$, which checks whether
the edge that was used to directly access the current active vertex belongs to $Y$.

A central problem in this section is the NP-hard~\cite{karp1972reducibility} \textsc{Max Cut} problem.
The input is a graph $G$ and an integer $k$ and the task is to decide whether there is a set $S \subseteq V(G)$
whose \emph{cut size} $|\{uv \in E(G) \colon u \in S \text{ xor } v \in S\}|$ is at least $k$.
If we add edge set quantification to \edml, we can construct the following formula
that holds if and only if a graph has a cut of size $k$
\[
    \exists_{\subseteq V} X \, \exists_{\subseteq E} Y \, |Y| \ge k \land \rangle\diamondsuit \Bigl( \bigl( 
    X \to \diamondsuit (Y \to \neg X)) \land \diamondsuit( \neg X \to \diamondsuit (Y \to X) \bigr) \Bigr).
\]
The subformula $\rangle\diamondsuit \bigl( ( X \to \diamondsuit (Y \to \neg X)) \land \diamondsuit( \neg X \to \diamondsuit (Y \to X)) \bigr)$
ensures that $Y$ contains exactly those edges with one endpoint in $X$ and one endpoint outside $X$.

We get the same expressive power if we extend \en with edge set quantification
and an operator $N_Y(t)$ that evaluates to all vertices that are adjacent to a term $t$ via an edge in an edge set $Y$.
With this extension, the following formula expresses that
a graph has a cut of size $k$
\[
    \exists_{\subseteq V} X \, \exists_{\subseteq E} Y \, |Y| \ge k \land N_Y(X)=\comp{X} \land N_Y(\comp{X})=X.
\]

Let \textsf{DN+EdgeSets} be the logic obtained from \en by adding edge set quantification and operators $N_Y(\cdot)$,
but also requiring $R(\phi)=\{1\}$ and $d(\phi)=2$ for all formulas $\phi$.
Since \textsc{Max~Cut} is NP-complete on interval graphs~\cite{adhikary},
the following observation follows immediately.
\begin{lemma}\label{lem:edgeSetHard}
    Model checking \textnormal{\textsf{DN+EdgeSets}} is
    \paraNPhard parameterized by formula length on interval graphs,
    and therefore \paraNPhard parameterized by formula length plus (linear) mim-width.
\end{lemma}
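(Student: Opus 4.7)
The plan is to use the constant-length \textnormal{\textsf{DN+EdgeSets}} sentence
$\phi_{\mathrm{MC}} := \exists_{\subseteq V} X \, \exists_{\subseteq E} Y \, |Y| \ge k \land N_Y(X)=\comp{X} \land N_Y(\comp{X})=X$
presented just before the statement, together with the \NP-hardness of \textsc{Max Cut} on interval graphs~\cite{adhikary},
to derive para-\NP-hardness of model checking directly.

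First, I would verify the correctness of $\phi_{\mathrm{MC}}$ in one short paragraph: given interpretations $X \subseteq V(G)$ and $Y \subseteq E(G)$, the operator $N_Y(X)$ evaluates to the set of vertices reachable from $X$ via an edge in $Y$. The constraint $N_Y(X) = \comp X$ forces every edge of $Y$ to have at least one endpoint in $X$ and its other endpoint outside $X$, and symmetrically $N_Y(\comp X) = X$ forces the same with roles swapped; together these constraints force $Y$ to consist entirely of edges in the cut induced by $X$, and moreover force every cut edge to belong to $Y$ (otherwise some cut-neighbor would be missed). Hence $G \models \phi_{\mathrm{MC}}$ iff $G$ admits a vertex bipartition with cut size at least $k$, i.e., iff $(G,k)$ is a yes-instance of \textsc{Max Cut}.

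Next, I would invoke the two external facts: \textsc{Max Cut} is \NP-hard on interval graphs~\cite{adhikary}, and interval graphs admit a \emph{linear} branch decomposition of mim-width at most $1$, which can moreover be computed in polynomial time~\cite{BelmonteVatshelle2013}. Writing $k$ in unary as the symbol $k$ in a single size measurement (recall from \Cref{sec:logic} that each number contributes only one symbol to $|\phi|$), the formula $\phi_{\mathrm{MC}}$ has length $O(1)$ independently of the input graph. The reduction from \textsc{Max Cut} on interval graphs to model checking \textnormal{\textsf{DN+EdgeSets}} therefore outputs the pair $(G, \phi_{\mathrm{MC}})$ in polynomial time, preserves yes/no answers, and keeps the formula length constant while handing over a graph with (linear) mim-width at most $1$.

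This yields para-\NP-hardness parameterized by formula length on interval graphs, and consequently para-\NP-hardness parameterized by formula length plus (linear) mim-width. There is no real obstacle to overcome: the whole argument is a direct composition of the semantic verification of $\phi_{\mathrm{MC}}$ with a known hardness result and the known mim-width bound for interval graphs. The only point to double-check carefully is that $\phi_{\mathrm{MC}}$ indeed enforces $Y$ to be \emph{exactly} the cut (not merely a subset), since we need the size measurement $|Y| \ge k$ to capture the cut size rather than an arbitrary subset of cut edges; this is precisely what the two equalities $N_Y(X) = \comp X$ and $N_Y(\comp X) = X$ guarantee.
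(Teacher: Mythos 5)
Your overall approach matches the paper exactly---the paper presents the same formula and invokes the same two external facts (NP-hardness of \textsc{Max Cut} on interval graphs~\cite{adhikary} and the linear-mim-width-$1$ property of interval graphs), stating the lemma ``follows immediately.'' Your proposal is a fleshed-out version of the same argument.

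However, your semantic verification of $\phi_{\mathrm{MC}}$ contains a real error, precisely at the step you flag as ``the only point to double-check.'' The two equalities do \emph{not} force $Y$ to contain every cut edge. Consider the $4$-cycle on $a,b,c,d$ with $X=\{a,c\}$: all four edges are cut edges, yet $Y=\{ab,cd\}$ already satisfies $N_Y(X)=\comp{X}$ and $N_Y(\comp{X})=X$. What the inclusions $\comp X \subseteq N_Y(X)$ and $X \subseteq N_Y(\comp X)$ actually enforce is that $Y$ is a subset of the cut that \emph{saturates} both sides, i.e.\ every vertex is incident to some edge of $Y$. Two consequences follow. First, the inference you need does not require $Y$ to be the full cut: in the direction $G \models \phi_{\mathrm{MC}} \Rightarrow$ cut of size $\ge k$, the inclusion $Y \subseteq \text{cut}(X)$ already gives $|\text{cut}(X)| \ge |Y| \ge k$. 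Second, in the converse direction you must justify why some witness $X$ admits a saturating $Y$ of the required size. The correct argument is: take $X$ to be one side of a \emph{maximum} cut; if any vertex $v$ of nonzero degree had all its neighbors on its own side, flipping $v$ would strictly increase the cut, contradicting maximality. Hence every vertex of nonzero degree lies on a cut edge, and taking $Y$ to be the full cut then satisfies both equalities with $|Y| \ge k$. This argument also reveals a small boundary case you did not address: $\phi_{\mathrm{MC}}$ is never satisfiable on a graph with an isolated vertex, so the reduction should first delete isolated vertices (a harmless, interval-graph-preserving step that does not change the maximum cut size). With these two repairs the proof is complete, and it is the proof the paper evidently has in mind.
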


Since \textsc{Max Cut} is further W[1]-hard parameterized by the clique-width of a graph~\cite{fomin2010algorithmic},
we get the following hardness result for clique-width.
\begin{lemma}\label{lem:edgeSetHardCW}
    There exists a constant $c$ such that
    evaluating \textnormal{\textsf{DN+EdgeSets}} formulas of length at most $c$
    \textnormal{W[1]}-hard parameterized by the clique-width of the input graph.
\end{lemma}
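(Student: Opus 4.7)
The plan is to exhibit a single, fixed \textsf{DN+EdgeSets} sentence $\phi_{\mathrm{MC}}$ that expresses \textsc{Max Cut}, and then transfer the known W[1]-hardness of \textsc{Max Cut} parameterized by clique-width directly. The sentence I would use is exactly the one displayed above the lemma, namely
\[
\phi_{\mathrm{MC}}(k) \;\equiv\; \exists_{\subseteq V} X \; \exists_{\subseteq E} Y \; |Y| \ge k \;\land\; N_Y(X) = \comp{X} \;\land\; N_Y(\comp{X}) = X.
\]
By inspection $\phi_{\mathrm{MC}}$ has $R(\phi_{\mathrm{MC}}) = \{1\}$ and $d(\phi_{\mathrm{MC}}) = 2$, so it is a legitimate formula of \textsf{DN+EdgeSets}, and its length is a fixed constant $c$ independent of $k$ (recall that the integer $k$ inside a size measurement contributes only one symbol to the formula length). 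As already discussed in the excerpt immediately above the lemma, for every graph $G$ and integer $k$ one has $G \models \phi_{\mathrm{MC}}(k)$ if and only if $G$ admits a cut of size at least $k$.

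The reduction is now immediate. Given an instance $(G,k)$ of \textsc{Max Cut}, I output the model-checking instance $(G, \phi_{\mathrm{MC}}(k))$ of \textsf{DN+EdgeSets}. The formula has length at most $c$, so the instance lies in the restricted model-checking problem from the statement. The parameter of the produced instance is $\cw(G)$, which equals the clique-width parameter of the source instance; the reduction therefore preserves the parameter exactly and runs in polynomial time. Correctness follows from the equivalence above. Since \textsc{Max Cut} is W[1]-hard parameterized by clique-width by Fomin, Golovach, Lokshtanov, and Saurabh~\cite{fomin2010algorithmic}, we conclude that model checking for \textsf{DN+EdgeSets} formulas of length at most $c$ is W[1]-hard parameterized by the clique-width of the input graph.

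There is essentially no obstacle to overcome beyond noticing that the Max Cut sentence we already wrote down has constant length and that the reduction is parameter-preserving; the heavy lifting is done entirely by the cited W[1]-hardness result for \textsc{Max Cut}. The same template (fixed-formula parameter-preserving reduction from a known hard problem) is what drives Lemma~\ref{lem:edgeSetHard} and Lemma~\ref{lem:oneUniversalHard}; here we simply substitute the clique-width hardness of \textsc{Max Cut} for its NP-hardness on interval graphs.
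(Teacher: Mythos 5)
Your proposal is correct and takes essentially the same route as the paper: the constant-length \textsf{DN+EdgeSets} sentence expressing \textsc{Max Cut} already given above the lemma, combined with the W[1]-hardness of \textsc{Max Cut} parameterized by clique-width from~\cite{fomin2010algorithmic}, yields the result via a trivially parameter-preserving reduction.
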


\myparagraph{Ultimately Periodic Counting}
A set $S \subseteq \N$ is \emph{ultimately periodic}
if there exist positive integers $N$, $k$ such that 
for all $n \ge N$, $n \in S$ iff $n+k \in S$.
\ecml allows operators $\square_S \phi$ which should be read as
``the number of neighbors with property $\phi$ of the current active vertex is in $S$''.
We obtain hardness results already
if we only allow the additional operator $\square_\text{even}$,
which we use as a synonym for $\square_S$ with $S=\{2i \colon i \in \N\}$.

Following the arguments in \Cref{lem:convert_en_to_modal} and \Cref{lem:convert_modal_to_en},
it is easy to see that 
in the context of neighborhood logic, this is equivalent
to adding an operator $N_\text{even}(\cdot)$ to \en,
where $N_\text{even}(t)$ stands for the set of all vertices with an even number of neighbors in $t$.
For example the formula $\exists X \, \comp X = \emptyset \land |N_\text{even}(X)| = k$
expresses that there are $k$ vertices with even degree.
Let the logic \textsf{DN+Parity} be obtained from \en by adding the operator $N_\text{even}(\cdot)$,
but also requiring $R(\phi)=\{1\}$ and $d(\phi)=2$ for all formulas $\phi$.
\begin{lemma}\label{lem:parityhard2}
    Model checking for \textnormal{\textsf{DN+Parity}} is
    \paraNPhard parameterized by formula length on interval graphs,
    and therefore \paraNPhard parameterized by formula length plus (linear) mim-width.
\end{lemma}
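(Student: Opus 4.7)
The plan is to reduce \textsc{Max Cut} on interval graphs, which is NP-hard by~\cite{adhikary}, to model checking a fixed \textsf{DN+Parity} formula on interval graphs. Given an instance $(G, k)$ where $G$ is an interval graph and $k$ an integer, I would construct in polynomial time an interval graph $G'$ such that $G' \models \phi_k$ iff $G$ admits a cut of size at least $k$, where $\phi_k$ is a formula in \textsf{DN+Parity} of constant length. Because numeric parameters count as one symbol in this logic, $|\phi_k|$ remains bounded independently of $|V(G)|$, giving para-NP-hardness parameterized by formula length.

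The construction of $G'$ augments $G$ by introducing, for each edge $e = uv$ of $G$, a \emph{witness} vertex $w_e$ whose only neighbors in $G'$ are $u$ and $v$. Original vertices carry a unary predicate $\mathbf{V}$ and witnesses carry $\mathbf{W}$. For any $X \subseteq V(G)$, the witness $w_e$ has exactly $|\{u, v\} \cap X|$ neighbors in $X$, which is even (equal to $0$ or $2$) when $e$ is not in the cut and odd (equal to $1$) when $e$ crosses the cut. Consequently the formula
\[
\phi_k = \exists X \, X \cap \mathbf{W} = \emptyset \land |\mathbf{W} \cap \comp{N_\text{even}(X)}| \ge k
\]
satisfies $G' \models \phi_k$ iff $G$ has a cut of size at least $k$. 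Since no $N_d^r$ operator appears in $\phi_k$, both $R(\phi_k) = \{1\}$ and $d(\phi_k) = 2$ hold by default, placing $\phi_k$ in \textsf{DN+Parity}.

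The main obstacle is to realize $G'$ as an interval graph. Starting from any interval representation of $G$, the plan is to modify the endpoints so that every edge $e = uv$ admits a \emph{private zone} --- a nonempty subregion of $I_u \cap I_v$ disjoint from every other interval --- into which the witness interval $I_{w_e}$ can then be placed to ensure that $w_e$ is adjacent precisely to $u$ and $v$. The existence of such a representation is the key geometric lemma, and may require preprocessing $G$ (e.g.\ eliminating true twins or other structural obstructions) while preserving the NP-hardness of Max Cut on the remaining instance. The heart of the argument will be this stretching construction: by rescaling endpoints to a large integer scale and inserting small dedicated gaps for each edge, one can arrange that no collection of other intervals entirely covers any intersection $I_u \cap I_v$. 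Once this lemma is in place, verifying that the resulting $G'$ is an interval graph and that $\phi_k$ correctly counts cut edges is routine.
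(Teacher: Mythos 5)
Your reduction hinges on the ``geometric lemma'' that one can always rearrange an interval representation of $G$ so that every edge $uv$ has a private zone in $I_u \cap I_v$ hit by no third interval. This lemma is false, and not merely for degenerate reasons. Take $K_4$: by the Helly property of intervals, all four intervals share a common point $p$. Scan the real line from $-\infty$ to $p$; the set of intervals containing the current point only grows (from $\emptyset$ to the full set), so exactly one pair-type $\{i,j\}$ appears as a cell to the left of $p$, namely the two intervals with the smallest left endpoints. Symmetrically, exactly one pair-type appears to the right. So at most $2$ of the $\binom{4}{2}=6$ edges of $K_4$ can have private zones in \emph{any} interval representation, regardless of stretching, rescaling, or perturbation of endpoints. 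Your witness vertices $w_e$ therefore cannot all be realized with the correct neighborhoods while keeping $G'$ an interval graph, unless you restrict to $K_4$-free interval graphs (or worse), and it is not established that Max Cut remains NP-hard on that subclass. The ``eliminate structural obstructions'' step you defer is in fact the whole problem.

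The paper avoids this entirely by not starting from the Max Cut graph $G$ as an interval graph at all. It reduces from \textsc{Exact Cut} to an intermediate problem, \textsc{Parity Interval Selection}, whose instances are \emph{rows} of disjoint intervals (one constraint row per edge, plus a selection row of tiny disjoint intervals identified with vertices). Because different constraint rows never overlap, the Helly obstruction never arises, and the desired adjacencies can be realized directly. The construction then inserts auxiliary twin-tuple and tiny intervals between consecutive intervals of each row, together with a target interval per row, precisely so that a solution candidate's parity information propagates from left to right along each constraint row and lands on the target interval. Your formula $\phi_k$ is in the right spirit (using $N_{\text{even}}$ to detect odd/even incidence counts and a size measurement for $k$), but the substantive content of the proof — the combinatorial design that makes the structure an interval graph with the right parities — is missing from your proposal.
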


\myparagraph{Hardness of Parity Extension}
The remainder of this section is concerned with proving \Cref{lem:parityhard2}.
In order to do so,
let us start by describing the NP-hard problem we reduce from.
We call this problem \textsc{Parity Interval Selection} and it is closely related to the NP-hard~\cite{karp1972reducibility} \textsc{Max Cut} problem.
We say a \emph{row} is a tuple of non-overlapping non-empty closed intervals of rational numbers,
such as for example $([0,3], [4,5.5], [6,6.1])$.
The input to the \textsc{Parity Interval Selection} problem is a number $k$, a set of rows (called \emph{constraint rows}) and an additional row (called \emph{selection row})
satisfying the following criteria.
For all the intervals in constraint rows, we require that the start- and endpoints are integers
and that no two intervals (of possibly different constraint rows) have the same start- or endpoints.
We further require every interval of the selection row to be of the form $[i+0.4,i+0.6]$ for some integer $i$.
For a constraint row $R$, the \emph{selection neighborhood} (denoted $N(R)$) is the set of intervals in the selection row that overlap with an interval in $R$.
The task of the \textsc{Parity Interval Selection} problem is to decide whether there exists a set of intervals $S$ from the selection row
such that there are exactly $k$ constraint rows $R$ for which $N(R) \cap S$ is odd.
See the top of \Cref{fig:intervalselection} for a visualization of a problem instance.

Let us consider the \textsc{Exact Cut} problem.
The input is a graph $G$ and an integer $k$ and the task is to decide whether there is a cut of size exactly $k$.
A simple Turing-reduction from the NP-hard~\cite{karp1972reducibility} \textsc{Max Cut} problem shows that \textsc{Exact Cut} is also NP-hard (by trying out all values for $k$).
We show that \textsc{Parity Interval Selection} is NP-hard by reducing from \textsc{Exact Cut}.
Let $(G,k)$ be an \textsc{Exact Cut} instance for which we will now construct an equivalent \textsc{Parity Interval Selection} instance.
The value $k$ in the \textsc{Parity Interval Selection} instance is the same as in the \textsc{Exact Cut} instance.
The selection row contains $|V(G)|$ many intervals that we identify with the vertices of the input graph.
For every edge $uv$, we create a constraint row $R_{uv}$ such that $N(R_{uv}) = \{u,v\}$ (this requires two intervals per constraint row).
We further shift all endpoints such that additionally no two intervals in any constraint row have the same start- or endpoints.
Now, for every set $S \subseteq V(G)$, 
the cut size of $S$ in $G$ (i.e., the number of edges with exactly one endpoint in $S$)
is equal to the number of constraint rows with exactly one selection neighbor in $S$.
Thus $(G,k)$ is a yes-instance of \textsc{Exact Cut} iff the corresponding \textsc{Parity Interval Selection} instance is a yes-instance.

\begin{proof}[Proof of \Cref{lem:parityhard2}]
We now show para-\NP-hardness of model checking for the parity extension of \en on interval graphs.
We are given an instance of \textsc{ParityIntervalSelection} consisting of a number $k$, a set of constraint rows $\cal C$ and one selection row $Q$.
The rows ${\cal C} \cup \{Q\}$ describe an interval graph $H$ with two types of vertices: selection intervals and constraint intervals.
We are going to construct a supergraph $G$ of $H$ by adding auxiliary selection and constraint intervals, as well as a third type of vertices, which call \emph{target intervals}.
Let $t$ be an integer such that no interval of any row from ${\cal C} \cup \{Q\}$ contains a number equal or larger to $t$.
We associate with each constraint row a so-called \emph{target interval} that is completely contained in $[t,\infty]$ (drawn red in \Cref{fig:intervalselection}).
We do it in such a way that the start- and endpoints of all target intervals are integers
and no two target or constraint intervals (also among different rows) have the same start- or endpoints.
Furthermore, for every constraint row $R \in \cal C$ with intervals $I_1,\dots,I_{l}$ in increasing order and corresponding target interval $I_{l+1}$,
and every $1 \le i \le l$, given $I_i=[a,b]$ and $I_{i+1} = [c,d]$, we add three additional intervals to $G$.
\begin{itemize}
    \item an auxiliary constraint interval $[b+0.1,b+0.2]$ (which we refer to as \emph{the tiny interval right of $I_i$ and left of $I_{i+1}$})
    \item auxiliary selection intervals $ [b+0.1,c-0.1]$ and $[b-0.1,c+0.1]$ (which we refer to as \emph{the twin tuple $(A,B) = ([b+0.1,c-0.1],[b-0.1,c+0.1])$ right of $I_i$ and left of $I_{i+1}$}).
\end{itemize}
\begin{figure}
\includegraphics[width=\textwidth]{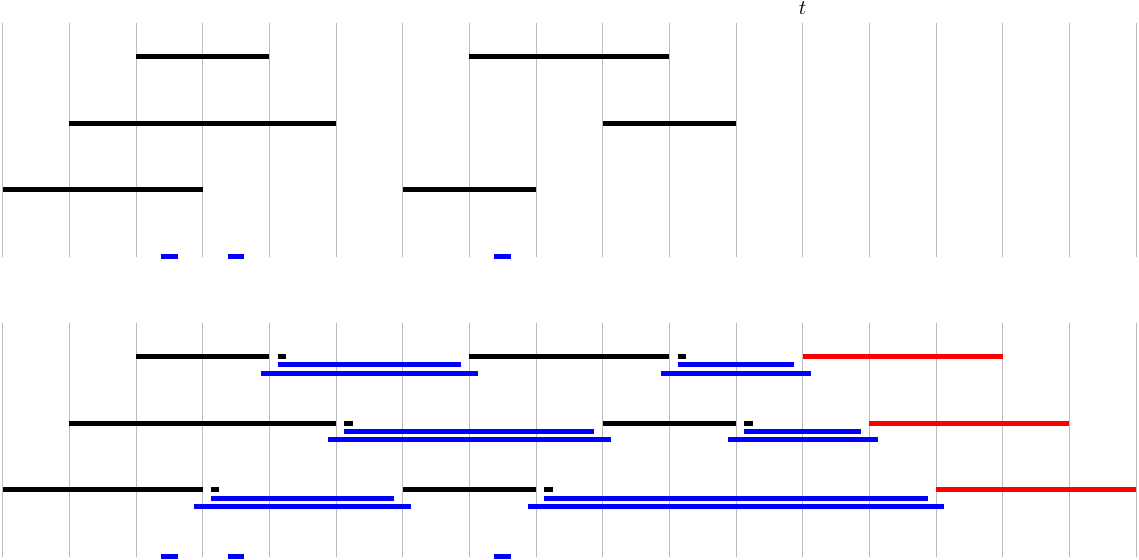}
\caption{Top: Visualization of an instance of the \textsc{ParityIntervalSelection} problem. Parity rows are black and the selection row is blue.
    It is a yes-instance iff $k \in \{0,1,2,3\}$.
Bottom: The corresponding interval graph constructed during the reduction to the model checking problem for the parity extension of \edml.
For every gap in a constraint row, two new selection intervals (blue) and one new constraint interval (black) are added. An additional target interval (red) is added at the end of each constraint row.}
\label{fig:intervalselection}
\end{figure}
The result is of this construction is shown at the bottom of \Cref{fig:intervalselection}.
The newly created interval graph $G$ has three types of vertices: 
selection intervals $\textbf{P}_\text{selection} \subseteq V(G)$, constraint intervals $\textbf{P}_\text{constraint} \subseteq V(G)$ and target intervals $\textbf{P}_\text{target} \subseteq V(G)$.
We say a set of selection intervals $S \subseteq \textbf{P}_\text{selection}$ is a \emph{solution candidate} in $G$
if every constraint interval has an even number of neighbors in $S$.

\begin{claim}\label{claim1}
    Let $S$ be a solution candidate in $G$.
    The target interval associated with a constraint row $R \in \cal C$ has an even number of neighbors from $S$ in $G$
    iff in the original \textsc{ParityIntervalSelection} instance $|N(R) \cap S'|$ is even, where $S'=S\cap V(H)$.
\end{claim}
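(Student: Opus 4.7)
The proof plan is a direct analysis of the interval overlap structure in $G$ combined with a telescoping parity argument. Fix a constraint row $R \in \mathcal{C}$ with original intervals $I_1, \ldots, I_l$ (in increasing order, $I_i = [a_i, b_i]$) and associated target interval $I_{l+1} = [c, d]$. For each $i \in \{1, \ldots, l\}$, let $T_i = [b_i + 0.1, b_i + 0.2]$ be the tiny auxiliary constraint interval in the gap between $I_i$ and $I_{i+1}$, and let $A_i = [b_i + 0.1, a_{i+1} - 0.1]$ and $B_i = [b_i - 0.1, a_{i+1} + 0.1]$ be the corresponding twin tuple (with the convention $a_{l+1} = c$). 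Write $N_i \subseteq Q$ for the set of original selection intervals overlapping $I_i$, so that $N(R) = \bigcup_{i=1}^l N_i$.

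First, I would catalog the selection-interval neighbors of each constraint interval associated with $R$ by direct inspection of endpoints. Using that original constraint intervals have integer endpoints, original selection intervals have the form $[j + 0.4, j + 0.6]$, and the auxiliary intervals carry offsets of $\pm 0.1$, one verifies:
\begin{enumerate}
    \item $T_i$'s selection-interval neighbors are exactly $A_i$ and $B_i$, as original selection intervals miss the tiny $\pm 0.1$-window around $b_i$ and twin tuples of other gaps lie elsewhere.
    \item $I_i$'s selection-interval neighbors are $N_i \cup \{B_i\}$ if $i = 1$, and $N_i \cup \{B_{i-1}, B_i\}$ if $i > 1$; the $A$-intervals of adjacent gaps fail to overlap $I_i$ because their endpoints are offset by $\pm 0.1$ past $I_i$.
    \item The target $I_{l+1}$ has $B_l$ as its unique selection-interval neighbor, since $A_l$ ends at $c - 0.1 < c$ and every other selection interval (original, or an $A$ or $B$ of some other gap or row) lies strictly below the threshold $t \le c$.
\end{enumerate}
A further small observation, again by inspecting endpoints, is that each original selection interval overlaps at most one $I_i$ within $R$, so the sets $N_1, \ldots, N_l$ are pairwise disjoint and $|N(R) \cap S'| = \sum_{i=1}^l |N_i \cap S'|$.

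Second, I would apply the solution-candidate hypothesis. The parity constraint at $T_i$ gives $[A_i \in S] \equiv [B_i \in S] \pmod{2}$, while the constraint at $I_i$ reads
\begin{equation*}
    |N_i \cap S'| + [B_{i-1} \in S] + [B_i \in S] \;\equiv\; 0 \pmod{2},
\end{equation*}
with the convention that $B_0$ is absent and contributes $0$. Summing over $i \in \{1, \ldots, l\}$, the indicators $[B_1 \in S], \ldots, [B_{l-1} \in S]$ each appear twice and cancel, leaving
\begin{equation*}
    |N(R) \cap S'| + [B_l \in S] \;\equiv\; 0 \pmod{2}.
\end{equation*}
Since $B_l$ is the unique selection-interval neighbor of the target, $I_{l+1}$ has an even number of neighbors in $S$ iff $B_l \notin S$, which by the displayed congruence is equivalent to $|N(R) \cap S'|$ being even, as claimed.

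The main technical obstacle will be the first step: verifying the exact lists of selection neighbors requires a careful case analysis that hinges on the specific fractional offsets $\pm 0.1$ and $\pm 0.4, \pm 0.6$, as well as on the threshold $t$ separating the target from the rest of $G$. This is mechanical but tedious, and once it is in place, the telescoping in the second step is immediate.
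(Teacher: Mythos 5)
Your neighbor catalog in Step~1 is incomplete, and this gap propagates through Step~2. Consider the twin tuple for a gap in a \emph{different} constraint row $R'$, say between $I'_j = [a'_j, b'_j]$ and $I'_{j+1} = [c'_j, d'_j]$. If the endpoint $b_i$ of $I_i$ in $R$ satisfies $b'_j < b_i < c'_j$, then both $[b'_j + 0.1, c'_j - 0.1]$ and $[b'_j - 0.1, c'_j + 0.1]$ overlap $T_i = [b_i + 0.1, b_i + 0.2]$, so item~(1) of your catalog is false. Analogous cross-row overlaps break items~(2) and~(3): twin tuples from other rows can be adjacent to $I_i$, and twin tuples from the last gaps of other rows whose targets start at $c' > c$ are adjacent to $I_{l+1}$. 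Consequently, the congruence you derive from the single solution-candidate constraint at $T_i$, namely $[A_i \in S] \equiv [B_i \in S] \pmod 2$, does not follow from that constraint alone: it omits the a~priori unknown parity contributed by the other-row twin-tuple members in $N(T_i)$. The same issue invalidates your displayed congruence at $I_i$.

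The paper closes this gap with an induction over all twin tuples of $G$ ordered by starting point. The statement that is actually true and suffices is weaker: for every tiny (or other constraint) interval and every twin tuple $(A,B)$, either both or none of $A,B$ are adjacent, \emph{except} for the local cases $B_{i-1}, B_i, B_l$ you correctly identified. Given that, the tiny interval of the $i$-th twin tuple has all its twin-tuple neighbors among the first $i$ in the ordering, so after the first $i-1$ tuples are settled (both or none in $S$) the even-parity constraint at that tiny interval forces $[A_i \in S] \equiv [B_i \in S]$. Once this ``both or none'' property is proved for \emph{all} twin tuples, your telescoping in Step~2 does go through and is a clean alternative to the paper's step-by-step deduction that $\sum_{j\le i}|N(I_j)\cap S'|$ is odd iff $B_i\in S$. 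The missing ingredient in your write-up is precisely this induction; without it, neither of the two congruences in Step~2 is justified.
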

\begin{proof*}[Proof of \Cref{claim1}]
    Assume the twin tuples of $G$ in ascending order by their starting point are $(X_1,Y_1), \dots, (X_m,Y_m)$.
    At first, we show by induction on $i$ that for all $j \le i$ either both or none of $X_j$, $Y_j$ are contained in $S$.
    For $i=0$, the statement is clear.
    Let $i>0$.
    We constructed $G$ such that for every tiny interval $T$ and every twin tuple $(A,B)$, $T$ is either adjacent to both or none of $A$, $B$.
    Let $T$ be the tiny interval with $X_i, Y_i \in N(T)$ and $N(T) \subseteq \{X_{1},Y_{1},\dots,X_{i},Y_{i}\}$.
    Such an interval exists, since no constraint intervals have the same start- and endpoints.
    By our induction hypothesis, all $j \le i$ either both or none of $X_j$, $Y_j$ are contained in $S$.
    Furthermore, $T$ is either adjacent to both or none of $X_j$, $Y_j$.
    Thus, $T$ is adjacent to an even number of intervals from
    $S \cap \{X_{1},Y_{1},\dots,X_{i-1},Y_{i-1}\}$.
    Since $S$ is a solution candidate, $T$ needs to have an even number of neighbors in $S$
    and therefore either both or none of $X_i, Y_i$ are contained in $S$.

    Let $S'$ be the subset of $S$ containing only intervals from the original \textsc{ParityIntervalSelection} instance (i.e., we exclude the twin intervals).
    Assume a constraint row $R$ consists of the intervals $I_1, \dots, I_l$ in ascending order, with corresponding target interval $I_{l+1}$.
    For $1 \le i \le l$ let $B_i$ be the longer of the two intervals in the twin tuple on the right of $I_i$.
    Since $S$ is a solution candidate, $|N(I_i) \cap S|$ is even for every $i$.
    This means $|N(I_i) \cap S'|$ is odd iff $|N(I_i) \cap S \setminus S'|$ is odd.
    $S \setminus S'$ contains only twin tuples.
    As discussed earlier, for all twin tuples $(A, B)$, either both or none of $A,B$ are contained in $S \setminus S'$.
    Furthermore, if $I_i$ is adjacent to exactly one of $A,B$ then $B=B_i$ or $B=B_{i-1}$.
    Therefore $|N(I_i) \cap S'|$ is odd iff $|N(I_i) \cap S \setminus S'|$ is odd iff $|S \cap \{B_i \cup B_{i-1}\}|=1$.
    Since $B_0$ does not exist, we have as a special case
    that $|N(I_1) \cap S'|$ is odd iff $B_1 \in S$.
    It follows by induction on $i$ that $\sum_{j=1}^i |N(I_j) \cap S'|$ odd iff $B_i \in S$.
    In particular, the constraint row $R \in \cal C$ has an even number of neighbors from $S$ in $G$,
    (i.e., $\sum_{j=1}^l |N(I_j) \cap S'|$ is even) iff $B_l \not\in S$.
    At last, observe that the target interval $I_{i+1}$ has an even number of neighbors from $S$
    iff $B_l \not\in S$.
\end{proof*}

By \Cref{claim1}, we can solve the original \textsc{ParityIntervalSelection} instance
by deciding whether there exists a solution candidate $S$ in $G$ such
that there are exactly $k$ target intervals with an odd number of neighbors in $S$.
This is expressible in the parity extension of \en by the sentence
\[
\phi_k \equiv \exists S\, \exists T\, \textit{candidate}(S) \land |T|=k \land T = N_\text{odd}(S) \cap \textbf{P}_\text{target}
\]
where 
\[
\textit{candidate}(S) \equiv S \subseteq \textbf{P}_\text{selection} \land \textbf{P}_\text{constraint} \subseteq N_\text{even}(S)
\]
and $\textbf{P}_\text{target}$, $\textbf{P}_\text{selection}$ and $\textbf{P}_\text{constraint}$
are unary properties describing the target, selection and constraint intervals.
The length of $\phi_k$ is constant and independent of $k$.
Thus model checking is para-\NP-hard on interval graphs for this logic.
\end{proof}

\bibliographystyle{plain}
\bibliography{ref}

\newpage
\appendix

\section{Proof of \Cref{lem:computerep}}

\computerep*
\begin{proof}
	Let $A\subseteq V(G)$, $d\in \N^+$ and $R\subseteq \bN^+$.
	For every $B\subseteq A$, we define the $(\comp{A},R)$-matrix $M_B$ such that $M_B[v,r]=\min(d,\abs{N^r(v)\cap B})$.
	Note that $B\equi{A}{d,R} C$ if and only if $M_B=M_C$.
	We claim that the lists $LRep,LMRep$ and the pointers computed by \Cref{algo:Rep} satisfied the conditions of the lemma.
	Observe that we add a set $C$ to $LRep$ only when $M_C$ is not contained in $LMRep$.
	Consequently, the sets in $LRep$ are pairwise incomparable for $\equi{A}{d,R}$.
	
	\begin{algorithm}[ht]
		\SetAlgoLined
		Initialize $LRep,LMRep,NextLevel$ to be empty and $LastLevel=\{\emptyset\}$\;
		\While{$LastLevel\neq \emptyset$}
		{
			\For{every $B\in LastLevel$ in lexicographic order}
			{
				\For{every vertex $v$ in $A\setminus B$ in lexicographic order}
				{
					$C=B\cup \{v\}$\;
					Computes $M_{C}$ from $M_{B}$\;
					\If{$M_{C}$ is not contained in $LMRep$}
					{
						Add $C$ to both $LRep$ and $NextLevel$\;
						Add $M_C$ to $LMRep$\;
						Add a pointer between $M_C$ and $C$\; 
					}
				}
			}
			Set $LastLevel=NextLevel$ and $NextLevel=\emptyset$\;
		}
		\Return $LRep$ and $LMRep$\;
		\caption{Computation of the data structures to represent $\Rep{A}{d,R}$ and compute $\rep{A}{d,R}$.}
		\label{algo:Rep}
	\end{algorithm}

	Assume towards a contradiction that $\Rep{A}{d,R}\not\subseteq LRep$.
	Let $U\in \Rep{A}{d,R}\setminus LRep$ such that $U$ is minimal for the inclusion.
	Let $v\in U$ and  $S_v= \rep{A}{d,R}(U\setminus \{v\})$.
	Since $S_v\equi{A}{d,R} U\setminus \{v\}$, 
	we have $S_v \cup \{v\}\equi{A}{d,R} U$.
	Since $S_v\in \Rep{A}{d,R}$, we deduce that $\abs{S_v}\leq \abs{U}-1$.
	Moreover, $S_v \cup \{v\}\equi{A}{d,R} U$ and $U\in \Rep{A}{d,R}$ implies that $\abs{U}\leq \abs{S_v} +1$.
	Hence, we have $\abs{S_v}=\abs{U\setminus \{v\}}$, this implies that $S_v$ is lexicographically smaller than $U\setminus \{v\}$ and $U$ is lexicographically smaller than $S_v\cup\{v\}$.
	We conclude that $S_v= U\setminus \{v\}$.
	As $S_v\in \Rep{A}{d,R}(A)$ and $U$ is chosen to be minimal in $\Rep{A}{d,R}\setminus LRep$, we have $S_v=U\setminus \{v\}\in LRep$.
	Consequently, $S_v\cup \{v\}=U$ is considered by the algorithm to be added to $LRep$.
	Thus, there is a set $U'$ added to $LRep$ such that $M_{U}=M_{U'}$.
	Since we consider the sets in $LastLevel$ and the vertices in $A$ in lexicographic order, we deduce that $U'=U$, yielding a contradiction.
	
	As $\Rep{A}{d,R}\subseteq LRep$ and the sets in $LRep$ are pairwise incomparable for $\equi{A}{d,R}$, we conclude that $LRep=\Rep{A}{d,R}$.
	
	It remains to prove the running times.
	The size of $\Rep{A}{d,R}=LRep$ is $\nec_{d,R}(A)$.
	The total number of sets considered by the algorithm is $\nec_{d,R}(A)\cdot n$.
	For every $B\in LRep$ and $v\in A\setminus B$, computing $C=B\cup \{v\}$ and $M_C$ can be done in time $O(n\cdot\abs{R})$ by updating the entries of the vertices $u\in \comp{A}$ such that $\dist(u,v)\leq r$.
	
	By implementing $LMRep$ with a self-balanced binary search tree, checking whether $M_C$ is in $LMRep$ and adding $M_C$ to $LMRep$ can be done in $O(\log(\nec_{d,R}(A))\cdot \abs{R} \cdot n)$.
	Hence, the construction of $LRep$ and $LMRep$ takes $O(\nec_{d,R}(A)\cdot \abs{R}\cdot n^2 \cdot \log(\nec_{d,R}(A)))$.
	Given $B\subseteq A$, we can compute $M_B$ in time $O(\abs{R}\cdot \abs{X}\cdot n)$ by consulting the distance matrix. Hence, finding a pointer to $\rep{A}{d,R}(B)$ can be done with a binary search in time $O(\log(\nec_{d,R}(A))\cdot \abs{R} \cdot n^2)$.
\end{proof}
\section{Omitted proofs from \Cref{ssec:A&C}}
In this section, we prove several lemmas for our model checking theorem on \ac \en.
We need the following lemma.

\begin{lemma}[\cite{BodlaenderCKN15}]\label{lem:basis:matrix} 
	Let $M$ be a binary $n \times m$-matrix with $m\leq n$ and let $w:\{1,\ldots,n\} \to \bN$ be a weight function on the rows of $M$. Then, one can find a basis of maximum weight of the row space of $M$ in time $O(nm^{O(1)})$.
\end{lemma}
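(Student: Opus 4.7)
The plan is to prove the lemma via a standard matroid-greedy argument combined with an efficient linear-independence test. First I would observe that the subsets of rows of $M$ that are linearly independent over $GF(2)$ form the independent sets of the linear matroid associated with $M$, and that bases of this matroid are exactly the maximal linearly independent subsets of rows, which span the row space of $M$. Maximizing the total weight of a basis of a matroid is the prototypical application of the matroid greedy (Rado--Edmonds) theorem, so the approach writes itself: sort the rows in non-increasing order of weight, and process them one at a time, adding each row to the current candidate $B$ whenever doing so preserves linear independence.

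Next, I would justify correctness by invoking the standard exchange-lemma proof of the matroid greedy theorem. Suppose $B$ is the greedy output and $B^*$ is any maximum-weight basis; order both by decreasing weight and use the matroid exchange axiom to argue, step by step, that the greedy prefix is always at least as heavy as the corresponding prefix of $B^*$. Since $|B| = |B^*| = \rank(M)$, this gives $w(B) \ge w(B^*)$, establishing optimality.

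The only substantive work is implementing the linear-independence test efficiently, but this is routine. I would maintain the current candidate basis in reduced row echelon form. When considering a new row $r$, I reduce $r$ against the current pivots in $O(m^2)$ time (each of at most $m$ pivots eliminates one coordinate of $r$ in $O(m)$ bit-operations); if the result is nonzero, I normalize it, install it as a new pivot, and clean it out of existing basis rows, all in $O(m^2)$ time. Sorting the rows costs $O(n \log n)$, and we perform $n$ such updates, for a total running time of $O(n \log n + nm^2) = O(nm^{O(1)})$, as claimed.

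The main obstacle I anticipate is essentially bookkeeping rather than conceptual: making sure the echelon-form maintenance is written cleanly enough that the bound $O(nm^{O(1)})$ is immediate without having to track hidden logarithmic factors arising from bit-length of entries. Since the matrix is over $GF(2)$, every arithmetic operation is a constant-time XOR, so this concern is cosmetic, and the overall proof is a short application of standard matroid theory plus Gaussian elimination.
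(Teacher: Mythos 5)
The lemma is cited from~\cite{BodlaenderCKN15} and the paper itself offers no proof, so there is no in-paper argument to compare against; however, your derivation is the standard one, matches the known proof in the cited reference, and is correct. The matroid-greedy reduction is exactly right: linear independence of rows over $GF(2)$ is a matroid, greedy on nonincreasing weight yields a maximum-weight basis by the exchange argument, and maintaining a reduced row echelon form gives an $O(m^2)$ independence test per row (with at most $m$ pivots since $m \le n$), so $O(n\log n + nm^2) = O(nm^{O(1)})$ overall. One small point worth being explicit about if you write this up: you tacitly use that the weights are nonnegative (they map into $\mathbb N$), which is what guarantees that the maximum-weight independent set returned by greedy is in fact a \emph{basis} rather than a proper independent subset; with signed weights one would have to argue this separately. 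Also, the stated $O(nm^{O(1)})$ bound implicitly treats weight comparisons as constant-time, consistent with the paper's machine model, so there is no hidden factor from weight bit-length to worry about.
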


Given a graph $G$ and $B \subseteq V(G)$, we denote by $\cc(B)$ the set of connected components of $G[B]$.
A \emph{consistent cut} of $ B$ is a ordered bipartition $(B_L,B_R)$ of $ B$ such that $N_1^1(B_R)\cap B_L=\emptyset$ (potentially $B_L$ or $B_R$ is empty). We denote by $\ccuts(B)$ the set of all consistent cuts of $B$. 
Whenever we use these notations, the graph $G$ is clear from the context.

\newcommand{\btB}{\tilde{\mathbb{B}}}

\lemRankbasedApproach*
\begin{proof}
	Let $\btE\in (\Rep{\comp{A}}{\phi})^{\abs{\var(\phi)}}$ and $\cB\subseteq \Pset(A)^{\abs{\var(\phi)}}$ such that the elements of $\cB$ are pairwise equivalent for $\equi{A}{\phi}$.
	We denote by $\btB$ the representative in $(\Rep{A}{\phi})^{\abs{\var(\phi)}}$ such that $\btB\equi{A}{\phi}  \tB$ for every $ \tB\in \cB$. Let $[\btE]$ be the set of all $ \tD\in \Pset(\comp{A})^{\abs{\var(\phi)}}$ such that $ \tD \equi{\comp{A}}{\phi} \btE$.
	
    If $\conn(\phi)=\emptyset$, then for every $\tD\in \Pset(\comp{A})^{\abs{\var(\phi)}}$, we have $\best_{\Pi_\phi}^{\conn}(\cB, \tD)= \max \{\obj_G(\tB)\mid \tB\in  \cB\}$ by definition of $\best_{\Pi_\phi}^{\conn}$. In this case, we return $\{\tB\}$ with $\tB\in \cB$ maximizing $\obj_G(\tB)$, this can be done in time $O(\abs{\cB}\cdot n \cdot \abs{\phi})$. 
	
	We suppose from now that $\conn(\phi)\neq\emptyset$.
	We assume w.l.o.g. that the variables $X_1,\dots,X_k$ in $\conn(\phi)$ are the first $k$ variables of $\var(\phi)$, in particular for every $i\in[k]$, $ \tB\in \cB$, and $ \tD\in [\btE]$, we have $\ip{\conn(X_i)}=1$ iff $G[B_i\cup D_i]$ is connected.
	
	As in \cite{BergougnouxKante2021}, we need to take care of the special cases when $D_i=\emptyset$ for some $i\in[k]$ and $\tD \in [\btE]$.
	This is due to the fact that in this case, for $G[B_i \cup D_i]$ to be connected, $G[B_i]$ already needs to be connected.
	In \cite{BergougnouxKante2021}, the authors are able to go over each special case since they prove how to handle at most two connectivity constraints.
	In this proof, we handle $2^{k}$ of these special cases, by considering for every $S\subseteq [k]$, a subset $[\btE]_S$ of $[\btE]$ satisfying some properties including that for every $\tD \in [\btE]_S$ we have $D_i=\emptyset$ for every $i\in S$.
	For every $S\subseteq [k]$, we compute a set $\cB_{S}^\conn$ such that 
	$\best_{\Pi_\phi}^{\conn}(\cB, \tD)=\best_{\Pi_\phi}^{\conn}(\cB_S^\conn, \tD)$ for every $ \tD\in [\btE]_S$.
	Then, we compute $\reduce^\conn_{\btE}(\cB)$ as the union of the sets $\cB_S^\conn$ over all $S\subseteq [k]$.
	
	We guarantee that $\best_{\Pi_\phi}^{\conn}(\cB, \tD)=\best_{\Pi_\phi}^{\conn}(\reduce^\conn_{\btE}(\cB), \tD)$ for every $ \tD\in [\btE]$ by proving that (1)~if $ \tD$ is not in any $[\btE]_S$, then $\best_{\Pi_\phi}^{\conn}(\cB, \tD)=-\infty$, i.e., there is not $ \tB\in \cB$ such that $G\models \phi_\conn( \tB\cup  \tD)$ or (2)~if $ \tD\in [\btE]_S$, then $\best_{\Pi_\phi}^{\conn}(\cB, \tD)=\best_{\Pi_\phi}^{\conn}(\cB^\conn_S, \tD)$.
	
	For every $S\subseteq [k]$, we compute $\cB_S^\conn$ from a subset $\cB_S$ of $\cB$ satisfying some properties including that for every $\tB \in \cB_S$ and $i\in S$, $G[B_i]$ is connected.
	This property guarantees that every $ \tB\in\cB_S$, $ \tD\in [\btE]_S$ and $i\in S$, we have $G[B_i\cup D_i]=G[B_i]$ is connected. Thus, computing $\cB_S^\conn$ from $\cB_S$ takes care of the constraints $\conn(X_i)$ for $i\in S$.
	
	To deal with the constraints $\conn(X_i)$ for $i\in [k]\setminus S$, we use the rank-based approach.
	The definitions of $[\btE]_S$ and $\cB_S$ guarantee the following property: for every $ \tB\in\cB_S$, $ \tD\in [\btE]_S$, and $i\in [k]\setminus S$, either $B_i= \emptyset$ or every connected component of $G[B_i]$ (resp. $G[D_i]$) has at least one neighbor in $D_i$ (resp. $B_i$).
	Not only is this property necessary for $G[B_i\cup D_i]$ to be connected but as we will see later in this proof, 
	it is also important for the rank-based approach to work correctly.

	For every $S\subseteq [k]$, we define $[\btE]_{S}$ as the set of all $ \tD\in[\btE]$ such that for every $i\in [k]$:
	\begin{itemize}
		\item If $i\in S$, then $D_i=\emptyset$.
		\item If $i\notin S$, then $D_i\neq \emptyset$ and $G[D_i]$ is connected or for each connected component $C$ of $G[D_i]$, we have $N_1^1(\bB_i)\cap V(C)\neq\emptyset$.
	\end{itemize}
	For every $S\subseteq [k]$,  we define by $\cB_{S}$ the set of all $ \tB\in \cB$ such that for every $i\in [k]$:
	\begin{itemize}
		\item If $i\in S$, then $G[B_i]$ is connected,
		\item If $i\notin S$, then either $B_i=\emptyset$ or for every connected component $C$ of $G[B_i]$, we have $N_{1}^{1}(\bE_i)\cap V(C)\neq\emptyset$.
	\end{itemize}
	The following fact proves the soundness of our strategy.
	\begin{claim}\label{fact:bE_S}
		For every $ \tD\in [\btE]$ such that $ \tD\notin \bigcup_{S\subseteq [k]} [\btE]_S$, we have $\best_{\Pi_\phi}^{\conn}(\cB, \tD)=-\infty$.	
		For every $S\subseteq [k]$ and $ \tD\in [\btE]_S$, we have $\best_{\Pi_\phi}^{\conn}(\cB, \tD)=\best_{\Pi_\phi}^{\conn}(\cB_{S}, \tD)$.%
	\end{claim}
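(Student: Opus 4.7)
The plan is to establish both halves of the claim by exploiting that all tuples in $\cB$ share the same $\equi{A}{\phi}$-class, and all $\tD \in [\btE]$ share the same $\equi{\comp A}{\phi}$-class, hence they share the same neighborhood profile across the cut. In particular, because $1 \in R(\phi)$ and $d(\phi) \ge 1$, the equivalence $\equi{A}{\phi}$ refines $\equi{A}{1,1}$; so for every $\tB \in \cB$ and $i$, $N_1^1(B_i) \cap \comp{A} = N_1^1(\bB_i) \cap \comp{A}$, and symmetrically $N_1^1(D_i) \cap A = N_1^1(\bE_i) \cap A$ for $\tD \in [\btE]$. Both parts reduce to exploiting this rigidity together with the defining conditions of $[\btE]_S$ and $\cB_S$.

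For the first part, I would begin by observing that the first bullet in the definition of $[\btE]_S$ forces $D_i = \emptyset$ \emph{exactly} for $i \in S$; thus given $\tD$ there is a unique candidate $S^* = \{i \in [k] : D_i = \emptyset\}$, and $\tD \in \bigcup_{S} [\btE]_S$ iff $\tD \in [\btE]_{S^*}$. If $\tD \notin [\btE]_{S^*}$, then some index $i \notin S^*$ witnesses that $G[D_i]$ is disconnected and some connected component $C$ of $G[D_i]$ satisfies $N_1^1(\bB_i) \cap V(C) = \emptyset$. By the equivalence above, $N_1^1(B_i) \cap V(C) = \emptyset$ for every $\tB \in \cB$, so $V(C)$ is closed under neighbors within $G[B_i \cup D_i]$ and thus a union of connected components of $G[B_i \cup D_i]$. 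Since $D_i \setminus V(C) \neq \emptyset$, the graph $G[B_i \cup D_i]$ is itself disconnected, so $\conn(X_i)$ fails and $\best_{\Pi_\phi}^{\conn}(\cB, \tD) = -\infty$.

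For the second part, the inclusion $\cB_S \subseteq \cB$ yields $\best_{\Pi_\phi}^{\conn}(\cB_S, \tD) \le \best_{\Pi_\phi}^{\conn}(\cB, \tD)$ immediately. For the converse, assuming $\best_{\Pi_\phi}^{\conn}(\cB, \tD) \neq -\infty$, I would take a witnessing $\tB \in \cB$ and verify $\tB \in \cB_S$. For $i \in S$ we have $D_i = \emptyset$, so $G[B_i] = G[B_i \cup D_i]$ is connected. For $i \notin S$, assume $B_i \neq \emptyset$ and fix a connected component $C$ of $G[B_i]$; since $D_i \neq \emptyset$ (by $\tD \in [\btE]_S$) and $G[B_i \cup D_i]$ is connected, $V(C)$ must have some neighbor in $(B_i \cup D_i) \setminus V(C)$ inside $G$, which cannot lie in $B_i$ by maximality of $C$ and therefore lies in $D_i$. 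Hence $N_1^1(D_i) \cap V(C) \neq \emptyset$, and by the equivalence $D_i \equi{\comp{A}}{1,1} \bE_i$ this gives $N_1^1(\bE_i) \cap V(C) \neq \emptyset$, as required.

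The proof is essentially bookkeeping; the only subtle point is the uniqueness of $S^*$ in the first part, which relies on reading the two bullets in the definition of $[\btE]_S$ as jointly pinning down which coordinates of $\tD$ are empty. No deeper obstacle is expected.
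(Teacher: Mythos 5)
Your proof is correct and follows essentially the same route as the paper's: both parts hinge on the observation that, since all $\tB\in\cB$ share an $\equi{A}{\phi}$-class and all $\tD\in[\btE]$ share an $\equi{\comp A}{\phi}$-class with $1\in R(\phi)$ and $d(\phi)\ge 1$, the across-cut neighborhoods $N_1^1(B_i)\cap\comp A$ and $N_1^1(D_i)\cap A$ are determined by $\btB$ and $\btE$ respectively. Your explicit remark about the uniqueness of $S^*$ and your presentation of the second half in the forward direction (rather than the paper's contrapositive) are only cosmetic differences.
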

	\begin{claimproof}
		Observe that by definition, we have $d(\phi)\geq 1$ and $1\in R(\phi)$.
		Since $\btE\equi{\comp{A}}{\phi} \tD$ for every $ \tD\in[\btE]$ and $\btB\equi{A}{\phi}  \tB$, for every $ \tB\in \cB$, we have $N^1_1(D_i)\cap A=N^1_1(\bE_i)\cap A$ and $N^1_1(B_i)\cap \comp{A}=N^1_1(\bB_i)\cap \comp{A}$ for every $i\in[k]$.
	
		Let $ \tD\in [\btE]$.
		First, suppose that $ \tD\notin \bigcup_{S\subseteq [k]} [\btE]_S$.
		This implies that there exists $i\in[k]$ such that $D_i\neq \emptyset$, $G[D_i]$ is not connected and there exists a connected component $C$ of $G[D_i]$ such that $N_1^1(\bB_i)\cap V(C)=\emptyset$.
		For every $ \tB\in \cB$, since $N^1_1(B_i)\cap \comp{A}=N^1_1(\bB_i)\cap \comp{A}$, we deduce that $C$ has no neighbors in $B_i$. Thus, for every $ \tB\in \cB$, the graph $G[B_i\cup D_i]$ is not connected and $G\not\models \phi_\conn( \tB\cup \tD)$.
		Consequently, we have $\best_{\Pi_\phi}^{\conn}(\cB, \tD)=-\infty$.
		
		Now suppose that there exists $S \subseteq [k]$ such that $ \tD\in [\btE]_S$.
		We prove that $\best_{\Pi_\phi}^{\conn}(\cB, \tD)=\best_{\Pi_\phi}^{\conn}(\cB_{S}, \tD)$ by showing that $G\not\models \phi_\conn( \tB\cup \tD)$ for every $ \tB\notin \cB_S$.
		Let $ \tB\in \cB \setminus \cB_S$.
		By definition of $\cB_S$, this means that one of the following cases is true:
		\begin{itemize}
			\item There exists $i\in S$ such that $G[B_i]$ is not connected.
			Since $ \tD\in [\btE]_S$, we have $D_i=\emptyset$ and thus $G[B_i\cup D_i]=G[B_i]$ is not connected.
			
			\item There exists $i\in [k]\setminus S$ such that $B_i\neq \emptyset$ and there exists a connected component $C$ of $G[ B_i]$ such that $N_1^1(\bE_i)\cap V(C)=\emptyset$.
			As $N^1_1(D_i)\cap A=N^1_1(\bE_i)\cap A$, we deduce that $C$ has not neighbors in $B_i$ and thus $G[B_i\cup D_i]$ is not connected.
		\end{itemize}
		In both cases, $G[B_i\cup D_i]$ is not connected and $G\not\models \phi_\conn( \tB\cup \tD)$.
	\end{claimproof}
	Before we show how to compute the sets $\cB_S^{\conn}$, we need to prove some properties on the following Boolean variables.
	For every $i\in [k]\setminus S$, $ \tB\in \cB_S$, $ \tD\in [\btE]_S$ and $L,R\in \Rep{\comp{A}}{1}$, we define the Boolean variables $\cX[B_i,L,R]$ and $\cY[L,R,D_i]$ such that:
	\begin{align*}
		\cX[B_i,L,R] &= 1  \text{ iff } \exists  (B_L,B_R)\in \ccuts(B_i)  \text{ such that } N_1^1(R)\cap B_L=\emptyset \wedge N_1^1(L)\cap B_R=\emptyset.\\
		\cY[L,R,D_i] &= 1  \text{ iff } \exists  (D_L,D_R)\in \ccuts(D_i)  \text{ such that } v_{D_i}\in D_L \wedge D_L\equi{\comp{A}}{1,1} L \wedge D_R \equi{\comp{A}}{1,1} R.
	\end{align*}
	\begin{claim}\label{fact:Xi}
		Let $S\subseteq [k]$.
		For every $i\in [k]\setminus S$, $ \tB\in \cB_S$ and $ \tD\in [\btE]_S$, we have
		\[ \sum_{L,R\in \Rep{\comp{A}}{1,1}} \cX[B_i,L,R]\cdot \cY[L,R,D_i] = 2^{\abs{\cc(B_i\cup D_i)} -1}. \]
	\end{claim}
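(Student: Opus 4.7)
My plan is to prove the identity by establishing a bijection between (a) consistent cuts $(X_L, X_R)$ of $G[B_i \cup D_i]$ with $v_{D_i} \in X_L$ and (b) pairs $(L, R) \in \Rep{\comp{A}}{1,1}^2$ such that $\cX[B_i, L, R] \cdot \cY[L, R, D_i] = 1$. Since the right-hand side $2^{|\cc(B_i \cup D_i)|-1}$ counts exactly the objects in (a) (each of the $|\cc(B_i\cup D_i)|-1$ components other than the one containing $v_{D_i}$ can independently be placed on either side), this immediately yields the claim. The map in the bijection is $(X_L, X_R) \mapsto (\rep{\comp{A}}{1,1}(X_L \cap D_i), \rep{\comp{A}}{1,1}(X_R \cap D_i))$.

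For surjectivity, given $(L, R)$ with $\cX \cdot \cY = 1$, I would take witnesses $(B_L, B_R)\in \ccuts(B_i)$ from $\cX$ and $(D_L, D_R)\in \ccuts(D_i)$ from $\cY$, and then argue that $(B_L \cup D_L, B_R \cup D_R)$ is a consistent cut of $B_i \cup D_i$: the four types of cross-edges (within $B_i$, within $D_i$, and the two mixed types) are each ruled out by one of the four defining conditions (the two consistency conditions plus the two cross-neighborhood emptiness conditions from $\cX$, translated via $D_L \equi{\comp{A}}{1,1} L$ and $D_R \equi{\comp{A}}{1,1} R$). Well-definedness of the map in the other direction is analogous and routine.

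The main obstacle is injectivity, and this is where the preconditions $\tB \in \cB_S$ and $\tD \in [\btE]_S$ with $i \in [k]\setminus S$ become essential. I would first observe that under these assumptions, every connected component $K$ of $G[B_i \cup D_i]$ satisfies $K \cap B_i \neq \emptyset$ and $K \cap D_i \neq \emptyset$: a component entirely in $B_i$ would violate the $\cB_S$-property that every component of $G[B_i]$ has a neighbor in $\bE_i$ (equivalently in $D_i$ by $\phi$-equivalence since $1 \in R(\phi)$), and a component entirely in $D_i$ would violate the analogous $[\btE]_S$-property. In particular, $K$ contains an edge $bd$ with $b \in B_i$ and $d \in D_i$.

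Now suppose two consistent cuts $(X_L, X_R)$ and $(X_L', X_R')$ map to the same $(L,R)$. If some component $K$ lies in $X_L$ but in $X_R'$, take the edge $bd$ inside $K$: then $d \in X_L \cap D_i$, so $b \in N_1^1(X_L \cap D_i) \cap A$. Because $\rep{\comp{A}}{1,1}(X_L \cap D_i) = L = \rep{\comp{A}}{1,1}(X_L' \cap D_i)$, we have $X_L \cap D_i \equi{\comp{A}}{1,1} X_L' \cap D_i$, so $b$ also has a neighbor in $X_L' \cap D_i$. But $b \in X_R'$, contradicting consistency of $(X_L', X_R')$. Hence every component is on the same side in both cuts, so $(X_L, X_R) = (X_L', X_R')$, completing the argument.
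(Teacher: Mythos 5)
Your bijection framing matches the paper's argument (the paper's three bullets are precisely: each $(L,R)$ with $\cX\cdot\cY=1$ has at least one associated consistent cut, at most one, and the count of consistent cuts with $v_{D_i}$ on the left is $2^{|\cc(B_i\cup D_i)|-1}$). However, your injectivity step has a genuine gap. The observation that every connected component $K$ of $G[B_i\cup D_i]$ satisfies $K\cap B_i\neq\emptyset$ and $K\cap D_i\neq\emptyset$ is false when $B_i=\emptyset$, and the definitions permit this: the $\cB_S$-condition for $i\notin S$ is ``\emph{either $B_i=\emptyset$ or} every component of $G[B_i]$ has a neighbor in $\bE_i$,'' not the unconditional version you cite, and the $[\btE]_S$-condition allows ``$G[D_i]$ connected'' as an alternative disjunct. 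When $B_i=\emptyset$ (hence $\bB_i=\emptyset$), $\tD\in[\btE]_S$ forces $G[D_i]$ to be connected, so $G[B_i\cup D_i]=G[D_i]$ is a single component lying entirely in $D_i$; the edge $bd$ with $b\in B_i$ on which your contradiction relies does not exist.

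The fix is short: when $B_i=\emptyset$ the unique consistent cut of $B_i\cup D_i$ with $v_{D_i}$ on the left is $(D_i,\emptyset)$ and $|\cc(B_i\cup D_i)|=1$, so both sides of the identity equal $1$. The paper makes this case distinction explicit (its ``case (A)'' versus ``case (B)''); your argument is essentially the proof of case (B), where $B_i\neq\emptyset$ and every component of $G[B_i]$ (resp.\ $G[D_i]$) has a neighbor in $D_i$ (resp.\ $B_i$). As a minor positive: you explicitly flag well-definedness of the map (that every consistent cut lands on a pair with $\cX\cdot\cY=1$) as a step to check, whereas the paper leaves that direction implicit.
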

	\begin{claimproof}
		Let $i\in [k]\setminus S$, $ \tB\in \cB_S$ and $ \tD\in [\btE]_S$.
		This claim follows from these observations:
		\begin{itemize}
			\item If $\cX[B_i,L,R]\cdot \cY[L,R,D_i] = 1$, then there exists $(C_L,C_R)\in \ccuts(B_i\cup D_i)$ with  $v_{D_i}\in C_L$ and $C_I\cap \comp{A} \equi{\comp{A}}{1,1} I$ for each $I\in \{L,R\}$.
			Indeed, suppose that there exist $(B_L,B_R)\in \ccuts(B_i)$ and $(D_L,D_R)\in \ccuts(D_i)$ satisfying the required properties so that $\cX[B_i,L,R]=1$ and $\cY[L,R,D_i]=1$ respectively.
			From the definition of $\equi{\comp{A}}{1,1}$, we deduce that $N_1^1(D_L)\cap B_R =\emptyset$ and $N_1^1(D_R)\cap B_L$. We deduce that $(C_L,C_R)=(B_L\cup D_L, B_R\cup D_R)\in \ccuts(B_i\cup D_i)$. 
			The definition of $\cY[L,R,E_i]$ implies that $v_{D_i}\in C_L$ and $C_I\cap A \equi{\comp{A}}{1,1} I$ for each $I\in \{L,R\}$.

			\item For every $L,R\in \Rep{\comp{A}}{1,1}$, there is at most one $(C_L,C_R)\in\ccuts(B_i\cup D_i)$ such that $v_{D_i}\in C_L$ and $C_I\cap \comp{A} \equi{\comp{A}}{1,1} I$ for each $I\in \{L,R\}$ (we say that $(C_L,C_R)$ \emph{is associated with} $(L,R)$). 
			By definition of $[\btE]_S$ and $\cB_S$ and since $ \tB\equi{A}{1,1} \btB$, we know that either (A)~$B_i=\emptyset$ and $G[D_i]$ is connected or (B)~every connected component $C$ of $G[B_i]$ (resp. $D_i$) has a neighbor in $D_i$ (resp. $B_i$).
			If (A) is true, then $(D_i,\emptyset)$ is the only consistent cut of $B_i\cup D_i$ with $v_{D_i}$ on the left side.
			
			Suppose that (B) is true.
			Let $L,R\in \Rep{\comp{A}}{1,1}$ and $(C_L,C_R)\in\ccuts(B_i\cup D_i)$ associated with $(L,R)$.
			Every connected component $C$ of $G[B_i]$ is either included in $C_L$ or $C_R$.
			Suppose that $C\subseteq C_L$.
			Since $C$ has a neighbor in $D_i$ and $N_1^1(C_L)\cap C_R=\emptyset$, we deduce that $N_{1}^{1}(C_L)\cap C =\emptyset$.
			As  $C_L \cap \comp{A} \equi{\comp{A}}{1,1} L$, we have  $N_{1}^{1}(L)\cap C \neq \emptyset$.
			Thus for every $(C_L',C_R')\in\ccuts(B_i\cup D_i)$ associated with $(L,R)$, we have $C\subseteq C_L'$. 
			Because this holds for every $C\in \cc(B_i)$, we  deduce that $C_L\cap A = C_L' \cap A$ and $C_R\cap A = C_R'\cap A$.
			We conclude that $(C_L,C_R)$ is the only consistent cut of $B_i\cup D_i$ associated with $(L,R)$ because every connected component of $G[D_i]$ has a neighbor in $B_i$.

			\item The number of consistent cuts $(D_L,D_R)\in \ccuts(B_i\cup D_i)$ with $v_{D_i}\in D_L$ is $2^{\abs{\cc(B_i\cup D_i)} -1}$
			because every connected component not containing $v_{D_i}$ can be on both side of these cuts.
		\end{itemize}
	\end{claimproof}
	\begin{claim}\label{claim:connectivity}
		For every $S\subseteq [k]$, we can compute in time $O(\abs{\phi}\cdot \abs{\cB}\cdot \nec_{1,1}(\comp{A})^{O(k)} \cdot n^2)$, %
		a subset $\cB^\conn_S$ of $\cB$ of size at most $\nec_{1,1}(\comp{A})^{2k}$ such that for every $ \tD\in [\btE]_S$, we have $\best_{\Pi^\conn_\phi}(\cB, \tD)=\best_{\Pi^\conn_\phi}(\cB^\conn_S, \tD)$.
	\end{claim}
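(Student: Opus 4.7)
\medskip

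\noindent\textbf{Proof proposal for Claim~\ref{claim:connectivity}.}
The plan is to extend the rank-based approach of Bodlaender et~al.~\cite{BodlaenderCKN15} from a single connectivity constraint to the $|[k]\setminus S|$ simultaneous constraints that remain after the special-case handling by $S$. The key idea is that Claim~\ref{fact:Xi}, read modulo~$2$, tells us that
\[
\sum_{L,R \in \Rep{\comp{A}}{1,1}} \cX[B_i,L,R]\cdot \cY[L,R,D_i] \equiv 1 \pmod 2
\]
if and only if $G[B_i \cup D_i]$ is connected (since the right-hand side of Claim~\ref{fact:Xi} is $1$ iff $|\cc(B_i\cup D_i)|=1$, and is a nonzero even power of two otherwise). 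Thus for every $i \in [k]\setminus S$, connectivity of $G[B_i\cup D_i]$ is captured by a single inner product over $\mathrm{GF}(2)$ between a vector depending only on $B_i$ and a vector depending only on $D_i$.

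First, for every $\tB \in \cB_S$ and every $i \in [k]\setminus S$, define the vector $\mathbf{x}_i(\tB) \in \mathrm{GF}(2)^{\Rep{\comp{A}}{1,1} \times \Rep{\comp{A}}{1,1}}$ whose $(L,R)$-entry is $\cX[B_i,L,R]$. Form the tensor product
\[
\mathbf{x}(\tB) \;=\; \bigotimes_{i \in [k]\setminus S} \mathbf{x}_i(\tB) \;\in\; \mathrm{GF}(2)^{\Rep{\comp{A}}{1,1}^{\,2(k-|S|)}} .
\]
Analogously, for $\tD \in [\btE]_S$, form $\mathbf{y}(\tD) = \bigotimes_{i \in [k]\setminus S} \mathbf{y}_i(\tD)$ with $(L,R)$-entry $\cY[L,R,D_i]$. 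By the multiplicativity of the tensor inner product,
\[
\mathbf{x}(\tB)\cdot \mathbf{y}(\tD) \;=\; \prod_{i \in [k]\setminus S} \mathbf{x}_i(\tB)\cdot \mathbf{y}_i(\tD),
\]
and by the mod-$2$ reading of Claim~\ref{fact:Xi}, this product over $\mathrm{GF}(2)$ is $1$ iff $G[B_i\cup D_i]$ is connected for every $i\in[k]\setminus S$. Combined with the fact that for $i\in S$ the graph $G[B_i\cup D_i]=G[B_i]$ is connected whenever $\tB\in\cB_S$ and $\tD\in[\btE]_S$, we conclude that
\[
\mathbf{x}(\tB)\cdot \mathbf{y}(\tD) = 1 \;\iff\; G \models \phi_\conn(\tB\cup\tD).
\]

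Next, build the $\mathrm{GF}(2)$-matrix $M$ whose rows are indexed by $\tB \in \cB_S$, with row $\mathbf{x}(\tB)$ and row weight $\obj_G(\tB)$, and invoke Lemma~\ref{lem:basis:matrix} to compute a maximum-weight basis of the row space of $M$. Return $\cB^\conn_S$ as the set of tuples corresponding to the chosen basis rows. The size bound is immediate: $|\cB^\conn_S|$ is at most the number of columns of $M$, which is $\nec_{1,1}(\comp{A})^{2(k-|S|)} \le \nec_{1,1}(\comp{A})^{2k}$. For the running time, each vector $\mathbf{x}_i(\tB)$ can be computed from the representatives in $\Rep{\comp{A}}{1,1}$ and $\cc(B_i)$ by checking, for every pair $(L,R)$ and every connected component, whether it is disjoint from $N_1^1(L)$ or from $N_1^1(R)$; in total the matrix $M$ is built in time $O(|\cB|\cdot\nec_{1,1}(\comp{A})^{O(k)}\cdot n^2)$, and Lemma~\ref{lem:basis:matrix} then yields the claimed bound, with the $\log(\maxWeight(G))$ factor absorbing arithmetic on the weights.

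Correctness follows from the standard matroid-exchange argument of the rank-based approach. Fix $\tD\in[\btE]_S$ and suppose $\best^\conn_{\Pi_\phi}(\cB_S,\tD)$ is attained by $\tB^\star\in\cB_S$ with $G\models \phi_\conn(\tB^\star\cup\tD)$. Then $\mathbf{x}(\tB^\star)$ lies in the row span of the basis $\cB^\conn_S$, so there is $I\subseteq \cB^\conn_S$ with $\mathbf{x}(\tB^\star)=\sum_{\tC\in I}\mathbf{x}(\tC)$ over $\mathrm{GF}(2)$; taking the inner product with $\mathbf{y}(\tD)$ gives some $\tC\in I$ with $\mathbf{x}(\tC)\cdot\mathbf{y}(\tD)=1$, i.e.\ $G\models \phi_\conn(\tC\cup\tD)$. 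The maximum-weight basis property of Lemma~\ref{lem:basis:matrix} forces $\obj_G(\tC)\ge\obj_G(\tB^\star)$: otherwise exchanging $\tC$ for $\tB^\star$ in the basis would strictly increase its total weight while preserving independence (using that $I\cup\{\tB^\star\}$ is linearly dependent), contradicting maximality. The main obstacle I expect is not the algebra but the careful verification that the dependency between $\cB_S$ and $[\btE]_S$ set up earlier in the proof is exactly what is needed both to apply Claim~\ref{fact:Xi} and to guarantee that nothing is lost outside $[\btE]_S$; this is why the restriction to $\cB_S$ (and not all of $\cB$) is essential, and it is the only non-routine bookkeeping in the argument.
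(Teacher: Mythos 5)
Your proof is correct and takes essentially the same route as the paper's. The paper builds the matrix $\cC_A$ with columns indexed by pairs $(L,R) \in (\Rep{\comp{A}}{1})^k \times (\Rep{\comp{A}}{1})^k$ and entries $\prod_{i\in[k]\setminus S}\cX[B_i,L_i,R_i]$, which is exactly your tensor-product vector $\mathbf{x}(\tB)$ written out coordinate-wise (the paper just pads with the unused coordinates in $S$, which is why its exponent is $2k$ rather than your tighter $2(k-|S|)$); the decomposition $\cM =_2 \cC_A\cdot\cC_{\comp{A}}$, the mod-2 reading of Claim~\ref{fact:Xi}, and the maximum-weight row-basis exchange argument via Lemma~\ref{lem:basis:matrix} all match the paper's proof.
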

	\begin{claimproof}
		Let $S\subseteq [k]$.
		For every $ \tD\in [\btE]_S$ and $i\in [k]\setminus S$, we let $v_{D_i}$ be a fixed vertex of $D_i$ (such vertex exists because $D_i\neq \emptyset$ by definition of $[\btE]_S$).
		Let $\cM$ be the $(\cB_S,[\btE]_S)$-matrix such that $\cM[ \tB, \tD]=1$ iff $G\models \phi_\conn( \tB\cup \tD)$.
		The set $\cB^\conn_S$ we want to compute is a basis of maximum weight
		of the row space of $\cM$ over the binary field where the weight of a row $ \tB\in \cB_S$ is $\obj( \tB)$.
		However, we cannot compute this basis from $\cM$ as this latter matrix is too big.
		We overcome this by (1)~defining two matrices $\cC_A$ and $\cC_{\comp{A}}$, (2)~proving that $\cM=_2\cC_A\cdot \cC_{\comp{A}}$ where $=_2$ denotes the equality modulo 2 and (3)~computing $\cC_A$ and $\cB^\conn_S$ as a row basis of maximum weight of the row space of $\cC_A$. The claimed running time and the upper bound on the size of $\cB_S^\conn$ follows from the size of $\cC_A$.
		Let $\cL\cR$ be the set of all pairs $( L, R)$ with $ L, R\in (\Rep{\comp{A}}{1})^k$.
		Let $\cC_A$ and $\cC_{\comp{A}}$ be, respectively the $(\cB, \cL\cR)$-matrix and the $(\cL\cR,[\btE]_S)$-matrix such that:
		\begin{equation*}
			\cC_A[ B,( L, R)]=\prod_{i\in [k]\setminus S} \cX[B_i,L_i,R_i]   \text{ and }	\cC_{\comp{A}}[( L, R), \tD]=\prod_{i\in [k]\setminus S} \cY[L_i,R_i,D_i].	
		\end{equation*}
		For every $ \tB\in \cB_S$ and $ \tD\in [\btE]_S$, we have 
		$(\cC_A\cdot \cC_{\comp{A}})[ \tB, \tD]= \prod_{i\in [k]\setminus S} 2^{\abs{\cc(B_i\cup D_i)} -1}$,
		which can be derived using \Cref{fact:Xi} as follows.
		Let $[k] \setminus S = \{i_1, \ldots, i_t\}$. For $i \in [k] \setminus S$, 
		we use the shorthands $x_i = \cX[B_i, L_i, R_i]$ and $y_i = \cY[L_i,R_i,D_i]$. By definition of $\cL\cR$, we deduce the following:	
		\begin{align*}
			(\cC_A\cdot \cC_{\comp{A}})[ B, \tD] 
				&= \sum\nolimits_{( L, R) \in \cL\cR} \prod\nolimits_{i \in [k] \setminus S} x_iy_i 
					= \sum\nolimits_{L_{i_j}, R_{i_j} \in \Rep{\comp{A}}{1,1}, j \in [t]} \prod\nolimits_{h \in [t]} x_{i_h}y_{i_h} \\
				&= \sum\nolimits_{L_{i_1}, R_{i_1} \in \Rep{\comp{A}}{1,1}}\sum\nolimits_{L_{i_j}, R_{i_j} \in \Rep{\comp{A}}{1,1}, j \in \{2,\ldots,t\}}x_{i_1}y_{i_1}\prod\nolimits_{h \in \{2,\ldots,t\}}x_{i_h}y_{i_h} \\
				&= \sum\nolimits_{L_{i_1}, R_{i_1} \in \Rep{\comp{A}}{1,1}} x_{i_1}y_{i_1}\cdot \left(\sum\nolimits_{L_{i_j}, R_{i_j} \in \Rep{\comp{A}}{1,1}, j \in \{2,\ldots,t\}}\prod\nolimits_{h \in \{2,\ldots,t\}}x_{i_h}y_{i_h}\right) \\
				&= 2^{\card{\cc(B_{i_1}\cup D_{i_1})} -1} \cdot \left(\sum\nolimits_{L_{i_j}, R_{i_j} \in \Rep{\comp{A}}{1,1}, j \in \{2,\ldots,t\}}\prod\nolimits_{h \in \{2,\ldots,t\}}x_{i_h}y_{i_h}\right) = \ldots \\
				&= \prod\nolimits_{j \in [t]} 2^{\card{\cc(B_{i_j}\cup D_{i_j})} -1} = \prod\nolimits_{i\in [k]\setminus S} 2^{\abs{\cc(B_i\cup D_i)} -1}.
		\end{align*}
		Observe that $(\cC_A\cdot \cC_{\comp{A}})[ \tB, \tD]$ is odd iff  $G[B_i\cup D_i]$ is connected for every $i\in [k]\setminus S$. 
		Since, for every $i\in S$, $D_i=\emptyset$ and $G[B_i]$ is connected, we deduce that $(\cC_A\cdot \cC_{\comp{A}})[ \tB, \tD]=_2 \cM[ \tB, \tD]$.
		
		\medskip
		
		Let $\cB^\conn_S \subseteq \cB_S$ be a basis of maximum weight of the row space of $\cC_A$ over the binary field (the weight of a row $ \tB$ being $\obj( \tB)$).
		We claim that for every $ \tD\in [\btE]_S$, we have $\best_{\Pi^\conn_\phi}(\cB, \tD)=\best_{\Pi^\conn_\phi}(\cB^\conn_S, \tD)$.
		Let $ \tD \in [\btE_S]$.
		If $\best_{\Pi^\conn_\phi}(\cB, \tD)=-\infty$, then as $\cB^\conn_S\subseteq \cB_S \subseteq \cB$, we have $\best_{\Pi^\conn_\phi}(\cB^\conn_S, \tD)= -\infty$.
		
		Assume now that $\best_{\Pi^\conn_\phi}(\cB, \tD)\neq -\infty$.
		By \Cref{fact:bE_S}, there exists $ \tB\in \cB_S$ such that $G\models \phi_\conn( \tB\cup  \tD)$ and $\obj( \tB)=\best_{\Pi^\conn_\phi}(\cB, \tD)$.
		As $\cB^\conn_S$ is a row basis of $\cC_A$, there exists a subset $\widehat{\cB}\subseteq \cB^\conn_S$ that generates the row of $ \tB$ in $\cC_A$, i.e. for every $( L, R)\in \cL\cR$, we have $\cC_A[ \tB, ( L, R)]=_2 \sum_{ \tB'\in \widehat{\cB}} \cC_A[ \tB', ( L, R)]$.
		Thus, we have the following equality
		\begin{align*}
			\cM[ B, \tD]=_2  (\cC_A\cdot \cC_{\comp{A}})[ B, \tD] =_2 &\sum_{( L, R)\in \cL\cR} \cC_A[ B,( L, R)]\cdot \cC_{\comp{A}}[( L, R), \tD]\\
			=_2& \sum_{( L, R)\in \cL\cR} \left( \sum_{ B'\in \widehat{\cB}} \cC_A[ B', ( L, R)]  \right) \cdot \cC_{\comp{A}}[( L, R), \tD]\\
			=_2 &\sum_{ B'\in \widehat{\cB}} \left( \sum_{( L, R)\in \cL\cR} \cC_A[ B', ( L, R)]  \right) \cdot \cC_{\comp{A}}[( L, R), \tD]\\
			=_2 &  \sum_{ B'\in \widehat{\cB}}(\cC_A\cdot \cC_{\comp{A}})[ B', \tD] =_2 \sum_{ B'\in \widehat{\cB}}\cM[ B', \tD].
		\end{align*}
		As $G\models \phi_\conn( \tB\cup  \tD)$, we have $\cM[ \tB, \tD]=1$ and from the above equality, we deduce that $\sum_{ \tB'\in \widehat{\cB}}\cM[ \tB', \tD]$ is odd.
		Consequently, there is at least one $ \tB^\star \in \widehat{\cB}$ such that $\cM[ \tB^\star, \tD]=1$.
		Let $ \tB^\star \in \cB^\conn_S$ such that $\cM[ \tB^\star, \tD]=1$ and $\obj( \tB^\star)$ is maximum.
		Since $\widehat{\cB}$ generates the row of $ \tB$ in $\cC_A$ and $ \tB^\star\in \widehat{\cB}$, we deduce that $(\widehat{\cB}\setminus \{ \tB^\star\})\cup \{ \tB\}$ generates the row of $ \tB^\star$ in $\cC_A$.
		Thus, $(\cB^\conn_S \setminus \{ \tB^\star\}) \cup \{  \tB\}$ is also a basis of the row space of $\cC_A$.
		Since $\cB^\conn_S$ is a basis of maximum weight, we deduce that $\obj( \tB)\leq \obj( \tB^\star)$.
		Because $\cM[ \tB^\star, \tD]=1$, we know that $G\models \phi_\conn( \tB^\star\cup  \tD)$ and thus $\obj( \tB^\star)\leq \best_{\Pi^\conn_\phi}(\cB^\conn_S, \tD)$.
		As $\obj( \tB)=\best_{\Pi^\conn_\phi}(\cB, \tD)$ and $\cB^\conn_S \subseteq \cB$, we conclude that $\best_{\Pi^\conn_\phi}(\cB, \tD)=\best_{\Pi^\conn_\phi}(\cB^\conn_S, \tD)$
		
		The size of $\cB^\conn_S$ is at most the size of $\abs{\cL\cR}=\nec_{1,1}(A)^{2k}$ because $\cB^\conn_S$ is a basis of the row space and thus its size is at most the number of columns of $\cC_A$.
		
		It remains to prove the running time. 
		First, from its definition, computing $\cB_S$ can be done in time $O(\abs{\cB}\cdot n^2)$. Computing $\Rep{\comp{A}}{1,1}$ can be done in time $O(\nec_{1,1}(\comp{A}) \cdot n^2 \cdot \log(\nec_{1,1}(\comp{A})))$ by \Cref{lem:computerep}.
		
		As proved in \Cref{lem:reducelogic}, we can compute $\obj(\tB)$ for all $\tB \in \cB$ in time $O(|\cB| \cdot n \cdot |\phi|)$.
		After computing these values, we can decide in constant time whether $\obj(\tC)<\obj(\tB)$ for arbitrary $\tB,\tC \in \cB$.
		
		Then, for every $i\in[k]\setminus S$, $ B\in \cB_S$ and $L,R\in \Rep{\comp{A}}{1,1}$, we can compute the boolean variable $\cX[B_i,L,R]$ in time $O(n^2)$ since $\cX[B_i,L,R]=1$ iff for every connected component $C$ of $G[B_i]$, either $N_1^1(L)\cap C= \emptyset$ or $N_1^1(R)\cap C=\emptyset$.
		As $k\leq \abs{\phi}$, we deduce that $\cC_A$ can be computed in time $O(\abs{\phi}\cdot \abs{\cB}\cdot \nec_{1,1}(\comp{A})^{2k} \cdot n^{2})$.
		By \Cref{lem:basis:matrix}, we can compute $\cB^\conn_S$ from $\cC_A$ in time $O(\abs{\cB_S}\cdot \abs{\cL\cR}^{O(1)})$.
		We deduce the claimed running time from the fact that $\abs{\cL\cR} = \nec_{1,1}(\comp{A})^{2k}$.
	\end{claimproof}
	For every $S\subseteq [k]$, we compute $\cB_{S}^\conn$ with \Cref{claim:connectivity}.
	We set $\reduce^\conn_{\btE}(\cB)=\bigcup_{S\subseteq [k]}\cB_S^\conn$.
	By \Cref{claim:connectivity}, we have $\abs{\reduce^\conn_{\btE}(\cB)} \leq \nec_{1,1}(\comp{A})^{2k} \cdot 2^{k}$ and $\reduce^\conn_{\btE}(\cB)$ can be computed in time 
	$O(\abs{\phi}\cdot \abs{\cB}\cdot \nec_{1,1}(\comp{A})^{O(k)} \cdot  2^{k} \cdot n^2)$
	Since $1\in d(\phi)$ and $1\in R(\phi)$, we have $\nec_{1,1}(\comp{A}) \leq \snec_{1,1}(A)$.
	As $k\leq \var(\phi) \leq \abs{\phi}$, we deduce that $\abs{\reduce^\conn_{\btE}(\cB)}\leq \nec_{1,1}(\comp{A})^{2\abs{\var(\phi)}} \cdot 2^{\abs{\phi}}$ and that $\reduce^\conn_{\btE}(\cB)$ can be computed in time 
	\[ 	O(\abs{\cB}\cdot \snec_{1,1}(A)^{O(\abs{\var(\phi)})} \cdot  2^{\abs{\phi}} \cdot n^2). \]
	
	It remains to prove that for every $ \tD\in [\btE]$, we have $\best_{\Pi_\phi}^{\conn}(\cB, \tD)=\best_{\Pi_\phi}^{\conn}(\reduce^\conn_{\btE}(\cB), \tD)$.
	Let $ \tD \in [\btE]$.
	If for every $S\subseteq [k]$, we have $ \tD\notin [\btE]_S$, then $\best_{\Pi_\phi}^{\conn}(\cB, \tD)=-\infty$ by \Cref{fact:bE_S} and since $\reduce^\conn_{\btE}(\cB) \subseteq \cB$ by construction, we deduce that $\best_{\Pi_\phi}^{\conn}(\reduce^\conn_{\btE}(\cB), \tD)$ is also equal to $-\infty$.
	
	If there exists $S\subseteq [k]$ such that $ \tD\in [\bE]_S$, then by \Cref{claim:connectivity} we have $\best_{\Pi_\phi}^{\conn}(\cB, \tD)=\best_{\Pi_\phi}^{\conn}(\reduce^\conn_{\btE}(\cB), \tD)$.
\end{proof}

To deal with the acyclicity constraints, we need to following definition and lemmata.
The following properties of $\equi{A}{n,1}$ are proved in \cite[Lemmata 3.5 and 3.6]{BergougnouxKante2021}.
For every graph $G$ and pair $(\tB,D)$ of disjoint subsets of $V(G)$, we denote by $E(B,D)$ the set of edges with one endpoint in $B$ and the other in $D$.

\begin{lemma}[\cite{BergougnouxKante2021}]\label{lem:n:neighbor:eq}
	Let $G$ be an $n$-vertex graph, $A\subseteq V(G)$.
	The number of $(n,1)$-neighbor equivalence class over the subsets of $A$ of size at most $t$ is upper bounded by 
	$2^{t\cdot \rw(A)}$. For every $B,C\subseteq A$ and $D\subseteq \comp{A}$, if $B\equi{A}{n,1} C$, then $\abs{E(B,D)}=\abs{E(C,D)}$.%
\end{lemma}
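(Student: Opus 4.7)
The statement splits into two claims, which I would prove separately. For the first, the essential observation is that on subsets of $A$, the relation $B \equi{A}{n,1} C$ is just equality of the integer vector $f(B) := (\abs{N(v) \cap B})_{v \in \comp{A}}$ with $f(C)$: since each coordinate is at most $n$, the cap $\min(\cdot, n)$ in the definition of $\equi{A}{n,1}$ never actually binds. Hence the number of $(n,1)$-equivalence classes restricted to subsets of size at most $t$ equals the number of distinct values that $f(B)$ takes over such $B$.

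To bound this quantity by $2^{t\cdot \rw(A)}$, let $r := \rw(A)$ and let $\mathcal{R} \subseteq GF(2)^{\comp{A}}$ denote the $GF(2)$-row space of $M_{A, \comp{A}}$. By definition $\abs{\mathcal{R}} = 2^r$, and $\mathcal{R}$ contains both the zero vector and every row $\mathrm{row}(a)$ (viewed as a $0/1$ vector indexed by $\comp{A}$). Fix an arbitrary total order on $A$. For each $B = \{b_1 < \cdots < b_s\}$ with $s \leq t$, I would associate a tuple $T_B \in \mathcal{R}^t$ whose first $s$ entries are $\mathrm{row}(b_1), \ldots, \mathrm{row}(b_s)$ and whose remaining $t - s$ entries are the zero vector. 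Since $f(B) = \sum_{i=1}^{t} T_B[i]$ viewed as an integer-valued sum of $0/1$ vectors, $T_B$ determines $f(B)$ and hence the equivalence class of $B$. The number of classes is therefore at most $\abs{\mathcal{R}^t} = 2^{tr}$, yielding the first claim.

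For the second claim the plan is a simple double count: $\abs{E(B,D)} = \sum_{v \in D} \abs{N(v) \cap B}$, and since $D \subseteq \comp{A}$, the hypothesis $B \equi{A}{n,1} C$ gives $\abs{N(v) \cap B} = \abs{N(v) \cap C}$ term by term, so the two sums agree.

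I do not anticipate a serious obstacle in this argument. The one point I would be careful about is that the padding entries in $T_B$ really do live in $\mathcal{R}$, which is immediate because $\mathcal{R}$ is a linear subspace and therefore contains $0$. Everything else is an elementary counting argument together with a double-counting identity.
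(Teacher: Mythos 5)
Your proof is correct, and it captures what the cited reference [BergougnouxKante2021] does: since the cap in $\min(\cdot,n)$ never binds on $n$-vertex graphs, $(n,1)$-equivalence over subsets of $A$ is just equality of the integer vector $(\abs{N(v)\cap B})_{v\in\comp{A}}$, which is recovered as the (integer) sum of the rows $\mathrm{row}(b_i)$; associating each $B$ of size $\le t$ with the padded $t$-tuple of those rows, all of which lie in the $GF(2)$-row space of $M_{A,\comp{A}}$, gives the $2^{t\cdot\rw(A)}$ bound, and the double count $\abs{E(B,D)}=\sum_{v\in D}\abs{N(v)\cap B}$ settles the second part. Note that this lemma is stated in the present paper as a citation without proof, so the only comparison available is to the external source, with which your argument agrees.
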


In the next proof, we use the following lemma due to Bergougnoux et al.~\cite{BergougnouxPapadopoulosTelle2020}.
\begin{lemma}[Lemma 4 in~\cite{BergougnouxPapadopoulosTelle2020}]\label{lem4:bpt20}
	Let $G$ be a graph and $X$ and $Y$ be two disjoint subsets of $V(G)$. If $G[X \cup Y]$ is a forest, then the number of vertices in $X$ that have two neighbors in $Y$ is at most $2w$, where $w$ denotes the size of a maximum induced matching in $G[X \cup Y]$.
\end{lemma}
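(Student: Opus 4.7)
The claim would follow once we exhibit an induced matching in $G[X \cup Y]$ of size at least $|X_2|/2$, where $X_2 := \{x \in X \colon |N(x) \cap Y| \geq 2\}$. The natural starting point is to restrict attention to the bipartite subforest $F$ whose vertex set is $X_2 \cup N_Y(X_2)$ and whose edges are precisely the $G$-edges between $X_2$ and $Y$. Since $G[X \cup Y]$ is a forest by hypothesis, so is $F$, and by definition every $x \in X_2$ has $F$-degree at least $2$.

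I would then work tree-component by tree-component. Inside a component $T$ of $F$, the handshake identity $\sum_{v \in V(T)} \deg_T(v) = 2(|V(T)|-1)$ combined with the lower bound $\deg_T(x) \geq 2$ for every $x \in X_2 \cap V(T)$ forces $T$ to have at least two leaves, all of which lie in $Y$. Root $T$ at such a $Y$-leaf and traverse the $X_2$-vertices in post-order. Greedily, whenever an unblocked $x \in X_2$ is met, select a matching edge from $x$ to one of its $Y$-children (which exists because, having post-order degree at least $2$, $x$ has at least one $Y$-descendant) and mark both $x$ and the closest $X_2$-ancestor of $x$ in $T$ as blocked. Each matching edge thus ``consumes'' at most two $X_2$-vertices, giving a matching in $F$ of size at least $\lceil |X_2|/2 \rceil$.

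The main obstacle is not this counting but the gap between being induced in $F$ and being induced in $G[X \cup Y]$: the latter may carry extra edges within $X$ or within $Y$ that connect two already chosen matched edges. Here the global forest assumption on $G[X \cup Y]$ is critical: among any four vertices, $G[X \cup Y]$ contains at most three edges, so at most one spurious edge can link two prospective matched pairs. I would refine the greedy step to require that successive matched edges lie at pairwise distance at least two in $G[X \cup Y]$; this is always achievable by pushing the chosen $Y$-child one level further into the subtree whenever a conflict arises, which is possible because each subtree under consideration still contains at least two $Y$-leaves and the blocking from the previous step affects only one ancestral branch. The outcome is a genuine induced matching in $G[X \cup Y]$ of size at least $|X_2|/2$, whence $w \geq |X_2|/2$, i.e. $|X_2| \leq 2w$.
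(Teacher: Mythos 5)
Since the paper cites this lemma from an external source without reproving it, there is no internal proof to compare against; I assess your argument on its own. Your greedy in the bipartite subforest $F$ does yield a matching of size at least $|X_2|/2$ that is induced \emph{in $F$}, and the difficulty is, as you flag, in upgrading it to a matching induced in $G[X\cup Y]$. This is where there is a genuine gap. The proposed repair, ``pushing the chosen $Y$-child one level further into the subtree,'' is not an available move: if $x\in X_2$ has $F$-degree exactly $2$ then, after rooting, $x$ has a unique $Y$-child, and ``one level further'' lands on an $X$-vertex, which is not a neighbor of $x$ and so cannot replace the matching edge. The claim that ``each subtree under consideration still contains at least two $Y$-leaves'' is also false in general, since the subtree rooted at $x$ may be a single edge. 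Concretely, let $G[X\cup Y]$ be the path with edges $y_1 x_1$, $x_1 y_2$, $y_2 y_3$, $y_3 x_2$, $x_2 y_4$, where $X=\{x_1,x_2\}$ and $Y=\{y_1,y_2,y_3,y_4\}$, so $X_2=\{x_1,x_2\}$; the subforest $F$ has two components, and if the second is rooted at $y_4$ then the only $Y$-child of $x_2$ is $y_3$, forcing the pair $\{x_1 y_2,\, x_2 y_3\}$, which is not induced in $G[X\cup Y]$ because of the $Y$--$Y$ edge $y_2 y_3$, and no alternative child is available.

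A complete argument works directly in $G[X\cup Y]$ rather than in $F$. Root each component of $G[X\cup Y]$ arbitrarily. Every $x\in X_2$ has at least two $Y$-neighbors and at most one parent, hence a child $c(x)\in Y$; the edges $x\,c(x)$ for $x\in X_2$ form a matching, since each $c(x)$ has unique parent $x$. Define a conflict graph $B$ on $X_2$ by putting $x\sim_B x'$ iff $G[X\cup Y]$ contains an edge between $\{x,c(x)\}$ and $\{x',c(x')\}$ other than the two matching edges (as you already observe, a forest has at most three edges on any four vertices, so at most one such conflict edge exists per pair). Then $B$ is itself a forest: a cycle $x_1 x_2\cdots x_\ell x_1$ in $B$ with $\ell\ge 3$ would, together with the $\ell$ matching edges $x_i\,c(x_i)$, yield $2\ell$ pairwise distinct edges on the $2\ell$ pairwise distinct vertices in $\{x_i,c(x_i)\}_{i\in[\ell]}$, contradicting acyclicity of $G[X\cup Y]$. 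Being a forest, $B$ is bipartite and thus has an independent set $S\subseteq X_2$ of size at least $|X_2|/2$, and $\{x\,c(x) : x\in S\}$ is then an induced matching in $G[X\cup Y]$. Hence $w\ge|X_2|/2$, i.e., $|X_2|\le 2w$.
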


\ACproperties*
\begin{proof}
	Let $ B\subseteq A$ and $ D\subseteq \comp{A}$ such that $ D\equi{\comp{A}}{2,1} \bE$ and $G[ B\cup D]$ is a tree. 	Obviously, $G[ B]$ is a forest.
	Since $ D\equi{\comp{A}}{2,1} \bE$, the vertices in $ B$ with exactly one (resp. at least two) neighbors in $ D$ are the same as those with exactly one (resp. at least two) neighbors in $\bE$.
	
	Thanks to \cref{lem4:bpt20}, we know that $\abs{ B^{2+}_{\bE}}$ is at most $2w$ where $w$ is the size of a maximum induced matching in the bipartite graph $G[ B, D]$. 
	As $G[ B, D]$ is an induced subgraph of $G[A,\comp{A}]$, we deduce that  $\abs{ B^{2+}_{\bE}}\leq 2\mim(A)$.
	
	Suppose that two vertices $u,v\in  B^{1}_{\bE}\cup B^{2+}_{\bE}$ have the same neighborhood in $\comp{A}$.
	Observe that $u$ and $v$ have at least one common neighbor in $ D$. Since $G[ B\cup  D]$ is a tree, we deduce that $u$ and $v$ are not connected in $G[ B]$ and $u,v\notin  B^{2+}_{\bE}$.
	
	Assume towards a contradiction that there exists a cycle $(C_1,D_1,C_2,\dots,C_\ell,D_\ell)$ in $H_{\bE, B}$ with $C_1,\dots,C_\ell\in \cc( B)$ and $D_1,\dots,D_\ell\in \{N_1^1(v)\cap \comp{A}\mid v\in  B^{1}_{\bE}\}$.
	For every $i\in [\ell]$, $C_i$ contains two vertices $s_{C_i}, t_{C_i}\in  B^{1}_{\bE}$ such that $N_1^1(s_{C_i})\cap\comp{A}=D_{i-1}$ and $N_1^1(t_{C_i})\cap \comp{A}= D_{i}$ (we consider that $D_0=D_\ell$).
	By definition of $ B^{1}_{\bE}$, for every $i\in [\ell]$, $s_{C_i}$ and $t_{C_{i+1}}$ have a common neighbor in $ D$ (we consider that $C_{\ell+1}=C_1$).
	Since $s_{C_i}$ is connected to $t_{C_i}$ in $G[ B]$ for every $i\in [\ell]$, we deduce that $G[ B\cup D]$ contains a cycle, yielding a contradiction.
	
	If $\bE\neq \emptyset$, then $ D\neq \emptyset$ and every connected component of $G[ B]$ must have a neighbor in $ D$ (if $ B=\emptyset$, this is true by vacuity), otherwise $G[ B\cup D]$ would not be connected.
	We conclude that$  B$ satisfies all the properties to be in $\forest(A,\bE)$.	
\end{proof}

\acyeq*
\begin{proof}
	We define $\sim^\acy_\bE$ such that $ B\sim^\acy_\bE  C$ if 
	\begin{itemize}
		\item $ B^{2+}_{\bE} \equi{A}{n,1} C^{2+}_{\bE}$, and
		
		\item $\abs{ B^1_\bE} - \abs{\cc( B)} = \abs{ C^1_\bE} - \abs{\cc( C)}$.
	\end{itemize}
	\myparagraph{Proof of Property~\ref{lem:acy:eq:1}} Let $ B, C\in \forest(A,\bE)$ and $ D\subseteq \comp{A}$ such that  $ B\sim^\acy_\bE  C$ and $ D\equi{\comp{A}}{2,1} \bE$. 
	Assume that $G[ B\cup D]$ is a tree. This implies that $\abs{E( B, D)}= \abs{\cc( B)} + \abs{\cc( D)} -1$.
	Since $ D\equi{\comp{A}}{2,1} \bE$, for each $ D\in\{ B, C\}$, the vertices in $ D$ with exactly one neighbor (resp. at least two neighbors) in $ D$ are the vertices in $ D$ with exactly one neighbor (resp. at least two neighbors) in $ D$.
	Consequently, for each $ D\in\{ B, C\}$ we have 
	\begin{equation}\label{eq:acy}
		\abs{E( D, D)}= \abs{E( D^{2+}_{\bE}, D)} + \abs{E( D^1_\bE, D)} = \abs{E( D^{2+}_{\bE}, D)} + \abs{ D^1_\bE}.
	\end{equation}
	As $ B^{2+}_{\bE} \equi{A}{n,1} C^{2+}_{\bE}$, by \Cref{lem:n:neighbor:eq},  we have $\abs{E( B^{2+}_{\bE}, D)}=\abs{E( C^{2+}_{\bE}, D)}$.
	Since $\abs{B^1_\bE} - \abs{\cc( B)} = \abs{ C^1_\bE} - \abs{\cc( C)}$ and $\abs{E( B, D)}= \abs{\cc( B)} + \abs{\cc( D)} -1$, we deduce from \Cref{eq:acy} that $\abs{E( C, D)}= \abs{\cc( C)} + \abs{\cc( D)} -1$.
	Thus, if $G[ C\cup D]$ is connected, then $G[ C\cup D]$ is necessarily a tree.
	We conclude that Property~\ref{lem:acy:eq:1} is true.

	\myparagraph{Proof of Property~\ref{lem:acy:eq:2}} 
	For every $ B\in \forest(A,\bE)$, we can compute $ B^{1}_{\bE}$ and $ B^{2+}_{\bE}$ in time $O(n^2)$ by computing for each $v\in  B$ the number of neighbors of $v$ in $\bE$.
	Hence, $\abs{ B^1_\bE} - \abs{\cc( B)}$ can also be computed in time $O(n^2)$.
	Given $ B, C\in \forest(A,\bE)$, we can check whether $ B^{2+}_{\bE} \equi{A}{n,1} C^{2+}_{\bE}$ in time $O(n^2)$ by checking whether $\abs{N_1^1(v)\cap  B^{2+}_{\bE}}=\abs{N_1^1(v)\cap  C^{2+}_\bE}$ for every $v\in \comp{A}$.
	We conclude that Property~\ref{lem:acy:eq:2} is true.

	\myparagraph{Proof of Property~\ref{lem:acy:eq:3}} 
	Let $S$ be the set containing the integers $\abs{ B^1_\bE} - \abs{\cc( B)}$ for every $ B \in \forest(A,\bE)$ and let $N_2$ be the number of equivalence classes of $\equi{A}{n,1}$ over $\{ B^{2+}_{\bE} \mid  B\in \forest(A,\bE)\}$.
	Observe that the number of equivalence classes of $\sim^\acy_\bE$ is at most $(\max S-\min S)\cdot N_2$.
	
	First, we prove some upper bounds on $\max S-\min S$.
	Let $ B\in \forest(A,\bE)$. 

	Condition~\ref{item:no:twin} implies that for every pair of distinct vertices $(u,v)\in  B^{1}_{\bE}$, $u$ and $v$ are not connected in $G[ B]$.
	Thus, for every edge in $H_{\bE,B}$ (the bipartite graph defined in \Cref{def:forest:acy}) between $C\in \cc(B)$ and $F\in \{N(v)\cap \comp{A} \mid v\in B^{1}_{\bE}\}$, there is exactly one vertex $v \in C\cap B^1_{\bE}$ such that $N(v)\cap \comp{A}=F$.
	Since every vertex in $B^1_{\bE}$ is in exactly one connected component of $G[B]$, we deduce that $\abs{ B^1_\bE}= \abs{E(H_{\bE, B})}$.

	Let $\cC^2$ be the set of all components in $\cc( B)$ that intersect $ B^{2+}_{\bE}$ and $\cC^1$ be the set of those not intersecting $ B^{2+}_{\bE}$ and containing at least two vertices in $ B^{1}_{\bE}$.
	As every connected component of $G[B]$ intersects $ B^{1}_{\bE}\cup B^{2+}_{\bE}$, we know that
	the connected components $C\in \cc(B)$ that are not in $\cC^1\cup\cC^2$ contain exactly one vertex $v_C \in   B^{1}_{\bE}$.
	Thus, these connected components $C$ and $v_C$ ``cancel each other out'' in  $\abs{ B^1_\bE} - \abs{\cc( B)}$.
	We deduce that $\abs{ B^1_\bE} - \abs{\cc( B)}$ equals the number of edges incident in $H_ B$ to $\cC^1\cup\cC^2$ minus $\abs{\cC^1\cup\cC^2}$.
	
	By definition of module-width, we have $\abs{\{N_1^1(v)\cap \comp{A}\mid v\in  A\}} \leq \mw(A)$. 
	Since the vertices in $ B^{2+}_{\bE}$ have pairwise different neighborhoods in $\comp{A}$, we have that $\abs{B^{2+}_{\bE}}\leq \mw(A)$. Since every component in $\cC^2$ at least one vertex in $B_{\bE}^{2+}$, we deduce that $\abs{\cC^2}\leq \mw(A)$.
	As the degree of the components
	in $\cC^1$ in $H_B$ is at least 2 and $H_ B$ is a forest, it follows that $\abs{\cC^1}\leq \frac{1}{2}\mw(A)$. 
	Because $H_{\bE, B}$ is a forest, the number of edges in $H_{\bE, B}$ incident to $\cC^1\cup\cC^2$ is at most 
	$\abs{\{N_1^1(v)\cap \comp{A}\mid v\in  B^{1}_{\bE}\}} + \abs{ \cC_1 \cup \cC_2} -1 \leq \frac{5}{2}\mw(A)-1$. 
	Thus, we have 
	\[ \abs{ B^1_\bE} - \abs{\cc( B)} \leq \max S \leq \frac{5}{2}\mw(A)-1. \]
	Moreover, since the number of edges in $H_{\bE, B}$ incident to $\cC^1$ is at least $\abs{\cC^1}$, we deduce that 
	\[ 	\abs{ B^1_\bE} - \abs{\cc( B)} \geq \min S \geq - \abs{\cC^2} \geq - \mw(A). \]	
	Hence, we have $\max S - \min S \leq  \frac{7}{2}\mw(A)$.
	By definition of module-width, $\mw(A)$ is upper bounded by $n$ and $\nec_{1,1}(A)$.
	By \Cref{lem:necd1}, we know that $\nec_{1,1}(A)$ is upper bounded by $2^{2\mmw(A)}$ and $2^{\rw(A)^{2}}$.
	We conclude that $(\max S - \min S)$ is upper bounded by $4\mw(A)$, $2^{2\mmw(A)+1}$, $2^{\rw(A)^{2}+1}$ and $2n$.
	
	\newcommand{\signature}{\mathsf{signature}}
	
	\noindent
	Property~\ref{lem:acy:eq:3} is deduced from the following upper bounds on $N_2$.
	\begin{itemize}
		\item For every $ B\in \forest(A,\bE)$, the size of $ B^{2+}_{\bE}$ is bounded $2\mim(A)$. Thus, we have $N_2\leq n^{2\mim(A)}$. As $\mim(A)\leq \rw(A)$ \cite{VatshelleThesis}, by \Cref{lem:n:neighbor:eq}, we deduce that $N_2 \leq 2^{2\rw(A)^{2}}$.
		
		\item For every $ B\in \forest(A,\bE)$, we define $\signature( B)=\{N_1^1(v)\cap \comp{A}\mid v\in  B^{2+}_{\bE}\}$.
		Since for every $ B\in\forest(A,\bE)$, the neighborhoods in $\comp{A}$ of the vertices in $ B^{2+}_{\bE}$ are pairwise distinct, we deduce that for every $ B, C\in \forest(A,\bE)$, if $\signature( B)=\signature( C)$, then $ B^{2+}_{\bE} \equi{A}{n,1}  C^{2+}_{\bE}$.
		It follows that $N_2\leq \abs{\{\signature( B)\mid  B\in \forest(A,\bE)\}}$.
		Since we have $\mw(A)=\abs{\{ N_1^1(v)\cap \comp{A} \mid v\in A\}}$, we conclude that $N_2\leq 2^{\mw(A)}$.
	\end{itemize}	
\end{proof}

\lemACclause*
\begin{proof}
	Let $\phi(X_1,\dots,X_k)$ be a  \ac-clause, $G$ be a $k$-weighted graph and $\cL$ be rooted layout of $G$.
	We assume w.l.o.g. that $\acy(\phi) \setminus\conn(\phi)$ contains the first $\ell$ variables $X_1,\dots,X_\ell$ of $\var(\phi)$.
	To deal with the constraints $\acy(X_1),\dots,\acy(X_\ell)$, we reduce solving our problem on $G$ with $\cL$ to a solving a $k+\ell$-problem $\Pi$ on a supergraph $H$ of $G$ with a rooted layout $\cL_H$ based on $\cL$.
	Then, we show that we can solve $\Pi$ on $H$ by modifying the reduce routine of \Cref{lem:reduce:A&C} and applying \Cref{thm:genericAlgo} on $\cL_H$.
	
	 We define $\phi'$ the \ac-clause obtained from $\phi$ by removing the subformula $\acy(X_i)$ for every $i\in [\ell]$ and we define the \ac-clause $\omega$ as
	 \[ \omega \equiv \bigwedge_{1\in [\ell]} \left(\acy(X_i^\star) \wedge \conn(X_i^\star) \right)  \]
	 where $(X_1^\star,\dots,X_\ell^\star)$ are new variables. 
	 
	 To simplify the proof, we assume w.l.o.g. that $\var(\omega)=(X_1,\dots,X_k,X_1^\star,\dots,X_\ell^\star)$ and $\var(\phi')=(X_1,\dots,X_k)$ (we can always add $X=X$ to $\phi'$ or $\omega$ if $X\in \var(\phi)$ does not appear in these formulas).
	 Since, we consider that $(X_1^\star,\dots,X_\ell^\star)$ are the last $\ell$ variables in $\var(\omega)$, for every graph $G'$ and $\tB\in \Pset(V(G'))^{k+\ell}$, $\tB$ assigned $B_i$ to $X_i$ for every $i\in[k]$ and $\tB$ assigned $B_{i+k}$ to $X_i^\star$ for every $i\in [\ell]$.
	 
	 We construct a $k+\ell$-weighted graph $H$ from $G$ by adding, for every vertex $v\in V(G)$, a new vertex $v^\star$ adjacent to $v$  and a new vertex $u_\conn$ whose neighborhood in $H$ is $\{v^\star \mid v\in V(G)\}$.
	We construct the weight function $w_H$ of $H$ from the weight function of $w_G$ such that variables $X^\star_{1},\dots,X^\star_{\ell}$ and the vertices in $V(H)\setminus V(G)$ do not contribute to the weight of a tuple.
	For every $S\subseteq [k+\ell]$, we set $w(v,S) = w(v,S \cap [k])$ if $v\in V(G)$ and $w(v,S)=0$ otherwise (if $v\in V(H)\setminus V(G)$).

	 We obtain $\cL_H=(T_H,\delta_H)$ from $\cL=(T,\delta)$ as follows: for every leaf $t$ with $\delta(t)=v \in V(G)$, we create two new node $a_t$ and $a_t^\star$ adjacent to $t$ with $\delta_H(a_t)=v$ and $\delta_H(a_t^\star)=v^\star$.
	 The root of $T_H$ is a new node $r^\star$ adjacent to the root of $T$ and a new node $a_r^\star$ with $\delta_H(a_r^\star)=u_\conn$.

	 The construction of $H$ and $\cL_H$ was used in \cite[Theorem 6.11]{BergougnouxKante2021} where the following upper bounds are proved.
	 
	 \begin{fact}[\cite{BergougnouxKante2021}]\label{fact:upper:cL:H}
	 	We have $\snec_{2, 1}(\cL_H)\leq 3 \snec_{2, 1}(\cL)$ and for every $\f\in \{\mw,\rw,\mim\}$, we have $\f(\cL_H)\leq \f(\cL)+1$.
	 \end{fact}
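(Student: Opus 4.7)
}

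My plan is to show all four inequalities by the same local case analysis on the nodes of $T_H$. The nodes of $T_H$ come in five flavors: (i) the freshly added single-vertex leaves $a_t,a_t^\star,a_r^\star$, for which $|V_x^{\cL_H}|=1$ and all four width measures are at most $1$; (ii) the former leaves $t\in T$ with $\delta(t)=v$, for which $V_t^{\cL_H}=\{v,v^\star\}$ and again all measures are trivially small; (iii) internal non-root nodes $x$ of $T$, where $V_x^{\cL_H}=A\cup A^\star$ with $A:=V_x^{\cL}$ and $A^\star:=\{v^\star\mid v\in A\}$; (iv) the old root $r$ of $T$, whose cut in $H$ is $(V(H)\setminus\{u_\conn\},\{u_\conn\})$; and (v) the new root $r^\star$, whose cut is trivial. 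Cases (i), (ii), (iv), (v) are immediate, so the entire content is in case (iii). Fix such an $x$ and write $A':=A\cup A^\star$ and $\comp{A'}=\comp A\cup\comp A^\star\cup\{u_\conn\}$.

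For \emph{module-width}, I would compute $N_H(v)\cap\comp{A'}$ for each $v\in A'$: for $v\in A$ this equals $N_G(v)\cap\comp A$ since $v^\star\in A'$; for $v^\star\in A^\star$ this equals $\{u_\conn\}$ because $v\in A'$. Thus the set of distinct neighborhoods in $H$ is $\{N_G(v)\cap\comp A\mid v\in A\}\cup\{\{u_\conn\}\}$, of size at most $\mw(A)+1$. For \emph{rank-width}, the same description shows that the bipartite adjacency matrix $M^H_{A',\comp{A'}}$, after permuting rows/columns, is block-diagonal with blocks $M^G_{A,\comp A}$ and a single all-ones column indexed by $A^\star$ and $\{u_\conn\}$; the rank over $GF(2)$ is therefore $\rw(A)+1$. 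For \emph{mim-width}, the cross-edges of $H[A',\comp{A'}]$ are exactly $E_G(A,\comp A)$ together with the star $\{v^\star u_\conn\mid v\in A\}$, all of whose edges share $u_\conn$; any induced matching uses at most one of the latter and combining it with an induced matching of $G[A,\comp A]$ creates no new chords (the $v^\star$ has no other neighbor in $\comp{A'}$ and $u_\conn$ has no neighbor in $A$), so $\mim(A')\le\mim(A)+1$ and equality is attainable.

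For the $(2,1)$-neighbor equivalence, I would compute, for any $B\subseteq A'$ with $B=B_A\cup B^\star$ and any $w\in\comp{A'}$, the value $N_H(w)\cap B$. Three subcases arise: for $w\in\comp A$ it equals $N_G(w)\cap B_A$; for $w\in\comp A^\star$ (say $w=u^\star$ with $u\in\comp A$) it is empty because both neighbors of $u^\star$ lie in $\comp{A'}$; and for $w=u_\conn$ it equals $B^\star$. Consequently, $B\equi{A'}{2,1}C$ in $H$ iff $B_A\equi{A}{2,1}C_A$ in $G$ and $\min(|B^\star|,2)=\min(|C^\star|,2)$, giving $\nec^H_{2,1}(A')\le 3\cdot\nec^G_{2,1}(A)$. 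A fully symmetric analysis on $\comp{A'}$—noting that for a vertex $v\in A$ only $N_G(v)\cap D_{\comp A}$ is seen, while for $v^\star\in A^\star$ only the indicator of $u_\conn\in D$ is seen—yields $\nec^H_{2,1}(\comp{A'})\le 2\cdot\nec^G_{2,1}(\comp A)$. Taking the maximum gives $\snec^H_{2,1}(A')\le 3\snec^G_{2,1}(A)$, and taking the maximum over $x\in V(T)$ yields the claim.

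I expect no genuine obstacle; the only part requiring care is the bookkeeping for $\snec_{2,1}$, in particular confirming that the ``how many stars are in $B$'' coordinate really takes at most three values under $\min(\cdot,2)$ on the $A'$-side and only two on the $\comp{A'}$-side (where $u_\conn$ can appear at most once), so that the symmetric maximum is $3\snec_{2,1}$ rather than a larger constant.
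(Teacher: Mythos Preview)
The paper does not give its own proof of this fact; it is stated as a citation to \cite[Theorem~6.11]{BergougnouxKante2021}, so there is nothing to compare against. Your argument is a correct self-contained verification: the block structure of $M^H_{A',\comp{A'}}$, the module-width count, the mim-width bound, and the $\snec_{2,1}$ computation (with the factor $3$ on the $A'$-side and $2$ on the $\comp{A'}$-side) are all right. Two minor remarks: your case~(ii) is literally the case~(iii) analysis with $A=\{v\}$, and case~(iv) is case~(iii) with $A=V(G)$, so you could have collapsed (ii)--(iv) into a single ``$x\in V(T)$'' case; and the ``equality is attainable'' comment for mim-width, while true, is not needed for the stated upper bound.
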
 
	
	We define the $k+\ell$-problem $\Pi$ such that $\Pi(H)$ contains all the tuples $\tB\in \Pset(V(H))^{k+\ell}$ such that:
	\begin{itemize}
		\item $B_1,\dots,B_k\subseteq V(G)$.
		\item For every $i\in [\ell]$, we have $B_i\subseteq B_{i+k}$,
		\item $G\models \phi'((B_1,\dots,B_k))$ and $H\models \omega(\tB)$.
	\end{itemize}
	Observe that for every $\tB\in\Pi(H)$ and $i\in [\ell]$, we have $B_i\subseteq B_{i+k}$ and $G[B_{i+k}]$ is a tree (because $H\models \omega(\tB)$). 
	Moreover, for every $(B_1,\dots,B_{k})\in \Pi_\phi(G)$ and every $i\in [\ell]$, the forest $G[B_i]$ can be extended into a tree in $H$ by adding $u_\conn$ and a vertex $v^\star$ with $v\in C$ for each connected component of $G[B_i]$. 
	From this observation, we deduce the following fact.
	
	\begin{fact}\label{fact:Pi:H}
		For all $\tB\in \Pi(H)$ holds $(B_1,\dots,B_{k})\in \Pi_\phi(G)$ and $\obj_H(\tB)=\obj_G((B_1,\dots,B_k))$.
		For all $\tB=(B_1,\dots,B_k)\in \Pi_\phi(G)$, there exist $B_{k+1},\dots,B_{k+\ell}\subseteq V(H)$ such that $(B_1,\dots,B_{k+\ell})\in \Pi(H)$.
	\end{fact}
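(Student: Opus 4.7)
\textbf{Proof proposal for \Cref{fact:Pi:H}.}

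I would prove the two directions separately, using that the only edges of $H$ incident to $V(G)$ besides the original edges of $G$ are of the form $vv^\star$ (and $v^\star$ has only two neighbors in $H$, namely $v$ and $u_\conn$). This means $G[X] = H[X]$ for every $X \subseteq V(G)$, and more generally $H[X \cap V(G)]$ is an induced subgraph of $H[X]$ for any $X \subseteq V(H)$.

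\emph{First direction.} Let $\tB \in \Pi(H)$. By definition $B_1,\dots,B_k \subseteq V(G)$ and $G \models \phi'((B_1,\dots,B_k))$, so to conclude $(B_1,\dots,B_k) \in \Pi_\phi(G)$ it remains to verify $\acy(B_i)$ in $G$ for every $i \in [\ell]$. Since $H \models \omega(\tB)$, each $H[B_{k+i}]$ is a tree, hence acyclic. Since $B_i \subseteq B_{k+i}$ and $B_i \subseteq V(G)$, we have that $G[B_i] = H[B_i]$ is an induced subgraph of $H[B_{k+i}]$ and is therefore acyclic. For the weight identity, recall that by construction of $w_H$, the vertices in $V(H) \setminus V(G)$ and the variables $X^\star_1,\dots,X^\star_\ell$ contribute $0$ to $\obj_H$. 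Thus $\obj_H(\tB) = \sum_{v \in V(G)} w_G(v, \{i \in [k] \mid v \in B_i\}) = \obj_G((B_1,\dots,B_k))$.

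\emph{Second direction.} Let $(B_1,\dots,B_k) \in \Pi_\phi(G)$; in particular, $G[B_i]$ is a forest for every $i \in [\ell]$. For each such $i$, if $B_i = \emptyset$ set $B_{k+i} := \{u_\conn\}$; otherwise, pick one representative $v_C \in C$ for each connected component $C \in \cc(G[B_i])$ and set
\[
B_{k+i} := B_i \;\cup\; \{u_\conn\} \;\cup\; \bigl\{\, v_C^\star \,\bigm|\, C \in \cc(G[B_i]) \,\bigr\}.
\]
By construction $B_i \subseteq B_{k+i}$. The subgraph $H[B_{k+i}]$ consists of the forest $G[B_i]$ together with, for each component $C$, a path $v_C \text{--} v_C^\star \text{--} u_\conn$. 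Each $v_C^\star$ is a degree-two vertex in $H[B_{k+i}]$ whose removal leaves the forest $G[B_i]$ plus an isolated $u_\conn$, so attaching these paths introduces no cycle and connects every component of $G[B_i]$ to $u_\conn$; the resulting graph is a tree. Thus $H \models \omega(\tB)$. Combined with $G \models \phi'((B_1,\dots,B_k))$ (inherited from $G \models \phi$, since $\phi'$ is obtained from $\phi$ by dropping constraints) and $B_1,\dots,B_k \subseteq V(G)$, this gives $(B_1,\dots,B_{k+\ell}) \in \Pi(H)$.

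\emph{Remark on obstacles.} The proof is essentially a verification exercise; the only subtlety is making sure the ``$\acy \wedge \conn$'' combination in $\omega$ is witnessed even when $B_i = \emptyset$, which is why I include $u_\conn$ explicitly in that edge case, and checking that the auxiliary star-vertices $v_C^\star$ do not introduce additional edges into $H[B_{k+i}]$ beyond the intended paths through $u_\conn$ --- both of which follow immediately from the definition of $H$.
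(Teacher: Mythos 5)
Your proof is correct and fills in the details of exactly the argument the paper sketches in the preceding sentence (extending each forest $G[B_i]$ to a tree in $H$ by routing one pendant path $v_C\text{--}v_C^\star\text{--}u_\conn$ per component, and observing that $V(H)\setminus V(G)$ and the starred variables contribute nothing to $\obj_H$). The only minor stylistic note is that the edge case $B_i=\emptyset$, which you handle explicitly, is implicitly covered by the paper's phrasing; your treatment makes the verification tighter but does not depart from the paper's route.
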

 	
 	Let $x$ be a node of $T_H$, $A=V_x^{\cL_H}$, $A_G= A \cap V(G)$ and $\cB\subseteq \Pset(A)^{k+\ell}$.
 	With some modification of the reduce routine of \Cref{lem:reduce:A&C}, we prove that we can compute a set $\reduce(\cB)$ 	in time
	\begin{equation}\label{eq:runtime:acy}
		O(\abs{\cB}\cdot 2^{O(\abs{\phi})}  \cdot \snec^{G}_{\phi}(A_G)^{O(k)}\cdot \sfN_\acy(G,A_G)^{O(k)} \cdot n^2)
	\end{equation} 
 	 such that $\reduce(\cB)$ $\Pi$-represents $\cB$  and the size of $\reduce(\cB)$ is at most
 	\begin{equation}\label{eq:sizebound:acy}
 		\snec^G_{\phi}(A_G)^{6k} \cdot \sfN_\acy(G,A_G)^{2k} \cdot 2^{2\abs{\phi}} \cdot  \prod_{1\leq i\leq p} (m_i+2).
 	\end{equation}
 	By \Cref{thm:genericAlgo} and the upper bounds of \Cref{fact:upper:cL:H}, this implies that we can solve $\Pi$ on $H$ and thus $\Pi_\phi$ on $G$ in the claimed run time.
 	
	Intuitively, we main modification on the reduce routine of \Cref{lem:reduce:A&C} is that we use the relations $\equi{G,A_G}{\phi}$ and $\equi{G, \comp{A_G}}{\phi}$ for the variables $X_1,\dots,X_k$ and the relations $\equi{H,A}{2, 1}$ and $\equi{H,\comp A}{2, 1}$ for the variables $X_1^\star,\dots,X_\ell^\star$.
 	Formally, we modify the reduce routine of \Cref{lem:reduce:A&C} as follows:
 	\begin{itemize}
 		\item We start by removing from $\cB$ all the tuples $\tB$ such that $B_i\not\subseteq V(G)$ for some $i\in [k]$ or $B_i\not\subseteq B_{i+k}$ for some $i\in [\ell]$.
 		\item Instead of considering the tuples $\btE\in(\Rep{H,\comp A}{\phi})^{k+\ell}$, we consider the tuples $\btE=(\bE_1,\dots,\bE_{k+\ell})$ such that $\bE_1,\dots,\bE_k\in\Rep{G,\comp{A_G}}{\phi}$ and $\bE_{k+1},\dots,\bE_{k+\ell}\in \Rep{H,\comp{A}}{2, 1}$.
 		Let $\cR$ be the set of all the tuples $\btE$ satisfying these conditions.
 		
 		\item Given $\btE\in \cR$ and  $\tB,\tC\in \cB$, we modify the Condition~\ref{item:equiphi} of the definition of $\bowtie_\btE$ from \Cref{lem:equivalenceformula}. Instead of requiring that $\tB\equi{H,A}{\phi} \tC$, we require that
		$(B_1,\dots,B_k)\equi{G,A_G}{\phi} (C_1,\dots,C_k)$ and 
 			$(B_{k+1},\dots,B_{k+\ell}) \equi{H,A}{2, 1} (C_{k+1},\dots,C_{k+\ell})$.
 			
		\item We use the equivalence relation $\sim^\acy_{\bE}$ given by \Cref{lem:acy:eq} on $H$. Note that the number of equivalence classes of $\sim^\acy_{\bE}$ is upper bounded by $\sfN_\acy(H,A)$.
 	\end{itemize}
 
	We deduce that $\reduce(\cB)$ $\Pi$-represents $\cB$ from the fact that each set variable $X_i^\star$ appears only the subformulas $\conn(X_i^\star)$ and $\acy(X_i^\star)$ of $\omega$ and the tools we used to handle these contraints only use the properties of $\equi{H,A}{1,1}$ (\Cref{lem:rankbased:approach}) and $\equi{H,A}{2, 1}$ (\Cref{def:forest:acy}, \Cref{lem:A&C:properties,lem:acy:eq}).
 	From these modifications, we deduce that we can compute $\reduce(\cB)$ in time
	\begin{align*}
		O(2^{\abs{\phi}+\abs{\omega}}  \cdot \snec^G_{\phi}(A_G)^{O(k)} \cdot \snec^H_{2, 1}(A)^{O(\ell)} \cdot  \sfN_\acy(G,A_G)^{k} 
		 \cdot & \sfN_\acy(H,A)^{\ell} \\ & \cdot n^2)
	\end{align*}
 	and  the size of $\reduce(\cB)$ is at most
	\[
	\snec^G_{\phi}(A_G)^{3k}\cdot \snec^H_{2, 1}(A)^{3 \ell }\cdot \sfN_\acy(G,A_G)^{k} \cdot \sfN_\acy(H,A)^{\ell } \cdot 2^{2\abs{\phi}+\abs{\omega}} \cdot  \prod_{1\leq i\leq p} (m_i+2)
 	 \]
 	By construction, we have $\ell \leq k$, $\abs{\omega}\leq 2\abs{\phi}$.
 	Since by definition, $d(\phi)\geq 1$ and $1\in R(\phi)$, we have $\snec_{2, 1}^G(A_G)\leq \snec_{\phi}^G(A_G)$.
 	By \Cref{fact:upper:cL:H}, we deduce that $\snec_{2, 1}^H(A)\leq 3 \snec_{\phi}^G(A_G)$.
 	From the definition of $\sfN_\acy$ in \Cref{lem:acy:eq} and the upper bounds of \Cref{fact:upper:cL:H}, we deduce that $\sfN_{\acy}(H,A)\leq \sfN_\acy(G,A)^{O(1)}$.
	From these observations, we conclude that \Cref{eq:sizebound:acy} is an upper bound on the size of $\reduce(\cB)$ and \Cref{eq:runtime:acy} is an upper bound on the run time to compute $\reduce(\cB)$.
\end{proof}

\end{document}